\long\def\ignore#1{\relax}
\newcommand{\pierre}[1]{\red{#1}}
\newcommand{\fnsz}{\footnotesize}
\newcommand{\ssz}{\scriptsize}
\newcommand{\ssst}{\scriptscriptstyle}
\newcommand{\ie}{\textit{i.e.}\ }
\newcommand{\eg}{\textit{e.g.},\ }
\newcommand{\wrt}{w.r.t.\ }
\newcommand{\resp}{resp.\ }
\newcommand{\via}{\textit{via}\ }
\renewcommand{\S}{Sec.}
\newcommand{\leqs}{\leqslant}
\newcommand{\geqs}{\geqslant}
\newcommand{\blut}{\pierre{blut}\ }
\newcommand{\fublainv}[1]{}
\newcommand{\ct}{\mathord{\cdot}}
\newcommand{\set}[1]{\{ #1 \}}
\newcommand{\mult}[1]{ [ #1 ]}
\newcommand{\ovl}[1]{\overline{#1}}
\newcommand{\eset}{\emptyset}
\newcommand{\emul}{\mult{\,}}
\newcommand{\tv}{o}
\newcommand{\inter}{\wedge}
\newcommand{\rstr}[1]{|_{#1}}
\newcommand{\omin}{\ominus}
\newcommand{\phd}{\phantom{.}}
\newcommand{\subeq}{\subseteq}
\newcommand{\wdg}{\wedge}
\newcommand{\Bohm}{B\"ohm }
\newcommand{\bbN}{\mathbb{N}}
\newcommand{\ttB}{\mathtt{B}}
\newcommand{\ttC}{\mathtt{C}}
\newcommand{\ttE}{\mathtt{E}}
\newcommand{\ttL}{\mathtt{L}}
\newcommand{\ttO}{\mathtt{O}}
\newcommand{\ttP}{\mathtt{P}}
\newcommand{\ttR}{\mathtt{R}}
\newcommand{\ttS}{\mathtt{S}}
\newcommand{\ttSh}{\ttS_{\tth}}
\newcommand{\ttT}{\mathtt{T}}
\newcommand{\tte}{\mathtt{e}}
\newcommand{\tti}{\mathtt{i}}
\newcommand{\tth}{\mathtt{h}}
\newcommand{\ttr}{\mathtt{r}}
\newcommand{\scrM}{\mathscr{M}}     
\newcommand{\scrP}{\mathscr{P}}
\newcommand{\scrR}{\mathscr{R}}
\newcommand{\scrRo}{\scrR_0}
\newcommand{\scrV}{\mathscr{V}}
\newcommand{\calC}{\mathcal{C}}
\newcommand{\calG}{\mathcal{G}}
\newcommand{\calT}{\mathcal{T}}
\newcommand{\secu}{^{\prime \prime}}
\newcommand{\ovla}{\ovl{a}}
\newcommand{\ovlal}{\ovl{\al}}
\newcommand{\ovlalp}{\ovl{\al'}}
\newcommand{\trew}{\tilde{\rew}}    
\newcommand{\tilAx}{\tilde{\Ax}}
\newcommand{\Nmzo}{\bbN\setminus\set{0,1}}
\newcommand{\ttarg}{\mathtt{arg}}
\newcommand{\ttdom}{\mathtt{dom}}
\newcommand{\ttcodom}{\mathtt{codom}}
\newcommand{\tttr}{\mathtt{tr}}
\newcommand{\ttchild}{\mathtt{child}}
\newcommand{\ttid}{\mathtt{id}}
\newcommand{\ttpos}{\mathtt{pos}}
\newcommand{\ttref}{\mathtt{ref}}
\newcommand{\ttax}{\mathtt{ax}}
\newcommand{\ttabs}{\mathtt{abs}}
\newcommand{\ttapp}{\mathtt{app}}
\newcommand{\ttout}{\mathtt{out}} 
\newcommand{\ttAx}{\mathtt{Ax}}
\newcommand{\ttPol}{\mathtt{Pol}}
\newcommand{\ttSc}{\mathtt{Sc}}
\newcommand{\ttTg}{\mathtt{Tg}}
\newcommand{\iI}{{i \in I}}
\newcommand{\kK}{{k \in K}}
\newcommand{\jJi}{{j\in J(i)}}
\newcommand{\lam}{\lambda}
\newcommand{\Lam}{\Lambda}
\newcommand{\rew}{\rightarrow}
\newcommand{\rewsh}{\mathord{\rightarrow}}
\newcommand{\lx}{\lam x}
\newcommand{\ly}{\lam y}
\newcommand{\lxrs}{(\lx.r)s}
\newcommand{\rsx}{r\subx{s}}
\newcommand{\est}{(\,)}
\newcommand{\al}{\alpha}
\newcommand{\gam}{\gamma}
\newcommand{\Gam}{\Gamma}
\newcommand{\Gami}{\Gam_i}
\newcommand{\Del}{\Delta}
\newcommand{\Deli}{\Del_i}
\newcommand{\sig}{\sigma}
\newcommand{\sigi}{\sig_i}
\newcommand{\epsi}{\varepsilon}
\newcommand{\fom}{f^{\omega}}
\newcommand{\om}{\omega}
\newcommand{\Om}{\Omega}
\newcommand{\arob}{\symbol{64}}
\newcommand{\hearts}{\heartsuit}
\newcommand{\clubs}{\clubsuit}
\newcommand{\sheart}{{\ssst \hearts}}
\newcommand{\sclub}{{\ssst \clubs}}
\newcommand{\TermV}{\mathscr{V}}
\newcommand{\TypeV}{\mathscr{O}}
\newcommand{\bind}{\lam}            
\newcommand{\child}[1]{\ttchild(#1)}
\newcommand{\childp}[1]{\ttchild'(#1)}
\newcommand{\ju}[2]{#1 \vdash #2}
\newcommand{\msig}{\mult{\sig}}
\newcommand{\msigi}{\mult{\sigi}_{\iI}}
\newcommand{\msigpip}{\mult{\sig'_i}_{\iI'}}
\newcommand{\sSk}{(S_k)_{\kK}}
\newcommand{\sSpkp}{(S'_k)_{\kK'}}
\newcommand{\sSksh}{(\!S_k\!)_{\!k\!\in\!K}} 
\newcommand{\cu}{\mathtt{Y}}
\newcommand{\cuf}{\cu_f}
\newcommand{\Pex}{P_{\mathtt{ex}}}
\newcommand{\Piex}{\Pi_{\mathtt{ex}}}
\newcommand{\Sex}{S_{\mathtt{ex}}}
\newcommand{\loplus}{\!\,\mathord{^{\oplus}}\!}
\newcommand{\roplus}{\!\!\,^{\oplus}}
\newcommand{\lomin}{\!\,\mathord{^{\omin}}}
\newcommand{\romin}{\!\!\!\,\mathord{^{\omin}}\,}
\newcommand{\loast}{\!\,^{\oast}}
\newcommand{\Axl}[1]{\mathtt{Ax}_{\lam}(#1)}
\newcommand{\Trl}[1]{\mathtt{Tr}_{\lam}(#1)}
\newcommand{\Rep}{\mathtt{Rep}} 
\newcommand{\dom}[1]{\ttdom(#1)}
\newcommand{\codom}[1]{\ttcodom(#1)}
\newcommand{\Ax}{\ttAx}
\newcommand{\Hd}{\ttTg}
\newcommand{\Tl}{\ttSc}
\newcommand{\Pol}[1]{\ttPol(#1)}
\newcommand{\pos}[1]{\ttpos(#1)}
\newcommand{\posp}[1]{\ttpos'(#1)}
\newcommand{\Axa}{\Ax_{a}}
\newcommand{\AxP}{\Ax^P}
\newcommand{\Axal}{\Ax_\al}
\newcommand{\Axpalp}{\Ax'_{\al'}}
\newcommand{\tr}[1]{ \tttr(#1)}  
\newcommand{\trP}[1]{\tttr^P(#1)}  
\newcommand{\trp}[1]{ \tttr'(#1)}
\DeclareMathOperator{\Rt}{\mathtt{Rt}}
\newcommand{\Res}{\mathtt{Res}}
\DeclareMathOperator{\ResL}{\mathtt{ResL}}
\DeclareMathOperator{\ResR}{\mathtt{ResR}}
\newcommand{\out}{\mathtt{out}}
\newcommand{\Lamzzu}{\Lam^{001}}
\newcommand{\subb}[2]{[#1/#2]}
\newcommand{\subx}[1]{\subb{#1}{x}}
\newcommand{\rewa}{\!\rightarrow_{\! \mathtt{asc}}\!} 
\newcommand{\rewp}{\!\rightarrow_{\! \mathtt{pi}\!}}
\newcommand{\wer}{\leftarrow}
\newcommand{\iden}{\equiv}
\newcommand{\idena}{\iden_{\asc}}
\newcommand{\denot}[1]{{[} \! {[} #1 {]}  \! {]} }
\newcommand{\p}{\mathtt{p}}
\DeclareMathOperator{\RedTr}{\mathtt{RedTr}}
\newcommand{\Arg}{\mathtt{Arg}}
\newcommand{\ax}{\ttax}
\newcommand{\abs}{\ttabs}
\newcommand{\app}{\ttapp}
\newcommand{\apph}{\ttapp_\tth}
\DeclareMathOperator{\rbp}{rbp}
\newcommand{\Rst}{\ttR}
\newcommand{\Lst}{\ttL}
\newcommand{\ArgTr}{\mathtt{ArgTr}} 
\newcommand{\AxTr}{\mathtt{AxTr}}
\DeclareMathOperator{\QRes}{\mathtt{QRes}}
\DeclareMathOperator{\ResI}{\mathtt{ResI}}
\DeclareMathOperator{\Inter}{\mathtt{Inter}}
\newcommand{\red}[1]{\textcolor{red}{#1}}
\newcommand{\blue}[1]{\textcolor{blue}{#1}}
\newcommand{\purple}[1]{\textcolor{purple}{#1}}
\definecolor{greenone}{RGB}{0,200,140}
\definecolor{darkred}{RGB}{220,00,00}
\newcommand{\rewb}[1]{\stackrel{ #1}{\rightarrow}}
\newcommand{\ad}[1]{\mathtt{ad}(#1)}
\newcommand{\trck}[1]{\,\red{[#1]}}
\newcommand{\posPr}[1]{~ \red{\langle\!#1 \!\rangle}}
\newcommand{\rs}{\mathtt{rs}}
\newcommand{\rstra}{\rstr{a}}
\newcommand{\rstrb}{\rstr{b}}
\newcommand{\trb}{t\rstrb}
\newcommand{\tprb}{t'\rstrb}
\newcommand{\tra}{t\rstra}
\newcommand{\tri}{\rhd}
\newcommand{\asc}{\mathtt{asc}}
\newcommand{\Asc}[1]{\mathtt{Asc}(#1)}
\newcommand{\twoarewa}{(2\cdot  \tv)  \rew \tv}
\newcommand{\juGtt}{\ju{\Gam}{t:\tau}}
\newcommand{\juxkT}{\ju{x:(k\cdot T)}{x:T}}
\newcommand{\juCtt}{\ju{C}{t:T}}
\newcommand{\ttsupp}{\mathtt{supp}}
\newcommand{\ttbisupp}{\mathtt{bisupp}}
\newcommand{\supp}[1]{\ttsupp(#1)}
\newcommand{\bisupp}[1]{\ttbisupp(#1)}
\newcommand{\bisuppR}[1]{\pierre{\ttbisupp_{\ttL}}(#1)}
\newcommand{\bisuppL}[1]{\pierre{\ttbisupp_{\ttR}}(#1)}
\newcommand{\ttbisuppL}{\ttbisupp_\Lst}
\newcommand{\suppat}[1]{\ttsupp_{\arob}(#1)}
\newcommand{\ttlab}{\mathtt{lab}}
\newcommand{\lab}[1]{\ttlab(#1)}
\newcommand{\relab}{\theta} 
\newcommand{\Relab}{\Theta} 
\newcommand{\Relabarg}{\Relab_{\ttarg}} 
\newcommand{\Relabout}{\Relab_{\ttout}}
\newcommand{\Relabtr}{\Relab_{\tttr}}
\newcommand{\Relabref}{\Relab}
\newcommand{\Relabpout}{\Relab'_{\ttout}}
\newcommand{\Relabptr}{\Relab'_{\tttr}}
\newcommand{\Relabpref}{\Relab'}
\newcommand{\Apheart}{A'_{\hearts}}
\newcommand{\Apclub}{A'_{\clubs}}
\newcommand{\Apout}{A'_{\ttout}}
\newcommand{\Aparob}{A'_{\arob}}
\newcommand{\cL}{c_{\Lst}}
\newcommand{\cR}{c_{\Rst}}
\newcommand{\eL}{\tte_{\Lst}}
\newcommand{\eR}{\tte_{\Rst}}
\newcommand{\kL}{k_{\Lst}}
\newcommand{\kR}{k_{\Rst}}
\newcommand{\rL}{\ttr_{\Lst}}
\newcommand{\rR}{\ttr_{\Rst}}
\newcommand{\thL}{\theta_{\Lst}}
\newcommand{\thR}{\theta_{\Rst}}
\newcommand{\ttmut}{\mathtt{mut}}
\newcommand{\ttsuppmut}{\ttsupp_{\ttmut}}
\newcommand{\suppmut}[1]{\ttsuppmut(#1)}
\newcommand{\ttbisuppmut}{\ttbisupp_{\ttmut}}
\newcommand{\bisuppmut}[1]{\ttbisuppmut(#1)}
\newcommand{\LstP}{\Lst^P}
\newcommand{\RstP}{\Rst^P}
\newcommand{\Psisu}{\Psi_{\ttsupp}}
\newcommand{\Psitr}{\Psi_{\tttr}}
\newcommand{\Edg}[1]{\ttE(#1)}
\newcommand{\ttThr}{\mathtt{Thr}}
\newcommand{\Thrtr}[1]{\ThrE{#1}}
\newcommand{\ThrE}[1]{\ttThr_{\ttE}(#1)}
\newcommand{\Thr}[1]{\ttThr(#1)}
\newcommand{\ttthr}{\mathtt{thr}}
\newcommand{\thr}[1]{\ttthr(#1)}    
\newcommand{\thrp}[1]{\ttthr'(#1)}  
\newcommand{\ttVal}{\mathtt{Val}}
\newcommand{\Val}[1]{\ttVal(#1)}
\newcommand{\Resref}{\Res^{\ttref}}
\newcommand{\refe}[1]{\ttref(#1)}
\newcommand{\refp}[1]{\ttref'(#1)}
\newcommand{\lra}{\leftrightarrow}             
\newcommand{\lrab}[1]{\stackrel{#1}{\lra}}     
\newcommand{\rewc}[1]{\stackrel{#1}{\rew}}    
\newcommand{\rrewc}[1]{\tilde{\rew}_{#1}} 
\newcommand{\rrew}{\tilde{\rightarrow}} 
\newcommand{\werr}{\tilde{\wer}}
\newcommand{\werrc}[1]{\stackrel{#1}{\tilde{\wer}}}
\newcommand{\lrrew}{\tilde{\mathord{\leftrightarrow}}}
\newcommand{\lrrewc}[1]{\lrrew_{#1}}
\newcommand{\lsa}[1]{\stackrel{#1}{\mathord{\leftarrow}}} 
\newcommand{\rsa}[1]{\stackrel{#1}{\mathord{\rightarrow}}} 
\newcommand{\lrsa}[1]{\stackrel{#1}{\mathord{\leftrightarrow}}} 
\newcommand{\IM}{\mathcal{I}}
\newcommand{\IMi}{\IM_i}
\theoremstyle{definition}
\theoremstyle{definition}
\newcommand{\trans}[3]{ 
  \begin{scope}[xshift=#1cm,yshift=#2cm]
     #3
  \end{scope}
}
\newcommand{\transh}[2]{ 
  \begin{scope}[xshift=#1cm]
     #2
  \end{scope}
}
\newcommand{\drawlabnode}[3]{  
     \trans{ #1 }{ #2 }{ 
       \draw (0,0) node {\small #3 };
       \draw (0,0) circle (0.23);
     }
}
\newcommand{\drawarob}[2]{   
  \trans{#1}{#2}{
       \draw (0,0) node {\small $\arob$};
       \draw (0,0) circle (0.23);
    }
}
\newcommand{\drawsmalltriin}[3]{   
  \trans{#1}{#2}{  
    \draw (0,0) --++ (0.35,0.65) --++ (-0.7,0) --++ (0.35,-0.65) ;
    \draw (0,0.46) node{\fnsz #3} ; 
  }
}
\newcommand{\drawlefttail}[2]{   
  \trans{#1}{#2}{
      \draw (0,0) --++ (-0.13,-0.18);
    }
  }
\newcommand{\blockunary}[3]{ 
  \drawlabnode{#1}{#2}{#3}
  \trans{#1}{#2}{
  \draw (0,0.23) --++ (0,0.44);
}
  }
\newcommand{\blocka}[2]{  
  \trans{#1}{#2}{  
    \drawarob{0}{0}
    \draw (-0.12,0.18) -- (-0.53,0.72); 
    \draw (0.12,0.18) -- (0.53,0.72);
  }
}
\newcommand{\paper}[1]{#1}
\newcommand{\techrep}[1]{\ignore{#1}}
\theoremstyle{plain}  
\newtheorem{proposition}{Proposition}
\newtheorem{property}{Property}
\newtheorem{rmk}{Remark}
\newtheorem*{remark*}{Remark}
\newtheorem*{lemma*}{Lemma}
\newtheorem*{theorem*}{Theorem}
\newtheorem{observation}{Observation}
\begin{document}


\title{Representing permutations without permutations, or the expressive power of sequence types}
\titlerunning{The Surjective Collapse of Sequential Types} 

\author{Pierre Vial}
\affil{Inria (LS2N CNRS)} 
\authorrunning{Pierre Vial} 

\Copyright{Pierre Vial}

\maketitle

\begin{abstract}
 Recent works by Asada, Ong and Tsukada have championed a rigid description of resources. Whereas in non-rigid paradigms (\eg standard Taylor expansion or non-idempotent intersection types), bags of resources are multisets and invariant under permutation, in the rigid ones, permutations must be processed explicitly and can be allowed or disallowed. Rigidity enables a fine-grained control of reduction paths and their effects on \eg typing derivations. We previously introduced a very constrained coinductive type system (system $\ttS$) in which permutation is completely disallowed. One may wonder in what extent the absence of permutations causes a loss of expressivity
\wrt reduction paths, compared to a usual multiset framework or a rigid one allowing permutations. We answer this question in the most general case \ie coinductive type grammars without validity conditions. 
 Our main contribution is to prove that not only every non-idempotent derivation can be represented by a rigid, permutation-free derivation, but also that any dynamic behavior may be captured in this way. In other words, we prove that system $\ttS$ 
 has full expressive power over multiset/permutation-inclusive intersection.
\ignore{
We present a type system (that we refer to as System $\ttS$) that features sequences as intersection types, in contrast with Gardner/de Carvalho's System $\scrR$, that uses multisets to represent intersection. Systems $\ttS$ and $\scrR$ are both linear and forbid contraction/duplication.
We explain why any reduction choice in System $\scrR$ (\ie any way to perform Subject Reduction) may be encoded in System $\ttS$ by resorting to suitable isomorphisms of labelled trees, that we call interfaces. 
The main theorem of this article is the following: every reduction choice may be encoded by a trivial interface. This means that not only the syntactical equality between labelled trees used in System $\ttS$ can represent the far more relaxed equality used in System $\scrR$ (relying upon nested permutations in multisets), but that it is also enough to capture convoluted reduction choices.

We actually consider coinductive type grammars. Contrary to most inductive cases, typability does not ensure here any form of normalizability. From the technical perspective, the chief contribution of this paper is the introduction of a method, based upon the study of a suitable first order theory, that does not rely on any notion of productive reduction, which is pivotal to prove the Representation Theorem.
We also briefly discuss the applications of this work and of coinductive intersection type grammars on the finitary or infinitary $\lam$-calculus and its semantics. 

}

\end{abstract}
\newcommand{\frakS}{\mathfrak{S}}

\section{Determinism, Rigidity and Reduction Paths}

\subsection{Resources for the $\lam$-Calculus}

The attempts at giving a quantitative account of resource consumption by functional programs originates from Girard's Linear Logic\;\cite{Girard87}. In his wake, several works were proposed to capture the same ideas. In this introduction, we chiefly focus on two of them:
\begin{itemize}
\item The \textbf{resource calculus}, notably by Ehrhard-Regnier\;\cite{EhrhardR08}: instead of having applications of the form $t\,u$ (one function, one argument),  applications of the  of the form $t\,\mult{u_1,\ldots,u_n}$ (one function, several argument), where $\mult{u_1,\ldots,u_n}$ is a \textbf{multiset} (also called a \textbf{bag}). Substitution of a variable $x$ with a bag $\mult{u_1,\ldots,u_n}$ in a term $t$ is \textbf{linearly} processed: it occurs only if there are exactly $n$ free occurrences of $x$ in $t$. In that case, each occurrence of $x$ is replaced by one $u_i$. Since the elements of a multiset can be permuted, this operation is non-deterministic \eg if $n=2$ and $t=x\,\mult{x,y}$, then the considered substitution can output $u_1\,\mult{u_2,y}$ or $u_2\,\mult{u_1,x}$. The usual untyped $\lam$-calculus can be embedded into the resource calculus \via the \textbf{Taylor expansion}, which can be understood as a \textit{linearization} of $\lam$-terms. However, since one does not now how many times a function may use its argument, Taylor expansion $v\mapsto \tilde{v}$ represents an application $t\,u$ of the usual $\lam$-calculus by a formal series involving  $\tilde{t}\,\emul$, $\tilde{t}\,\mult{\tilde{u}}$, $\tilde{t}\,\mult{\tilde{u},\tilde{u}}$, $\tilde{t}\,\mult{\tilde{u},\tilde{u},\tilde{u}}$\ldots
 The semantics of the resource calculus is compatible with that of the $\lam$-calculus in that, the computation of the Taylor expansion and that of the \Bohm tree of a term commute (\textit{adequation}).
\item \textbf{Non-idempotent intersection types}, notably by Gardner\;\cite{Gardner} and de Carvalho~\cite{Carvalho07}: intersection type systems (i.t.s.), introduced by Coppo-Dezani\;\cite{CoppoD80}, feature a type construction $\wdg$ (intersection). In the non-idempotent setting, $\wdg$ can be seen as a free operator. Gardner-de Carvalho's original system $\calG$ features an explicit permutation rule  allowing to permute types in the context:
  $$\infer{\ju{\Gam;x:A_1\wdg \ldots \wdg A_n}{t:B}\hspace{1cm}\rho \in \frakS_n}{\ju{\Gam;x:A_{\rho(1)} \wdg \ldots \wdg A_{\rho(n)}}{t:B}}\hspace{1cm} \text{($\frakS_n$: permutations of $\set{1,\ldots,n}$)} $$
A later presentation of system $\calG$, that we call system $\scrRo$ here (see \S\;\ref{ss:system-R-lipics-sC} for details), represents intersection with multisets \ie the lists 
$A_1\wdg \ldots \wdg A_n$ are (inductively) collapsed on $\mult{A_1,\ldots,A_n}$.
 Since $\mult{A_1,\ldots,A_n}=\mult{A_{\rho(1)},\ldots,A_{\rho(n)}}$, the $\mathtt{perm}$-rule becomes obsolete and system $\scrRo$ is \textit{syntax-directed}, meaning that for a given judgment is the conclusion of a unique rule. Syntax-direction makes type system more readable and closer to the syntax of the $\lam$-calculus. Indeed, system $\scrRo$ features only three typing rules, corresponding to the three constructors of the $\lam$-calculus ($x$, $\lx$ or $\arob$).

 Because of non-idempotency and relevance (no weakening), system $\scrRo$ is \textit{linear}. Moreover, it characterizes head normalization: a term is $\scrRo$-typable iff it is head normalizing (HN). Variants of system $\scrRo$ give characterizations of weak, strong or weak head normalization (see ~\cite{BucciarelliKV17} for a survey). Most interestingly, non-idempotency makes the termination property of system $\scrRo$ (if $t$ is $\scrRo$-typable, then $t$ is HN) straightforward to prove\techrep{: 
 when a \textit{typed} redex is reduced inside a $\scrRo$-derivation $\Pi$, we get a new derivation $\Pi'$ (typing the reduced term) of strictly smaller size. Thus, typed reduction must stop at some point. Moreover, the same feature entail that system $\scrRo$ 
 gives upper bounds on the length of normalization strategies.}\paper{ and the size of $\scrRo$-derivations gives upper bounds on the length of normalizing reduction sequences (quantitativity).}\techrep{

   } From the semantic point of view, the typing judgments of $\scrRo$ correspond to to the points of the relational model~\cite{BucciarelliEM07}\techrep{ (suggesting the letter $\scrR$)}. 
\end{itemize}
Contrary to \cite{TsukadaAO17}, we are mainly interested in non-idempotent i.t.s. and not to the Taylor expansion, but they are related in that, a $\scrRo$-typing of a term $t$ naturally gives an element of the Taylor expansion of $t$ that does not reduce to the null sum.

\subsection{Getting Rigid}
\label{ss:getting-rigid}

We now discuss the different features of the resource framework alluded to in the previous section. As noted above, the linear substitution of a variable with the elements of a multiset bag is non-deterministic. For system $\scrRo$, this entails that subject reduction 
is \textit{non-deterministic} (see Fig.\;\ref{fig:non-deter-srRo-lipics-1}): we say that system $\scrRo$ gives rise to different \textbf{reduction choices}.

In \cite{VialLICS17}, we characterized a form of infinitary weak normalization by using non-idempotent types. This required using a \textbf{coinductive} (meaning infinitary) grammar of types. However, coinductive grammars give rise to unsound derivations \eg typings of $\Om:=\Del\,\Del$ (with $\Del=\lx.x\,x$), the auto-autoapplication. To recover soundness, coinductive type grammars must come together with a validity criterion. We called the one that we introduced in \cite{VialLICS17} \textit{approximability}.  It turns out (\S\;IV.E. of  \cite{VialLICS17}) that approximability \textit{cannot} be defined with multiset intersection. This led us to introduce system $\ttS$ (see \S\;\ref{ss:S-types-lipics-sC}), a \textit{rigid} variant of system $\scrRo$: 
multisets are coinductively replaced by \textit{sequences} of types \ie families of types annotated with integers. For instance, $(2\ct S,3\ct T,8\ct S)$ is a \textbf{sequence type} that features two occurrences of type $S$\ignore{ on \textbf{tracks} 2 and 8} and one occurrence of\ignore{ type} $T$\ignore{ on track $3$}. We have a ``disjoint union'' operator $\uplus$ for sequences \eg $(2\ct S, 3\ct T)\uplus (8\ct S)=(2\ct S,3\ct T,8\ct S)$, so that the occurrences of $S$ annotated with 2 can be identified on each side of the equality (\textbf{tracking}).  
In contrast, $\mult{\sig,\tau}+\mult{\sig}=\mult{\sig,\sig,\tau}$, but, in this equality, we have no way to relate one occurrence of $\sig$ in $\mult{\sig,\sig,\tau}$ to $\mult{\sig,\tau}$ rather than $\mult{\sig}$ and \textit{vice versa} (tracking is impossible with multisets). The lack of tracking is the main cause of non-determinism. In other words, one may say that intuitively, a framework is \textbf{rigid} when it enables tracking and determinism.
 Multiset equality can be seen as \textit{lax} whereas sequential equality is \textit{syntactical} (and thus, highly constraining). The syntactic nature of sequences entails that subject reduction is \textit{deterministic} in system $\ttS$ (Fig.\;\ref{fig:deter-srS-lipics-1}).

In their recent works~\cite{OngT12,TsukadaAO17}, Asada, Ong and Tsukada proposed another rigid framework, featuring  \textbf{rigid resource calculus} \techrep{(associated with rigid bags)} and a \textbf{rigid Taylor expansion}, also satisfying adequation. Their approach has several aims, notably giving a precise account of the weight of reduction paths in probabilistic programming:  rigidity makes it possible to enumerate reduction paths \textit{statically} (what is referred to as the \textit{compositional enumeration problem}).
This rigid calculus is interpreted in generalized species of structures, which are a special case of cartesian closed bicategories featuring non-trivial isomorphisms. %
 These isomorphisms refine the $\mathtt{perm}$-rule  and enable the encoding of convoluted reduction paths. Moreover, without $\mathtt{perm}$-rule or isomorphisms of types, subject reduction fails in the presence of list types.


 Giving back our attention to type theory, in a companion paper~\cite{compUnsoundArxiv}, we prove that every $\lam$-term is $\scrR$-typable, where $\scrR$ is the coinductive extension of $\scrRo$. This entails that every term has a \textit{non-empty} interpretation in the infinitary relational model. Our proof of this result, which uses tracking, can be only obtained by working in system $\ttS$ instead of $\scrR$  and then, noticing that \paper{sequences collapse on multisets}\techrep{every sequence-based derivation collapses on a multiset-based derivation} when tracks are erased (\eg $(2\ct S,3\ct T,8\ct S)$ \techrep{collapses }onto $\mult{S,T,S}$). However, this does not exclude the fact that there are $\scrR$-derivations (\ie points of the infinitary relational model) that are \textit{not} the collapse of any $\ttS$-derivation.

 In this paper, we compare the expressive powers of the rigid and the non-rigid infinitary intersection type systems (systems $\ttS$ and $\scrR$). In particular, we prove that actually, every $\scrR$-derivation is the collapse of a $\ttS$-derivation. Due to the \textit{syntactical} equality used in the application rule of  system $\ttS$ and the absence of productivity, this turns out to be a difficult question (see \S\;\ref{ss:collapse-and-hybrid-cons}). Note that we do not assume any validity criterion or guard condition: this allows us to consider the problem with the maximal level of generality (at least when comparing sequential and multiset intersection), so that our results may both have consequences in semantics, in infinitary $\lam$-calculus or in  the more concrete domain of verification (\eg for Grellois-Melli\`es's infinitary model of Linear Logic~\cite{GrelloisM15,GrelloisM15a}).

 The comparison between systems $\scrR$ and $\ttS$ is refined in the following sense: the determinism of reduction in system $\ttS$ can been seen as restrictive compared to system $\scrR$, since, in the former case, one cannot choose a reduction path (which is imposed by the structure of the derivation) in the absence of permutation rule/isomorphism of types.  So it raises the following question: can we build a rigid representative $P$ of an $\scrR$-derivation $\Pi$ \wrt any $\scrR$-reduction choice on could perform on $\Pi$? We also give a positive answer to this question.

 \subsection{Contributions and Challenges}

Our main contributions are thus proving that (1) every $\scrR$-derivation is the collapse of a $\ttS$-derivation (``surjectivity of the multiset collapse'') and that (2) any $\scrR$-reduction path can be directly encoded in a $\ttS$-derivation. This means that system $\ttS$ has full expressive power over system $\scrR$ and the infinitary relational model.

Once again, subject reduction is not satisfied with list intersection (\ie without permutation/isomorphisms). On another hand, isomorphisms of types enables the description of reduction paths, as shown by \cite{TsukadaAO17}. Our article shows that sequence types provide the best of both worlds (multiset and list intersections): one can also endow sequence types with isomorphisms (we need to consider an intermediary system, namely system $\ttSh$, from \S\;\ref{ss:collapse-and-hybrid-cons} on, to conduct our proofs) but it is not necessary. Not only subject reduction is satisfied for sequential intersection without burdening derivations with isomorphisms of types, but point 2 above proves that no reduction path (taken individually) is lost without permutation.

The most difficult point is the first one \ie establishing every $\scrR$-derivation has a \textit{trivial} $\ttS$-representative. In a finitary/productive framework (for the notion of productivity, see \eg \cite{EndrullisHHP015}), this could be possible by studying first the derivations typing a (partial) normal form (for which representation is usually easily ensured
), and then proceeding by subject expansion (see \S\;\ref{ss:collapse-and-hybrid-cons}). However, as already noted, typability in system $\scrR$ does not imply any kind of normalization (every term can be $\scrR$-typed), even an infinitary one. The main technical contribution of this paper, that is also used in \cite{compUnsoundArxiv}, is to bypass this problem by resorting to a method inspired by first order model theory, along with some finite reduction strategy (see \S\;\ref{ss:bro-consum-lipics-sC} and \ref{ss:syntactic-polarity-and-collapsing-strat} for some high-level input). By lack of space, we can only sketch the main stages of the proof. All our results are valid for the infinitary calculus $\Lamzzu$~\cite{KennawayKSV97}. See the webpage of the author or Chapter 11 and 12 of \cite{VialPhd} for complete proofs and details.


\newcommand{\trckblck}[1]{\,[#1]}
\newcommand{\ddsh}{\mathord{:}}
\newcommand{\jush}[2]{#1\mathord{\vdash}#2}
\newcommand{\typeone}{\mathord{(8\ct \tv,9\ct \tv)\rewsh \tv'}}
\newcommand{\typetwo}{\mathord{(8\ct \tv)\rewsh \typeone}}
\newcommand{\typeonered}{\mathord{(\red{8}\ct \tv,\red{9}\ct \tv)\rewsh \tv'}}
\newcommand{\typetwored}{\mathord{(8\ct \tv)\rewsh \typeonered}}
\newcommand{\typeoneblue}{\mathord{(\blue{8}\ct \tv,\blue{9}\ct \tv)\rewsh \tv'}}
\newcommand{\typetwoblue}{\mathord{(8\ct \tv)\rewsh \typeoneblue}}
\newcommand{\typeonebis}{\mathord{(2\ct \tv,7\ct \tv)\rewsh \tv'}}
\newcommand{\typeonebispurple}{\mathord{(\purple{2}\ct \tv,\purple{7}\ct \tv)\rewsh \tv'}}
\newcommand{\typetwobis}{\mathord{(3\ct \tv)\rewsh \typeonebis}}
\newcommand{\typetwobispurple}{\mathord{(3\ct \tv)\rewsh \typeonebispurple}}
\newcommand{\orange}[1]{\textcolor{orange}{#1}}

\section{Non-Idempotent Intersection and Rigid Paradigms}
\label{s:rigid-trucs}

\subsection{Multiset Intersection}
\label{ss:system-R-lipics-sC}

Let us recall Gardner and de Carvalho's non-idempotent i.t.s.~\cite{Carvalho07,Gardner}  whose set of types is inductively defined by:
\begin{center}
  $\sigma,\,\tau :: = \tv\in \TypeV~ |~ \msigi \rew \tau $
\end{center}
We call $\IM:=\msigi$ a \textbf{multiset (intersection) type}. \paper{Intersection $\wdg$ corresponds to the multiset-theoretic sum.}\techrep{
The intersection operator $\inter$ is the multiset-theoretic sum: $\inter_\iI \IMi=+_\iI \IMi$ (\ie $\inter_{\iI} \mult{\sig^i_j}_\jJi  := +_{\iI} \mult{\sig^i_j}_{\jJi}$).} We assume $I$ to be finite, the empty multiset type is denoted by $\emul$. 

\begin{figure}

  \begin{tikzpicture}

{    
    \draw (-0.65,3.8) node {$\Pi_r$};
    \draw (-0.65,1.57) --++ (1,0.4) --++ (0.2,1.6) --++ (-2.4,0) --++ (0.2,-1.6) -- cycle;
    \draw (-0.65,1.57) node [below left] {$\red{\tau}$};
    \draw (-1.55,2.25) --++ (0.7,0);
    \draw (-0.65,2.25) node  {\fnsz $\ax$};
    \draw (-1.2,2.1) node  {\fnsz $x:\red{\sig}$};
    \draw (-0.65,2.85) --++ (0.7,0);
    \draw (0.25,2.85) node  {\fnsz $\ax$};
    \draw (-0.3,2.6) node {\fnsz $x:\red{\sig}$};
    
    \blockunary{-0.65}{0.9}{$\lx$}
    \draw (-0.7,0.8) node [below left] {$\red{\mult{\sig,\sig}\rew \tau}$};
    \blocka{0}{0}
    \draw (-0.05,-0.1) node [below left] {$\red{\tau}$};

\drawsmalltriin{0.65}{0.9}{$\Pi^1_s$}
\drawlefttail{0.65}{0.9}
\draw (0.65,0.9) node [below] {$\red{\sig}$};

\drawsmalltriin{1.6}{0.9}{$\Pi^2_s$}
\draw (0.18,0.14) -- (1.6,0.9);
\draw (1.6,0.9) node [below] {$\red{\sig}$};
}
\draw (2,1) node [right] {\large reduces into};

\transh{6}{    
    \draw (-0.65,3.85) node {$\Pi^{1,2}_{\rsx}$};
    \draw (-0.65,1.57) --++ (1,0.4) --++ (0.2,1.6) --++ (-2.4,0) --++ (0.2,-1.6) -- cycle;
    \draw (-0.65,1.57) node [below] {$\red{\tau}$};
    \drawsmalltriin{-1.2}{2.2}{$\Pi^1_s$};
    \draw (-1.2,2.1) node  {\fnsz $s:\red{\sig}$};
    \draw (-0.3,2.6) node {\fnsz $s:\red{\sig}$};
    \drawsmalltriin{-0.3}{2.7}{$\Pi^2_s$}; 
}

\draw (6.7,1) node [right] {\large or into};

\transh{10}{    
    \draw (-0.65,3.85) node {$\Pi^{2,1}_{\rsx}$};
    \draw (-0.65,1.57) --++ (1,0.4) --++ (0.2,1.6) --++ (-2.4,0) --++ (0.2,-1.6) -- cycle;
    \draw (-0.65,1.57) node [below] {$\red{\tau}$};
    \drawsmalltriin{-1.2}{2.2}{$\Pi^2_s$};
    \draw (-1.2,2.1) node  {\fnsz $s:\red{\sig}$};
    \draw (-0.3,2.6) node {\fnsz $s:\red{\sig}$};
    \drawsmalltriin{-0.3}{2.7}{$\Pi^1_s$}; 
}

  \end{tikzpicture}
  \vspace{-0.2cm}
  \caption{Non-Determinism of Subject Reduction (Multiset Intersection)}
  \label{fig:non-deter-srRo-lipics-1}
\vspace{-0.5cm}
\end{figure}

An $\scrRo$-\textbf{context} ($\Gam$ or $\Del$) is a \textit{total} function from $\TermV$ (the set of term variables) to the set of multiset types. The \textbf{domain of $\Gam$} is given by $\set{x\,|\,\Gam(x)\neq \emul}$. The intersection of contexts $+_{\iI} \Gami$ is defined point-wise. We write $\Gam;\Del$ instead of $\Gam + \Del$ when $\dom{\Gam}\cap \dom{\Del}=\eset$. Given a multiset type $\msigi$, we write $x:\msigi$ for the context $\Gam$ s.t. $\Gam(x)=\msigi$ and $\Gam(y)=\emul$ for all $y\neq x$.
 A $\scrRo$-\textbf{judgment} is a triple $\ju{\Gam}{t:\sig}$ where $\Gam$ is a $\scrRo$-context, $t$ a term and $\sig$ a $\scrRo$-type.
 The set of $\scrRo$-derivations is defined by the following rules:
 \begin{center}
  $
  \begin{array}[t]{c}\infer[\ax]{ \phd}{ \ju{x:\mult{\tau} }{x:\tau}}  \hspace{1.4cm} \infer[\abs]{ \ju{ \Gam;x:\msigi}{t:\tau}}{\ju{\Gam}{\lx.t:\msigi\rew \tau}} \\[3ex]
    \infer[\app]{\ju{\Gam}{t:\msigi\rew \tau}\hspace{0.3cm} (\ju{\Deli }{u:\sigi})_{\iI} }{\ju{\Gam+(+_{\iI} \Deli )}{t\,u:\tau}}
    \end{array}
$
\end{center}
 The typing judgments of system $\scrRo$  correspond to the points of the relational model $\scrM_{\mathtt{rel}}$ from ~\cite{BucciarelliEM07} in that, for any term $t$, $\set{\tri_{\scrRo} \juGtt\;|\; \Gam\;\text{context},\tau\;\text{type}}=\denot{t}_{\mathtt{rel}}$, where the left-hand side corresponds to the set of derivable judgments of system $\scrRo$ typing $t$ and the right-hand side, to the denotation of $t$ in $\scrM_{\mathtt{rel}}$.  Here is an example of $\scrRo$-derivation:
 \vspace{-0.3cm}
 \begin{center}
    {$\Piex=\infer[\app]{\infer[\app]{\infer[\ax]{\phd}{\ju{\ldots}{x:\mult{o,o',o}\rew o'}  }\hspace{0.2cm}
\infer[\ax]{\phd}{\ju{\ldots}{x:o}} \hspace{0.1cm}  
\infer[\ax]{\phd}{\ju{\ldots}{x:o'}} \hspace{0.1cm}
\infer[\ax]{\phd}{\ju{\ldots}{x:o}} 
    }{\ju{x:\mult{o',\mult{o,o',o}\rew o',o,o}}{xx:o'}}}{
  \ju{}{\lx.xx:\mult{o',\mult{o,o',o}\rew o',o,o} \rew o'}
    }$
    }
 \end{center}
 We write $\Pi \tri_{\scrRo} \ju{\Gamma}{t: \tau}$ to mean that the $\scrRo$-derivation $\Pi$ concludes with the judgment $\ju{\Gamma}{t:\tau}$ and $\tri \ju{\Gamma}{t: \tau}$ to mean that $\ju{\Gamma}{t:\tau}$ is derivable.  No weakening is allowed (\textit{relevance}) \eg $\lambda x.x$ (resp. $\lx.y$) can be typed with a type of the form $\mult{\tau}\rew \tau$ (resp. $\emul\rew \tau$), but \textit{not} with $\mult{\tau,\sig}\rew \tau$ (resp. $\mult{\tau}\rew \tau$). 
System $\scrRo$ enjoys both \textbf{subject reduction} and \textbf{expansion}, meaning that types are invariant under (anti)reduction (if $t\rew t'$, then $\tri \juGtt$ iff $\tri \ju{\Gam}{t':\tau}$). Moreover, a term is head normalizing iff it is $\scrRo$-typable.

The $\app$-rule of $\scrRo$ is based upon multiset equality and can be restated as follows:\\[-2ex]
 $$\infer{\ju{\Gam}{t:\msigi\rew \tau} \hspace{0.7cm} (\ju{\Del_i}{u:\sig'_i})_{\iI'} \hspace{0.7cm} \msigi=\msigpip  }{\ju{\Gam + (+_{\iI'} \Del_i)}{t\,u:\tau}} \app$$

Now, system $\scrRo$ has a coinductive version that we call system $\scrR$. For that, we just interpret the rules of the type grammar coinductively: in system $\scrR$, $\msigi$ can be of infinite cardinality ($I$ is assumed to be countable). Moreover, there may be an infinite number of nestings \eg  in the $\scrR$-type $\phi_\tv$ satisfying $\phi_\tv=\mult{\phi_\tv} \rew \tv$ for a given $\tv \in \TypeV$. Note that $\phi_\tv$ allows us to type $\Om$ with $\tv$: using $\ax$-rule concluding with $\ju{x:\mult{\phi_\tv}}{x:\phi_\tv}$, we easily obtain $\ju{x:\mult{\phi_\tv}_\om}{x\,x:\tv}$, then $\ju{}{\Del:\phi_\tv}$ and finally $\ju{}{\Om:\tv}$ ($\mult{\sig}_{\om}$ contains infinitely many occurrences of $\sig$). We will sketch a more rigorous definition of the set of $\scrR$-types in \S\;\ref{ss:S-types-lipics-sC}.

As claimed in \S\;\ref{ss:getting-rigid}, subject reduction is not deterministic in system $\scrRo$: in Fig.\;\ref{fig:non-deter-srRo-lipics-1}, a redex $\lxrs$ is typed with $\tau$ and in $r$, $x$ is assigned \textit{twice} the type $\sig$ in two different places. Then, to obtain a derivation $\Pi'$ typing the reduct $\rsx$, we just replace each axiom rule by an argument derivation $\Pi^i_s\ (i=1,2)$ concluding with $s:\sig$. There are two possibilities (\textbf{reduction choices}) represented on the right-hand side of the figure.

\subsection{Tracks and Labelled Trees}
\label{ss:tracks-sC}

\paper{
Let us now present the formalism that we use for labelled trees.  
Let $\bbN^*$ be the set of finite words on $\bbN^*$, the operator $\cdot$ denotes concatenation, $\epsi$ the empty word and $\leqs$ the prefix order \eg $2\cdot 1\cdot 3\cdot 7\in \bbN^*$, $2\cdot 1\leqs 2\cdot 1\cdot 3\cdot 7$.
 As a letter of $a\in \bbN^*$, a natural number is called a \textbf{track} and, for reasons to appear, a track $k\geqs 2$ is called an \textbf{argument track} (argument tracks are \textit{mutable},  \S\;\ref{ss:track-threads-and-representability}).
 The \textbf{collapse} $\ovl{k}$ of a track $k$ is defined by $\ovl{k}=\min(k,\,2)$. This notation is extended letter-wise on $\bbN^*$  \eg $\ovl{0 \ct 5 \ct 1 \ct 3\ct 2}=0\ct 2 \ct 1\ct 2\ct 2$. The \textbf{support of term} is defined by induction as expected: $\supp{x}=\set{\epsi}$, $\supp{\lx.t}=\set{\epsi}\cup 0\cdot \supp{t}$ and $\supp{t\,u}=\set{\epsi}\cup 1\cdot \supp{t} \cup 2\cdot \supp{u}$. If $a\in \bbN^*$ and $\ovla\in \supp{t}$, we denote by $\tra$ the subterm of $t$ rooted at position $\ovla$ whereas $t(a)$ is the constructor ($\arob$, $x$ or $\lx$) of $t$ at position $a$ \eg $t\rstr{0}=y\,x$ and $t(0\cdot 1)=y$ with $t=\lx.y\,x$. The \textbf{applicative depth} $\ad{a}$ of $a$ is the number of argument tracks that it contains  \eg $\ad{0\cdot 3 \cdot 2\cdot 1\cdot 1} =2$ and $\ad{0\cdot 1 \cdot 0\cdot 0\cdot 1}=0$.

}


A \textbf{tree} $A$ of $\bbN^*$ is a non-empty subset of $\bbN^*$ that is downward-closed for the prefix order ($a\leqslant a'\in A$ implies $a\in A$). The support of a term is a tree.
A \textbf{forest} is a set  of the form $A\setminus \set{\epsi}$ for some tree $A$ such that $0,\,1 \notin A$. 
Formally, a \textbf{labelled tree} $T$ (resp. \textbf{labelled forest} $F$) is a function to a set $\Sigma$, whose domain, called its \textbf{support} $\supp{T}$ (resp. $\supp{F}$), is a tree (resp. a forest). If $F$ is a labelled forest, the set of \textbf{roots} of $F$ is defined by $\Rt(F)=\supp{F}\cap \bbN$.
If $U=T$ or $U=F$, then $U\rstr{a}$ is the function defined on $\set{a_0\in \bbN^*\,|\, a\ct a_0\in \supp{U}}$ and $U\rstr{a}(a_0)=U(a\ct a_0)$. If $U$ is a labelled tree (resp. forest and $a \neq \epsi$), then $U\rstr{a}$ is a labelled tree.

\begin{figure}
\begin{center}
\begin{tikzpicture}

  \draw (0,0) node {$\rew$} ;
  \draw (0,0) circle (0.18) ;

  \draw (0.13,0.13) -- (0.57,0.57) ;
  \draw (0.45,0.25) node {$\tiny \red{1}$ };
  
  \draw (0.7,0.7) node {$\tv_1$} ;
  \draw (0.7,0.7) circle (0.18) ;
  
  \draw (-0.13,0.13) -- (-0.57,0.57) ;
  \draw (-0.45,0.25) node {$\tiny \red{4}$};
  \draw (-0.7,0.7) node {$\rew$} ;
  \draw (-0.7,0.7) circle (0.18) ;

  \draw (-0.57,0.83) -- (-0.13,1.27);
  \draw (-0.25,0.95) node { $\tiny \red{1} $ } ;
  \draw (0,1.4) node {$\tv_2$} ;
  \draw (0,1.4) circle (0.18);

  \draw (-0.83,0.83) -- (-1.27,1.27) ;
  \draw (-1.15,0.95) node {$\tiny \red{3}$};
  \draw (-1.4,1.4) node {$\tv_1$} ;
  \draw (-1.4,1.4) circle (0.18) ;

  \draw (-2.8,1.4) circle (0.18) ;
  \draw (-2.8,1.4) node {$\tv_3$} ;
  \draw (-2.2 ,0.95)  node {$\tiny \red{8}$};
  \draw (-0.87,0.73) -- (-2.7,1.26) ;

  \draw (-1.7,0.7) node {$\tv_2$};
  \draw (-1.7,0.7) circle (0.18);
  \draw (-1.68,0.53) -- (-0.15,0);
  \draw (-1.3,0.25) node {$\tiny \red{8}$};

  \draw (-3,-0.2) node [below right] {
     \parbox{5.2cm}{\small  $T_1=(8\ct \tv_2,4\ct(8\ct \tv_3,3\ct \tv_1)\rew \tv_2)\rew \tv_1$}};

\trans{7}{0}{
   \draw (0,0) node {$\rew$} ;
  \draw (0,0) circle (0.18) ;

  \draw (0.13,0.13) -- (0.57,0.57) ;
  \draw (0.45,0.25) node {$\tiny \red{1}$ };
  
  \draw (0.7,0.7) node {$\tv_1$} ;
  \draw (0.7,0.7) circle (0.18) ;

\trans{-1}{0}{

  \draw (-0.7,0.7) node {$\rew$} ;
  \draw (-0.7,0.7) circle (0.18) ;

  \draw (-0.57,0.83) -- (-0.13,1.27);
  \draw (-0.25,0.95) node { $\tiny \red{1} $ } ;
  \draw (0,1.4) node {$\tv_2$} ;
  \draw (0,1.4) circle (0.18);

  \draw (-0.83,0.83) -- (-1.27,1.27) ;
  \draw (-1.15,0.95) node {$\tiny \red{2}$};
  \draw (-1.4,1.4) node {$\tv_3$} ;
  \draw (-1.4,1.4) circle (0.18) ;

  \draw (-2.8,1.4) circle (0.18) ;
  \draw (-2.8,1.4) node {$\tv_1$} ;
  \draw (-2.2 ,0.95)  node {$\tiny \red{7}$};
  \draw (-0.87,0.73) -- (-2.7,1.26) ;}

 \draw (-1.68,0.53) -- (-0.15,0);
 \draw (-1.3,0.25) node {$\tiny \red{5}$};

%

  \draw (-0.13,0.13) -- (-0.57,0.57) ;
  \draw (-0.45,0.25) node {$\tiny \red{3}$};
  \draw (-0.7,0.7) node {$\tv_2$} ;
  \draw (-0.7,0.7) circle (0.18) ;
  
  \draw (-3,-0.2) node [below right] {
    \parbox{5.2cm}{\small  $T_2=
      (5\ct(7\ct \tv_1,2\ct \tv_3)\rew \tv_2,3\ct \tv_2)\rew \tv_1$}};
}

\end{tikzpicture}\\[-8ex]
\end{center}
\vspace{-0.5cm}
\caption{01-Isomorphic Labelled Trees ($\ttS$-Types)}
\label{fig:01-iso}
\vspace{-0.5cm}
\end{figure}

\begin{definition}
Let $U_1$ and $U_2$ be two (labelled) trees or forests.\\
A \textbf{01-isomorphism} $\phi$ from $U_1$ to 
$U_2$ is a bijection from $\supp{U_1}$ to $\supp {U_2}$ such that:
\begin{itemize}
\item $\phi$ is monotonic for the prefix order and preserves length.
\item If $a\ct k\in \supp{ U_1}$ with $a\in \mathbb{N}^*$ and $k=0,\, 1$, then $\phi(a\ct k)=\phi(a)\ct k$. 
\item In the labelled case: for all $a\in \supp {U_1}$, $U_2(\phi(a))=U_1(a)$.
\end{itemize}
\end{definition}

We write $U_1\equiv U_2$ when $U_1$ and $U_2$ are 01-isomorphic. In Fig.\;\ref{fig:01-iso}, we see two isomorphic labelled trees 
(which are two $\ttS$-types, \S\;\ref{ss:S-types-lipics-sC})
\wrt $\phi$ defined by $\phi(\epsi)=\epsi$, $\phi(1)=1$, $\phi(4)=5$, $\phi(4\ct 1)=5\ct 1$, $\phi(4\ct 3)=5\ct 7$, $\phi(4\ct 8)=5\ct 2$, $\phi(8)=3$.
\techrep{
If $U$ is a (labelled  or not) tree or forest and $\phi$ is a monotonic, length-preserving injection from $\supp{U}$ to $\mathbb{N}^*$ s.t., in the labelled case, $\phi(a
\ct k)=\phi(a)\ct k$ whenever $k=0,\,1$, we write $\phi(U)$ for the
\textit{unique} (labelled) tree or forest s.t. $\supp {\phi(U)}=\set{
\phi(a)\,|~ a\in \supp {U}}$ and $\phi(U)(a')=U(\phi^{-1}(a'))$. In that case, $\phi(U)\equiv U$. Such a function $\phi$ then is called a \textbf{01-resetting of $U$}.

  \begin{remark}
Alternatively, we can define $U_1\equiv U_2$ for types and sequence types by coinduction, without reference to 01-stable isomorphisms:
\begin{itemize}
\item $\tv \equiv \tv$
\item $\sSk\equiv \sSpkp$ if there is a bijection $\rho:K\rew K'$ such that, for all $\kK$, $S_k\equiv S'_{\rho(k)}$.
\item $\sSk\rew T\equiv \sSpkp\rew T'$  if $\sSk\equiv \sSpkp$ and $T\equiv T'$.\\
\end{itemize}
\end{remark}

  Actually, 01-resetting a labelled (or not) tree or forest $U$ consists in assigning new track values to the edges labelled with $k$ tracks $\geqslant 2$. In that case, we may (a bit abusively) use the position $a\cdot k$ of $U$ to stand for the edge from $a$ to $a\cdot k$ and we set $\lab{a\cdot k}=k$: the number $k$ is indeed the track that labels the edge from $a$ to $a\cdot k$.

This motivates the notion of \textit{mutable support} and of \textit{01-relabelling}:

  \begin{definition}
    \label{def:01-relab}
    Let $U$ be a labelled (or not) tree or forest.
    \begin{itemize}
    \item The \textbf{mutable support} $\suppmut{U}=\set{a\cdot k\in \supp{U}\,|\, a\in \bbN^*,\,k\in \Nmzo}$. We understand $\supp{U}$ as $U$ in the unlabelled case.
      \item A \textbf{01-relabelling} of $U$ is a function $\relab$ from $\suppmut{U}$ to $\Nmzo$ such that, for all $a\in \bbN,\, k_1,k_2\in \Nmzo$ such that $a\cdot k_1,a\cdot k_2\in \suppmut{U}$, $k_1\neq k_2$ implies $\relab(a\cdot k_1)\neq \relab(a\cdot k_2)$.
    \end{itemize}  
  \end{definition}

  Intuitively, $\suppmut{U}$ is the set of edges of $U$ whose track value may be changed and $\relab(a\cdot k)$ is the new track (given by $\relab$) of the edge between position $a$ and the mutable position $a\cdot k$. Indeed, from a 01-relabelling $\relab$ of $U$, we can define a 01-resetting of $U$ denoted $\phi^{\relab}$ by induction:
  \begin{itemize}
  \item $\phi^{\relab}(\epsi)=\epsi$.
  \item $\phi^{\relab}(a\cdot k)=\phi^{\relab}(a)\cdot k$ if $k=0,1$
  \item $\phi^{\relab}(a\cdot k)=\phi^{\relab}(a)\cdot \relab(a\cdot k)$ if $k\in \Nmzo$.
  \end{itemize}
Thanks to Definition\;\ref{def:01-relab}, the function $\phi^{\relab}$ is indeed a 01-resetting of $U$.\\
  
\noindent \textit{Example:} The transformation from $T_1$ to $T_2$ in Fig.\;\ref{fig:01-iso} can be seen as the relabelling $\relab$ defined on $\suppmut{T_1}=\set{4,4\cdot 3,4\cdot 8,8}$ by $\relab(4)=5$ (the edge labelled 4 receives the new label 5), $\relab(4)=5$, $\relab(4\cdot 3)=7$, $\relab(4\cdot 8)=2$, $\relab(8)=3$.\\

} Let $F$ and $F'$ be
two 01-isomorphic (labelled) forests. A \textbf{root isomorphism} is a function $\rho$ from $\Rt(F)$ to $\Rt(F')$ such that, for all $k\in \Rt(F)$, $F\rstr{k}\equiv F'\rstr{\rho(k)}$ \ie a function from $\Rt(F)$ to $\Rt(F')$ that can be extended to a 01-isomorphism from $F$ to $F'$. Conversely, every isomorphism $\phi$ from $F$ to $F'$ induces a root isomorphism from $F$ to $F'$, that we denote $\Rt(\phi)$.

\techrep{

\subsection{Infinitary Terms}
\label{s:terms-short}

The main result of this article (Theorem\;\ref{th:rep}) is true for the $\lam$-calculus, but it actually also holds for an infinitary calculus referred to as  \cite{KennawayKSV97}. Let us then give a short presentation of this calculus.

Let $\Del_f=\lambda x.f(xx)$ and $\cuf=\Del_f\Del_f$. Notice that $\cuf\rew f(\cuf)$, so $\cuf \rew^{k} f^k(\cuf)$. Intuitively, if $k$ tends toward $\infty$, the redex disappear and  we get $\cuf \rew^{\infty} \fom$ where $\fom$ is the (infinite) term $f(f(f(\ldots)))$, satisfying $\fom=f(\fom)$ and containing a rightward infinite branch. \\

\noindent Let $\TermV$ be a countable set of term variables.
The set of terms $\Lambda^{111}$ is defined \textit{coinductively}:\\[-3ex]
\begin{center}
$t,\,u~ ::=~ x~ \|~ \lambda x.t~ \|~ t\,u $\\[-3ex]
\end{center}
  We can still consider terms of $\Lambda^{111}$ as (possibly infinite) labelled trees.
 The abstraction $\lambda x$ binds $x$ in $t$ and $\alpha$-equivalence can be defined properly\;\cite{KennawayKSV97}.


The relation $t \rewb{b} t'$ is defined by induction on $b\in 
\set{0,\,1,\,2}^*$: $(\lambda x.r)s\rewb{\epsi} r[s/x]$, $\lambda 
x.t\rewb{0\cdot b} \lambda x.t'$ if $t\rewb{b}t'$, $t_1t_2\rewb{1\cdot 
b} t'_1t_2$ if $t_1\rewb{b} t_1,~ t_1t_2 \rewb{2\cdot b} t_1t_2 '$ if 
$t_2\rewb{b} t'_2$. We define \textbf{$\beta$-reduction} by $\rew  =\cup_{b\in \set{0,\,1,\,2}^* } \rewb{b}$.

If $b=(b_i)_{i\in \bbN}$ is a word of \textit{infinite} length, we extend $\ad{b}$ as $|\set{i \in \mathbb{N}~|~
b_i\geqslant 2}|$ and we say that $b$ is an infinite branch of $t\in
\Lambda^{111}$ if, for all $n\in \mathbb{N},~ b^{<n}\in \supp{t}$, where $b^{<n}= b_0\ct b_1\ct \ldots \ct b_n$.  The calculus
$\Lambda^{001}$ is the set of terms $t\in \Lambda^{111}$ such that,
for every infinite branch $b$ of $t$, $\ad{b}=\infty$ . Thus, for
\textbf{001-terms} (terms of $\Lam^{001}$), infinity is allowed, provided we visit arguments infinitely many times.


For instance, $\fom$ is in $\Lambda^{001}$, because $2^{\omega}$ (the infinite word holding only 2) is its unique infinite branch and $\ad{2^{\omega}}=\infty$. However, the term $t\in \Lambda^{111}$ coinductively defined by $u=u\,f$ (where $f$ is a term variable) is not a $001$-term: it also has a unique infinite branch, $1^{\omega}$ but $\ad{1^{\omega}}=0$.

\begin{remark}
\label{rmk:001-trees-and-induction}
Let us call a tree $A\subset \bbN^*$ such that, for all infinite branch $a$ of $A$, $\ad{a}=\infty$, a \textbf{001-tree}. Let $\scrP$ be a predicate on such a tree $A$. In order to prove for all $a\in A,\, \scrP(a)$, we can reason by \textbf{001-induction}: we prove that $\scrP(a)$ for all leaf $a$ of $A$ and then, for all $a\in \bbN^*$ such $a\cdot 0$ or $a\cdot 1$ is in $A$, we prove that $\scrP(a\cdot 0)$ or $\scrP(a\cdot 1)$ implies $\scrP(a)$.
\end{remark}

}

\techrep{\pierre{MAL PLACE}For instance, 
$(7\ct \tv_1,3\ct \tv_2,2\ct \tv_1)\rew \tv$ is represented by:\\[-3ex]
\begin{center}
\begin{tikzpicture}
\draw (2,0) node {\small $\rew$};
\draw (2,0) circle (0.18);

\draw (3,1)  node{\small $\tv$};
\draw (3,1) circle (0.18);
\draw (2.87,0.87) -- (2.13,0.13) ;
\draw (2.7,0.55) node {\small \red{$1$} };

\draw (1.3,1) node{\small $\tv_1$} ;
\draw (1.3,1) circle (0.18);
\draw (1.91,0.161) -- (1.38,0.85) ;
\draw (1.8,0.55) node {\small \red{$2$} };

\draw (0.6,1) node{\small $\tv_2$} ;
\draw (0.6,1) circle (0.18);
\draw (1.86,0.13) -- (0.68,0.84);
\draw (1.4,0.55) node {\small \red{$3$} };

\draw (-0.4,1) node{\small $\tv_1$} ;
\draw (-0.4,1) circle (0.18);
\draw (1.83,0.08) -- (-0.3,0.85);
\draw (0.1,0.55) node {\small \red{$7$} };
\end{tikzpicture}
\end{center}
Thus, for types, track 1 is dedicated to the target of an arrow (see also Figure\;\ref{fig:01-iso}).  We write $\est$ for the sequence type whose support is empty and a sequence type of the form $(k\cdot T)$ is called a \textit{singleton sequence type}.}

\subsection{Rigid Types}
\label{ss:S-types-lipics-sC}

Formally, the set of $\ttS$-types is defined coinductively by:
\newcommand{\skSk}{(k\cdot S_k)_{\kK}}
\begin{center}$ 
\begin{array}{cll} 
  T,\,S_k & ::= &  \tv ~ \|~ F \rightarrow T\\
  F & ::=& \skSk\ (K\subeq \Nmzo)
\end{array}$\\[-2ex] 
\end{center}
The empty sequence type is denoted $\est$. 
The set of top-level tracks of a sequence type is called its set of \textbf{roots} and we write \eg $\Rt(F)=\set{2,5,8}$ when $F=(2\cdot \tv,5\cdot \tv',8\cdot \tv)$. Note that the disjoint union operator can lead to \textbf{track conflict} \eg, if $F_1=(2\cdot \tv,3\cdot \tv')$ and  $F_2=(3\cdot \tv',8\cdot \tv)$, the union $F_1\uplus F_2$ is not defined,  since $\Rt(F_1)\cap \Rt(F_2)=\set{3}\neq \eset$. Thus, $\uplus$ is not total, but it is associative and commutative.

The \textbf{support of a type} (resp. \textbf{a sequence type}), which is a tree of $\bbN^*$ (resp. a forest),  is defined by mutual coinduction: $\supp{\tv}=\set{\epsi},~ \supp{F\rew T}=\set{\epsi}\cup \supp{F}\cup 1\cdot \supp{T}$ and $\supp{(T_k)_{k\in K}}= \cup_{k\in K} k\cdot \supp{T_k}$. For instance, $\supp{\Sex}=\set{\epsi,1,2,7}$. Thus, types (\resp sequence types) can naturally be  seen as labelled trees (\resp labelled forests) in the sense of \S\;\ref{ss:tracks-sC}. See Fig.\;\ref{fig:01-iso} for two examples. 

When we forget about tracks, a finite rigid type collapses on a $\scrRo$-type \eg $(7\cdot \tv_1,3\cdot \tv_2,2\cdot \tv_1)\rew \tv$,  $(9\cdot \tv_2,7\cdot \tv_1,6\cdot \tv_1)\rew \tv$ or $(7\cdot \tv_2,3\cdot \tv_1,2\cdot \tv_1)\rew \tv$ all  collapse on $\mult{\tv_1,\tv_2,\tv_1}\rew \tv$.
We can now define $\scrR$-types (\S\;\ref{ss:system-R-lipics-sC}) as collapses of rigid types:

\begin{definition}
  \label{def:type-iso-sC}
Let $U_1$ and $U_2$ be two (sequence) types. A \textbf{(sequence) type isomorphism}  from $U_1$ to $U_2$ is a 01-isomorphism from $U_1$ to $U_2$
\end{definition}
For instance, in Fig.\;\ref{fig:01-iso}, both $T_1$ and $T_2$ collapse on $\mult{\tv_2,\mult{\tv_1,\tv_3}\rew \tv_2}\rew \tv_1$. 
A similar notion of isomorphism exists in \cite{TsukadaAO17}, III.A., which is presented coinductively.
\techrep{
  \begin{itemize}
\item If $S=F\rightarrow T$, where $F$ is a sequence type and $T$ a type, we write $\Tl(S)$ for the sequence type $F$ (the \textbf{source} of $S$) and $\Hd(S)$ for the type $T$ (the \textbf{target} of $S$).
\item If $\psi:\,F_1\longrightarrow F_2$ is a sequence type isomorphism and $\phi:\,T_1\rew T_2$ is a type isomorphism, then $\psi\rew \phi:\, (F_1\rew T_1)\rew (F_2\rew T_2)$ is the type isomorphism defined by: $(\psi\rew \phi)(k\cdot c)=\phi(k\cdot c)$ when $k\geqslant 2$ and $(\psi\rew \phi)(1\cdot c)=1\cdot \phi(c)$.  If $F$ is a sequence type, then $F\rew \phi$ denotes $\ttid_F\rew \phi$.
\item Conversely, if $\phi$ is a type isomorphism from $F_1\rew T_1$ to $ F_2\rew T_2$, then $\Hd(\phi)$ and $\Tl(\phi)$ are respectively the type isomorphism from $T_1$ to $T_2$ and the sequence type isomorphism from $F_1$ to $F_2$ induced by $\phi$. Thus, $\phi=\Tl(\phi)\rew \Hd(\phi)$ and $\Hd(\psi\rew \phi)=\phi$ and $\Tl(\psi\rew \phi)=\phi$. 
  \end{itemize}

\noindent Recalling the notion of 01-resetting and of 01-relabelling (Sec.\;\ref{s:tracks}), we define:
  
\begin{definition}
A 01-resetting (resp. 01-relabelling) whose domain is a rigid (sequence) type is called a \textbf{(sequence) type resetting} (resp. a \textbf{(sequence) type relabelling}).
\end{definition}
}


The set of $\scrR$-types can be formally defined as the quotient set of the set of $\ttS$-types by $\equiv$.
Remember that $\scrR$ allows infinite multiset types and infinite nesting of types inside multisets. Multisets types of system $\scrR$ are the $\equiv$-equivalence classes of sequence types. System $\scrR$ is defined as $\scrRo$ (\S\;\;\ref{ss:system-R-lipics-sC}), except that we use types of $\scrR$ instead of just finite types of $\scrRo$\techrep{ and we allow coinductive derivations to type the infinite terms of $\Lam^{001}$}. 
A countable version of the binary operator $+$ can be easily defined. 
\techrep{See \S\;\;\furef{s:hybrid-cons-techrep} for more details.} 

\subsection{Trivial Rigid Derivations}
\label{ss:system-S-lipics-sC}

An $\ttS$-context $C$ (or $D$) is a total function from $\TermV$ to the set of $\ttS$ types. The operator $\uplus$ is extended point-wise.
An $\ttS$-judgment is a triple $\juCtt$, where $C$, $t$ and $T$ are respectively an $\ttS$-context, a term and $T$ an $\ttS$-type. 
A \textbf{sequence judgment} is a sequence of judgments  $(k\cdot (C_k\vdash t:T_k))_{k\in K}$ with $K\subeq \Nmzo$, often just written $(\ju{C_j}{t:T})_{\kK}$. For instance, if $5 \in K$, then the judgment on track 5 is $\ju{C_5}{t:S_5}$.

The set of $\ttS$-derivations is defined inductively by:
\begin{center}
 \vspace{-0.1cm}
$\infer[\ax]{\phd}{\juxkT} \hspace{1.4cm}
  \infer[\abs]{\ju{C;x:\sSk}{t:T}}{\ju{C}{\lx.t:\sSk \rew T}}$\\[0.2cm]
  $\infer[\app]{\ju{C}{t:\sSk\rew T} \hspace{0.7cm} (\ju{D_k}{u:S_k})_{\kK}  }
        {\ju{C\uplus(\uplus_{\kK} D_k)}{t\,u:T}}$
        \end{center}
The $\app$-rule can be applied only if there are no track conflicts in the context $C\uplus(\uplus_{\kK} D_k)$. In an $\ax$-rule concluding with $\juxkT$, the track $k$ is called the \textbf{axiom track} of this axiom rule. We refer to \S\;III and IV of \cite{VialLICS17} for additional examples and figures for all what concerns the basics of system $\ttS$ (bipositions, bisupports, residuals\ldots). 
In order to gain space, we write $\ju{k}{x:T}$ (with $k\geqs 2$, $x\in \TermV$, $T$ $\ttS$-type) instead of $\ju{x:(k\cdot T)}{x:T}$ in $\ax$-rules.
We also indicate the track of argument derivations between brackets \eg $\ju{2}{x:\tv'}\trck{3}$ means that judgment $\ju{x:(2\ct \tv')}{x:\tv'}$ is on track 3. Consider the $\ttS$-derivation $\Pex$:
          {\small
  $$\Pex=\infer[\abs]{\infer[\app]{\infer[\ax]{\phd}{\ju{4}{x:(8\ct o,3\ct o',2\ct o)\rew o' } }\hspace{0.4cm}
      \infer[\ax]{\phd}{\ju{9}{x: \tv}}\hspace{-0.56cm}\trck{2} \hspace{0.2cm}  
\infer[\ax]{\phd}{\ju{2}{x: \tv'}}\hspace{-0.56cm}\trck{3} \hspace{0.2cm}
\infer[\ax]{\phd}{\ju{5}{x:\tv}}\hspace{-0.56cm}\trck{8} 
    }{\ju{x:(2\ct \tv',4\ct (8\ct \tv,3\ct \tv',2\ct \tv)\rew \tv',5\ct \tv,9\ct \tv)}{xx:\tv'}}}{
  \ju{}{\lx.xx:(2\ct \tv',4\ct (8\ct \tv,3\ct \tv',2\ct \tv)\rew \tv',5\ct \tv,9\ct \tv)\rew \tv'}
              }$$
}
          In the $\ax$-rule concluding with $\ju{x:(5\ct o)}{x:o}$, the axiom track is 5. As hinted at before, $\ttS$-derivations collapse on $\scrR$-derivations\techrep{ \pierre{(this collapse will be more precisely described in Sec.\;\ref{s:hybrid-cons-techrep})}} \eg $\Pex$  collapses on $\Piex$ (from \S\;\ref{ss:system-R-lipics-sC}). Moreover:
          

\begin{proposition}
Systems $\scrR$ and $\ttS$  enjoy  subject reduction and expansion.
  \end{proposition}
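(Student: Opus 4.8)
The plan is to derive both statements, for $\ttS$ and for $\scrR$, from a single template: a \emph{substitution lemma} (for reduction), a \emph{partition lemma} (for expansion), and then an induction on the position of the contracted redex. I would carry out the argument in full for $\ttS$, since the rigid bookkeeping of tracks is where all the content lies; the case of $\scrR$ is then literally the same proof, the one difference being that where $\ttS$ \emph{forces} a canonical matching of the argument subderivations to the occurrences of the substituted variable, $\scrR$ only needs one such matching to \emph{exist} (any type-respecting bijection will do) --- this is exactly the non-determinism illustrated in Fig.\;\ref{fig:non-deter-srRo-lipics-1}.

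\textbf{Substitution lemma (for $\ttS$).} Assume $\Pi \tri_{\ttS} \ju{C;x:\sSk}{t:T}$ and, for each $\kK$, $\Pi_k \tri_{\ttS} \ju{D_k}{u:S_k}$, with $C \uplus (\uplus_{\kK} D_k)$ conflict-free; I want $\tri_{\ttS} \ju{C \uplus (\uplus_{\kK} D_k)}{t\subx{u}:T}$. First I would prove, by induction (coinduction in the infinitary case) on $\Pi$, the structural fact that the $\ax$-leaves of $\Pi$ that type $x$ are in bijection with $K$, the leaf of index $\kK$ concluding $\ju{k}{x:S_k}$; this uses precisely that contexts and the $\app$-rule combine entries by the \emph{disjoint} union $\uplus$, so that each root of $\sSk$ is realised by exactly one such leaf. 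Grafting $\Pi_k$ in place of that leaf, for every $\kK$, then produces the desired derivation by the same (co)induction: the contexts merge along the $\uplus$'s already present in $\Pi$ and the $\Pi_k$'s, so conflict-freeness is inherited and, crucially, no choice is ever made. For $\scrR$ the statement is the same with multisets for sequences; one fixes any bijection between the $x$-occurrences of the derivation and the argument subderivations that is compatible with the declared types, and grafts along it.

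\textbf{Partition lemma.} Conversely, given $\tri_{\ttS} \ju{C'}{t\subx{u}:T}$, I would recover a sequence type $\sSk$, a derivation $\tri_{\ttS} \ju{C;x:\sSk}{t:T}$, and derivations $\tri_{\ttS} \ju{D_k}{u:S_k}$ ($\kK$) with $C' = C \uplus (\uplus_{\kK} D_k)$, as follows. The positions of $t\subx{u}$ at which copies of $u$ occur are determined by $t$ and by the free occurrences of $x$ in $t$; cutting the given derivation at exactly these positions extracts subderivations of $u$, and replacing each cut by a fresh $\ax$-rule typing $x$ --- choosing pairwise distinct axiom tracks, which is possible as $\Nmzo$ is infinite --- reassembles a derivation of $t$ carrying $x:\sSk$ in its context. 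Relevance is preserved: if $x \notin \fv{t}$ then $t\subx{u} = t$, no cut occurs and $\sSk = \est$. The $\scrR$ version forms instead the multiset of the extracted argument types.

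\textbf{Assembling, and the hard part.} For subject reduction, take a step $t \bred t'$. If the redex is at the root, $t = \lxrs$, $t' = \rsx$; the derivation ends with an $\app$ whose left premise ends with $\abs$, so peeling the $\abs$ off gives $\ju{C;x:\sSk}{r:T}$, and the argument premises $(\ju{D_k}{s:S_k})_{\kK}$ --- indexed exactly by $\Rt(\sSk) = K$, whence no choice --- feed the substitution lemma and yield a derivation of $\rsx$ with \emph{the same} conclusion context: this is the deterministic step of Fig.\;\ref{fig:deter-srS-lipics-1}. If the redex lies under a constructor, one concludes by a routine induction on its (finite) position, using the evident compatibility of $\ax$/$\abs$/$\app$ with the surrounding context. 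Subject expansion is symmetric, with the partition lemma in the role of the substitution lemma, and the $\scrR$ proofs are word-for-word identical. I expect the only genuinely delicate step to be the rigid part: establishing the one-to-one correspondence between the roots of $\sSk$ and the $x$-typing $\ax$-leaves, and checking that neither grafting (reduction) nor cut-and-refresh (expansion) ever introduces a track conflict in the merged contexts. A secondary, milder concern is the infinitary setting --- terms of $\Lamzzu$ and their derivations can be infinite and a variable can have infinitely many free occurrences --- so the two lemmas must be stated and proved coinductively and their constructions checked productive; but since $\beta$-redexes always sit at finite depth, the outer induction on the redex position is unaffected.
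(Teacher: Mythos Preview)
Your overall plan---a substitution lemma for reduction, a partition lemma for expansion, then induction on the finite position of the contracted redex---is the standard one and is sound in outline. (The paper itself does not spell out a proof of this proposition; it is stated as known, with the mechanics of residuals for $\ttS$ deferred to \cite{VialLICS17}.) There is, however, a genuine gap in your substitution lemma. You begin by asserting, as a structural fact, that the $\ax$-leaves of $\Pi$ typing $x$ are in bijection with $K=\Rt(\sSk)$, and you then construct the substituted derivation by \emph{grafting} each $\Pi_k$ at the corresponding leaf. In the coinductive setting this bijection need not hold: a derivation may be \emph{non-quantitative}, meaning that some $k\in K$ is not realised by any $\ax$-rule at all but is merely carried along an infinite branch. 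The paper exhibits exactly this phenomenon with the family $(P_k)_{k\geqs 2}$ typing $\fom$, whose contexts all contain $x:(8\cdot\tv')$ even though $x\notin\fv{\fom}$ and no axiom ever types $x$. For such a $k$ there is no leaf at which to graft $\Pi_k$, and your construction would silently drop $D_k$, producing the wrong conclusion context. Your caveat about ``infinitely many free occurrences'' shows you are alert to one infinitary subtlety, but this is a different one: not too many leaves, but too few.

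The repair is to dispense with the leaf bijection and prove the substitution lemma by direct structural coinduction on $\Pi$. At an $\app$-node the $x$-entry of the conclusion context is the disjoint union of the $x$-entries of the premises, so $K$ partitions as $K_1\uplus(\uplus_{j}K_j)$; pass the sub-family $(\Pi_k)_{k\in K_1}$ to the left premise and $(\Pi_k)_{k\in K_j}$ to the $j$-th argument premise, corecurse, and reassemble with $\app$ (the $\abs$ and $\ax$ cases are immediate). This is productive, never needs to locate a leaf, and in the non-quantitative case propagates the stray $D_k$ along the infinite branch exactly as the stray $x:(k\cdot S_k)$ was propagated before---so the resulting context is $C\uplus(\uplus_{\kK}D_k)$ as required. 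Your partition lemma and the outer induction on the redex position are unaffected.
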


\techrep{
  We write $P\rhd_{\ttS} \juCtt$ to mean that the conclusion of the trivial derivation $P$ is the judgment $\juCtt$. }
The $\app$-rule of system $\ttS$ can be restated as follows:
 $$\infer[\app]{\ju{C}{t:\sSk\rew T} \hspace{0.7cm} (\ju{D_k}{u:S'_k})_{\kK'}\hspace{0.7cm} \sSk=\sSpkp }
{\ju{C\uplus(\uplus_{\kK} D_k)}{t\,u:T}}$$
Thus, in system $\ttS$, the $\app$-rule requires that, for an application $t\,u$ to be typed, $\sSk$, the source of the arrow type given to $t$ must be \textit{(syntactically) equal} to the sequence type $\sSpkp$ given to $u$ (compare with \S\;\;\ref{ss:system-R-lipics-sC}). A consequence of the rigidity is that subject reduction is \textit{deterministic} in system $\ttS$, as hinted at by Fig.\;\ref{fig:deter-srS-lipics-1} (compare with Fig.\;\ref{fig:non-deter-srRo-lipics-1}).

Let us give $\ttS$ some high-level inputs on pointers in system $\ttS$ (again, see \S\;III. and IV. of \cite{VialLICS17} for complements). We define the support of a $\ttS$-derivation, and also the key notions of biposition and bisupport: the \textbf{support} of $P\rhd C\vdash t:T$ is defined inductively by $\supp{P}=\set{\epsi}$ if $P$ is an $\ax$-rule, $\supp{P}=\set{\epsi}\cup 0\cdot \supp{P_0}$ if $t=\lambda x.t_0$ and $P_0$ is the subderivation typing $t_0$, $\supp{P}=\{\epsi\}\cup 1\cdot \supp{P_1} \cup_{\kK} k\cdot \supp{P_k}$ if $t=t_1\,t_2,~ P_1$ is the left subderivation typing $t_1$ and $P_k$ the subderivation typing $t_2$ on track $k$. The $P_k$ ($k\in K$) are called \textbf{argument derivations}. For instance, $\supp{\Pex}=\set{\epsi,0,0\ct 1,0\ct 2,0\ct 3,0\ct 8}$, $P(0\cdot 1)=\ju{x:4\ct ((8\ct o,3\ct o',2\ct o)\rew o' )}{x:(8\ct o,3\ct o',2\ct o)\rew o' }$ and $P(0\ct 3)=\ju{x:(2\ct \tv')}{x:\tv'}$.
A (right) biposition of a $\ttS$-derivation $P$ is a pair $(a,c)$, where $a\in \supp{P}$ and $c\in \supp{T}$, where the judgment of $P$ at position $a$ is $\ju{C}{u:T}$. For this article, we just need to think of bipositions as pairs pointing to type symbols ($\tv\in \TypeV$ or $\rew$) or to labelled edges that are nested in  a given rigid derivation. We will do so when informally discussing Fig.\;\ref{fig:two-bro-threads-verbose}.

\begin{figure}
  \begin{tikzpicture}

{    
    \draw (-0.65,3.8) node {$P_r$};
    \draw (-0.65,1.57) --++ (1.2,0.4) --++ (0.1,1.65) --++ (-2.8,0) --++ (0.1,-1.65) -- cycle;
    \draw (-0.65,1.57) node [below left] {$\red{T}$};
    \draw (-1.9,2.3) --++ (1.1,0);
    \draw (-0.6,2.3) node  {\fnsz $\ax$};
    \draw (-1.3,2.1) node  {\fnsz $x:\red{(\ell \ct S)}$};
    \draw (-0.9,3) --++ (1.1,0);
    \draw (0.4,3) node  {\fnsz $\ax$};
    \draw (-0.4,2.75) node {\fnsz $x:\red{(k\ct S)}$};
    
    \blockunary{-0.65}{0.9}{$\lx$}
    \draw (-0.7,0.8) node [below left] {$\red{(k\cdot S,\ell \cdot S)\rew T}$};
    \blocka{0}{0}
    \draw (-0.05,-0.1) node [below left] {$\red{T}$};

\drawsmalltriin{0.65}{0.9}{$P^1_s$}
\drawlefttail{0.65}{0.9}
\draw (0.58,1) node [below] {$\red{S}\,\trck{k}$};

\drawsmalltriin{1.6}{0.9}{$P^2_s$}
\draw (0.18,0.14) -- (1.6,0.9);
\draw (1.7,1) node [below] {$\red{S}\trck{\ell}$};

}
\draw (2,1) node [right] {\large reduces into};


\transh{6}{    
    \draw (-0.65,3.83) node {$P_{\rsx}$};
  \draw (-0.65,1.57) --++ (1.2,0.4) --++ (0.1,1.65) --++ (-2.8,0) --++ (0.1,-1.65) -- cycle;
  \draw (-1.3,2.1) node  {\fnsz $s:\red{S}$};
  \drawsmalltriin{-1.3}{2.2}{$\Pi^2_s$};

  \draw (-0.4,2.75) node {\fnsz $s:\red{S}$};
    \drawsmalltriin{-0.4}{2.85}{$\Pi^1_s$};
}

  \end{tikzpicture}
  \vspace{-0.2cm}
  \caption{Deterministic Subject Reduction (Sequential Intersection, $\ttS$-case)}
  \label{fig:deter-srS-lipics-1}
  \vspace{-0.5cm}
\end{figure}



\techrep{
\subsection{Bipositions and Bisupport}
\label{s:bisupp}

Thanks to rigidity, we can identify and point to every part of a derivation, thus allowing to formulate many useful notions.


If $a\in \supp{P}$, then $a$ points to a judgment inside $P$ typing $t\rstr{\ovla}$. We write this judgment $\ju{\ttC(a)}{t\rstr{\ovl{a}}:\ttT(a)}$\paper{.}\techrep{: we say $a$ is an \textbf{(outer) position} of $P$.} The context function $\ttC$ and the type function $\ttT$ should be written $\ttC_P$ and $\ttT_P$ but we omit $P$. From now on, we shall also write $\tra$ and $t(a)$ instead of $t\rstr{\ovl{a}}$ and $t(\ovl{a})$.

In the example above, $\Pex(01)=\ju{x:(4\ct S)}{x:S}$, so
$\ttC(01)=x:(4\ct S)$ \ie $\ttC(01)(x)=(4\ct S)$. 
Since $S=(8\ct o,3\ct o',2\ct o)\rew o'$, we have $\ttC(01)(x)(4)=\rew$, $\ttC(01)(x)(43)=o',\ \ttT(01)(\epsi)=\rew$, $\ttT(01)(1)=o'$.
Likewise, $\Pex(03)=\ju{x:\!(2\ct \tv')}{\tv'}$, so that
 $\ttC(03)=x:(2\ct \tv')$ and $\ttT(03)=\tv'$. Thus, $\ttC(03)(x)(2)=\tv'$ and $\ttT(03)(\epsi)=o'$. We also have $\ttC(0)(x)=(2\ct \tv',4\ct (8\ct \tv,3\ct \tv',2\ct \tv)\rew \tv',5\ct \tv,9\ct \tv)$, so that $\ttC(0)(x)(2)=\tv'$ and $\ttC(0)(x)(42)=\tv$.\\

This example motivates the notion of \textbf{bipositions}: a biposition (metavariable $\p$) is a pointer into a type nested in a judgment of a derivation. A pair $(a,c)$ is a \textbf{right biposition} of $P$ if $a\in \supp{P}$ and $c\in \supp{\ttT(a)}$. A triple $(a,x,k\ct c)$ is a \textbf{left biposition} if $a\in \supp{P},\, x\in \scrV$ and $k\ct c\in \supp{\ttC(a)(x)}$.

\begin{definition}
The \textbf{bisupport} of a derivation $P$, written $\bisupp{P}$, is the set of its (right or left) bipositions.
\end{definition}

As it will turn out, we will take more time with right bipositions than with left ones.
We may consider a derivation as a function from its bisupport to the set $\TypeV\cup \set{\rew}$ and write now $P(a,c)$ for $\ttT(a)(c)$ and $P(a,x,k\ct c)$ for $\ttC(a)(x)(k\ct c)$.\\

A \textbf{bound occurrence} of a variable $x\in \TermV$ in a trivial derivation $P$ is a position $a\in \supp{P}$ such that $t(a)=x$ and there exists $a_*< a$ with $t(a_*)=\lx$. If $a$ is a bound occurrence of variable $x$ in $P$, the \textbf{binding position}  of this occurrence is the maximal $a_*\leqslant a$ such that $t(a_*)=\lx$. We then say that $a$ is bound by position $a_*$ in $t$ and  we write $a_*=\bind^P(b)$. 

If $a\in \supp{P}$ and $x\in \TermV$, we set $\AxP_a(x)=\set{a_0 \in \supp{P}\,|\,a\leqslant a_0,\,t(a)=x,\nexists a_0',\,a\leqslant a_0'\leqslant a_0,\,t(a'_0)=\lx }$ (positions of $\ax$-rules in $P$ above $a$ typing occurrences of $x$ that are not bound w.r.t. $a$).
If $a_0\in A$ is an axiom, we write $\trP{a_0}$ for its associated axiom track \eg $\trP{08}=5$ in the example above. Usually, $P$ is implicit and
we write only $\Axa(x)$ and $\tr{a}$.

\subsection{Quantitativity and Coinduction} 
\label{s:quant-deriv}

If $a\in A:=\supp{P}$ and $x\in \TermV$, we set $\AxP_a(x)=\set{a_0 \in A\,|\,a\leqslant a_0,\,t(a)=x,\nexists a_0',\,a\leqslant a_0'\leqslant a_0,\,t(a'_0)=\lx }$ (positions of $\ax$-rules in $P$ above $a$ typing occurrences of $x$ that are not bound w.r.t. $a$).
If $a_0\in A$ is an axiom, we write $\trP{a_0}$ for its associated axiom track \eg $\trP{08}=5$ in the example above. Usually, $P$ is implicit and
we write only $\Axa(x)$ and $\tr{a}$.

The presence of an infinite branch inside a derivation makes it possible that a type in a context is not created in an axiom rule. For instance, we set, for all $k\geqslant 2,~ j_k=\ju{f:(i\cdot \twoarewa)_{i\geqslant k},\,x:8\cdot\tv'}{\fom:o}$ and we coinductively define a family $(P_k)_{k\geqslant 2}$ of $\ttS$-derivations by:\\[-3ex]
  $$P_k=\infer{\infer*{\phd}{\ju{f:k\cdot\twoarewa}{f:\twoarewa}} \\
   P_{k+1}\tri j_{k+1}
  }{j_k=\ju{f:(i\cdot \twoarewa)_{i\geqslant k},\,x:8\cdot\tv'}{\fom:o}}  $$
We observe that the $P_k$ are indeed correct derivations of $\ttS$. However, notice that $x$ is typed (using track 8) whereas $x$ does not appear in the typed term $\fom$ and the part of the context assigned to $x$ cannot be traced back to any axiom rule typing $x$ with $o'$ (using axiom track 8).
This yields the notion of \textbf{quantitative derivation}, in which this does not happen:

\begin{definition}
A derivation $P$ is \textbf{quantitative} when, for all $a\in A$ and $x\in
\TermV$, $C(a)(x)=\uplus_{a'\in \AxP_a(x)} (\tr{a'}\cdot T(a'))$.
\end{definition}



Now, assume $P$ is quantitative. 
Then $\Rt(C(a)(x))=\set{\tr{a_0}\,|\, a_0\in \Axa(x)}$ and for all $a\in A,~x\in \TermV$ and $k\in \Rt(C(a)(x))$, we write $\pos{a,\,x,\, k}$ for the unique position $a'\in \Axa(x)$ such that $\tr{a'}=k$.

Actually, 
$\pos{a,\,x,\,k}$ can be defined by a downward induction on $a$ as follows:
\begin{itemize} 
\item If $a\in \Ax$, then actually $a\in \Ax(x)$ and $\tr{a}=k$ and we set 
  $\pos{a,\,x,\,k}=a$.
\item If $a:1\in A$, we set $\pos{a,\,x,\,k}=\pos{a:\ell,\,x,\,k}$, where $\ell$ is the necessarily unique (by typing  constraint) positive integer s.t. $k\in \Rt(C(a\cdot \ell)(x))$. 
\item If $a:0\in A$, we set $\pos{a,\, x,\,k}=\pos{a:0,\,x, \, k}$ 
\end{itemize}

}

\subsection{\techrep{More about $\ttS$-Representability}\paper{The Question of Representability}}
\label{ss:collapse-and-hybrid-cons}


\noindent From the introduction, we recall that we will give a positive answer to this question:\\

\noindent \textbf{Question:} Is every $\scrR$-derivation the collapse of an $\ttS$-derivation ?\\

\noindent \textbf{Attempt 1 (induction on proof structure):}
if we try to proceed by induction on the structure of a $\scrR$-derivation $\Pi$, we may be easily stuck. For instance, assume that
\begin{itemize}
  \item $\Pi=\infer{\Pi_{\lx.r}\tri \ju{\Gam}{\lx.r:\msigi\rew \tau} \hspace{1cm} (\Pi_i\tri \ju{\Deli}{s:\sigi})_{\iI}}{\ju{\Gam+(+_{\iI} \Deli)}{(\lx.r)s:\tau}}$.
  \item $\Pi_{\lx.r}$ and the $\Pi_i$ are all $\ttS$-representable.
\end{itemize}
The second assumption means that there are $\ttS$-derivations $P_{\lx.r}\tri \ju{C}{\lx.r:\sSk\rew T}$ and $(P_i\tri \ju{D_i}{s:S'_i})_{\iI'}$ that respectively collapse on $\Pi_{\lx.r}$ and the $\Pi_i$. But $P_{\lx.r}$ and the $P_i$ can be used to represent $\Pi$ only if (1) we can ensure that the $S_k$ and the $S'_i$ are syntactically equal (modulo some permutations) and (2) that we can avoid track conflicts.

For point (1), notice that, given a term $t$ that is $\scrR$-typable with type $\tau$, there may be some $\ttS$-types $T$ that collapse on $\tau$ such that $t$ is \textit{not} $\ttS$-typable with $T$ \eg if $\ovl{S}=\sig=\ovl{ S'}$, but $S\neq S'$, then $T:=(2\ct S)\rew S'$ collapses on $\tau:=\msig \rew \sig$ and $I:=\lx.x$ can be $\scrR$-typed with $\tau$ and $\ttS$-typed with $(2\ct S)\rew S$ or $(5\ct S')\rew S'$, but \textit{not} with $T$.

Thus, some typing constraints inside $r$ or $s$ could \textit{a priori} forbid that we can equalize the $S_k$ and the $S'_i$ since it is difficult to describe the $\ttS$-types of $r$ and $s$ that collapse on $\tau$ or $\sig_i$. \\

\noindent \textbf{Attempt 2 (expanding normal forms):} a fundamental method of intersection types is to type/study normal forms and then use subject expansion to extend the properties of the type system to every  term that can be reduced to a normal form. Can we do the same here ? 
As hinted as in \S\;\ref{ss:system-R-lipics-sC}, system $\scrR$ makes $\Om$ typable and thus, does not ensure any form of normalization\techrep{ (in a companion paper, we have proved that actually, every $\lam$-term is $\scrR$-typable)}: system $\scrR$ does not entail any form of normalization/productivity, and the method alluded to cannot work in the general case (it does though in the \textit{finite} one).\\

\noindent The remainder of this article is dedicated to answering positively to  the above question \ie proving:

\begin{theorem}[Representation]
\label{th:rep-R-S-naive}
Every $\scrR$-derivation is the collapse of an $\ttS$-derivation.
\end{theorem}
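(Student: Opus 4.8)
\noindent\textit{Proof plan.} The plan is to route the argument through the hybrid system $\ttSh$ (system $\ttS$ with sequence-type isomorphisms reinstated in the $\app$-rule, as developed from \S\;\ref{ss:collapse-and-hybrid-cons} onward), splitting Theorem\;\ref{th:rep-R-S-naive} into two halves: \textbf{(i)} every $\scrR$-derivation is the collapse of an $\ttSh$-derivation, and \textbf{(ii)} every $\ttSh$-derivation can be transformed into an $\ttS$-derivation having the \emph{same} collapse. Composing (i) and (ii) gives the theorem.

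Half (i) is the easy direction, and amounts essentially to \emph{naming}. Given $\Pi \tri_{\scrR} \ju{\Gam}{t:\tau}$, one builds an $\ttSh$-derivation $P$ by corecursion on $\Pi$: each $\ax$-rule typing an occurrence of a variable $x$ is given a fresh axiom track; at an $\abs$-rule the context entry for $x$ is assembled as the sequence type indexed by exactly those tracks; at an $\app$-rule the argument subderivations are collected into a sequence type which is then bridged to the source of the function's arrow type by the sequence-type isomorphism witnessing that the two have the same multiset collapse — such an isomorphism always exists, and keeping it around is precisely the point of $\ttSh$. The partiality of $\uplus$ forces some care: track conflicts in contexts are avoided by drawing the axiom tracks of distinct binders from disjoint infinite sets of integers and relabelling along the way, a construction that goes through coinductively and hence is valid also for the infinitary calculus $\Lamzzu$. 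Erasing tracks and isomorphisms from $P$ returns $\Pi$, so $P$ collapses onto $\Pi$.

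Half (ii) is the crux, and it is exactly here that the two naive strategies of \S\;\ref{ss:collapse-and-hybrid-cons} fail: there is no well-founded structure on $P$ to induct over, and $\scrR$ (hence $\ttSh$) guarantees no normal form to expand from. Instead we look for a single global relabelling of the argument-track edges of $P$ that \emph{simultaneously} turns every isomorphism occurring in $P$ into the identity; applying such a relabelling then produces an honest $\ttS$-derivation with the same collapse as $P$. Existence of such a relabelling is captured by the consistency of a first-order theory $\mathsf{T}_P$ whose language has one unknown per \emph{track-thread} of $P$ — a thread being a maximal family of argument-track edges that the isomorphisms of $P$, together with the propagation of tracks under reduction (the brother-consumption mechanism of \S\;\ref{ss:bro-consum-lipics-sC}, which exploits that argument tracks are mutable, \S\;\ref{ss:track-threads-and-representability}), force to share a common value — and whose axioms state that threads linked at an $\app$-node receive equal values while sibling threads (those competing for tracks immediately below a common position) receive distinct ones. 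A model of $\mathsf{T}_P$ is precisely a legal global relabelling. We then show $\mathsf{T}_P$ is consistent by compactness: any finite $\mathsf{T}_0 \subseteq \mathsf{T}_P$ mentions only finitely many threads and finitely many (dis)equalities, whose mutual dependencies are confined to a finite initial portion of $P$ and of its reduction; running the terminating ``collapsing'' reduction strategy of \S\;\ref{ss:syntactic-polarity-and-collapsing-strat} far enough exposes that portion as a finite, conflict-free constraint graph, solved by assigning each equality-class of threads its own track value. Hence $\mathsf{T}_P$ has a model, the relabelling exists, $P$ rigidifies, and (ii) follows.

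The main obstacle is the finite-satisfiability step of (ii): ruling out a cycle of forced equalities-and-disequalities among threads, which is the only way a finite $\mathsf{T}_0$ could be unsatisfiable. This is not routine bookkeeping; it rests on a precise analysis of how track-threads propagate through $\beta$-reduction — in particular on characterizing exactly when two argument derivations become ``brothers'' after a substitution — and on the finiteness/termination of the collapsing strategy, which is what lets a compactness argument stand in for the productivity that $\scrR$ lacks. By comparison, the lift $\scrR \to \ttSh$ of half (i) and the model-theoretic packaging itself are straightforward.
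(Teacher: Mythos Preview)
Your plan matches the paper's route --- factor through $\ttSh$, cast trivialisation as consistency of a first-order theory $\calT_P$ of equalities (from consumption $\rrewc{a}$) and disequalities (brothers), and appeal to the collapsing strategy --- but with two wrinkles. The compactness wrapper is idle: $\calT_P$ is a pure equality theory, so any inconsistency is already witnessed by a \emph{finite} brother chain $\theta_0 \lrrewc{a_0} \cdots \lrrewc{a_{n-1}} \theta_n$ with $\theta_0,\theta_n$ brothers; the paper therefore proceeds directly (Prop.~\ref{prop:if-no-bro-then-th-rep}), ruling out brother chains and then building the quotient model. And your thread definition is muddled: threads (\S\ref{ss:track-threads-and-representability}) are equivalence classes under \emph{ascendance} and \emph{polar inversion}, both purely static on $P$; the interface isomorphisms supply the consumption relation \emph{between} threads (the equality axioms of $\calT_P$), and reduction enters only in the collapsing strategy, not in either definition.

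The genuine gap is that you never say why the constraint graph becomes conflict-free after collapsing. The collapsing strategy only eliminates \emph{negative} left-consumption, turning a putative brother chain into a \emph{normal} one in which every left-consumption is positive. Ruling out normal chains is a separate, purely static argument (``Argument~1'' in \S\ref{ss:syntactic-polarity-and-collapsing-strat}): positive left-consumption strictly increases applicative depth (Lemma~\ref{l:consum-and-ad}), so a normal chain forces $\ad{\theta_0} < \ad{\theta_n}$, contradicting the fact that brother threads share applicative depth. Your sketch attributes conflict-freeness entirely to ``propagation under reduction'' and ``termination of collapsing'', but this depth-monotonicity step is the combinatorial core and is orthogonal to reduction; without it the collapsing strategy alone proves nothing.
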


The discussion above
 suggests that system $\ttS$ is too constraining for the question of representability to be addressed directly and that we should relax the $\app$-rule. This motivates the definition of \textbf{hybrid derivations} by replacing, in system $\ttS$, the $\app$-rule by: 
{
\begin{center}
    $\infer{\ju{C}{t:\sSk \rew T} \hspace{0.8cm} (\ju{D_k}{u:S'_k})_{\kK'} \hspace{0.8cm} \sSk\equiv(S'_k)_{\kK'}  }{\ju{C\uplus(\uplus_{\kK} D_k)}{t\,u:T}} \apph$
\end{center}
  }
We call this modified system  $\ttSh$. We thus relax the syntactic equality of $\app$ into equality up to sequence type isomorphism (Def.\;\ref{def:type-iso-sC}).
If, in a $\ttSh$-derivation $P$, the above $\apph$-rule occurs at position $a$, one denotes $\sSk$ by $\LstP(a)$ (standing for ``Left'') and $\sSpkp$ by $\RstP(a)$ (``Right'').
 Then, the $\apph$-rule requires that \techrep{, for an application $t\,u$ to be typed, $\sSk$, the source of the arrow type of $t$ must be \textit{01-isomorphic} to the sequence type $\sSpkp$ given to $u$ \ie $\apph$ demands that} $\LstP(a)\equiv \RstP(a)$ \ie $\RstP(a)$ and $\LstP(a)$ must be isomorphic.   Thus, a hybrid derivation $P$ is trivial when for all $a\in \suppat{P}:=\set{a\in\supp{A}\,|\,t(a)=\arob}$ (\ie for all $\app$-rules of $P$), $\LstP(a)=\RstP(a)$.

Fig.\;\ref{fig:two-bro-threads-verbose} gives an example of $\ttSh$-derivation\techrep{, which will be discussed a lot}.  Note that all the contexts replaced by $\ldots$ can be computed from the given information \eg $\ju{x:(5\cdot \typetwo),y:(3\cdot \tv)}{x\,y:\typeone}$ and that the $\apph$-rules are correct. Indeed, $\Lst(\epsi)=(8\ct \tv,9\ct \tv)\rew \tv'\equiv (3\ct \tv,5\ct \tv)=\Rst(\epsi)$ and 
$\Lst(1)=(5\ct \typetwo) \equiv (6\ct \typetwobis) =\Rst(1)$.

 {
\techrep{
   \subsection{The Hybrid Construction}}
\label{s:hybrid-cons-techrep}


  \paper{Observe that rule $\apph$ is as constraining as the $\app$-rule of system $\scrR$ is. Indeed, $\sSk \equiv \sSpkp$ iff the multiset types $\ovl{\sSk}$ and $\ovl{\sSpkp}$ are equal, according to the end of \S\;\ref{ss:S-types-lipics-sC}. Thus, it is not a surprise that one easily proves:

    \begin{proposition}
      Let $\Pi$ be $\scrR$-derivation. Then, there \paper{is}\techrep{exists} a $\ttSh$-derivation $P$ collapsing on $\Pi$.
      \end{proposition}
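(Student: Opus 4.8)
The guiding observation is that rule $\apph$ is no more restrictive than the $\app$-rule of $\scrR$: by the characterisation of $\equiv$ recalled at the end of \S\;\ref{ss:S-types-lipics-sC}, the side-condition $\sSk\equiv\sSpkp$ holds as soon as $\ovl{\sSk}=\ovl{\sSpkp}$. Hence a $\ttSh$-derivation is, essentially, an $\scrR$-derivation decorated with integer labels --- on the argument positions of the $\app$-rules, on the $\ax$-rules, and inside the sequence types occurring in contexts and conclusions --- the only genuine constraint being that every $\uplus$ appearing in the decorated object be free of track conflicts. It therefore suffices to exhibit one such labelling of $\Pi$.

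I would build $P$ by coinduction on $\Pi$, following its three rules. To treat the $\abs$-case (where the source of the conclusion type must be the bound variable's entry in the premise context) and the $\ax$-case (where the context entry and the conclusion type are forced to coincide) uniformly, the coinductive statement carries, besides $\Pi\tri\ju{\Gam}{t:\tau}$, a target $\ttS$-type $T$ with $\ovl{T}=\tau$ together with a family $(F_x)_x$ of sequence types with $\ovl{F_x}=\Gam(x)$, these data being required to be \emph{coherent} with $\Pi$ in the natural coinductively-defined sense; the conclusion produced is then a $\ttSh$-derivation $P\tri\ju{C}{t:T}$ with $C(x)=F_x$ that collapses on $\Pi$. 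One first checks that coherent data exist for $\Pi$ itself, by choosing an arbitrary lift $T$ of $\tau$ and threading a compatible lift of $\Gam$ from the root upward; the statement then applies to $\Pi$ and yields the proposition.

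The three cases are routine. For $\ax$, coherence forces $F_x=(k\cdot T)$ for some $k\geqs 2$, and the $\ax$-rule $\ju{x:(k\cdot T)}{x:T}$ concludes. For $\abs$, with $t=\lx.r$ and $T=F\rew T_0$, one recurses on the premise with target $T_0$ and the family $(F_x)_x$ augmented by $x\mapsto F$, then reapplies $\abs$. For $\app$, with $t=t_1\,t_2$ and the premises of $\Pi$ typing $t_1$ with $\msigi\rew\tau$ and typing $t_2$ with types whose multiset equals $\msigi$: one picks an injection of the argument index set into $\bbN\sm\set{0,1}$ (the argument tracks), an arbitrary lift $F$ of $\msigi$ and arbitrary lifts of the argument types, splits each $F_x$ along the multiset decomposition of $\Gam(x)$ induced by the $\app$-rule, recurses on $t_1$ with target $F\rew T$ and on each $t_2$-premise with its chosen argument-type lift, and reapplies $\apph$; its side-condition is met because the two sequence types it compares both collapse to $\msigi$ (the left one is $\ovl{F}=\msigi$ by choice of $F$, the right one is the multiset of the argument types, again $\msigi$). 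Erasing all tracks from the $P$ thus built returns $\Pi$ by construction.

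It only remains to rule out track conflicts. For this I would fix beforehand an injective assignment of axiom tracks: for every binder $\lx$ of $t$ --- and, for each free variable of $t$, a virtual binder sitting above the root --- the (countable) set of occurrences it governs is mapped injectively into $\bbN\sm\set{0,1}$, and the $\ax$-rule typing each such occurrence is decorated with the corresponding track (the lifts of types appearing ``at infinity'' along infinite branches, in the non-quantitative case, receive further globally fresh values). Since the sequence types that get $\uplus$-ed at any node arise from subsets of a single occurrence-set, their root sets are pairwise disjoint and all the $\uplus$'s are defined. The only delicate point of the argument is thus the bookkeeping of these label- and lift-choices along $\Pi$, notably along its infinite branches; but, $\equiv$ being equality of collapses, no real incompatibility can ever arise --- which is precisely why the proposition is easy, the substantive work of the paper being to upgrade $\apph$ to strict equality, i.e. to produce a \emph{trivial} representative.
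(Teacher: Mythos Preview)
Your proposal is correct and in the same spirit as the paper's (non-)proof: the paper records only the key observation that $\apph$ is exactly as constraining as the $\app$-rule of $\scrR$---since $(S_k)_{\kK}\equiv(S'_k)_{\kK'}$ iff their multiset collapses agree---and then simply states that ``one easily proves'' the proposition, giving no explicit construction. Your coinductive lifting, with globally injective axiom-track choices to preclude conflicts and with the remark on non-quantitative infinite branches, is a faithful expansion of that hint and actually supplies more detail than the paper does.
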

  
  }




\section{Pseudo-Subject Reduction and Interfaces}
\label{s:reduction}

In this short section, we explore the effect of reduction on system $\ttSh$, which satisfies a relaxed form of subject reduction and expansion:

\begin{property}[Pseudo-Subject Reduction and Expansion]\mbox{}
  \label{prop:pseudo-sr-Sh}
  \begin{itemize}
  \item  If $t\rew t'$ and   $\tri_{\ttSh} \juCtt$, then $\tri_{\ttSh} \ju{C}{t':T'}$ for some $T'\equiv T$.
  \item If $t\rew t'$ and  $\tri_{\ttSh}\ju{C}{t':T'}$, then  $\tri_{\ttSh} \ju{C}{t:T}$ for some $T\equiv T'$.
  \end{itemize}  
\end{property}

Let us explain now how subject reduction is handled with hybrid derivations and why $T$ may be replaced by an isomorphic type $T'$ in $\ttSh$ (and \textit{vice versa}).\ignore{ In system $\ttSh$, we retrieve some determinism by considering \textbf{(root) interfaces} \ie sequence type isomorphisms that constrain \paper{how reduction is processed in a derivation.}}\techrep{how axiom leaves typing $x$, the variable or a redex, should be replaced by argument derivations typing while respecting the rules of $\ttSh$. This is explained in Sec.\;\ref{s:encoding-red-choices} from an high-level perspective. In Sec.\;\ref{s:rep-lemma-discussion}, we give a few intuitions on the means to capture sequences of reduction choices. From Sec.\;\ref{s:res-deriv} to 
  Sec.\;\ref{s:res-interface}, we formally prove that interfaces provide a sound way to produce a derivation typing a reduct (Corollary\;\ref{corol:pseudo-subj-red-Sh}). 
  In Sec.\;\ref{s:rep-lemma-proof}, we formally prove that every sequence of reduction choices can be built-in inside an interface.
}

\subsection{Encoding Reduction Choices with Interfaces}
\label{s:encoding-red-choices}

In this section, we explain Prop.\;\ref{prop:pseudo-sr-Sh} \ie reduction in system $\ttSh$. We assume that $\trb=\lxrs,\; t\rewb{b} t'$ (so that $t' \rstrb=\rsx$), each axiom rule concluding with $\ju{x:(k\ct S_k)}{x:S_k}$ will be replaced by a subderivation $P_{k'}\tri \ju{D_{k'}}{s:S'_{k'}}$ satisfying $S'_{k'}\equiv S_k$. However, the $\apph$-rule states that there exists an isomorphism but does not specify this isomorphism. 
This entails that there may be many ways to produce $P'$ typing $t'$ from $P$ typing $t$ \techrep{(this occurs whenever  $\sSk$ (and so, $\sSpkp$ as well) contains some 01-isomorphic types)} \ie in system $\ttSh$, there are also reduction choices, as in system $\scrR$ (\S\;\ref{ss:system-R-lipics-sC} and Fig.\;\ref{fig:non-deter-srRo-lipics-1}).

\begin{figure*}
  \begin{tikzpicture}


  \draw (-3.6,6.8) node [below right] {
\parbox{6cm}{\small
  \textbf{Assumptions:}  $\rho_a(2)=8$, $\rho_a(7)=5$ (so that $S_2\equiv S'_8$, $S_7\equiv S'_5$) 
    
  }
    };

 \draw (3.8,6.8) node [below right] {
\parbox{6cm}{\small
  \textbf{Comment:} since $\rho_a(2)=8$, $\rho_a(7)=5$, the argument subderivation $P_8$ (resp. $P_5$) on track 8 (resp. 5) will replace the axiom rule using track 2 (resp. 7).
  }
    };


  \draw (-0.7,1.03) --++ (0,0.38);
  \draw (-2.3,1.6) node [right] {\small $\ju{C;x:\sSk\!}{\!t\!:\!T} $};
  
  \draw (-0.8,1.25) node [right]{\small $\red{0}$};
  \draw (-2.6,0.9) node [right] {\small $\ju{C\!}{\!\lx.r\!:\!(S_k)_{\kK}\!\rew\! T}$};

  \draw (0.22,0.13)--(-0.57,0.67) ;
  \draw (-0.5,0.3) node [right] {\small $\red{1}$};
  \draw (-2.3,-0.1) node [right] {\small $\ju{C\uplus D_5\uplus D_8}{\lxrs}:T \posPr{a}$};
  
  \draw (-0.7,5.2) node{\Large $P_r$};
  \draw (-0.7,1.8) --++(2.8,4) --++ (-5.6,0) --++ (2.8,-4) ;
\techrep{
  \draw (1.3,5.3) node {\ssz $\heartsuit$};  
  \red{
    \draw (2.3,4.9) node {$\posPr{a\ct 1\ct 0\ct \al_{\scriptscriptstyle \heartsuit}}$};
    \draw [->,>=stealth] (1.8,5.1) --++(-0.35,0.2) ;
  }
}  


  \draw (-2.5,4.6) node [right] {\small $\ovl{\ju{7}{x:S_7}}$ }; 
  \draw [dotted] (-1.9,4.4) --++(0,-0.21);
  \red{
    \draw (-2.8,3.5) node {$ \posPr{a\ct 1\ct 0\ct a_7}$};
    \draw [>=stealth,->](-2.4,3.7) --++ (0.25,0.7);
  }

  \draw (-0.6,4.3) node [right] {\small $\ovl{\ju{2}{x: S_2}}$};
  \draw [dotted] (0.3,4.1) --++(0,-0.21);
\red{
    \draw (1.7,3.8) node {$ \posPr{a\ct 1\ct 0\ct a_2}$};
    \draw [>=stealth,->] (1,3.9) --++ (-0.3,0.2);
  }
  

  \draw (1.5,1.1) --++ (0.5,0.8) --++(-1,0) --++ (0.5,-0.8) ;
  \draw (1.5,1.5) node{\small $P_5$};
  \draw (0.9,0.9) node [right] {\small $\ju{D_5\!}{\!s\!:\!S'_5}$};
  \draw (0.4,0.13) -- (1.5,0.7);
  \draw (0.8,0.3) node [right] {\small $\red{5}$};

  \draw (3,1.1) --++ (0.5,0.8) --++(-1,0) --++ (0.5,-0.8) ;
  \draw (3,1.5) node{\small $P_8$};
  \draw (2.4,0.9) node [right] {\small $\ju{D_8\!}{\!s\!:\!S'_8}$};
  \draw (1.1,0.13) -- (2.8,0.7); 
  \draw (1.9,0.3) node [right] {\small $\red{8}$};

\techrep{
    \draw (2.86,1.77) node {\ssz $\clubs$}; 
  \draw (2.7,2.4) node {$\posPr{a\ct 8\ct \al_{\sclub}}$}; 
\red{  \draw [->,>=stealth] (2.7,2.2) --++ (0.15,-0.3) ;}
}
\draw (-2.4,-0.8) node [right] {\bf Subderivation typing the redex};

\trans{-2.9}{0}{


\draw (9.8,3.6) node{\Large $P_r$};
  \draw (9.8,0.3) --++(2.8,4) --++ (-5.6,0) --++ (2.8,-4) ;
\techrep{
  \draw (11.8,3.8) node {\ssz $\hearts$};
    \red{
    \draw (12.7,3.4) node {$\posPr{a\ct \al_{\sheart}}$};
    \draw [->,>=stealth] (12.3,3.6) --++(-0.35,0.2) ;
  }}

  \draw (7.7,-0.1) node [right] {\small $\ju{C\uplus  D_5\uplus D_8}{\rsx}:T' \posPr{a}$};

  \draw (8.6, 3.2) --++ (0.5,0.8) --++(-1,0) --++ (0.5,-0.8) ;
  \draw (8.6,3.6) node{\small $P_5$};
  \draw (8,3.1) node [right] {\small $\ju{D_5\!}{\!s\!:\!S'_5}$ }; 
  \draw [dotted] (8.6,2.8) --++(0,-0.21);
  \red{
    \draw (7.7,2.2) node {$ \posPr{a\ct a_7}$};
    \draw [>=stealth,->](8.1,2.4) --++ (0.1,0.5);
  }
  

  \draw (10.8,3) --++ (0.5,0.8) --++(-1,0) --++ (0.5,-0.8) ;
  \draw (10.8,3.4) node {\small $P_8$}; 
  \draw (9.9,2.8) node [right] {\small $\ju{D_8\!}{s\!:\! S'_8}$};
  \draw [dotted] (10.8,2.6) --++(0,-0.21);
\red{
    \draw (12.2,2.3) node {$ \posPr{a \ct a_2}$};
    \draw [>=stealth,->] (11.5,2.4) --++ (-0.3,0.2); 
  }      
\techrep{
  \draw (10.56,3.67) node {\ssz $\clubs$}; 
  \draw (10.5,4.6) node {$\posPr{a\ct 8\ct \al_{\sclub}}$} ;
  \red{  \draw [->,>=stealth] (10.4,4.4) --++ (0.15,-0.6) ;}
}  
 \draw (7.5,-0.8) node [right] {\bf Subderivation typing the reduct};
}

\end{tikzpicture}\\[-4ex]
  \caption{Subject Reduction\techrep{ and Residuals}}
  \label{fig:SR-residuals-interface}
  $\phd$\\[-7ex]
\end{figure*}

All this is illustrated by the left part of Fig.\;\ref{fig:SR-residuals-interface}: under the same hypotheses, we assume that $a\in \supp{P}$ is such that $\ovla=b$ (thus, $a$ is the position of a judgment typing the redex to be fired) and that there are exactly 2 $\ax$-rules typing $x$ above $a$, using axiom tracks 2 and 7. Notice that the $\ax$-rule typing $x$ on track 7 (assigning $S_7$) must be above $a\ct 1\ct 0$, so that its position is of the form $a\ct 1\ct 0\ct a_7$. Likewise for the other $\ax$-rule assigning $S_2$ on track 2. We omit  $\ax$-rules right-hand sides. We also
indicate the position of a judgment between angle brackets \eg $\posPr{a\ct 1\ct 0 \ct a_7}$ means that judgment $\ju{x\!:\!(7\ct S_7)}{x\!:\!S_7}$ is at position $a\ct 1\ct 0\ct a_7$.

By typing constraints, there must be two argument derivations typing $s$ with types isomorphic to $S_2$ and $S_7$. We assume that those two argument derivations are on track 5 and 8 and conclude with $\ju{D_k}{s:S'_k}\;(k=5,8)$ where \eg $S_2\equiv S'_8$ and $S_7\equiv S'_5$. 
If moreover, $S_2\equiv S_7$, then the $\ax$-rule \#2 can be replaced by $P_5$ as well as by $P_8$: there is a reduction choice. In all cases, the type of $\rsx$ may change \eg if $x:S_7$ corresponds to the head variable of $r$, then $S_7$ is replaced by an isomorphic type $S_5$ (or $S_8$), so $T$ may also be replaced by an isomorphic $T'$.

We now use \textbf{(root) type isomorphisms} (\S\;\ref{ss:tracks-sC}, Def.\;\ref{def:type-iso-sC}) retrieve some determinism and  represent particular reduction choices in system $\ttSh$. Let $P$ be a hybrid derivation:
\begin{itemize}
\item Let $a\in \suppat{P}$. A \textbf{root interface} (resp. an \textbf{interface}) at position $a$ is a root isomorphism (resp. a sequence type isomorphism) from $\Lst^P(a)$ to $\Rst^P(a)$.
\item Let $b\in \supp{t}$ such that $t(b)=\arob$. 
  A \textbf{total (root) interface} at position $b$ is a family of (root) interfaces for all $a\in \supp{P}$ s.t. $\ovla=b$.
\item A \textbf{total interface} is the datum for an interface $\phi_a$ for all $a\in \suppat{P}$.
\end{itemize}
For all $a\in \suppat{P}$, we write $\Inter^P(a)$ for the set of interfaces at position $a$ in $P$.\techrep{ Notice that a root interface interface is a function (between root tracks) that can be extended to an interface isomorphism and conversely, every interface isomorphism $\phi$.

} 
An \textbf{operable derivation} is a hybrid derivation
endowed with a total interface.\techrep{
  If $P$ is an operable derivation whose interface is $(\phi_a)_{a\in \suppat{P}}$, we usually only write $\rho_a$ for $\Rt(\phi_a)$ (so that $\rho_a$ is a root interface) and we set $\bisuppL{P}=\set{(a\cdot 1,k\cdot c)\in \bisupp{P}\,|\,k\in \Nmzo}$ and $\bisuppR{P}=\set{(a\cdot k,c)\in \bisupp{P}\,|\,k \in \Nmzo}$. For all $\p=(a\cdot 1,k\cdot c)\in \bisuppL{P}$, we just write $\phi(\p)$ for $(a\cdot k',c')$ with $k'\in \Nmzo,\; c'\in \bbN^*$ and $k'\cdot c'=\phi_a(k\cdot c)$.
  }\\

Assuming that $S_2\equiv S_7$ in Fig.\;\ref{fig:SR-residuals-interface}, we have seen above that the two $\ax$-rules typing $x$ could be indifferently replaced by $P_5$ or $P_8$.
But if $a$ is endowed with the root interface $\rho_a$ s.t. $\rho_a(2)=8$ and $\rho_a(7)=5$, then the $\ax$-rule typing $x$ on track 2 \textit{must} be replaced by $P_8$ and the other one by $P_5$, as on the right part of the figure.\techrep{

  \begin{remark}
Still if $S_2\equiv S_7$, the root interface $\rho_a$ defined by $\rho_a(2)=5$ and $\rho_a(7)=8$ is licit, but it would produce another derivation typing $\rsx$, obtained from the right part of Fig.\;\ref{fig:SR-residuals-interface} by swapping the two inner triangles.
    \end{remark}
}

Compare this process with Fig.\;\ref{fig:deter-srS-lipics-1}: implicitly, a trivial derivation $P$ is endowed with a trivial interface ($\phi_a$ is the \textit{identity} from $\Lst^P(a)$ to $\Rst^P(a)$ for all $a\in \suppat{P}$).
%
%
\ignore{
\begin{remark}
  We can now see the advantages of sequences over multisets and lists: sequences enable both syntax-direction and rigidity, whereas multisets lack of rigidity (\S\;\ref{ss:getting-rigid}). Moreover, only sequences give rise to a trivial rigid system (without isomorphisms or permutations), whereas subject reduction cannot hold with list types without a permutation rule or type isomorphisms as in \cite{TsukadaAO17}.
\end{remark}
}


\subsection{\paper{Representation Lemma}\techrep{Residuation}}
\label{s:rep-lemma-discussion}

We now explain why \textit{operable} derivations capture every $\scrR$-reduction path.
As seen in \S\;\ref{s:encoding-red-choices}, a total \textit{root} interface$(\rho_a)_{\ovla=b}$ (we write $\ovla=b$ for  $a\in \supp{P},\, \ovla=b$) at position $b$ is enough to formally capture the notion of \textit{reduction choice} used implicitly to define a derivation $P'$ typing $t'$ from a derivation $P$ typing $t$ when $t\rewb{b}t'$.
This allows us to define a suitable notion of \textbf{residuals} for \textbf{positions} (the residual of $\al \in \supp{P}$, if it exists, is a $ \al'\in \supp{P'}$ that may be denoted $\Res^{\rho}_b(\al')$ since it depends both on $b$ and $(\rho)_{\ovla=b}$).

Now, if instead of endowing $P$ with a total \textit{root} interface $(\rho_a)_{\ovla=b}$ at position $b$, we endow it with a total interface $(\phi_a)_{\ovla=b}$ at position $b$,  we are able to define a notion of \textbf{residuals} for \textbf{right bipositions} (\S\;\;\paper{\ref{ss:system-S-lipics-sC}}), as we did for system $\ttS$ (\S\;IV.D and Fig.\;1 in  in \cite{VialLICS17}): the residual of $(\al,c)$ may be then denoted $\Res^{\phi}_b (\al,c)$
 Interestingly, in that case, \textbf{residuation} can also be defined for interfaces: more precisely, if $P$ is endowed with a total interface at position $b$ and $\al\in \suppat{P}$ is such that $\ovlal\neq b$, then $\al$ has a residual $\al':=\Res^\phi_b(\al)$ w.r.t. $\phi$ and there is a bijection $\ResI^\phi_b(\al)$ from $\Inter^P(\al)$ to $\Inter^{P'}(\al')$.

Thus, every interface $\psi'_{\al'}$ at position $\al'$ in the derivation $P'$ typing the reduct $t'$ may be seen as the residual $\ResI^\phi_b(a)(\psi_\al)$ of some interface $\psi_\al$ at position $\al:=(\Res^{\phi}_b)^{-1}(\al')$ in the derivation $P$. Thus, any reduction choice inside a reduct $t'$ of $t$ can be endowed directly in $P$ (and not only in $P'$) provided we define interfaces for every redex that is fired to obtain $t'$. By induction, this allows us to endow directly inside an interface of $P$ a sequence of reduction choices till step $n$ along a reduction sequence $t=t_0\rewb{b_0} \ldots \rewb{b_{n-1}} t_n\ldots$. This is done gradually, step by step, for all $n\in \bbN$. This explains why:

\begin{lemma}[Representation]
  \label{l:rep-reduction-choices}
Every sequence of reduction choices of length $\leqs \om$ in a\ignore{ quantitative} derivation $\Pi$ can be built-in in an operable derivation $P$ representing
(collapsing on) $\Pi$.
\end{lemma}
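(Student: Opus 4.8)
The plan is to argue by induction on the length of the reduction sequence: reduce a fixed $\ttSh$-representative of $\Pi$ step by step and, at each step, transport the prescribed reduction choice \emph{backwards} through the residuation of interfaces, until it becomes a constraint on a single total interface carried by that one fixed derivation.

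Spell out the data as $t_0 \rewb{b_0} t_1 \rewb{b_1} \cdots$ of length $N\leqs\om$ with $\Pi_0=\Pi$, together with, for each $n<N$, the $\scrR$-reduction choice $c_n$ producing $\Pi_{n+1}$ from $\Pi_n$ by subject reduction at $b_n$. First I would fix, by the hybrid construction, a $\ttSh$-derivation $P_0$ collapsing on $\Pi_0$, chosen \emph{uniform} in the sense that at every $\apph$-position sub-sequence-types with equal collapse $\ovl{\,\cdot\,}$ are already $01$-isomorphic; this is exactly what makes each $c_n$ realizable, since at an $\apph$-position a root interface is a track bijection respecting $\equiv$ whereas an $\scrR$-choice respects only $\ovl{\,\cdot\,}$. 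Then, by induction on $n$, I would build $P_0 \rewb{b_0} P_1 \rewb{b_1} \cdots \rewb{b_{n-1}} P_n$ with each $P_i$ collapsing on $\Pi_i$ and uniform enough along the reduction path, each step performed with respect to a chosen interface $\phi_{b_i}$ at $b_i$ whose collapse is $c_i$: existence of $\phi_{b_n}$ at each stage uses uniformity together with the fact that $P_n$ collapses on $\Pi_n$, and $P_{n+1}$ then collapses on $\Pi_{n+1}$ by Property\;\ref{prop:pseudo-sr-Sh} and compatibility of the collapse with reduction.

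It remains to assemble one total interface on $P_0$, and this is where residuation of interfaces does the work. For each $n$, the interface $\phi_{b_n}$ lives at the positions of $P_n$ lying over $b_n$; applying the bijections $\ResI$ backwards through steps $n-1,\ldots,0$ transports it to a partial interface specification on a set $X_n\subseteq\suppat{P_0}$ of ancestors of those positions. The crucial and most delicate point is that the $X_n$ are pairwise disjoint: were $\al\in X_n\cap X_m$ with $n<m$, then the residual of $\al$ in $P_n$ would lie over $b_n$, hence type the redex contracted at step $n$, and such a position has no residual after step $n$ — contradicting $\al\in X_m$. Hence the pieces glue into a well-defined partial interface on $P_0$, which I complete by the trivial (identity) interface at the remaining $\apph$-positions; the result is an operable derivation $P$. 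By the definition of $X_n$ and functoriality of residuation, the residual of the interface of $P$ at $b_n$ is again $\phi_{b_n}$, so reducing $P$ along $b_0,b_1,\ldots$ following its own interface reproduces $P_1,P_2,\ldots$, whose collapses are the prescribed choices $c_0,c_1,\ldots$. For $N<\om$ nothing more is needed; for $N=\om$ the union over all $n$ is still a genuine function on $\suppat{P_0}$, precisely by the disjointness above.

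The two steps I expect to be genuinely hard are (i) \emph{realizability}: producing $P_0$ uniform and checking that enough uniformity survives each reduction step along the path, so that every $c_n$ really is the collapse of an actual interface of $P_n$ — this is where tight control over the hybrid construction is required; and (ii) the disjointness of the $X_n$, which rests entirely on the residuation calculus of $\ttSh$ (a position typing a contracted redex leaves no residual, and $\ResI^\phi_b$ is a bijection on surviving positions). Given these, the inductive step is routine, using Property\;\ref{prop:pseudo-sr-Sh} and the observation — illustrated in Figure\;\ref{fig:SR-residuals-interface} — that a total root interface at $b$ encodes exactly one reduction choice at $b$.
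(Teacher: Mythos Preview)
Your argument is essentially the paper's: reduce step by step, at each step choose an interface realizing the prescribed $\scrR$-choice, then pull these interfaces back to $P_0$ via the bijections $\ResI^{-1}$, using that each $\apph$-position of $P_0$ is destroyed at a unique step (your disjointness of the $X_n$). This is exactly the mechanism of \S\ref{s:rep-lemma-discussion}.

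Two minor corrections. First, your ``uniformity'' worry in point (i) is empty: by definition the set of $\scrR$-types is the quotient of $\ttS$-types by $\equiv$, so $\ovl{S}=\ovl{S'}$ \emph{is} $S\equiv S'$. Hence any $\scrR$-reduction choice at $b_n$ is automatically the collapse of some root interface of $P_n$ at $b_n$, for \emph{any} hybrid representative; no special construction of $P_0$ is needed, and nothing has to ``survive'' reduction. Second, at the remaining $\apph$-positions of $P_0$ you cannot in general take the \emph{identity} interface, since $\Lst^{P_0}(a)$ and $\Rst^{P_0}(a)$ are only $\equiv$-related, not equal; take an arbitrary interface there (one exists by the $\apph$-rule). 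With these two fixes your sketch is correct and coincides with the paper's proof.
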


Lemma\;\ref{l:rep-reduction-choices} is not a challenge to prove, because the hybrid application $\apph$ and the interface isomorphisms \techrep{have allowed}\paper{allow} us to relax system $\ttS$ into something closer to system $\scrR$ (although being still rigid), which is equivalent to the system of \cite{TsukadaAO17} with sequences instead of lists.

\techrep{
  \subsection{Residual Derivation}
  \label{s:res-deriv}

  For the remainder of Sec.\;\ref{s:reduction}, we prove all the claims above and we define properly residuals and some associated notions. We still assume that $\trb=\lxrs$, $t\rewb{b} t'$ and $P\tri \juCtt$. The hybrid derivation $P$ comes along with  with the usual associated notations  \eg $\ttC$ for $\ttC^P$, $\ttT$ for $\ttT^P$, $\ttpos$ for $\ttpos^P$ (see Sec.\;\ref{s:bisupp} and \ref{s:quant-deriv}).

  We assume that $P$ is endowed with total root interface $(\rho_a)_{\ovla=b}$ at position $b$. Using this root interface, we will now build a hybrid derivation $P'$ concluding with $\ju{C}{t:T'}$ with $T\equiv T'$, as it was announced at the beginning of Sec.\;\ref{s:reduction}.\\

\noindent \textbf{Conventions on metavariables $a$ and $\al$} The letter $a$ will stand for a position of $P$ that corresponds to the root of the redex (\ie $a\in \supp{P}$ and $\ovla=b$) and the letter $\al$ for other positions in $\supp{P}$ or even in $A$. We set $\Axl{a}=\AxP_{a\cdot 1\ct 0}(x)$ and $\Trl{a}=\set{\trP{\al_0}\,|\,\al_0\in \Axl{a}}=\Rt(\ttT^P(a\cdot 1))$. Thus, $\Axl{a}$ is the set of positions of the redex variable (to be substituted) above $a$ and $\Trl{a}$ is the set of the axiom tracks that have been used for them. 
For instance, in Fig.\;\ref{fig:SR-residuals-interface}, $\Axl{a}=\set{a\ct 1\ct 0 \cdot a_2,\; a\ct 1\ct 0 \ct a_7}$ and $\Trl{a}=\set{2,7}$.


First, we define the residual position $\Res_b(\al)$ for each $\al \in \supp{P}$ except when $\al$ is of the form $a,~ a\cdot 1$ or $a\cdot 1\ct 0\cdot a_k$ 
(for some $a$ satisfying $\ovla=b$). We begin with discussing  the symbols $\hearts$ and $\clubs$ in Fig.\ref{fig:SR-residuals-interface}.
 In Fig.\;\ref{fig:SR-residuals-interface}, $\hearts$ represents a judgment nested in $P_r$. Thus, its position must be of the form $a\cdot 1\ct 0\cdot \al_{\sheart}$. After reduction, the $\app$-rule and $\abs$-rule at positions $a$ and $a\cdot 0$ have been destroyed and the position of this judgment $\hearts$ will be $a\cdot \al_{\sheart}$.
We set then $\Res_b(a\cdot 1\ct 0\cdot \al_{\sheart})=a\cdot \al_{\sheart}$.

Likewise, $\clubs$ represents a judgment nested in the argument derivation $P_8$ on track 8 w.r.t. $a$. Thus, its position must be of the form $a\cdot 8\cdot \al_{\sclub}$ where $a\cdot 8$ is the root of $P_8$. After reduction, $P_{8}$ will replace the $\ax$-rule typing $x$ on track $\rho^{-1}_a(8)$ \ie 2, so its root will be at $a\cdot a_2$ (by definition of $a_2)$. Thus, after reduction, the position of judgment $\clubs$ will be $a\cdot a_{2}\cdot \al_{\sclub}$. We set then $\Res_b(a\cdot 7\cdot \al_{\sclub})=a\cdot a_{2}\cdot \al_{\sclub}$.

\begin{itemize}
\item Paradigm $\clubs$: if $\al=a\cdot \kR \cdot \al_0$ where $\ovl{a}=b$ and 
$\kR\in \Arg(a)$, then $\Res_b(\al)=a\cdot a_{\kL}\cdot \al_0$ with $\kL=\rho_a^{-1} (\kR)$.
\item Paradigm $\hearts$: if $\al=a\cdot 1\ct 0\cdot \al_0$ where $a \in \supp{P},\ovla=b$ and 
$\al_0 \neq a_k$, then $\Res_b(\al)=a\cdot \al_0$.
\item Outside the redex: if $b\nleqslant \ovlal$, then $\Res_b(\al)=\al$
\end{itemize}

We set $A'=\set{\Res_b(\al)\,|\, \al\in A}=\codom{\Res_b}$ and we call $A'$ the \textbf{residual support} of $P$ (w.r.t. reduction at position $b$ and root interface $(\rho_a)_{\ovla=b}$).  By case analysis, we check that $A'$ is a tree and that $\Res_b$ is an \textit{injection} from $\dom{\Res_b}$ to $A'$. Moreover, we set $\Apheart=\set{\Res_b(\al)\,|\, \al\in \dom{\Res_b},\, \ovlal \geqslant b\cdot 1\ct 0}$, $\Apclub=\set{\Res_b(\al)\,|\, \al \in \dom{\Res_b},\, \ovlal \geqslant b\cdot 2}$ and $\Apout=\set{\al\in A\,|\, \ovlal \ngeqslant b}$, so that $A'$ is the disjoint union of $\Apheart,\; \Apclub$ and $\Apout$.\\

\begin{rmk}[Induction and reduction]
\label{rmk:induction-split}
  Assume that $\Res_b(\al_i)=\al'_i~(i=1,\,2)$ and $\al_1\leqslant \al_2'$.
\begin{itemize}
\item If $\al'_1\in \Apclub$ , then $\al'_2\in \Apclub$.
  \item If $\al'_1 \in \Apheart$, then $\al'_2\in \Apheart\cup \Apclub$.
\item If $\al'_1\in \Apout$, then $\al'_2 \in \Apout \cup \Apheart \cup \Apclub=A'$.
\end{itemize}
So, a 001-induction on $A'$ should be be split in three 001-inductions: first, one on $\Apclub$, then one on $\Apheart$, then, one on $\Apout$. See for instance the proof of Lemma\;\ref{lem-def:QRes-bip-hybrid}.
\end{rmk}

\noindent Conversely, we check that the converse injection $\Res^{-1}_b$ from $A'$ to $\dom{\Res_b}$ satisfies:
\begin{itemize}
\item If $b\nleqslant \ovlalp$, we set $\Res^{-1} _b(\al')= \al'$.
\item If $\al'=a\cdot \al_0 '$ where $a\in \supp{P},\ovla=b$ but there is no $k$  s.t. $b\geqslant a\cdot a_k$, we set $Res^{-1} _b(\al')=a\cdot 1\ct 0 \cdot \al_0 '$.
\item If $\al'=a\cdot a_k\cdot \al_0 '$ where $a\in \supp{P},\ovla=b$, we set 
$\Res^{-1} _b(\al')=a\cdot \rho_a(k)\cdot \al_0 '$.
\end{itemize}

\noindent By case analysis:

\begin{lemma}
\label{l:A-prime-desc}
For all $\al'\in A'$ and $\al$ such that $\Res_b(\al)=\al'$:
\begin{itemize} 
\item $\ovlalp \in \supp {t'}$
\item $t'(\ovlalp)= t'(\ovlal )$.
\item $\childp{\al'}=\child{\al}$ (where $\ttchild=\ttchild^A$ and $\ttchild'=\ttchild^{A'}$, see Sec.\;\ref{s:tracks}).
\end{itemize}
\end{lemma}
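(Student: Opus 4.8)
The plan is a case analysis on $\al'\in A'$ along the partition $A'=\Apheart\cup\Apclub\cup\Apout$ recorded just above, i.e.\ according to which of the three clauses defining $\Res_b$ produced $\al'$ from $\al$; since $\Res_b$ is injective, $\al$ is recovered from $\al'$ by the matching clause of $\Res_b^{-1}$. The only external ingredient I would use is the elementary description of how a $\beta$-step at $b$ acts on supports: since $\trb=\lxrs$ we have $t'\rstrb=\rsx$; a position of $t$ that is not at or below $b$ keeps its constructor in $t'$; and inside the contractum $\supp{\rsx}=\supp{r}\cup\bigcup_p p\cdot\supp{s}$, where $p$ ranges over the occurrences of $x$ in $r$ --- which are \emph{leaves} of $r$ --- the label at a position of $r$ that is not such a $p$ being unchanged and the subtree grafted at each $p$ being $s$. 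Granting this, each of the three cases becomes a one-line computation, and I would simply run through them.

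\textbf{The ``outside the redex'' clause.} Here $\al'=\al$ with $b\nleqs\ovlal$, so $\ovlalp=\ovlal$ is incomparable with, or a strict prefix of, $b$; in either situation $\ovlal\in\supp{t'}$ with $t'(\ovlal)=t(\ovlal)$, which is items (1) and (2). For item (3), a child $\al\cdot j$ of $\al$ in $P$ either again satisfies $b\nleqs\ovl{\al\cdot j}$, hence $\Res_b(\al\cdot j)=\al\cdot j\in A'$, or satisfies $\ovl{\al\cdot j}=b$ --- possible only when $\ovlal$ is the parent of $b$, and then $\al\cdot j$ is a position $a$ typing the redex, excluded from $\dom{\Res_b}$ but at which the contracted subderivation is regrafted on the \emph{same} track $j$, so $\al\cdot j\in A'$ all the same; conversely, every child of $\al$ in $A'$ arises this way. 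Hence $\childp{\al}=\child{\al}$.

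\textbf{The ``inside the redex'' clauses ($\hearts$ and $\clubs$).} In the $\hearts$ case $\al=a\cdot 1\cdot 0\cdot\al_0$ with $\ovla=b$ and $\al_0$ distinct from every axiom tail $a_k$ of $\Axl{a}$, and $\al'=a\cdot\al_0$; the key observation is that $\ovl{\al_0}\in\supp{r}$ is \emph{not} an occurrence of $x$ in $r$ --- otherwise $\al_0$ would itself be one of the axiom positions typing that occurrence, i.e.\ some $a_k$ --- so, $x$-occurrences being leaves, the label at $\ovl{\al_0}$ survives the substitution and $\ovlalp=b\cdot\ovl{\al_0}\in\supp{t'}$ with $t'(\ovlalp)=r(\ovl{\al_0})=t(\ovlal)$; for (3), the children of $\al$ that are not axioms typing $x$ stay of paradigm $\hearts$, while those of the form $a\cdot 1\cdot 0\cdot a_k$ correspond exactly to the roots $a\cdot a_k$ at which the argument subderivations $P_{\rho_a(k)}$ ($\rho_a(k)\in\Arg(a)$) are regrafted in $P'$, so $\childp{\al'}=\child{\al}$. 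In the $\clubs$ case $\al=a\cdot\kR\cdot\al_0$ with $\kR\in\Arg(a)$ and $\al'=a\cdot a_{\kL}\cdot\al_0$ for $\kL=\rho_a^{-1}(\kR)$; here $\ovl{\al_0}\in\supp{s}$ (the subderivation $P_{\kR}$ types $s$) and $\ovl{a_{\kL}}$ is an occurrence of $x$ in $r$, so $\ovlalp=b\cdot\ovl{a_{\kL}}\cdot\ovl{\al_0}$ lies in the copy of $s$ grafted there, with $t'(\ovlalp)=s(\ovl{\al_0})=t(\ovlal)$ since $\ovlal=b\cdot 2\cdot\ovl{\al_0}$ sits in the argument $s$ of the redex; and as $P_{\kR}$ is transported verbatim to $a\cdot a_{\kL}$ with $\Res_b$ acting as the identity $\al_0\mapsto\al_0$ on its support, the children are trivially preserved. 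This last sub-case is the cleanest.

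The only point I expect to require care is the bookkeeping at the boundary of the redex and in degenerate configurations: $r=x$ (so some axiom tail $a_k$ is $\epsi$ and the root of $P_r$ is itself an $x$-typing axiom), $r$ or $s$ reduced to an axiom, or $\ovla$ adjacent to another fired redex. These force one to handle correctly the $\epsi$-sensitive reading of ``$\al_0\neq a_k$'' in clause $\hearts$ and the exact track on which a regrafted subderivation lands. There is no real mathematical difficulty here, but it is precisely where an informal argument slips, so I would check these cases explicitly.
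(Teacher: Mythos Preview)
Your proposal is correct and takes exactly the approach the paper indicates: the paper's own proof consists of the single phrase ``By case analysis'', and you have carried out precisely that case analysis along the partition $A'=\Apout\cup\Apheart\cup\Apclub$. You also silently fixed what is almost certainly a typo in the statement (the right-hand side of the second item should read $t(\ovlal)$, not $t'(\ovlal)$, as confirmed by the Remark immediately following the lemma which contrasts $\QRes_b$ with $\Res_b$ on exactly this point); your proof establishes the intended equality $t'(\ovlalp)=t(\ovlal)$.
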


\noindent We set $\Aparob=\set{a\in A'\,|\, t'(a')=\arob}$. By Lemma\;\ref{l:A-prime-desc}, $\Aparob=\set{\Res_b(a)\,|\, a\in \suppat{P}}$.\\

\subsection{Residual Types and Contexts}
\label{s:res-types-and-contexts}



In this paragraph, we define the residual derivation $P'$ of $P$ (w.r.t. reduction at position $b$ and the root interface $(\rho_a)_{\ovla=b}$). The residual support $A'$ will be the support of $P'$ (\ie $A'=\supp{P'}$), but we must also define the contexts $\ttC^{P'}(\al')$ and types $\ttT^{P'}(\al')$ for all $\al'\in A'$. Since $P'$ is not built yet, we will respectively denote these contexts and types $\ttC'(\al')$ and $\ttT(\al')$ when we define them.\\

We assume that $t$ satisfies Barendregt convention \ie for all $y\in \TermV$, $\ly$ occurs at most once in $t$ and the sets of free variables and of bound variables of $t$ are disjoint.\\

Let $\Ax$ and $\Ax'$ the respective sets of leaves of $A$ and $A'$. For all $\al'\in A'$ and $y\in \TermV$, $y\neq x$, we set $\Axpalp(y)=\set{ \al_0' \in A' \, | \, \al_0'\geqslant \al'~ \text{and}~  \Res_b^{-1}(\al)\in \Ax(y)}$\ignore{ and $\Ax'(y)=\Ax'(\epsi)(y)$}.  

We observe that, for all $\al'\in \Ax'$, $\Res_b^{-1}(\al') \in \Ax$. Then, we set $\trp{\al'}=\trP{\Res_b^{-1}(\al')}$ and, for all $\al' \in A'$ and $y\in \TermV,y\neq x$, $\ttC'(\al')(y)=(\trp{\al_0'}\cdot  \ttT(\Res_b^{-1}(\al_0')) )_{\al_0' \in \Axpalp(y)}$. This definition is sound, because, if $\al_1,~ \al_2\in \Ax(\al')(y)$ for some $\al'\in A'$ and $x\in \TermV$, then $\trp{\al_1}=\trp{\al_2}$ implies $\al_1=\al_2$ (case analysis).\\

When $t'(\ovlalp)=\ly$ (with $\al'\in A'$), a case analysis shows that $\Ax(\al')(y)=\set{\Res_b(\al_0) ~ | ~ \al_0  \in \Ax(\al)(y)}$ where $\al'\in A'$. Thus, in that case, $\ttC(\al')(y)=(\trp{\al_0'}\cdot \ttT(\Res_b^{-1}(\al_0')) )_{\al_0' \in \Axpalp(y)} =(\trp{\al_0}\cdot \ttT(\al_0))_{\al_0 \in \Ax(\al)(y)}=C(\al)(y)$.\\


\noindent By a $001$-induction on $\al' \in A'$, we define now $\ttT'(\al')$: 
\begin{itemize}
\item When $\al'\in \Ax'$, $\ttT'(\al')=\ttT(\Res_b^{-1}(\al'))$.
\item When $t'(\ovlalp)=\ly$, we set $\ttT'(\al)=C(\al\cdot 0)(y)\rew \ttT'(\al\cdot 0)$.
\item When $t'(\ovlalp)=\arob$, we set $\ttT'(\al')=
\Hd(\ttT'(\al'\cdot 1))$.\\
\end{itemize}

We define then $P'$ as the labelled tree  s.t. $\supp {P'}= A'$ and for all $\al'\in A',~ P'(\al')=\ju{\ttC'(\al')}{t'\rstr{\al'}:\ttT'(\al')}$ (so that $\ttC'=\ttC^{P'}$ and $\ttT'=\ttT^{P'}$ as expected). We intend to prove that $P'$ is a correct hybrid derivation typing $t'$.


As hinted at  Sec.\;\ref{s:reduction}, we must check that a type $\ttT(\al)$ (where $\al \in A$) may only be replaced with a type $\ttT'(\al')$ (where $\al'=\Res_b(\al)$) that is isomorphic to $\ttT(\al)$ in the residual derivation typing $t'$ that we are going to define.\\

In order to check that, it is convenient to extend residuation into \textit{quasi-residuation}: namely, we define \textbf{quasi-residual} $\QRes_b(\al)$ for any $\al \in A$ such that $\ovlal\neq b\cdot 1$ by setting $\QRes_b(\al)=\Res_b(\al)$ when $\Res_b(\al)$ is defined, $\QRes_b(a)=a$ and $\QRes_b(a\cdot 1\cdot 0\cdot a_k)=a\cdot a_k$ when $\ovl{a}=b$ and $k\in \RedTr(a)$.


\begin{rmk}
\begin{itemize}
\item  We do not necessarily have $t(\al)=t'(\al')$ or
$\child(\al)=\child'{\al'}$
  when $\al'=\QRes_b(\al)$ (compare with Lemma\;\ref{l:A-prime-desc}) and $\QRes_b$ is usually not injective. For instance, if $t=(\lx.y)y,\ t'=y,\ b=\epsi=\al=\al'$, then $t\rewb{b}t'$, $\al'=\QRes_b(\al)$ but $t(\al)=\arob \neq y=t'(\al')$ and $\child(\al)$ contains at least $1$ but $\childp{\al'}$ is empty.
\item However, quasi-residuals will be useful to define the isomorphisms $\Res_{b|\al}, ~ \ResR_{b|\al}$ and $\ResL_{b|\al}$ below.
\end{itemize}
\end{rmk}

\begin{lemma}
\label{lem-def:QRes-bip-hybrid} 
This lemma is also a definition: that of the \textbf{quasi-residuation}.
 \begin{itemize}
\item For all $\al'\in A',~ \ttT'(\al')\equiv \ttT(\al)$ where $\al=\Res_b^{-1}(\al')$. Besides, if $\al' \in \Ax'$ or $\al' 
\geqslant a\cdot a_k$ (for $\ovl{a}=b$), then $\ttT'(\al')=\ttT(\al)$.
\item More precisely, if $P$ is endowed with an interface $(\phi_a)_{\ovla=b}$ at position $b$ (extending the root-interface $(\rho_a)_{\ovla=b}$), then, for all $\al \in A$ and $\al'\in A'$ such that $\QRes_b(\al)=\al'$, we can define a type isomorphism $\QRes_{b|\al}:\,\ttT(\al)\rew \ttT'(\al')$ by  001-induction on $\al'$.
\item When $\Res_b(\al)=\al'$, we write $\Res_{b|\al}$ instead of $\QRes_{b|\al}$. Moreover, $\Res_{b|\al}$ is the identity if $\al'\in \Ax'$ or $\al'\geqslant a\cdot a_k$ for some $a\in A,\ovla=b$ and $k\in \AxTr(a\cdot 1\cdot 0)(x)$.
 \end{itemize}
\end{lemma}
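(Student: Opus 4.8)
The plan is to establish the three bullets simultaneously by a $001$-induction on $\al'\in A'$ that follows, clause by clause, the recursive definition of $\ttT'$ (axiom, $\abs$, $\app$) and, at the single boundary position $a$ of the fired redex, inspects the head constructor of $r$. As Remark~\ref{rmk:induction-split} already signals, this cannot be one monolithic induction: it must be organized as three successive $001$-inductions, first over $\Apclub$, then over $\Apheart$, then over $\Apout$. The reason is that $\ttT'$ is computed by different clauses on these regions, whereas a $001$-induction only travels upward along $0$- and $1$-edges; along such edges one may leave a plugged argument subderivation and land in the body of $r$, or leave the body of $r$ and land in the surrounding derivation, but never the reverse (that would take an argument track). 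Hence $\Apclub$ is $001$-downward-closed and is treated first; $\Apheart$ need not be (a $0/1$-child of an $\Apheart$-node may be a plugged root, hence in $\Apclub$), but every such child lies in $\Apclub$, so $\Apheart$ comes second; and $\Apout$, whose $0/1$-children may lie in either of the other two regions, comes last.

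I would first dispose of the \emph{transplanted} region. If $\al'\geqslant a\cdot a_k$ with $\ovla=b$ and $k\in\Trl{a}$, then by construction the subtree of $A'$ rooted at $a\cdot a_k$ is a verbatim copy of the argument subderivation $P_{\rho_a(k)}$ rooted at $a\cdot\rho_a(k)$ in $P$; thus $\ttT'(\al')=\ttT(\Res_b^{-1}(\al'))$ \emph{literally}, and one sets $\Res_{b|\al}=\ttid$, which already yields the equality clauses of the first and third bullets on this region. The one genuinely non-identity isomorphism occurs at the axiom position $a\cdot 1\cdot 0\cdot a_k$ of the redex variable, which is merely a \emph{quasi}-residual (its image under $\QRes_b$ is the root $a\cdot a_k$ of the plugged subderivation): there one defines $\QRes_{b|a\cdot 1\cdot 0\cdot a_k}$ to be the component of the interface $\phi_a$ at track $k$, which is precisely a type isomorphism from $S_k=\ttT(a\cdot 1\cdot 0\cdot a_k)$ onto $S'_{\rho_a(k)}=\ttT'(a\cdot a_k)$.

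The remaining cases obey one pattern. If $\al'=\Res_b(\al)$ lies outside the transplanted region, then the head constructor is preserved along residuation (Lemma~\ref{l:A-prime-desc}), and: (i) if it is an axiom --- necessarily of some $y\neq x$ --- then $\ttT'(\al')=\ttT(\al)$ by definition of $\ttT'$, and $\Res_{b|\al}:=\ttid$; (ii) if it is $\ly$, then, using that the context on a bound variable is preserved along residuation ($\ttC'(\al'\cdot 0)(y)=\ttC(\al\cdot 0)(y)$, as established when $\ttC'$ was defined), we have $\ttT'(\al')=\ttC(\al\cdot 0)(y)\rew\ttT'(\al'\cdot 0)$ and $\ttT(\al)=\ttC(\al\cdot 0)(y)\rew\ttT(\al\cdot 0)$, so $\QRes_{b|\al}:=\ttC(\al\cdot 0)(y)\rew\QRes_{b|\al\cdot 0}$, the right-hand component being supplied by the induction hypothesis; (iii) if it is $\arob$, then $\ttT'(\al')=\Hd(\ttT'(\al'\cdot 1))$ and $\ttT(\al)=\Hd(\ttT(\al\cdot 1))$, so $\QRes_{b|\al}:=\Hd(\QRes_{b|\al\cdot 1})$. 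In case (iii) one also checks that $\al'$ carries a legal $\apph$-rule in $P'$, i.e.\ $\LstPp(\al')\equiv\RstPp(\al')$; this follows by transitivity from $\LstPp(\al')=\Tl(\ttT'(\al'\cdot 1))\equiv\Tl(\ttT(\al\cdot 1))=\LstP(\al)$ (via $\Tl(\QRes_{b|\al\cdot 1})$), from $\RstPp(\al')\equiv\RstP(\al)$ (via the componentwise isomorphisms already produced on the argument subderivations), and from $\LstP(\al)\equiv\RstP(\al)$, which holds in $P$.

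Finally I would handle the boundary position $a$ (where $\QRes_b(a)=a$) by cases on $r$: if $r=r_1\,r_2$ this is an instance of clause (iii), with $\al\cdot 1$ read off as the residual of $a\cdot 1\cdot 0\cdot 1$; if $r=\ly.r_0$, an instance of clause (ii); if $r$ is a variable $y\neq x$, then $a$ is in fact the genuine residual of the $y$-axiom of $r$, it lands in $\Ax'$, and $\QRes_{b|a}=\ttid$; and if $r=x$ --- the trivial redex $(\lx.x)s$ --- the reduct is $s$, position $a$ receives the copy of the unique argument subderivation, and $\QRes_{b|a}$ is again the relevant component of $\phi_a$. The first assertion of the first bullet is then just the existence of the isomorphism produced (instantiating the interface trivially when none is given), and the identity clauses of the first and third bullets are read straight off the construction. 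I expect the real difficulty to be bookkeeping rather than mathematics: isomorphisms propagate functorially through the two clauses $F\rew(-)$ and $\Hd(-)$, and are merely copied inside transplanted subderivations, so the care goes into organizing the triple $001$-induction so that the region boundaries are respected, and into covering, without gaps, the degenerate configurations around the redex ($r$ a variable, $r=x$, or an $x$-axiom sitting already at $a\cdot 1\cdot 0$, i.e.\ some $a_k=\epsi$).
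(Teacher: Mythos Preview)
Your proposal is correct and follows essentially the same $001$-induction as the paper, split into the three regions $\Apclub$, $\Apheart$, $\Apout$, with the transplanted region giving identities, the $x$-axiom positions giving components of $\phi_a$, and the remaining nodes propagating isomorphisms through $F\rew(-)$ and $\Hd(-)$.

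One organizational difference is worth recording. At the boundary position $a$ (where $\QRes_b(a)=a$) you branch on the head constructor of $r$; the paper instead handles this uniformly by observing that $\ttT(a)=\ttT(a\cdot 1\cdot 0)$ (the type is unchanged across the destroyed $\app$/$\abs$ pair of the redex) and simply setting $\QRes_{b|a}:=\QRes_{b|a\cdot 1\cdot 0}$, reusing the isomorphism already produced in the $\hearts$-region. This avoids inspecting $r$ altogether and automatically absorbs your degenerate cases (if $r=x$, then $a\cdot 1\cdot 0$ is itself an $x$-axiom, i.e.\ some $a_k=\epsi$, and $\QRes_{b|a\cdot 1\cdot 0}$ is already the relevant component of $\phi_a$). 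Your case analysis works, but it recomputes what the paper gets for free. Also, your paragraph checking $\LstPp(\al')\equiv\RstPp(\al')$ is not part of this lemma in the paper; it is established afterward when $P'$ is shown to be a correct hybrid derivation.
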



\begin{proof}
We proceed by 001-induction on $\al'\in A':=\supp{P'}$ and split the cases as suggested in Remark\;\ref{rmk:induction-split}. 
  \begin{itemize}
\item Paradigm $\clubs$: $\Res_b(\al)=\al'$ and $\al'\geqslant a\cdot a_k$
(for some $a\in A,\ovla=b$ and $k\in \RedTr(a)$).
\begin{itemize}
\item Subcase $t(\ovlal)=y$: here, $t(\ovlal)=y\neq x$ 
and $\ttT'(\al')=\ttT(\al)$.
\item Subcase $t(\ovlal)=\lambda y$: $\al\cdot 0\in A$, $\Res_b(\al\cdot 0)=\al'\cdot 0$ and by IH, we have  $\ttT'(\al' \cdot 0)=\ttT(\al\cdot 0)$  and $\Res_{b|\al\cdot 0}$ is the identity $\ttid_{\ttT(\al\cdot 0)}$.\\
Since $\ttT(\al)=\ttC(\al\cdot 0)(y)\rew \ttT(\al\cdot 0)$ and 
$\ttT'(\al')= \ttC(\al\cdot 0)(y)\rew \ttT'(\al'\cdot 0)$, we also have
$\ttT(\al)=\ttT'(\al)$ and we set $\Res_{b|\al}=\ttid_{\ttT(\al)}$.
\item Subcase $t(\ovlal)=\arob$: $\al\cdot 1\in A$,  
$\Res_b(\al\cdot 1)=\al'\cdot 1$ and by IH, we have $\ttT'(\al\cdot 0)=\ttT(\al'\cdot 0)$ and $\Res_{b|\al\cdot 1}$ is the identity $\ttid_{\ttT(\al\cdot 1)}$.\\
Moreover, $\ttT(\al)=\Hd(\ttT(\al\cdot 1))$ and $\ttT'(\al')=\Hd(\ttT'(\al'))$.
So $\ttT(\al)=\ttT(\al')$ and we set $\Res_{b|\al}=\ttid_{\ttT(\al)}$.\\
\end{itemize}

\item Paradigm $\hearts$: $\al\geqslant a\cdot 1 \cdot 0$ and $\al'\geqslant a$ 
(for some $a\in A,\ovla=b$):
\begin{itemize}
\item Subcase $\al=a\cdot 1 \cdot 0\cdot a_{\kL}$ and $\al'=a\cdot a_{\kL}$: $\al'=\Res_b(a\cdot \kR)$ (where $\kR=\rho_a(\kL)$) and by $IH$, $\ttT'(\al')=\ttT(a\cdot \kR)$. Moreover, since $\ttT(\al)=\Lst(\al)\rstr{\kL}$, we can set $\QRes_{b|\al}=\phi_{a|\kL}$.
\item Subcase $t(\ovlal)=y\neq x$: $\Res_b(\al)=\al'$ and $\ttT'(\al')=\ttT(\al)$.
\item Subcase $t(\ovlal)=\lambda y$: $\al\cdot 0\in A$,
$\Res_b(\al\cdot 0)=\al'\cdot 0$: we set $\Res_{b|\al}=\ttC(\al \cdot 0)(y)\rew \QRes_{b|\al\cdot 0}$. 
\item Subcase $t(\ovlal)=\arob$: $\al\cdot 1\in A$, $\Res_b(\al\cdot 1)=\al'\cdot 1$: we set $\Res_{b|\al} =\Hd(\QRes_{b|\al\cdot 1})$.\\
\end{itemize}

\item Outside the redex: $\ovlal\ngeqslant b$: 
\begin{itemize}
\item Subcase $\al'\in \Ax'$: here, $t(\ovlal)=y\neq x$ and 
$\ttT'(\al')=\ttT(\al)$.
\item Subcase $\al=\al'=a$ (for some $a\in A,\ovla=b$): $a=\Res_b(a\cdot 
1 \cdot 0)$ and by IH, we have an type isomorphism $\QRes_{b|a\cdot 1 \cdot 0}:\,\ttT(a\cdot 
1 \cdot 0)\rew \ttT(a)$. Since $\ttT(a\cdot 1 \cdot 0)=\ttT(a)$, we can set $\QRes_{b|a}
=\QRes_{b|a\cdot 1 \cdot 0}$.
\item Subcase $t(\ovlal)=\lambda y$: $\al\cdot 0\in A$,
$\QRes_b(\al\cdot 0)=\al'\cdot 0$ and we set $\Res_{b|\al}=\ttC(\al \cdot 0)(y) \rew \QRes_{b|\al\cdot 0}$
\item Subcase $t(\ovlal)=\arob$: $\al\cdot 1\in A$,
$\QRes_b(\al\cdot 1)=\al'\cdot 1$: we set $\Res_{b|\al}=\Hd(\QRes_{b|\al\cdot 1})$.
\end{itemize}
\end{itemize}
  \end{proof}

\begin{remark}
It is far easier to define the residual of a biposition for a derivation of $\ttS$: if $P$ is trivial,  whenever $\alpha':= \Res_b(\alpha)$ is defined, the residual biposition of $\p:=(\al,\gam)\in \bisupp{P}$ is $\Res_b(\p)=(\alpha',\,\gamma)$, as in \cite{VialLICS17}.
\end{remark}

\subsection{Residual Interface}
\label{s:res-interface}

We notice that if $\al\in \suppat{P}$ and $\ovlal\neq b$, then $\Res_b(\al)$ is defined. So, for $\al'\in \Aparob$ (Sec.\;\ref{s:res-deriv}), we set $\Lst'(\al')=\Tl(\ttT'(\al'\cdot 1))$, $\ArgTr'(\al')=\set{ k \geqslant 2\, | \, \al' \in A'}$ and $\Rst'(\al')=(k\cdot \ttT'(\al'\cdot k))_{k \in \ArgTr'(\al)}$. We 
write then $\Inter'(\al')$ for the set of sequence type isomorphisms from 
$\Lst'(\al')$ to $\Rst'(\al')$.\\

Assume that $\al'\in \Aparob$. Let us write $\al=\Res_b^{-1}(\al')$, so that $\al \in \suppat{P}$, $\al\cdot 1\in A,~ \ovlal\neq b\cdot 1,~ \QRes_b(\al\cdot 1)=\al'\cdot 1$ and $\child(\al) =\child'(\al')$. Thanks to Lemma\;\ref{l:A-prime-desc}:
\begin{itemize} 
\item Since $\QRes_{b|\al\cdot 1}$  is a type isomorphism from $\ttT(\al\cdot 1)$ to $\ttT'(\al' \cdot 1)$ and $\Lst'(\al')=\Tl(\ttT'(\al' \cdot 1))$, then $\ttT'(\al'\cdot 1)$ is an arrow type (since $\ttT(\al\cdot 1)$ is)
and  we define the sequence type isomorphism $\ResL_{b|\al}$ by $\ResL_{b|\al}= \Tl(\Res_{b|\al\cdot 1})$. 
\item We can define $\ResR_{b|\al}$ by $\ResR_{b|\al}(k\cdot \gamma)=k\cdot \Res_{b|\al\cdot k}(\gamma)$. It is a sequence type isomorphism from $\Lst(\al)$ to $\Lst'(\al')$.
\end{itemize}

Thus, for each application node $\al\in \suppat{P}$ such that $\ovlal\neq b$, the residual $\al'=\Res_b(\al)$ is defined and we can define a bijection 
$\ResI_{b|\al}$ from $\Inter(\al)$ to $\Inter'(\al')$ by
$\ResI_{b|\al}(\phi)=\ResR_{b|\al}\circ \phi \circ \ResL_{b|\al}^{-1}$,
so that the following diagram is commuting:

$$  \begin{array}{ccc}
\Lst(\al) \phantom{ \ResL_{b|\al} }
        & \stackrel{\phi}{\xrightarrow{\hspace*{1cm}}}
        & \Rst(\al) \phantom{ \ResR_{b|\al} } \\
\Bigg\downarrow \ResL_{b|\al} & & \Bigg\downarrow \ResR_{b|\al} \\
\Lst'(\al') \phantom{ \ResL_{b|\al} }
        & \stackrel{\ResI_{b|\al}(\phi)}{\xrightarrow{\hspace*{1cm}}}
        & \Rst'(\al') \phantom{ \ResR_{b|\al} }
\end{array} $$

It means that the set of interface isomorphisms at position $\al'=\Res_b(\al)$ in $P'$ can also be seen as the residual bijective image of the set of interface isomorphisms at position $\al$ in $P$. This observation is pivotal to prove the Representation Lemma in the next subsection.\\

Assume $P$ is endowed with a complete interface $(\phi_a)_{a\in \supp{P}}$ (\ie $P$ is an operable derivation). For all $\al'\in \Aparob$, we set $\phi'_{\al'}=\ResI_{b|\al}(\phi_{\al})$, where $\al=\Res_b^{-1}(\al')$. Notice again that we can retrieve $\phi_\al$ from $\phi'_{\al'}$ since $\ResI_{b|\al}$ is a bijection.
We have enough to ensure:

\begin{proposition}
  This proposition is also a definition.
  \begin{itemize}
    \item The labelled tree $P'$ defined at end of Sec.\;\ref{s:res-types-and-contexts} is a hybrid derivation.
    \item When $P$ is endowed with an interface $(\phi_\al)_{\al\in \suppat{P}}$, then for all $\al'\in A'$, $\phi'_{\al'}$ is an interface isomorphism at pos. $\al'$.
\item Thus, the family $(\phi'_{\al'})_{\al'\in\suppat{P'}}$ is a total interface for $P'$. 
We call it the \textbf{residual interface} of $(\phi_\al)_{\al \in \supp{P}}$ after firing the redex at position $b$. When $P'$ is endowed with the
residual interface, it is an operable derivation
\end{itemize}
\end{proposition}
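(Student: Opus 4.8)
The plan is to verify the three bullet points in turn, relying heavily on the machinery built in Sections \ref{s:res-deriv}--\ref{s:res-interface}. Most of the work has in fact already been done: $A' = \supp{P'}$ has been defined and checked to be a tree, the functions $\ttC'$, $\ttT'$ have been defined by $001$-induction, and the quasi-residuation isomorphisms $\QRes_{b|\al}$ have been established in Lemma\;\ref{lem-def:QRes-bip-hybrid}. So the proof is essentially a matter of assembling these ingredients and checking the local correctness conditions.

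First I would prove that $P'$ is a hybrid derivation. Since $\supp{P'}=A'$ is a tree and each node carries the judgment $\ju{\ttC'(\al')}{t'\rstr{\ovlalp}:\ttT'(\al')}$, I only need to check that at every $\al'\in A'$ the relevant $\ttSh$-rule ($\ax$, $\abs$, or $\apph$) is correctly applied, proceeding by the same three-phase $001$-induction as in Remark\;\ref{rmk:induction-split} ($\Apclub$, then $\Apheart$, then $\Apout$). For $\al'\in\Ax'$ (an $\ax$-rule): by Lemma\;\ref{l:A-prime-desc}, $t'(\ovlalp)=t(\ovlal)$ is a variable $y$ with $\Res_b^{-1}(\al')\in\Ax(y)$, and by construction $\ttT'(\al')=\ttT(\Res_b^{-1}(\al'))$ and $\ttC'(\al')(y)=(\trp{\al'}\cdot\ttT'(\al'))$, so the $\ax$-rule is respected; for $y\neq$ that variable one checks $\ttC'(\al')(y)=\est$ from the definition of $\Axpalp$. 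For $t'(\ovlalp)=\ly$ (an $\abs$-rule): one uses the fact, already noted in \S\;\ref{s:res-types-and-contexts}, that $\ttC'(\al')(y)=\ttC(\al)(y)$ in this case, together with $\ttT'(\al')=\ttC'(\al'\cdot 0)(y)\rew\ttT'(\al'\cdot 0)$ by construction, and the additivity of contexts (which follows from quantitativity being preserved under $\Res_b$, established in \S\;\ref{s:res-types-and-contexts}). For $t'(\ovlalp)=\arob$ (an $\apph$-rule): here I must check the side condition $\LstP(\al')\equiv\RstP(\al')$, which is exactly where the interface hypothesis is used.

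For the second bullet, I would take $\al'\in\Aparob$ and $\al=\Res_b^{-1}(\al')\in\suppat{P}$, and invoke the construction from \S\;\ref{s:res-interface}: $\ResL_{b|\al}=\Tl(\Res_{b|\al\cdot 1})$ is a sequence type isomorphism $\Lst(\al)\rew\Lst'(\al')$ (using that $\QRes_{b|\al\cdot 1}$ is a type isomorphism of the arrow types $\ttT(\al\cdot 1)$ and $\ttT'(\al'\cdot 1)$), and $\ResR_{b|\al}$, defined by $\ResR_{b|\al}(k\cdot\gamma)=k\cdot\Res_{b|\al\cdot k}(\gamma)$, is a sequence type isomorphism $\Rst(\al)\rew\Rst'(\al')$. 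Since $\phi_\al\in\Inter(\al)$ is by hypothesis an isomorphism $\Lst(\al)\rew\Rst(\al)$, the composite $\phi'_{\al'}=\ResI_{b|\al}(\phi_\al)=\ResR_{b|\al}\circ\phi_\al\circ\ResL_{b|\al}^{-1}$ is an isomorphism $\Lst'(\al')\rew\Rst'(\al')$, i.e.\ an element of $\Inter'(\al')$; in particular $\Lst'(\al')\equiv\Rst'(\al')$, which closes the $\apph$-case of the first bullet. That $\ResI_{b|\al}$ is a bijection is immediate from the commuting diagram, since $\ResL_{b|\al}$ and $\ResR_{b|\al}$ are isomorphisms, hence invertible. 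The third bullet is then purely formal: the family $(\phi'_{\al'})_{\al'\in\suppat{P'}}$ assigns an interface to every $\apph$-node of $P'$ (note $\suppat{P'}=\Aparob=\set{\Res_b(a)\,|\,a\in\suppat{P}}$ by Lemma\;\ref{l:A-prime-desc}), so it is a total interface and $P'$ so equipped is an operable derivation; the name ``residual interface'' is just a definition.

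The main obstacle I anticipate is the $\apph$-case of the first bullet together with the verification of additivity of contexts at $\abs$- and $\apph$-nodes: one must be careful that $\ttC'(\al')=\ttC'(\al'\cdot 1)\uplus(\uplus_{k}\ttC'(\al'\cdot k)_k)$ genuinely holds and that no track conflicts are introduced by $\Res_b$ --- this ultimately rests on the injectivity properties of $\Res_b$ and $\Res_b^{-1}$ (checked in \S\;\ref{s:res-deriv}) and on the soundness of the context definition noted just after the definition of $\ttC'$. The delicate point is really bookkeeping: matching up, via $\QRes_b$, the axiom leaves of $P'$ above a given node with those of $P$, and confirming that the $\QRes_{b|\al}$ isomorphisms are coherent with the $\uplus$ decomposition at each node. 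But since Lemma\;\ref{lem-def:QRes-bip-hybrid} has already packaged the type-level coherence, and \S\;\ref{s:res-types-and-contexts} the context-level coherence, this reduces to a mechanical case analysis following the same $001$-induction skeleton.
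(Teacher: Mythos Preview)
Your proposal is correct and matches the paper's (implicit) approach: the paper does not write out a proof of this proposition, treating it as an immediate consequence of the constructions in \S\ref{s:res-deriv}--\ref{s:res-interface} (the tree $A'$, the functions $\ttC'$, $\ttT'$, Lemma~\ref{lem-def:QRes-bip-hybrid}, and the definitions of $\ResL_{b|\al}$, $\ResR_{b|\al}$, $\ResI_{b|\al}$). Your outline correctly identifies exactly these ingredients and the residual bookkeeping (the $001$-induction split of Remark~\ref{rmk:induction-split}, context additivity and absence of track conflicts via the injectivity of $\Res_b$), so there is nothing to add.
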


\noindent This entails, as expected:

\begin{corollary}[Pseudo Subject Reduction]
  \label{corol:pseudo-subj-red-Sh}
 If $t\rew t'$ and $\tri_{\ttSh} \juCtt$, then $\tri_{\ttSh} \ju{C}{t':T'}$ for some $T'\equiv T$.
\end{corollary}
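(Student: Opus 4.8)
The plan is to read the statement off the residuation construction of Sections~\ref{s:res-deriv}--\ref{s:res-interface} together with the Proposition immediately above: essentially all the content is already there, and what is left is to identify the conclusion of the residual derivation.

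First I would unfold $t\rew t'$ into a single-redex step $t\rewb{b}t'$ for some $b$, so that $\trb=\lxrs$ and $t'\rstrb=\rsx$. Up to $\alpha$-renaming the bound variables of $t$ (which changes neither $\ttSh$-typability nor the types involved) I may assume $t$ satisfies Barendregt's convention, as required in Section~\ref{s:res-types-and-contexts}; in particular the redex variable $x$ occurs free neither in $t$ nor in $s$, hence not in $t'$ either. Then, given a hybrid derivation $P\tri_{\ttSh}\juCtt$, I would turn it into an operable one: since $P$ is hybrid, $\Lst^P(a)\equiv\Rst^P(a)$ for every $a\in\suppat{P}$, so $\Inter^P(a)\neq\eset$, and choosing one interface $\phi_a$ at each $\arob$-node equips $P$ with a total interface, and in particular with a total root interface $(\rho_a)_{\ovla=b}$ (taking $\rho_a=\Rt(\phi_a)$) at the redex position $b$.

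If the fired redex lies inside an untyped, i.e.\ $\est$-weighted, subterm of $t$ --- equivalently, no $a\in\supp{P}$ satisfies $\ovla\geqslant b$ --- then $\Res_b$ is the identity on $\supp{P}$, every position is ``outside the redex'', and $P$ itself is already a derivation of $\ju{C}{t':T}$, so $T':=T$ works. In the main case I would run the residuation construction at position $b$ with the chosen interface; by the Proposition above the resulting labelled tree $P'$, endowed with its residual interface, is an operable (hence hybrid) derivation with $\supp{P'}=A'$. It then remains to inspect $P'(\epsi)=\ju{\ttC'(\epsi)}{t':\ttT'(\epsi)}$: since $t'\rstr\epsi=t'$, one only needs $\ttT'(\epsi)\equiv T$ and $\ttC'(\epsi)=C$, whereupon $T':=\ttT'(\epsi)$ does the job. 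For the type, $\Res_b^{-1}(\epsi)$ is $\epsi$ when $b\neq\epsi$ and $1\ct 0$ when $b=\epsi$; by Lemma~\ref{lem-def:QRes-bip-hybrid} we have $\ttT'(\epsi)\equiv\ttT(\Res_b^{-1}(\epsi))$, which is $\ttT(\epsi)=T$ in the first case and $\ttT(1\ct 0)$ in the second --- and $\ttT(1\ct 0)=\ttT(\epsi)=T$ because the $\abs$-rule at position $1$ and the $\apph$-rule at $\epsi$ in $P$ give $r$ and the redex $\lxrs$ the same type. For the context, I would unfold the definition of the residual contexts: $\Res_b^{-1}$ maps the axiom leaves of $P'$ above $\epsi$ typing a variable $y\neq x$ bijectively onto those of $P$, preserving both the axiom track and the typed type, so that $\ttC'(\epsi)(y)=C(y)$; and $\ttC'(\epsi)(x)=\est$ since $x$ occurs in neither $t$ nor $t'$. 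Hence $\ttC'(\epsi)=C$ and $P'\tri_{\ttSh}\ju{C}{t':T'}$ with $T'\equiv T$.

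The only mildly delicate point is the case $b=\epsi$, where the root of $P'$ inherits its type from position $1\ct 0$ of $P$ rather than from the root of $P$, so one has to observe that a redex $\lxrs$ and its body $r$ carry the same type under the $\abs$- and $\apph$-rules. Beyond this --- and the routine bookkeeping about the root context and the untyped-redex case --- there is no real obstacle, since the substance of the statement, namely that $P'$ is a well-formed hybrid derivation typing $t'$, has already been established by the residuation machinery and the preceding Proposition.
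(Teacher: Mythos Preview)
Your proposal is correct and follows exactly the route the paper intends: the paper states the corollary immediately after the Proposition with nothing more than ``This entails, as expected'', and you have simply unpacked what that entailment consists of---choosing an interface, running the residuation of Sections~\ref{s:res-deriv}--\ref{s:res-interface}, invoking the Proposition for well-formedness of $P'$, and reading off the root judgment via Lemma~\ref{lem-def:QRes-bip-hybrid}. Your treatment of the edge case $b=\epsi$ (where $\Res_b^{-1}(\epsi)=1\cdot 0$ and one uses $\ttT(1\cdot 0)=\ttT(\epsi)$) and of the untyped-redex case are details the paper leaves implicit; your handling of both is fine, and the context argument matches the paper's definition of $\ttC'$ in terms of axiom leaves.
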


Thus, if needed, we can apply a new $\beta$-reduction in $t'$ according
to this residual interface without having to define a new one. It allows
us to define deterministically the way we perform reduction (inside a derivation) in any reduction sequence of length $\ell \leqslant \omega$. Now, we can assert that, for instance, $A'=\supp{P'},\ \Aparob=\suppat{P'}$, for all $\al'\in \Aparob,\ \Lst'(\al')=\Lst^{P'}(\al'),\ \Rst'(\al')=\Rst^{P'}(\al')$. To sum up:

\begin{rmk} 
\begin{itemize}
\item We need only a total root interface at position $b$ to define the hybrid derivation $P'$.
\item If we have a total interface at position $b$, we may define the isomorphisms $\Res_{b|\al}:\, \ttT(\al)\rew \ttT'(\al')$  (resp. $\QRes_{b | \al}:\, \ttT(\al)\rew \ttT'(\al')$ ) for all $\al$ such that $\al':=\Res_b(\al)$ is defined (resp. such that  $\al':=\QRes_b(\al)$) is defined, as well as $\ResI_{b|\al}$ for all $\al \in \suppat{P}$ s.t. $\ovlal\neq b$.
\item It allows us to choose other interfaces (at positions different from $b$)
\textit{after} firing the redex at position $b$, as suggested in the end of Sec.\;\ref{s:rep-lemma-discussion}. This observation is only one we need to prove the Representation Lemma.
\end{itemize}
\end{rmk}
}

\techrep{
\subsection{Proof of the Representation Lemma}
\label{s:rep-lemma-proof}


We prove here that every sequence of subject-reduction steps that we 
perform "by hand" -- so called a \textbf{reduction choice sequence} -- starting from a derivation $\Pi$ can be built-in inside an operable derivation.\\

%
%

Let $\Pi \rhd \juGtt$ be a derivation, $P\tri \juCtt$ a hybrid derivation
collapsing on $\Pi$ and $t=t_0 \rewb{b_0} t_1\rewb{b_1} t_2  \rewb{b_2} \ldots \rewb{b_{i-1}} t_i \rewb{b_i} \ldots $ a sequence of reduction of length $\ell \leqslant \omega$ (when $\ell=\omega$, we do not need to assume strong convergence \cite{KennawayKSV97}).

We write $\rs$ for the sequence $(b_i)_{i< \ell}$ and  $\rs_n$ for the sequence $(b_i)_{i<n}$ for all $n<\ell$. If we perform reduction on $P$ along with $\rs$, we get a sequence of hybrid derivations $P_0$ (with $P_0=P$), $P_1,\; P_2,...$ such that $P_i$ concludes with  $\ju{C}{t_i:T_i}$ for some $T_i\equiv T$.

More precisely, for each step $i<\ell$ of $\rs$, we have to 
choose a root-interface $(\rho_a ^i)_{\ovla=b_i}$ at position $b_i$ in $P_i$ (typing $t_i$) corresponding to our reduction choice step, then to reduce $P_i$ 
\wrt $(\rho_a^i)_{\ovla=b}$, which yields a new hybrid derivation $P_{i+1}$. We proceed by induction on $i$.\\

Those reduction choices are heuristically made step-by-step. This raised 
the following question (Sec.\;\ref{s:encoding-red-choices}): is the notion of operable derivation 
expressive enough ? That is: can we endow $P$ with a complete interface, 
such that performing $\rs$ on $P$ follows exactly our 
step-by-step choices of substitutions? The answer is positive, as stated in Lemma\;\ref{l:rep-reduction-choices}. We will prove that now.\\

We set $A_i=\supp {P_i}$ for all $i<\ell$ and we define by induction on $i<\ell$ a partial function $\Res_{\rs(i)}$ from $A$ to $A_i$:
\begin{itemize}
\item $\Res_{\rs(0)}$ is the identity on $A$.
\item $\Res_{\rs(i+1)}=\Res^{\rho^i}_{b_i} \circ \Res_{\rs(i)}$, 
where $\Res^{\rho^i}_{b_i}$ is the residual function on  $A_i=\supp{P_i}$ defined w.r.t. reduction at position $b_i$ and the root-interface $(\rho_a^i)_{\ovla=b_i}$ (see Sec.\;\ref{s:res-deriv}).
\end{itemize}

For all $i<\ell$, let $A_{\rs(i)}$ denote the domain of $\Res_{(i)}$. Thus, $A_{\rs(0)}=A_0=\supp{P}$, $(A_{\rs(i)})_{i<\ell}$ is a decreasing sequence (\wrt $\subeq$) and, by induction on $i$, $\Res_{\rs(i)}$ is a bijection from $A_{\rs(i)}$ to $A_i$. We write $\Res_{\rs(i)}^{-1}$ for the converse bijection from $A_i$ to $A_{\rs(i)}$.


To lighten notations, we write $A_{(i)}$ and $\Res_{(i)}$ instead of $A_{\rs(i)}$ and $\Res_{\rs(i)}$. We also set  $A_{(i)}^{\arob}=A_{(i)}\cap \suppat{P}$. Thus, $\Res_{(i)}$ induces a bijection from $A_{(i)}$ to $A_i$.\\

Now, for all $i< \ell$, we chose an interface $(\phi_a^i)$ at position $b_i$
in $P_i$ such that $\Rt(\phi_a^i)= \rho_a^i$ for all $a\in A_i,\ \ovla=b_i=
b_i$.

We define by induction on $i$ a type isomorphism  $\Res_{(i)|\al}$ from 
$\ttT(\al)$ to $\ttT_i(\Res_{(i)}(\al))$ for all $\al \in A_{(i)}$ and a bijection $\ResI_{(i)|\al}$ from $\Inter(\al)$ to $\Inter_i(\Res_{(i)}(\al))$ for all $\al \in A_{(i)|\al}$ by:
\begin{itemize}
\item  $\Res_{(0)|\al}$ and $\Res_{(i)|\al}$ are respectively the identity functions on $\ttT(\al)$ and $\Inter(\al)$.
\item $\Res_{(i+1)|\al}=\Res_{b_i|\al_i} \circ \Res_{(i)|\al}$, where $\al_i=\Res_{(i)}(\al)$ and $\Res_{b_i|\al_i}:\, \ttT_i(\al_i)\rew \ttT_{i+1}(\al_{i+1})$ (with $\al_{i+1}=\Res_{(i+1)}(\al)$) is the residual type isomorphism (in the sense of Sec.\;\ref{s:res-types-and-contexts}) \wrt the interface $(\phi_a^i)$ at position $b$ in $P_i$.\\
We set likewise $\ResI_{(i+1)|\al}=\ResI_{b_i|\al_i} \circ  \ResI_{(i)|\al}$, where $\ResI_{b_i|\al_i}$ is the bijection (in the sense of Sec.\;\ref{s:rep-lemma-proof}) \wrt the interface $(\phi_a^i)$ at position $b$ in $P_i$.\\
\end{itemize}


\noindent To conclude, let $a\in \suppat{P}$. There are two cases:
\begin{itemize}
\item $\Res_{(i)}(a)$ is defined for all $i< \ell$. In that case, we choose
an arbitrary $\phi_a \in \Inter(\al)$.
\item There is a unique $0\leqslant i<\ell $ such that 
$\al_i=\Res_{(i)}(\al)$ is defined, but $\Res_{(i+1)}(\al)$
is not. In that case, $\ovl{\al_{i}}=b_i$ and we have already
chosen an interface isomorphism $\phi_{\al_i}^i\in \Inter_i(\al_i)$
(that extends $\rho_{\al_i}^i$). We set then $\phi_a=\ResI_{(i)|a}^{-1}(\phi_{\al_i}^i)$
\end{itemize}

By construction, the complete interface $(\phi_a)$ emulates the reduction w.r.t. the family $(\rho_a^i)_{\ovla=b_i}$. Thus:

\begin{lemma*}  
Every reduction choice sequence in a quantitative
derivation $\Pi$ can be built-in in an operable derivation representing
$\Pi$.
\end{lemma*}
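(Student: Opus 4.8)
The plan is to reduce the statement to a careful bookkeeping of how interfaces residuate along a fixed reduction sequence, exploiting the bijection $\ResI_{b|\al}$ between $\Inter(\al)$ and $\Inter'(\Res_b(\al))$ established in \S\ref{s:res-interface}. Fix a derivation $\Pi \rhd \juGtt$, a hybrid derivation $P \tri \juCtt$ collapsing on $\Pi$, and a reduction sequence $t = t_0 \rewb{b_0} t_1 \rewb{b_1} \cdots$ of length $\ell \leqs \om$ together with, for each step $i < \ell$, a root interface $(\rho_a^i)_{\ovla = b_i}$ at position $b_i$ encoding the desired reduction choice at that step. First I would extend each $(\rho_a^i)$ to a full interface $(\phi_a^i)_{\ovla = b_i}$: any extension works, since the root interface alone already determines the residual support and the labelled tree $P_{i+1}$ (\S\ref{s:res-deriv}), while the non-root part only affects residual types up to isomorphism. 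Running the construction of \S\ref{s:res-types-and-contexts} step by step then yields hybrid derivations $P_0 = P,\, P_1,\, P_2,\ldots$ with $P_i \tri \ju{C}{t_i : T_i}$, $T_i \equiv T$, each $P_{i+1}$ being the residual of $P_i$ w.r.t. firing $b_i$ under $(\phi_a^i)$.

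Next I would iterate the residual machinery. Define partial functions $\Res_{(i)} : A \rightharpoonup A_i$ (where $A = \supp{P}$, $A_i = \supp{P_i}$) by $\Res_{(0)} = \ttid_A$ and $\Res_{(i+1)} = \Res^{\rho^i}_{b_i} \circ \Res_{(i)}$, with domains $A_{(i)}$ forming a decreasing sequence; by \S\ref{s:res-deriv} each $\Res_{(i)}$ is a bijection from $A_{(i)}$ onto $A_i$. Similarly, for $\al \in A_{(i)}$ define the iterated residual type isomorphism $\Res_{(i)|\al} : \ttT(\al) \rightarrow \ttT_i(\Res_{(i)}(\al))$ and, for $\al \in A_{(i)}^{\arob} := A_{(i)} \cap \suppat{P}$, the iterated bijection $\ResI_{(i)|\al} : \Inter(\al) \rightarrow \Inter_i(\Res_{(i)}(\al))$, both by composing the one-step versions from \S\ref{s:res-types-and-contexts} and \S\ref{s:res-interface}. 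Then I would define the total interface $(\phi_a)_{a \in \suppat{P}}$ on $P$ by a case split: if $\Res_{(i)}(a)$ is defined for all $i < \ell$ (the node $a$ is never the root of a fired redex), pick $\phi_a \in \Inter(a)$ arbitrarily; otherwise there is a unique $i$ with $\al_i := \Res_{(i)}(a)$ defined but $\Res_{(i+1)}(a)$ undefined, necessarily with $\ovl{\al_i} = b_i$, and we set $\phi_a = \ResI_{(i)|a}^{-1}(\phi_{\al_i}^i)$, pulling the step-$i$ choice back through all $i$ residuation steps. This is well-defined precisely because $\ResI_{(i)|a}$ is a bijection.

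Finally I would verify, by induction on $i$, that performing reduction on $P$ endowed with $(\phi_a)$ along $b_0, b_1, \ldots$ reproduces exactly the chosen root interfaces $(\rho_a^i)$ at each step: the inductive step amounts to checking that the residual interface of $(\phi_a)$ after $i$ reductions, evaluated at a position $\al_i$ with $\ovl{\al_i} = b_i$, has root part $\Rt(\phi_{\al_i}^i) = \rho_{\al_i}^i$ — which follows from the definitional identity $\ResI_{(i+1)|\al} = \ResI_{b_i|\al_i} \circ \ResI_{(i)|\al}$ together with the commuting square of \S\ref{s:res-interface} relating $\phi$, $\ResI_{b|\al}(\phi)$ and the isomorphisms $\ResL_{b|\al}, \ResR_{b|\al}$. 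The one real subtlety — and the step I expect to demand the most care — is the coherence (``functoriality'') of residuation of interfaces: one must check that the iterated isomorphisms $\Res_{(i)|\al}$ and $\ResI_{(i)|\al}$ are compatible with the $\clubs$/$\hearts$/outside-the-redex case analysis of \S\ref{s:res-deriv} (cf. Remark\;\ref{rmk:induction-split}) and with the $001$-inductions of Lemma\;\ref{lem-def:QRes-bip-hybrid}, so that the pullback defining $\phi_a$ really lands on an interface that is still compatible with all later reduction steps. Once this coherence is established, $(\phi_a)$ witnesses the lemma, i.e. the operable derivation $(P, (\phi_a))$ has the prescribed reduction-choice sequence built in.
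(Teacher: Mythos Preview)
Your proposal is correct and follows essentially the same approach as the paper: you set up the iterated residuation $\Res_{(i)}$ and $\ResI_{(i)|\al}$ by composition, then define the total interface on $P$ by the same two-case split (arbitrary if the node survives forever, otherwise pull back the step-$i$ choice through $\ResI_{(i)|a}^{-1}$). The paper concludes more tersely (``by construction'') where you spell out the inductive verification and flag the coherence of residuation as the delicate point, but the construction is identical.
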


\ignore{ 
\subsection{Residuals of Left Bipositions}

For technical reasons, it will be convenient to also define residuation for left bipositions, although, as we are going to see now, this notion does not work as well as in the right case.

\fubla{
reprendre les answers de lics
}

Let us assume that $\trb=\lxrs,\ t\rewb{b}t'$ (so that
$\tprb=\rsx,\; P\tri \juCtt$). We use the same notation conventions as in \ref{s:res-deriv} \eg the metavariable $a$ may only denote position $a\in \bisupp{P}$ such that $\ovla=b$ and so on. It is once again more convenient to assume that $t$ satisfies Barendregt convention.

For all $\p=(\al,y,k\cdot c)\in \bisupp{P}$:
\begin{itemize}
\item If $y=x$, then $\QRes_b(\p)$ is not defined.
\item If $y\neq x$\ldots
  \begin{itemize}
  \item \ldots and $\al'=\QRes_b(\al)$ is defined \ie $\ovlal\neq b\cdot1$, then we set $\QRes_b(\p)=(\QRes_b(\al),x,k\cdot c)$. We may write $\Res_b(\p)$ instead of $\QRes_b(\p)$ if $\Res_b(\al)$ is defined.
  \item \ldots and $\QRes_b(\al)$ is not defined \ie $\al=a\cdot 1$ with $\ovla=b$, then we set $\QRes_b(\p):=(a,x,k\cdot c)=\Res_b(a\cdot 1 \cdot 0,x,k\cdot c)$
  \end{itemize}
\end{itemize}
}

}


\section{\paper{Representation Theorem}\techrep{Isomorphisms of Derivations}}


\techrep{\label{s:iso-derivations}
  In this section, we explain why the proof of the Theorem\;\ref{th:rep-R-S-naive} may be related to the notion of isomorphisms of derivations and then, we will describe those isomorphisms in different suitable ways, till we introduce the notion of relabelling of derivation in Sec.\;\ref{s:relab-deriv}.

  As we will see in Sec.\;\ref{s:edge-threads-techrep}, some edges inside a derivations may be identified (because they correspond to a same type moving through the judgments). This motivates the notion of \textbf{threads}, that are formally defined in Sec.\;\ref{s:asc-pol-inv-threads-hybrid}.

}\paper{\label{s:rep-th-paper}
In this final section, we sketch the proof of the main result of this article: every $\scrR$-derivation is the collapse of an $\ttS$-derivation. First, we need  to render this statement more handleable
\ignore{
\subsection{Isomorphisms of Operable Derivations}
\label{s:iso-op-deriv-short}} by considering \textit{isomorphisms of derivations}.}

Since a hybrid derivation $P$ is a tree of $\bbN^*$ that is labelled with rigid judgments, and for all $a\in \supp{P}$, $\ttT^P(a)$ is also a labelled tree of $\bbN^*$, it is easy to define the notion of isomorphism from one hybrid derivation $P_1$ to another $P_2$\paper{ by using suitably the notion of 01-isomorphism (Sec.\;\ref{ss:tracks-sC})}.\techrep{ Namely, an isomorphism $\Psi$ from $P_1$ to $P_2$ is the datum of $\Psi_{\ttsupp}$, a 01-isomorphism from $\supp{P_1}$ to $\supp{P_2}$ and, for all $\ax$-rule of $a\in \supp{P_1}$, a type isomorphism from $\ttT^{P_1}(a)$ to $\ttT^{P_2}(\Psi_{\ttsupp}(a))$. We check in Sec.\;\ref{s:iso-op-deriv} that isomorphisms of hybrid derivations behave as expected (\eg a judgment of $P_1$ is mapped on a judgment of $P_2$ that represents the same $\scrR$-judgment).} Actually, two hybrid derivations $P_1$ and $P_2$ are isomorphic iff they collapse on the same $\scrR$-derivation.\techrep{\\}

From Sec.\;\ref{s:encoding-red-choices}, we recall that an operable derivation is a hybrid derivation $P$ that is endowed with a total interface $(\phi_a)_{a\in \suppat{P}}$. This motivates the notion of \textbf{isomorphism of operable derivations}. Informally, if $P_1$ and $P_2$ are operable derivations, then an isomorphism $\Psi$ of hybrid derivations from $P_1$ to $P_2$ is actually an isomorphism of operable derivations if $\Psi$ commutes with the interfaces of $P_1$ and $P_2$. \ie $P_1$ and $P_2$ are are \textit{operably} isomorphic if they collapse on the same $\scrR$-derivation and encode the same $\scrR$-reduction path.

Now, since we remember that a trivial derivation is endowed with identity interfaces, Theorem\;\ref{th:rep-R-S-naive} is a consequence of this one:

\begin{theorem}
\label{th:rep}
Every operable derivation is isomorphic to a trivial derivation.
\end{theorem}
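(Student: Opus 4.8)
The plan is to build, from an operable derivation $P$ endowed with a total interface $(\phi_a)_{a\in\suppat{P}}$, a *canonical renaming of all argument tracks* that makes every interface become the identity, and to show that the renaming is globally consistent. First I would observe that the obstruction to triviality is purely local: at each $\app$-node $a$, the sequence type $\LstP(a)$ (the source of the arrow typing the function) and $\RstP(a)$ (the actual sequence typing the argument) are $01$-isomorphic via $\phi_a$ but need not be equal, because the argument subderivations sit on tracks $\RstP(a)$ while the axiom leaves they will feed sit on tracks $\LstP(a)$. The idea is to \emph{move} each argument subderivation $P_k$ (sitting on track $k\in\Rt(\RstP(a))$) onto the track $\rho_a^{-1}(k)$ dictated by the root interface $\rho_a=\Rt(\phi_a)$, and simultaneously re-track, all the way down inside $P_k$, the edges that $\phi_a$ renames. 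Concretely I would define a $01$-relabelling $\Relab$ of the whole derivation $P$ (in the sense of Def.\;\ref{def:01-relab}, applied to the labelled tree $\supp{P}$ together with, fibre-wise, the supports $\supp{\ttT^P(a)}$) by downward induction on $\supp{P}$, following exactly the residual-position bookkeeping of \S\;\ref{s:res-deriv}: the relabelling tells, for a mutable edge, which new argument track to assign, and the recipe is "apply $\phi_a$ whenever you cross the $\app$-node $a$". The output is a derivation $P^{\triv}$ together with an isomorphism $\Psi:P\to P^{\triv}$ whose support component is the $01$-resetting $\phi^{\Relab}$ and whose type components are the corresponding type resettings.

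The key steps, in order, are: (1) make precise the set of mutable data of an operable derivation, i.e. argument edges of $\supp{P}$ (the positions $a\cdot k$ with $k\ge 2$, corresponding to argument subderivations) and argument edges nested inside each type $\ttT^P(a)$; (2) define, by downward recursion on $\supp{P}$ following $\Res_b$/$\QRes_b$-style bookkeeping, a relabelling $\Relab$ such that crossing the $\app$-node $a$ applies $\phi_a$ to the relevant forest and crossing $\abs$/$\ax$-nodes propagates the already-chosen relabelling (using quantitativity so that every context entry is traced to an axiom leaf, as in \S\;\ref{s:quant-deriv}); (3) check that $\Relab$ is a legitimate $01$-relabelling, i.e. no two sibling mutable edges get the same new value — this is where the injectivity of each $\phi_a$ and of each root interface $\rho_a$ is used, exactly as the "this definition is sound" case analyses in \S\;\ref{s:res-types-and-contexts}; (4) let $P^{\triv}=\phi^{\Relab}(P)$ and verify it is a hybrid derivation in which $\Lst^{P^{\triv}}(a')=\Rst^{P^{\triv}}(a')$ at every $\app$-node, i.e. $P^{\triv}$ is trivial; (5) verify $\Psi$ is an isomorphism of operable derivations, which amounts to checking that $\Psi$ sends the interface $\phi_a$ of $P$ to the identity interface of $P^{\triv}$ — this is immediate once (2) is set up correctly, since by construction $\phi^{\Relab}$ conjugates $\phi_a$ into $\ttid$.

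The main obstacle I expect is step (3), the global consistency of the relabelling. Because $\ttS$-types may be coinductive and the calculus $\Lamzzu$ allows infinite branches (with infinite applicative depth), the relabelling must be defined by a well-founded recursion that is not the naive structural induction on $\supp{P}$: one must use the $001$-induction principle (Remark\;\ref{rmk:001-trees-and-induction}), and split the analysis as in Remark\;\ref{rmk:induction-split}, to make sure that the choice of new track for a deeply nested edge does not depend on the order in which the infinitely many $\app$-nodes above it are processed. The danger is a "track conflict at infinity": two mutable edges that must receive the same track by one chain of interfaces and different tracks by another. Ruling this out requires showing that along any branch the successive interfaces compose coherently, i.e. that the relabelling factors through the \emph{thread} structure of edges (alluded to at the end of the excerpt, \S\;\ref{s:iso-derivations}): each mutable edge belongs to a unique maximal thread, its new track is determined once and for all at the top of the thread, and interfaces only permute tracks \emph{within} a fixed level of nesting, never merging distinct threads. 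Once the thread decomposition is in place, step (3) reduces to the finite, per-node injectivity already available, and the theorem follows; Theorem\;\ref{th:rep-R-S-naive} is then the special case obtained by forgetting the interfaces (every $\scrR$-derivation lifts to a hybrid derivation, which can be equipped with an arbitrary total interface, hence to a trivial one, whose multiset collapse is the given $\scrR$-derivation).
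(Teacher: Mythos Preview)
Your proposal correctly identifies threads as the right structure and correctly names the danger: two brother edges forced to receive the same track by a chain of interface constraints. But the claim that ``step (3) reduces to the finite, per-node injectivity already available'' is the gap. Per-node injectivity of each $\phi_a$ only guarantees that a \emph{single} consumption step $\rrewc{a}$ does not identify brothers; it says nothing about long chains $\theta_0 \lrrewc{a_0} \theta_1\lrrewc{a_1}\cdots \lrrewc{a_{n-1}} \theta_n$ crossing many $\apph$-nodes (what the paper calls \emph{brother chains}). Your assertion that ``interfaces only permute tracks within a fixed level of nesting, never merging distinct threads'' is not an observation but a restatement of the theorem. Relatedly, your ``downward recursion'' is ill-defined: trivializing the interface at $a$ requires the relabelled $\Lst$ (determined by axiom leaves in the function branch) to equal the relabelled $\Rst$ (determined by the argument branches), and these are independent recursive calls with no local rule forcing agreement.

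The paper's argument is structurally different and contains an idea absent from your plan. It first reduces Theorem~\ref{th:rep} to the non-existence of brother chains (Proposition~\ref{prop:if-no-bro-then-th-rep}). For \emph{normal} chains --- those with no negative left-consumption --- applicative depth strictly increases along the chain (Lemma~\ref{l:consum-and-ad}, together with Lemma~\ref{l:uniqueness-consumption-hybrid}), so the endpoints cannot be brothers; this part is close in spirit to what you sketch. The idea you are missing is how to handle non-normal chains: a negative left-consumption $\thL\lomin\rrewc{a}\thR$ witnesses a redex tower above $a$, and the paper \emph{fires these redexes} (the ``collapsing strategy'' of \S\ref{ss:syntactic-polarity-and-collapsing-strat}, Fig.~\ref{fig:redex-tow-hybrid}), after which the residuals of $\thL$ and $\thR$ coincide and the chain shortens. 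Because the setting is coinductive and non-productive (even $\Om$ is typable), this dynamic use of $\beta$-reduction on the subject term is essential and cannot be replaced by a static $001$-induction on $\supp{P}$.
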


With Lemma\;\ref{l:rep-reduction-choices}, this theorem means that any $\scrR$-derivation $\Pi$ and any sequence of reduction choices w.r.t. $\Pi$ can be encoded by a $\ttS$-derivation $P$, as expected from the Introduction.\techrep{
 Let us now properly define isomorphisms of derivation (Sec.\;\ref{s:iso-op-deriv}) and give different way to handle them. Most notably (Sec.\,\ref{s:relab-deriv}) we extend  the notion of 01-relabelling to define that of \textit{relabelling of derivations}, that allow to describe more lightly isomorphisms of derivations and, prospectively, to prove Theorem\;\ref{th:rep} above.}

\techrep{


\subsection{Isomorphisms of Operable Derivations}
\label{s:iso-op-deriv}


Let $P_1$ and $P_2$ be two hybrid derivations collapsing one the same $\scrR$-derivation $\Pi$ (thus, intuitively, $P_1$ and $P_2$ should be isomorphic as hybrid derivations). We set $A_i=\supp{P_i}$ and we write $\ttC_i,\; \ttT_i,\; \tttr_i,\; \ttpos_i$ for $\ttC^{P_i},\; \ttT^{P_i},\; \tttr^{P_i},\; \ttpos^{P_i}$ and so on. We write $\Ax_i$ for the set of leaves of $A_i$ ($i=1,2$).\\

\noindent A \textbf{hybrid derivation isomorphism} $\Psi$ from $P_1$ to  $P_2$ is  the date of:
\begin{itemize}
\item 
$\Psisu$ a 01-isomorphism from  $A_1$ to $A_2$. We often write $\Psi$ instead of $\Psisu$.
\item For each $a_1\in \Ax_1$, a type isomorphism $\Psi_{a_1}$ from $\ttT_1(a_1)$ to $\ttT_2(a_2)$, where $a_2=\Psi(a_1)$.
\end{itemize}

\noindent This is enough to define many useful isomorphisms, using the typing rules of System $\ttSh$:

\begin{itemize}
\item Since $\Psisu$ is a tree isomorphism, we check that it induces a bijection from $\Ax_1$ to $\Ax_2$.\\
So, for all $a_1\in A^1$, we can set $\Psitr(a_1)=\tttr_2(\Psi(a_1))$. Thus, $\Psitr(a_1)$ is the axiom track (in $P_2$) of the axiom rule $a_2$ corresponding to $a_1$ \textit{via} $\Psi$.\\

\item Since $\Psisu$ induces a bijection from $\Ax_1(a_1)(x)$ to $\Ax_2(a_2)(x)$ for all $a_1\in A_1,\ a_2=\Psi (a_1)$ and $x\in \TermV$, we can define a context isomorphism $\Psi_{a_1,x}$ from  $\ttC_1(a_1)(x)$ to $\ttC_2(a_2)(x)$ by $\Psi_{a_1,x}(k_1\cdot\gam_1)=\Psitr(a_{01})\cdot \Psi_{a_{01}}(\gam_1)$ for all $\gam_1\in \supp{\ttT_1(a_1)}$ , where $a_{01}=\ttpos_1(a_1,x,k_1)$.\\
\item By 001-induction, this allows us to define a type isomorphism $\Psi_{a_1}$ from $\ttT_1(a_1)$ to $\ttT_2(a_2)$ for any $a_1\in A_1$ and $a_2=\Psi(a_1)$ (not only for $a_1\in \Ax_1$ and $a_2\in \Ax_2$).
\begin{itemize}
\item $\Psi_{a_1}$ is already defined when $a_1\in \Ax_1$.
\item If $t(a_1)=\lambda x$, we set $\Psi_{a_1}=\Psi_{a_1\cdot 0,x}\rew \Psi_{a_1\cdot 0}$
\item If $t(a_1)=\arob$, we set $\Psi_{a_1}=\Hd(\Psi_{a_1\cdot 1})$.\\
\end{itemize} 
\item We assume here that $a_1\in \suppat{P_1}$ and $a_2=\Psi(a_1)$.
\begin{itemize}
\item We set $\Psi_{a_1}^{\Lst}=\Tl (\Psi_{a_1})$.
\item Since $\Psisu$ induces a bijection from $\ArgTr_1(a_1)$ to
$\ArgTr_2(a_2)$, we define a sequence type isomorphism $\Psi_{a_1}^{\Rst}$ from  $\Rst_1(a_1)$ to $\Rst_2(a_2)$ by $\Psi_{a_1}^{\Rst}(k_1\cdot \gam_1)=k_2\cdot \gam_2$, where  $a_2\cdot k_2=\Psisu(a_1\cdot k_1)$ and $\gam_2=\Psi_{a_1 \cdot k_1}(\gam_1)$.\\
\end{itemize}
\item We define now $\Psi$ on bisupports (with $a_2=\Psisu(a_2)$).
\begin{itemize}
\item Left case: $\Psi(a_1,\,x,\, k_1\cdot \gam_2)=(a_2,\,x,\, 
\Psi_{a_1,x}(k_1\cdot \gam_1))$
\item Right case: $\Psi(a_1,\,\gam_1)=(a_2,\, \Psi_{a_1}(\gam_1))$.\\
\end{itemize}
\end{itemize}

Let $P_1$ and $P_2$ be two operable derivation typing the same term $t$. Their interface isomorphisms are written $(\phi_{i,a})_{a \in \suppat{P_i}}$ ($i=1,\,2$). An \textbf{operable derivation isomorphism} is a hybrid derivation isomorphism $\Psi$ from $P_1$ to $P_2$ such that for all $a_1\in \suppat{P_1}$ and $a_2=\Psi(a_1)$, the following diagram is commuting:
$$\begin{array}{ccc}
\Lst_1(a_1) \phantom{\Psi a}
        & \stackrel{\phi_{1,a_1}}{\xrightarrow{\hspace*{1cm}}}
        & \Rst_1(a_1)  \phantom{\Psi a} \\
\Bigg\downarrow \Psi_{a_1}^{\Lst} 
	& & \Bigg\downarrow \Psi_{a_1}^{\Rst} \\
\Lst_2(a_2)  \phantom{\Psi a}
        & \stackrel{\phi_{2,a_2}}{\xrightarrow{\hspace*{1cm}}}
        & \Rst_2(a_2) \phantom{\Psi a}
\end{array}$$\\

\subsection{Resetting an Operable Derivation}
\label{s:bisupp-reset}


Let $P$ an operable derivation. We reuse the notations $A,\; \ttC,\; \ttT,\,\Ax,\; \tttr,\; \ttpos$ and $(\phi_a)_{a \in \suppat{P}}$. In the last section, we have defined,  the notion of isomorphism from one operable derivation $P_1$ to another $P_2$. Now, using 01-resettings (Sec.\;\ref{s:tracks}) and in particular, of type resetting (Sec.\;\ref{s:rigid-types-short}), we want to \textit{build}, given only one operable derivation $P$, a derivation $P_0$ that is isomorphic to $P$. This will serve the purpose of proving that every operable derivation is isomorphic to a trivial derivation.\\

\noindent A \textbf{resetting} of $P$ is given by the data of:
\begin{itemize}
  \item $\Psisu$ a 01-resetting of $A$ \ie a function from $A$ to $\mathbb{N}^*$ inducing a 01-isomorphism from $A$ to its codomain $A_0$. We write
$\Ax_0$ for the set of leaves of $A_0$.
\item For each $a\in \Ax$, $\Psi_a$ a type resetting of $\ttT(a)$.
\item $\Psitr$ an \textit{injection} from $\Ax$ from $\Nmzo$.
\end{itemize}
We will use $\Psitr$ to define new track values. Theoretically, we only need to avoid track conflict, but the injectivity of $\Psitr$ will actually serve to produce a derivation $P_0$ isomorphic to $P$ such that two distinct axiom rules of $P_0$ do not use the same axiom track, which of course ensures that there is no track conflict.\\

We may now define step by step the  derivation $P_0$ such that $\Psi$ actually defines an operable derivation isomorphism from $P$ to $P_0$.
\begin{itemize}
\item For all $a_0\in \Ax_0$, we set $\ttT_0(a_0)=\Psi_a(\ttT(a))$, where $a=\Psisu^{-1}(a_0)$. Thus, $\Psi_a$ induces a type isomorphism from $\ttT(a)$ to $\ttT_0(a_0)$
\item Let $\Ax_0(\Psi(a))(x)=\set{\Psisu(a_0)\,|\,al_0\in \Axal(x)}$ for all $a\in A$ and $x\in \TermV$.\\ 
Since $\Psisu$ induces a bijection from $\Ax$ to $\Ax_0$, we can 
set $\tttr_0(\Psi(a))=\Psitr(a)$ for all $a\in \Ax$ and $
\AxTr_0(\Psi(a))(x) =\Psitr(\Axal(x))$\\
Since $\Psitr$ is an injection whose domain is $\Ax$, we can define 
$\ttpos_0:\,\codom (\Psitr)\rew \Ax'$ with $\ttpos_0(k_0)=a_0$, where 
$a_0\in \Ax_0$ is the unique leaf of $A_0$ such that $\tttr_0(a_0)=k_0$ 
(by the injectivity of $\Psitr$ discussed above, the function $\ttpos_0$ requires only one argument, contrary to $\ttpos$).\\
\item We define then $\ttC_0(\Psi(a))(x) =(\tttr_0(\Psi(a_0))\cdot  \ttT(\Psi(a_0)))_{a_0\in \Axal(x)}$. So we can write $\Psi_{a,x}$ for the 
context isomorphism from $\ttC(a)(x)$ to $\ttC_0(a_0)(x)$ such 
that $\Psi_{a,\, x}(k \cdot \gam)=\Psitr(a_0) \cdot 
\Psi_{a_0} (\gam)$, where $a_0=\pos{a,\,x,\,k}$.\\
\item We can now define a type $\ttT_0(\Psi(a))$ and a type
isomorphism $\Psi_{a}$ from $\ttT(a)$ to $\ttT_0(\Psi(a))$ for all $a\in A$ by $001$-induction.
\begin{itemize}
\item $\ttT_(a_0)$ and $\Psi_{a}$ are already defined when 
$a\in \Ax$.
\item If $t(a)=\lambda x$, we set $\ttT_0(a_0)=\ttC_0(a_0)(x)\rew \ttT_0(a_0\cdot 0)$ and
$\Psi_{a}=\Psi_{a\cdot 0,x}\rew \Psi_{a \cdot 0}$.
\item If $t(a)=\arob$, we set $\ttT_0(a_0)= \Hd(\ttT_0(a_0))$ and $\Psi_{a}=\Hd(\Psi_{a \cdot 1})$.\\
\end{itemize}
\item We assume here that $t(a)=\arob$ and
$a_0=\Psi(a)$. We set then $\ArgTr_0(\Psi(a))=\set{k_0\geqslant 2\, |\, \exists k\in \ArgTr (a),\; a_0 \cdot k_0=\Psi(a \cdot k)}$,
$\Lst_0(a)=\Tl(\ttT_0(a_0 \cdot 1))$ and $\Rst_0(a_0)=
(k_0\cdot \ttT_0(a_0\cdot k_0))_{k_0\in \ArgTr_0(a_0)}$.
\begin{itemize}
\item We set $\Psi_{a}^{\Lst}=\Tl (\Psi_{a})$.
\item We define a sequence type isomorphism $\Psi_{a}^{\Rst}$ from $\Rst(a)$ to $\Rst_0(a_0)$ by
$\Psi_{a}^{\Rst}(k\cdot \gam)=k_0\cdot \gam_0$, where $a_0\cdot k_0
=\Psi(a_1 \cdot k_1)$ and $\gam_0=\Psi_{a \cdot k}(\gam_0)$.
\end{itemize}
We can set $\phi_{0,a_0}=\Psi_a^{\Rst} \circ \phi_a \circ
(\Psi_a^{\Lst})^{-1}$. Thus, $\phi_{0,a_0}:\, \Lst_0(a_0)
\rew \Rst_0(a_0)$ is a sequence type isomorphism.
\end{itemize}

\noindent The following proposition stems from the previous constructions:

\begin{proposition} 
\label{prop:deriv-reset}
  With the above notations, let $P_0$ be the labelled tree such that $\supp {P_0}=A_0$ and $P_0(a_0)=\ju{\ttC_0(a_0)}{ t\rstr{a_0}:\ttT_0(a_0)}$.\\
Then $P_0$ is a hybrid derivation and $(\phi_{0,\,a_0})$ is a complete
interface that makes $P_0$ isomorphic to $P$ as an operable derivation \via the isomorphism $\Psi$. Thus, we may naturally denote $P_0$ by $\Psi(P)$.
\end{proposition}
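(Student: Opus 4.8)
The statement bundles three claims about the object $P_0$ assembled just above: that $P_0$ is a legal $\ttSh$-derivation, that $(\phi_{0,a_0})$ is a complete interface for it, and that $\Psi$ is an isomorphism of operable derivations from $P$ to $P_0$. Since every piece of data of $P_0$, together with the candidate isomorphisms $\Psi_a$, $\Psi_a^{\Lst}$, $\Psi_a^{\Rst}$ and the interfaces $\phi_{0,a_0}$, has already been fixed, the plan is simply to verify these claims by the same $001$-induction that produced them, taking the axiom leaves as base case. I will run the verification of the typing rules and of the isomorphism property in one pass, so that the inductive hypothesis carries both ``$P_0$ is correctly built up to position $a_0$'' and ``$\Psi_a$ is a type isomorphism $\ttT(a)\rew\ttT_0(a_0)$''.

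First I would record the structural facts that make the base case work. Because $\Psisu$ is a $01$-resetting of $A$, it is a $01$-isomorphism onto $A_0$; hence it restricts to a bijection $\Ax\rew\Ax_0$, preserves the constructor ($x$, $\lambda x$ or $\arob$) sitting at each position, and sends distinct argument tracks at a common node to distinct images (so that $\Rst_0(a_0)$ is a well-formed sequence type). For $a_0\in\Ax_0$ with $a=\Psisu^{-1}(a_0)$, the clause $\ttT_0(a_0)=\Psi_a(\ttT(a))$ makes $\Psi_a$ a type isomorphism by definition of a type resetting, and the axiom conclusion $\ju{x:(\Psitr(a)\ct\ttT_0(a_0))}{x:\ttT_0(a_0)}$ is an instance of $\ax$. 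Propagating upward, the clauses $\Psi_a=\Psi_{a\ct 0,x}\rew\Psi_{a\ct 0}$ and $\Psi_a=\Hd(\Psi_{a\ct 1})$ keep $\Psi_a$ a type isomorphism, since forming $\rew$ and taking $\Hd$ send (sequence) type isomorphisms to (sequence) type isomorphisms; simultaneously the shapes $\ttT_0(a_0)=\ttC_0(a_0\ct 0)(x)\rew\ttT_0(a_0\ct 0)$ and $\ttT_0(a_0)=\Hd(\ttT_0(a_0\ct 1))$ match $\abs$ and the conclusion of $\apph$.

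The genuinely substantive point, and the one I expect to be the main obstacle, is checking the two side conditions of $\apph$ at an $\arob$-node $a_0$: that $\Lst_0(a_0)\equiv\Rst_0(a_0)$, and that no track conflict arises in the merged context $\ttC_0(a_0\ct 1)\uplus(\uplus_{k_0}\ttC_0(a_0\ct k_0))$. The first is obtained for free from the definition $\phi_{0,a_0}=\Psi_a^{\Rst}\circ\phi_a\circ(\Psi_a^{\Lst})^{-1}$, a composite of sequence type isomorphisms --- the outer two from the resetting, the middle one being the interface $\phi_a$ of the operable derivation $P$ --- hence a sequence type isomorphism $\Lst_0(a_0)\rew\Rst_0(a_0)$ witnessing $\equiv$. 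The conflict-freeness is where the hypotheses on $\Psitr$ must be spent: by quantitativity of $P$ every context entry is $\ttC_0(a_0)(y)=(\Psitr(a')\ct\ttT_0(\Psisu(a')))_{a'\in\Axa(y)}$, so all the tracks occurring in these contexts are axiom-track images; since $\Psitr$ is injective on all of $\Ax$ (not merely free of local collisions), distinct axiom leaves of $P_0$ carry distinct axiom tracks globally, so the entries combined by $\uplus$ never collide and the union is defined. This is precisely the use of injectivity of $\Psitr$ announced in the construction, and verifying it cleanly --- i.e. that the global injection rules out every possible clash across the subderivations joined at an application node --- is the delicate part.

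Finally I would collect the isomorphism claim. Completeness of $(\phi_{0,a_0})$ is immediate, one interface being produced at each $a_0\in\suppat{P_0}=\Psisu(\suppat{P})$. That $\Psi$ is a hybrid derivation isomorphism follows from $\Psisu$ being a $01$-isomorphism together with the base-case type isomorphisms $\Psi_a$ ($a\in\Ax$); the $\Psi_a$ induced at inner positions by the abstract definition coincide with those built in the construction, as both obey the same $\rew$/$\Hd$ recursion. The extra condition for an \emph{operable} isomorphism, namely commutation with the interfaces, holds by design: the defining equation $\phi_{0,a_0}=\Psi_a^{\Rst}\circ\phi_a\circ(\Psi_a^{\Lst})^{-1}$ is exactly the assertion that the square with horizontal arrows $\phi_a$, $\phi_{0,a_0}$ and vertical arrows $\Psi_a^{\Lst}$, $\Psi_a^{\Rst}$ commutes. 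Once the $\apph$ side conditions are settled, all remaining steps are a routine transport of $01$-isomorphisms through the typing rules.
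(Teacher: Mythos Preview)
Your proposal is correct and is exactly the verification the paper leaves implicit: the paper offers no proof beyond the sentence ``stems from the previous constructions'', and what you have written is a faithful unpacking of that claim by $001$-induction on $A_0$, with the axiom leaves as base case. You have also correctly isolated the only substantive checkpoints --- the $\apph$ side condition $\Lst_0(a_0)\equiv\Rst_0(a_0)$ (witnessed by the composite $\phi_{0,a_0}$) and the absence of track conflicts (guaranteed by the global injectivity of $\Psitr$, using that the resetting rebuilds contexts from axiom data and hence implicitly relies on quantitativity of $P$) --- and the operable-isomorphism square, which commutes by the very definition of $\phi_{0,a_0}$.
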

}

\techrep{\subsection{Relabelling a derivation}
\label{s:relab-deriv}}
\paper{\subsection{Getting a Trivial Derivation considering Tracks Threads}
\label{ss:track-threads-and-representability}

For the remainder of this paper, we explain why Theorem\;\ref{th:rep} above holds. \techrep{why every operable derivation is isomorphic to a trivial one (Theorem\;\ref{th:rep}).}

}
\techrep{
\begin{center}
\begin{figure}
  \begin{tikzpicture}

\draw (0,2.8) node [right] {\small \bf Abstraction rule};

\trans{-2}{0}{
  \draw (1.5,2.25) --++ (4.2,0);
  \draw (5.6,2.25) node [right] {\small $\ax$};
  \draw (1.5,2.3) node [below right] {$\ju{x:(5\ct S_5)}{x:S_5}\posPr{\,\pos{5}\,}$};
}
  \draw [dashed] (1.4,1.8) --++ (0,-0.2) --++(0.7,-0.5) --++ (0,-0.2);

  \draw (0,1) node [below right] { $\ju{C;\,x:\sSk}{t:T}\!\posPr{\,a\ct 0\,}$}
   ;
  
  \draw (0.1,0.4) --++(4.4,0);
  \draw (4.4,0.45) node [right] {$\abs$};
  
  \draw (0,0.4) node [below right] {$\ju{C }{\lambda x.t:\sSk\rew T }\!\posPr{a}$ };

\trans{5}{3.2}{ 
  
\draw (0,-0.4) node [right]{\small \bf Application rule};
  
 \draw (0,-0.7) node [below right] {$\ju{C}{t: \sSk \rew T}  \posPr{\,a\ct 1\,} \hspace{0.4cm} (\ju{D_k}{u:S'_k}  \posPr{\,a\ct k\,} )_{\kK'}$
  } ;

  \draw (0,-1.25) --++(7.65,0);
  \draw (7.77,-1.25) node [right] {\small $\apph$};
  
  \draw (1.77,-1.2) node [below right] {$\ju{C \uplus(\uplus_{\kK} D_k)}{tu:T} \posPr{\,a\,}$ };

\draw (0.5,-2.2) node [right] {endowed with an interface $\phi_a:\sSk \rew \sSpkp$};  
}
\end{tikzpicture}\\[-4ex]
  \caption{Ascendance and Polar Inversion}
  \label{fig:asc-hybrid}
\pierre{BIEN PLACER FIGURE}
\end{figure}
\end{center}
  }

We recall that tracks are numbers that label edges (of types or of derivations).
 Let $P$ be an operable derivation.  We want to find a trivial derivation $P_0$ that is isomorphic to $P$ (as an operable derivation). But roughly speaking, defining an isomorphism of operable derivation whose domain is $P$ is a matter of giving new values to the tracks which occur in $P$. These new values must be chosen appropriately to (1) respect the typing rules of $\ttSh$ (2) respect the interface of $P$ and yield a trivial derivation. \\

In System $\ttSh$, tracks 0 and 1 are special (they are dedicated to the premise of the $\abs$-rule or the left-premise of the $\apph$-rule and also to the target of $\rew$) and their value is fixed by 01-isomorphisms. But the value of tracks $\geqs 2$ 
may be changed: we say that they are \textit{mutable}. We informally write $\Edg{P}$ for the set of edges nested in $P$ whose tracks are mutable\techrep{. An element $\tte\in \Edg{P}$ is called a \textbf{mutable edge}, that may be of 3 natures: 
\begin{itemize}
\item The edges of the source of arrows nested in types (\textbf{inner mutable edges}).
\item The edges leading to an argument derivation in some $\apph$-rule (\textbf{argument edge}).
\item The \textbf{axiom edges} which are labelled with axiom tracks in contexts.
\end{itemize}
}
\techrep{

  The set of argument edges corresponds to $\suppmut{P}$ (Sec.\ref{s:tracks}). By analogy with $\ttsuppmut$, we define 
  the \textbf{mutable bisupport} of $P$ by $\bisuppmut{P}=\set{(a,c)\in \bisupp{P}\, | \, c\in \suppmut{\ttT^P(a)}}\cup \set{(a,x,\ell\cdot c)\in \bisupp{P}\,|\,c\in \suppmut{\ttC^P(a)(x)}}$. Implicitly: 
  \begin{itemize}
\item  A right biposition $\p=(a,c\cdot k)\in \bisuppmut{P}$ stands for the edge from $c$ to $c\cdot k$ in the type $\ttT^P(a)$. Its label $\lab{\p}$ is then $k$.
\item  A left biposition $(a,x,\ell\cdot c\cdot k)\in \bisuppmut{P}$ stands for the inner edge from $c$ to $c\cdot k$ in the sequence type $\ttC^P(a)(x)$. We set then $\lab{\p}=k$.
  \item A left biposition $\p=(a,x,\ell)$ stands for the ``axiom edge'' from $\epsi$ to $\ell$ w.r.t. $\ttC^P(a)(x)$ (we recall that $\epsi\notin \supp{\ttC^P(a)(x)}$ since $\ttC^P(a)(x)$ is a forest). We set then $\lab{\p}=\ell$.
  \end{itemize}
Thus, the set of mutable edges  $\Edg{P}$ may be identified to $\bisuppmut{P}\cup \suppmut{P}$.\\
  

However, as it has been noticed before (the hybrid construction of Sec.\;\ref{s:hybrid-cons-techrep} makes use of that), in a hybrid derivation $P$, every type and context is determined by the types and axiom tracks given in the axiom leaves (if the forget about outer argument tracks). So resetting $P$ (hopefully, into a \textit{trivial} derivation) is mainly a matter of giving good track values to the mutable edges in the axiom leaves (and not to every mutable edge in $P$). We invoke the notion of 01-relabelling from Sec.\;\ref{s:tracks} and of type relabelling from Sec.\;\ref{s:rigid-types-short}\\

\begin{definition}
  \label{def:relabelling}
 A \textbf{relabelling} $\Relab$ of an operable derivation $P$ is given by:
\begin{itemize}
\item A relabelling $\Relab_{\arg}$ of $\suppmut{P}$.
\item For all $a \in \Ax$, a relabelling $\Relab_a$ of $\ttT^P(a)$.
\item An \textit{injection} $\Relabtr$ from $\Ax$ to $\Nmzo$.
\end{itemize}
\end{definition}

\noindent When such a $\Relab$ is given, we reuse the construction of the resetting induced by a relabelling of Sec.\;\ref{s:tracks}.
\begin{itemize}
\item We define a function $\Psisu^\Relab$ as the 01-resetting of $\supp{P}$ induced by $\Relabarg$.
\item For all $a\in \Ax$, we define $\Psi^\Relab_a$ as the 01-resetting of $\ttT^P(a)$ induced by $\Relab_a$.
\item We set $\Psitr^\Relab=\Relabtr$.
\end{itemize} 

\noindent Thus, $\Psi^{\Relab}$ is a resetting of $P$ in the sense of Sec.\;\ref{s:bisupp-reset}. We then set $P^{\Relab}=\Psi^{\Relab}(P)$ (Prop.\;\ref{prop:deriv-reset}). Thus, $P^{\Relab}$ is a derivation that is isomorphic to $P$.\\

We can now define $\Relab_a:\ \suppmut{ \ttT(a)}   \rew \Nmzo$
for all $a \in \supp{P}$ by $\Relab_a(\gam\cdot k)=\lab {\Psi_a(\gam\cdot k)}$ and then, for all $a \in \suppat{P}$, we can set:
\begin{itemize}
\item For all $k\cdot \gam \in \suppmut{ \Lst(a)}$,  $\Relab_a^{\Lst}(k\cdot \gam)=\lab{\Psi^{\Lst}(k \cdot \gam)}$.
\item For all $k \cdot \gam \in \suppmut{ \Rst(a)}$, $\Relab_a^{\Rst}(k \cdot \gam)=\lab{ \Psi^{\Rst}(k \cdot \gam)}$.
\end{itemize}

}


We now discuss the moves of a type (and its edges) inside a hybrid derivation  by looking at Fig.\;\ref{fig:two-bro-threads-verbose}, before explaining in what this concerns the trivialization of hybrid derivations \via the notions of brotherhood and consumption below.

Some occurrences of 8 are colored in blue or in red: they all correspond to the label of an edge that ``moves'' inside the derivation. For instance, each red (\resp blue) occurrence of 8 can be identified to the one just below \via the typing rules of system $\ttSh$: we say that the former is the \textbf{ascendant} of the latter. The same can be said about the colored occurrences of 9, 2 and 7. We call a series of ascendants an \textbf{ascendant (edge) thread} \eg the set of the red (\resp blue) occurrences of $8$ correspond to an ascendant thread (\resp to another).

Moreover, since the $\abs$-rule ``calls'' all the assigned types of the bound variable, the top \textit{blue} occurrence of 8 is called by the constructor $\lx$ and correspond to the top \textit{red} occurrence of 8. We say that the former occurrence of $8$ is the \textbf{polar inverse} of the latter.

Ascendance and polar inversion induce a congruence between mutable edges, that we also denote $\equiv$ and that identifies labelled edges \wrt the moves of the types inside $P$. Intuitively, two labelled edges $\tte_1$ and $\tte_2$ are congruent iff the typing rules of $\ttSh$ constrain them to be labelled with the same track. We denote by $\Thrtr{P}$ the quotient set of $\Edg{P}$ by $\equiv$ and an element  $\theta\in \Thrtr{P}$ is called a \textbf{mutable edge thread} or simply a \textbf{thread}. A thread is composed of at most two ascendant threads. An occurrence of a thread is said to be \textbf{negative} if its top ascendant is in an $\abs$-rule and \textbf{positive} in every other case (top ascendant in an $\ax$-rule or argument edge). In Fig.\;\ref{fig:two-bro-threads-verbose}, the set of colored occurrences of 8 (\resp of 9) correspond to a thread: the blue ones are negative (they ascend to an abstraction) and the red ones are positive (the ascend to an $\ax$-rule). The colored occurrences of 2, 3, 5 and 7 are all positive. We  denote these threads $\theta_i$ ($i=2,3,5,7,8,9$) without ambiguity.  

We are interested in finding an isomorphism fulfilling Theorem\;\ref{th:rep} and points (1) and (2) at the beginning of this section. The discussion above explains how to capture point (1) (respecting the typing rules). In other words:

\begin{observation}
\label{obs:iso-deriv-sC}
Defining an isomorphism from $P$ is about specifying a new value $\Val{\theta}$ for the track of each mutable edge thread $\theta$ of $P$.
\end{observation}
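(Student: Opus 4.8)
The plan is to read Observation~\ref{obs:iso-deriv-sC} as a bijective correspondence and prove it as such. Precisely, I would show that the isomorphisms of operable derivations with source $P$ (equivalently, by the resetting construction, the resettings of $P$) are in bijection with the \emph{admissible thread-valuations}, namely the functions $\ttVal:\Thrtr{P}\to\Nmzo$ satisfying: (a) the threads whose positive top lies in an $\ax$-rule (the axiom-track threads) receive pairwise distinct values, and (b) whenever two distinct threads have representative edges sharing a common source position, their values differ. Recall that every isomorphism with source $P$ is induced by a relabelling of $P$ (a relabelling of the argument edges $\suppmut{P}$, a relabelling of each axiom type $\ttT(a)$ for $a\in\Ax$, and the injection $\Relabtr$ choosing new axiom tracks), the remaining types and contexts of the target being forced by the typing rules of $\ttSh$ (hybrid determinacy). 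So it suffices to match these relabellings with the admissible valuations.

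First I would produce a valuation from an isomorphism. Given a resetting $\Psi$ of $P$, every mutable edge $\tte\in\Edg{P}$ acquires a new label $\lab{\Psi(\tte)}$, and the crux is that this labelling is constant along $\equiv$-classes, i.e.\ invariant under the two generators of $\equiv$. Invariance under \emph{ascendance} is immediate from the inductive clauses defining the type isomorphisms, $\Psi_a=\Psi_{a\ct 0,x}\rew\Psi_{a\ct 0}$ at an abstraction and $\Psi_a=\Hd(\Psi_{a\ct 1})$ at an application: both leave inner-edge labels unchanged, so an edge and its ascendant get equal new labels. Invariance under \emph{polar inversion} is the delicate point: at an $\abs$-rule the source of the arrow type in the conclusion (a right biposition) and the context entry for the bound variable in the premise (a left biposition) must be relabelled coherently, which is exactly what the context clause $\Psi_{a,x}(k\ct\gamma)=\Psitr(a_0)\ct\Psi_{a_0}(\gamma)$, with $a_0=\pos{a,x,k}$ the binding $\ax$-rule, enforces — it forces the new axiom track $\Psitr(a_0)$ to equal the new label of the source edge. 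Hence $\lab{\Psi(\cdot)}$ factors through $\Thrtr{P}$, defining $\ttVal$, whose admissibility is just a restatement of $\Psi$ being a genuine $01$-resetting.

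Conversely, from an admissible $\ttVal$ I would rebuild the relabelling by decreeing that each \emph{generating} mutable edge (an argument edge, an inner edge of an axiom type, or an axiom-track edge) is relabelled by the value $\Val{\theta}$ of its thread $\theta$. This is complete because every thread meets a generating edge: a thread always contains a positive ascendant thread, whose top is by definition either an argument edge or an edge of an $\ax$-rule, the latter being an inner edge of the axiom type or the axiom-track edge itself. Admissibility of $\ttVal$ translates precisely into the validity conditions of the induced relabelling (the local-injectivity of the $01$-relabellings, and the global injectivity of $\Relabtr$). The resetting construction then yields a target operable derivation isomorphic to $P$; crucially, its interfaces are \emph{defined} from those of $P$ by the transport formula $\phi_{0,a_0}=\Psi_a^{\Rst}\circ\phi_a\circ(\Psi_a^{\Lst})^{-1}$, so the interface-compatibility square commutes by construction rather than imposing any further constraint. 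The two passages are visibly mutually inverse, which is the content of the Observation.

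I expect the polar-inversion invariance to be the main obstacle, as it is the one place where a right biposition (the source of an arrow) and a left biposition (a context entry reached through a binder) must be shown to be relabelled identically, whereas every other clause merely propagates a single label along ascendance. A secondary, purely combinatorial subtlety is checking that the admissibility conditions on $\ttVal$ are exactly equivalent to the validity of the induced $01$-relabellings, so that no admissible valuation is lost and no invalid relabelling is produced. Once this is established, Observation~\ref{obs:iso-deriv-sC} reduces the construction of the trivializing isomorphism demanded by Theorem~\ref{th:rep} to the combinatorial task of choosing one track value per thread.
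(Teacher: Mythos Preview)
Your approach is correct and essentially coincides with the paper's: the bijection you describe is exactly what the paper sets up via the notion of relabelling and the lemma that any relabelling is constant on threads, with your ``generating edges'' being precisely the paper's \emph{referents}. Two small slips are worth flagging, though neither breaks the argument. First, polar inversion is formally located at the $\ax$-rule (relating $(a_0,x,k\ct c)$ to $(a_0,c)$), not at the $\abs$-rule; what you describe under that heading is actually the right-to-left ascendance step at an abstraction, already covered by your ascendance clause (the paper's informal discussion is admittedly ambiguous on this point). Second, axiom-track threads have no positive ascendant thread at all---their referent is the negative left biposition $(a_0,x,k)$ at the leaf---so your justification ``a thread always contains a positive ascendant thread'' fails for them; but the conclusion stands, since the referent is in any case a generating edge.
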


\subsection{Brotherhood and Consumption}
\label{ss:bro-consum-lipics-sC}

We now explain how to capture point (2) of \S\;\ref{ss:track-threads-and-representability} \ie we also want to respect the interface $(\phi_a)_{a\in \suppat{P}}$ of $P$ while obtaining a trivial interface (\ie using only identity isomorphisms).

Still in Fig.\;\ref{fig:two-bro-threads-verbose}, every colored occurrence of 8 occurs beside a colored occurrence of 9 with the same polarity (each pair of 8 and 9 is nested in the same sequence type): we say that\techrep{ the threads} $\theta_8$ and $\theta_9$ are \textbf{brother threads}. Likewise, the orange threads $\theta_3$ and $\theta_5$ are brothers, as well as the purple threads $\theta_2$ and $\theta_3$. \\

We endow now the derivation of Fig.\;\ref{fig:two-bro-threads-verbose} with an interface: in the $\apph$-rule typing $((\ly x.xy)z)(a\,x)$, there are two possible interfaces $\phi_{1}$ from $\typetwo$ to $\typetwobis$: one ``maps'' 8 onto 2 and 9 onto 7 and the other 8 onto 7 and 9 onto 2. We consider the second case (8 into 7,9 into 2). Likewise, at the root $\apph$-rule, there are two interfaces $\phi_{\epsi}$ from $(8\ct \tv,9\ct \tv)$ to $(3\ct \tv,5\ct \tv)$. We assume that $\phi_{\epsi}$ maps 8 to 3 and 9 to 5.

We associate to the interface the relation of \textbf{consumption}: since,
intuitively, $\phi_1$ associates to the \textit{blue} track 8 (on the left-hand side) the track 2 (on the right hand-side), we say that the thread $\theta_8$ (\resp $\theta_2$) is \textbf{left-consumed} (\resp \textbf{right-consumed}) at position 1. Moreover, at position 1, the consumed occurrence of 8 is \textit{negative} (blue), so we say that $\theta_8$ is left-consumed negatively, and we write $\theta_8 \lomin \rrewc{1} \roplus \theta_2$ (since the consumed occurrence of 2 is positive). Likewise,  $\theta_9 \lomin \rrewc{1} \roplus \theta_7$, 
 $\theta_8\loplus \rrewc{\epsi}\roplus   \theta_3$, $\theta_9\loplus \rrewc{\epsi}   \roplus \theta_5$. We then say \eg that $\theta_9$ and $\theta_7$ \textbf{face each other} at pos. 1.\\



\begin{figure}
\hskip -3.1cm 
{\fnsz  
\infer[\apph]{
\infer[\apph]{
\infer[\apph]{ 
\infer[\abs]{ 
\infer[\abs]{ 
\infer[\apph]{ 
\infer[\ax]{\phd}{\ju{5}{x:\typetwored}}
\infer[\ax]{\phd}{\ju{3}{y:\tv}\trckblck{2}} 
}{
\ju{\ldots}{x\,y:\typeonered}
}}{\ju{\ldots}{\lx.x\,y:(5\ct \typetwoblue) \rewsh \typeonered }
}}{\ju{}{\ly x.x\,y:(7\ct \tv)\rewsh(5\ct \typetwoblue)\rewsh \typeonered}
}
\infer[\ax]{\phd}{\ju{4}{z: \tv}\trckblck{3} }
}{
\ju{\ldots}{(\ly x.x\,y)z:(5\ct \typetwoblue) \rewsh \typeonered}
}
 \infer[\apph]{
\infer[\ax]{\phd}{\ju{2}{a:\est\rewsh  \typetwobispurple}}}{\ju{\ldots}{a\,x:\typetwobispurple}\trckblck{6}}
}{
\ju{\ldots}{((\ly x.xy)z)(a\,x):\typeonered}
}
\infer[\ax]{\phd}{\ju{4}{b: \tv\orange{\trckblck{3}}}}\
\infer[\ax]{\phd}{\ju{9}{b: \tv \orange{\trckblck{5}}}}
}{
\ju{\ldots}{(((\ly x.xy)z)(a\,x))b:\tv'}
}}
\caption{Two Brother Threads}
\label{fig:two-bro-threads-verbose}
\vspace{-0.5cm}
\end{figure}
\ignore{\small  
\infer[\apph]{
\infer[\apph]{
\infer[\apph]{ 
\infer[\abs]{ 
\infer[\abs]{ 
\infer[\apph]{ 
\infer[\ax]{\phd}{\ju{5}{x:\typetwored}}\hspace{0.2cm}
\infer[\ax]{\phd}{\ju{3}{y:\tv}\trckblck{2}} 
}{
\ju{\ldots}{x\,y:\typeonered}
}}{\ju{\ldots}{\lx.x\,y:(5\ct \typetwoblue) \rewsh \typeonered }
}}{\ju{}{\ly x.x\,y:(7\ct \tv)\rewsh(5\ct \typetwoblue)\rewsh \typeonered}
}
\hspace{0.2cm}
\infer[\ax]{\phd}{\ju{4}{z: \tv}\trckblck{3} }
}{
\ju{\ldots}{(\ly x.x\,y)z:(5\ct \typetwoblue) \rewsh \typeonered}
}\hspace{0.2cm}
 \infer[\apph]{
\infer[\ax]{\phd}{\ju{2}{a:\est\rewsh  \typetwobispurple}}}{\ju{\ldots}{a\,x:\typetwobispurple}\trckblck{6}}
}{
\ju{\ldots}{((\ly x.xy)z)(a\,x):\typeonered}
}\hspace{0.2cm}
\infer[\ax]{\phd}{\ju{4}{b: \tv\orange{\trckblck{3}}}}\
\infer[\ax]{\phd}{\ju{9}{b: \tv \orange{\trckblck{5}}}}
}{
\ju{\ldots}{(((\ly x.xy)z)(a\,x))b:\tv'}
}
}
\ignore{
\hspace*{0.1cm} We have represented  edge brother threads in a $\ttSh$-derivation. The bipositions of two brothers threads are red and blue. One brother thread, denoted $\theta^{\ttB\ttR}_8$ is composed of edges labelled with 8 and the other, denoted $\theta^{\ttB\ttR}_9$ of edges labelled with 9. The positive (\resp negative) occurrences of these two threads are colored in red (\resp blue). The top blue occurrence of 8 (\resp 9) corresponds to the polar inverse of the top red one. Indeed, the two top red (\resp blue) occurrences of 8 and 9 are in the conclusion of an $\ax$-rule (\resp an $\abs$-rule) and are thus positive (\resp negative). Besides, two brother threads are represented in purple, one is labelled with 2 (denoted $\theta^{\ttP}_2$) and the other $\theta^\ttP_7$ with 7. Both threads only have positive occurrences. Finally, the orange argument tracks (3 and 5) of the $\ax$-rules typing the variable $b$ correspond to two singleton brother thread $\theta^\ttO_2$ and $\theta^\ttO_8$.

\hspace*{0.1cm} Note that, in the $\apph$-rule typing $((\ly x.xy)z)(a\,x)$, there are two possible interfaces  $\phi_{1}$ from $\typetwo$ to $\typetwobis$: abusively, one relabels 8 into 2 and 9 into 7 and the other 8 into 7 and 9 into 2. We consider the second case (8 into 7,9 into 2). Likewise, at the root $\apph$-rule, there are two interfaces $\phi_{\epsi}$ from $(8\ct \tv,9\ct \tv)$ to $(3\ct \tv,5\ct \tv)$. We assume that $\phi_{\epsi}$ maps 8 to 3 and 9 to 5.

\hspace*{0.1cm} The brother threads labelled with 8 and 9 are consumed at two places: positively at pos. $\epsi$ (root of the derivation) and negatively at pos. 1 (application of $a\,x$). These two brother threads are left-consumed positively facing the two orange brother threads (which are \textit{right}-consumed positively) and they are right-consumed negatively facing the two purple brother threads (which are also right-consumed positively). More precisely, due to our choice of $\phi_{\epsi}$ and $\phi_1$, the thread labelled with 8 (\resp with 9) is left-consumed positively facing the orange thread labelled with 2 (\resp with 8) ad it is left-consumed negatively facing the purple thread labelled with 7 (\resp with 2): this is formally denoted by $\theta^{\ttB \ttR}_8\loplus \rrewc{\epsi} \theta^\ttO_2$, $\theta^{\ttB \ttR}_9\loplus \rrewc{\epsi}   \theta^\ttO_8$, 
$\theta^{\ttB \ttR}_8\lomin \rrewc{1} \theta^\ttP_3$, $\theta^{\ttB \ttR}_9\lomin \rrewc{1}   \theta^\ttO_5$. 
}

Now, to obtain Theorem\;\ref{th:rep}, we want a trivial interface up to relabelling the threads. A derivation is trivial when its interfaces are only identities. In the light of Obs.\;\ref{obs:iso-deriv-sC}, we remark that, to obtain a trivial derivation, we must assign the same \textit{new} track value to \textit{any pair of threads that face each other}. As a consequence, if two threads $\theta$ and $\theta'$ are congruent modulo $\rrew :=\cup_a \rrewc{a}$, they should also be assigned the same track value \eg $\theta_5$ and $\theta_7$ must receive the same new label, even if they have no direct relation, because $\theta_9\rrew \theta_7$ and $\theta_9\rrew\theta_5$.


\ignore{
\newcommand{\typeone}{\mathord{(8\ct \tv,9\ct \tv)\rewsh \tv\secu}}
\newcommand{\typetwo}{\mathord{(3\ct \tv,8\ct \tv')\rewsh \typeone}}

$P_1=$
\infer[\apph]{ 
\infer[\abs]{ 
\infer[\abs]{ 
\infer[\apph]{ 
\infer[\ax]{\phd}{\ju{5}{x:\typetwo}}\hspace{0.2cm}
\infer[\ax]{\phd}{\ju{3}{y:\tv}\trck{3}}\hspace{0.1cm}  
\infer[\ax]{\phd}{\ju{7}{y:\tv'}\trck{8}} }{
\ju{\ldots}{x\,y:\typeone}
}}{\ju{\ldots}{\lx.x\,y:(5\ct \typetwo) \rewsh \typeone }
}}{\ju{}{\ly x.x\,y:(3\ct \tv,7\ct \tv')\rewsh(5\ct \typetwo)\rewsh \typeone}
}
\hspace{0.2cm}
\infer[\ax]{\phd}{\ju{2}{z: \tv}\trck{3} }
\hspace{0.1cm}
\infer[\ax]{\phd}{\ju{3}{z: \tv'}\trck{7} }
}{
\ju{\ldots}{(\ly x.x\,y)z:(5\ct \typetwo) \rewsh \typeone}
}
\vspace{0.5cm}

$P_2=$
\infer[\apph]{
\infer[\apph]{P_1\tri\ju{\ldots}{(\ly x.x\,y)z:(5\ct \typetwo) \rewsh \typeone}
\hspace{0.2cm} \infer[\apph]{
\infer[\ax]{\phd}{\ju{2}{a:\est\rewsh  \typetwo}}}{\ju{\ldots}{a\,x:\typetwo}\trck{5}}
}{\ju{\ldots}{((\ly x.xy)z)(a\,x):\typeone}}\hspace{0.2cm}
\infer[\ax]{\phd}{\ju{4}{b:\tv\trck{8}}}\
\infer[\ax]{\phd}{\ju{9}{b:\tv\trck{9}}}
}{
\ju{\ldots}{(((\ly x.xy)z)(a\,x))b}
}

%


$P_1=$\infer[\apph]{ 
\infer[\apph]{ 
\infer[\abs]{ 
\infer[\abs]{ 
\infer[\apph]{ 
\infer[\ax]{\phd}{\ju{5}{x:\typetwo}}\hspace{0.2cm}
\infer[\ax]{\phd}{\ju{3}{y:\tv}\trck{3}}\hspace{0.1cm}  
\infer[\ax]{\phd}{\ju{7}{y:\tv'}\trck{8}} }{
\ju{\ldots}{x\,y:\typeone}
}}{\ju{\ldots}{\lx.x\,y:(5\ct \typetwo) \rewsh \typeone }
}}{\ju{}{\ly x.x\,y:(3\ct \tv,7\ct \tv')\rewsh(5\ct \typetwo)\rewsh \typeone}
}
\hspace{0.2cm}
\infer[\ax]{\phd}{\ju{2}{z: \tv}\trck{3} }
\hspace{0.1cm}
\infer[\ax]{\phd}{\ju{3}{z: \tv'}\trck{7} }
}{
\ju{\ldots}{(\ly x.x\,y)z:(5\ct \typetwo) \rewsh \typeone}
}
\infer[\apph]{
\infer[\ax]{\phd}{\ju{2}{a:\est\rewsh  \typetwo}}}{
\ju{\ldots}{a\,x:\typetwo}\trck{5}
} 
}{
\ju{\ldots}{((\ly x.xy)z)(a\,x):\typeone
}}} 

To prove Theorem\;\ref{th:rep}, we must then prove that we can assign a value $\Val{\theta}$ to each edge thread $\theta \in \Thrtr{P}$, such that, if $\thL\rrewc{a} \thR$ for some $a\in \suppat{P}$, then $\Val{\thL}=\Val{\thR}$. This assignation $\ttVal$ must be \textbf{consistent} \techrep{\ie satisfy the two following points:
\begin{itemize}
\item No track conflict should arise from $\ttVal$ (in the contexts of $\apph$-rules).
\item Two brother threads should not be reassigned the same value \eg  
 $\theta_8$ and $\theta_9$ (or $\theta_2$ and $\theta_7$) cannot receive the same new track value (\ie $\Val{\theta_8}\neq \Val{\theta_9}$ must hold).
\end{itemize}}
\paper{
 \ie two brother threads should not be reassigned the same value \eg  in Fig.\;\ref{fig:two-bro-threads-verbose},
 $\theta_8$ and $\theta_9$ (or $\theta_2$ and $\theta_7$) cannot receive the same new track value (\ie $\Val{\theta_8}\neq \Val{\theta_9}$ must hold).
}

\techrep{Actually, the first condition (no track conflict) is easy and we}\paper{We} can prove (Proposition\;\ref{prop:if-no-bro-then-th-rep} below) that such a good assignation $\ttVal$ exists iff there is no proof  showing that there are two brother threads that should be given an equal track value\techrep{\pierre{(based upon the interface $(\phi_a)_{a\in \suppat{P}}$)}}. Such a proof would be called a \textbf{brother chain} and would have the form 
$\theta_0 \lrrewc{a_0} \theta_1\lrrewc{a_1}\ldots \lrrewc{a_{n-1}} \theta_n$, where $\theta_0$ and $\theta_n$ are two brother threads and $\lrrewc{a}$ is the symmetric closure of $\rrewc{a}$. This would imply that $\Val{\theta_0}$ and $\Val{\theta_n}$ \textit{must} be equal, which is illicit ($\theta_0$ and $\theta_n$ are brothers). In other words, a brother chain corresponds to a proof of contradiction in \paper{the}\techrep{an \textit{ad hoc}} first order theory $\calT_P$ whose set of constants is $\Thrtr{P}$ and whose axioms are $\Val{\theta_1}=\Val{\theta_2}$ for all $\theta_1,\theta_2$ s.t. $\theta_1 \rrewc{a} \theta_2$ for some $a\in \suppat{P}$ and $\Val{\theta_1}\neq \Val{\theta_2}$ for all brother threads $\theta_1,\theta_2$. Thus, Proposition\;\ref{prop:if-no-bro-then-th-rep} can be interpreted as a \textbf{completeness property} for theory $\calT_P$. It is pivotal to obtain Theorem\;\ref{th:rep}:

\begin{proposition}
\label{prop:if-no-bro-then-th-rep}
If there is no brother chains \wrt an operable derivation $P$\techrep{ collapsing on a $\scrR$-derivation $\Pi$}, then there is a trivial derivation $P_*$ isomorphic to $P$\techrep{ (and thus also collapsing on $\Pi$}.
\end{proposition}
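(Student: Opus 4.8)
The plan is to build the trivial derivation $P_*$ explicitly by choosing a good value assignment $\ttVal$ on threads and then invoking the resetting machinery (Proposition\;\ref{prop:deriv-reset}). The key observation is that the relation "$\Val{\theta_1}=\Val{\theta_2}$ whenever $\theta_1\rrewc{a}\theta_2$" generates an equivalence relation $\sim$ on $\Thrtr{P}$: namely $\theta\sim\theta'$ iff there is a chain $\theta=\theta_0\lrrewc{a_0}\theta_1\lrrewc{a_1}\cdots\lrrewc{a_{n-1}}\theta_n=\theta'$. The hypothesis that there are no brother chains says precisely that no two brother threads are $\sim$-equivalent. First I would verify that $\sim$ is well-defined and that it is compatible with the brotherhood relation in the strong sense needed: not only are $\theta$ and $\theta'$ never $\sim$-related when they are brothers, but more generally brotherhood descends to the quotient (if $\theta_1\sim\theta_1'$, $\theta_2\sim\theta_2'$ and $\theta_1,\theta_2$ are brothers, then $\theta_1',\theta_2'$ face, hence are brothers too — here one uses that $\rrewc{a}$ preserves the "sequence-type position" structure, and that facing threads always carry compatible brotherhood data by the way interfaces act on sequence types).

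Next I would pick an injection $\ttVal:\Thrtr{P}/\!\!\sim\,\longrightarrow\Nmzo$, and set $\Val{\theta}$ to be the value of its $\sim$-class; by construction $\thL\rrewc{a}\thR$ implies $\Val{\thL}=\Val{\thR}$, and by the no-brother-chain hypothesis $\theta_1,\theta_2$ brothers implies $\Val{\theta_1}\neq\Val{\theta_2}$. From $\ttVal$ I would define a relabelling $\Relab$ of $P$ in the sense of Definition\;\ref{def:relabelling}: for each mutable edge (inner edge of a type in an axiom leaf, argument edge, or axiom edge) I assign the new track $\Val{\theta}$ where $\theta$ is the thread of that edge. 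I must check this is a legal relabelling, i.e.\ that $\Relabarg$, each $\Relab_a$ ($a\in\Ax$) and $\Relabtr$ satisfy the injectivity-at-a-node conditions: two distinct mutable edges sharing a common source node sit in the same sequence type and hence are brothers (or, for $\Relabtr$, two axiom leaves typing the same variable give brother singleton threads), so the no-brother-chain hypothesis forces distinct values. Then $\Psi^{\Relab}$ is a resetting of $P$, and $P_*:=P^{\Relab}=\Psi^{\Relab}(P)$ is, by Proposition\;\ref{prop:deriv-reset}, a hybrid derivation isomorphic to $P$ as an operable derivation, collapsing on the same $\scrR$-derivation $\Pi$.

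It remains to check that $P_*$ is \emph{trivial}, i.e.\ that for every $a\in\suppat{P_*}$ the interface isomorphism $\phi_{*,a}:\Lst^{P_*}(a)\rew\Rst^{P_*}(a)$ is the identity. The interface of $P_*$ is obtained by transporting $\phi_a$ along $\Psi^{\Lst}$ and $\Psi^{\Rst}$, as in Proposition\;\ref{prop:deriv-reset}: $\phi_{*,a}=\Psi^{\Rst}_a\circ\phi_a\circ(\Psi^{\Lst}_a)^{-1}$. Concretely, on a root track $k$ of $\Lst^{P}(a)$, the corresponding edge in $\Lst^{P_*}(a)$ carries the new label $\Val{\thL}$, where $\thL$ is the thread of that edge, and $\phi_a$ sends it to the facing edge on the right, which has thread $\thR$ with $\thL\rrewc{a}\thR$; the image edge in $\Rst^{P_*}(a)$ carries label $\Val{\thR}$. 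Since $\thL\rrewc{a}\thR$ gives $\Val{\thL}=\Val{\thR}$, the root-interface part of $\phi_{*,a}$ is the identity on root tracks; and propagating this down each $\rew$ and each nested sequence type by a $001$-induction (using that deeper congruent edges are again related by $\rrewc{a}$ or by ascendance/polar inversion, hence $\sim$-equivalent, hence equi-valued) shows $\phi_{*,a}$ is the identity everywhere. Hence $P_*$ is trivial.

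The main obstacle I anticipate is the bookkeeping in the last paragraph: verifying that the transported interface is \emph{pointwise} the identity, not just on root tracks, requires a careful $001$-induction showing that every pair of edges identified by a single interface $\phi_a$ — at arbitrary depth inside $\Lst^{P}(a)$ and $\Rst^{P}(a)$ — are congruent (equivalently, are linked by $\rrewc{a}$ at that depth), so that $\ttVal$ assigns them equal tracks. A secondary, more delicate point is proving that brotherhood truly descends to the $\sim$-quotient with the right variance (positive/negative occurrences): one must rule out a chain that, via a mixture of ascendance, polar inversion and consumption steps, forces two genuine brothers together — but this is exactly what "no brother chain" excludes, so the argument is a matter of unwinding the definitions of $\lrrewc{a}$, $\loplus$, $\lomin$ and checking closure.
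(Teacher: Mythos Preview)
Your proof is correct and follows the paper's approach: quotient $\Thrtr{P}$ by the equivalence generated by $\lrrew$, inject the quotient into $\Nmzo$, use this as a relabelling $\Relab$, and check (by induction on the length of inner positions) that the transported interfaces $\Psi^{\Rst}_a\circ\phi_a\circ(\Psi^{\Lst}_a)^{-1}$ are identities. Two small remarks: the ``brotherhood descends to the quotient'' step is unnecessary---all you need is that brother threads are never $\sim$-equivalent, which is exactly the no-brother-chain hypothesis; and $\Relabtr$ must be \emph{globally} injective on $\Ax$ (not just per variable), which the paper arranges by declaring any two distinct axiom edges brothers regardless of the variable they type.
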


\paper{
To develop the final stages of this theorem, we also need the following lemmas:

\begin{lemma}[Uniqueness of Consumption]
  \label{l:uniqueness-consumption-hybrid}
  Let $P$ be an operable derivation, $\oast \in \set{\oplus,\omin}$ and $\theta \in \ThrE{P}$. Then, there is a most one $\theta'$ such that ($\theta \loast  \rrew \theta'$ or $\theta \loast\werr \theta'$).
\end{lemma}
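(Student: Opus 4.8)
The plan is to prove Lemma~\ref{l:uniqueness-consumption-hybrid} by a local analysis of the consumption relations $\rrewc{a}$ at a fixed application position $a \in \suppat{P}$, followed by the observation that an edge thread can be consumed at only one position for each polarity. Fix $\oast \in \set{\oplus,\omin}$ and a thread $\theta \in \ThrE{P}$. The key structural fact is that consumption is controlled by the interface $\phi_a$, which, being a \emph{bijection} (a sequence type isomorphism from $\Lst^P(a)$ to $\Rst^P(a)$), cannot relate a single source edge to two distinct target edges, nor two distinct source edges to a single target edge. So the heart of the matter is to show that for each polarity $\oast$, $\theta$ participates in consumption at a \emph{unique} application position, and there with a unique partner.

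\textbf{Key steps.} First I would recall precisely, from the discussion in \S\ref{ss:bro-consum-lipics-sC}, what $\theta \loast \rrewc{a} \theta'$ means: there is an occurrence of $\theta$ among the mutable edges of $\Lst^P(a)$ (when the relation is ``left-consumed'', matching the $\oast$ superscript on the left) whose polarity is $\oast$, and $\phi_a$ maps the corresponding root track of $\Lst^P(a)$ to a root track of $\Rst^P(a)$ belonging to an occurrence of $\theta'$; symmetrically for $\werr$. Second, I would argue that a given thread $\theta$, viewed as a $\equiv$-class of mutable edges, has at most one occurrence in $\Lst^P(a) \uplus \Rst^P(a)$ of polarity $\oast$ that lies ``at a root'' of these sequence types — this is because the ascendant/polar-inversion congruence, restricted to the edges feeding into a single $\apph$-rule, is injective on root occurrences (two distinct root edges of $\Lst^P(a)$ have distinct tracks, hence are never congruent; and the polar inverse identification pairs at most one negative root occurrence with one positive root occurrence across the $\abs$/$\apph$ interplay). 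Thus at position $a$, $\theta$ can be consumed in at most one way with a fixed polarity $\oast$. Third, I would show that if $\theta$ is consumed with polarity $\oast$ at position $a$ and also at position $a' \neq a$, we reach a contradiction: an occurrence of polarity $\oast$ in a thread is determined by its \emph{top ascendant} (either the unique $\ax$-rule it ascends to, for $\oplus$, or the unique $\abs$-rule, for $\omin$), and the mutable edges of $\theta$ that sit at the root of some $\Lst^P(a)$ or $\Rst^P(a)$ are exactly those lying on the branch between that top ascendant and the relevant $\apph$-node; by rigidity this $\apph$-node is uniquely determined, so $a = a'$.

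\textbf{Main obstacle.} The delicate point is the second step: making precise that within the mutable edges feeding a single $\apph$-rule, the congruence $\equiv$ never merges two \emph{distinct} root positions of the same polarity, so that ``$\theta$ is left-consumed negatively at $a$'' pins down a single edge occurrence and hence, via the bijection $\phi_a$, a single partner thread $\theta'$. This requires carefully tracking how ascendance threads terminate: a negative occurrence ascends to an $\abs$-rule and its polar inverse is a positive occurrence ascending to the family of $\ax$-rules typing the bound variable — but crucially only \emph{one} of those axiom tracks can coincide with the root track that $\phi_a$ acts on, since the $\apph$-rule at $a$ reads off $\Lst^P(a) = \sSk$ whose roots are precisely the axiom tracks used for the bound variable, each occurring once. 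Once this injectivity-at-roots statement is established, uniqueness of consumption follows by combining it with the bijectivity of $\phi_a$ (ruling out two partners at the same position) and the uniqueness of the top ascendant (ruling out two positions), which completes the proof.
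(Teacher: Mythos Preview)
Your overall intuition is right --- uniqueness ultimately comes from the linear shape of a thread and the bijectivity of $\phi_a$ --- but the execution has a genuine gap, and the decomposition is heavier than what is needed.

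The gap is your repeated restriction to \emph{root} edges of $\Lst^P(a)$ and $\Rst^P(a)$. Consumption is not only about roots: the relation $(a\cdot 1,k\cdot c)\rewc{a}(a\cdot k',c')$ ranges over \emph{all} mutable edges of the source sequence type, for arbitrary $c$. So your step~2 (``at most one occurrence of polarity $\oast$ that lies at a root'') does not establish what is required; you need ``at most one \emph{consumed} occurrence of polarity $\oast$'', with no restriction to roots. The justification you give (``two distinct root edges have distinct tracks, hence are never congruent'') is true but beside the point: it says nothing about non-root edges, and in any case the reason two consumed occurrences of the same polarity must coincide has nothing to do with track values.

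The paper's argument is shorter and bypasses your split into ``local uniqueness at $a$'' plus ``uniqueness of $a$''. It rests on two observations. First, by Lemma~\ref{l:form-iden-hybrid}, the occurrences of $\theta$ of a fixed polarity $\oast$ form a single $\idena$-class, i.e.\ one ascendant chain (a thread is at most two such chains, one per polarity, joined at the top by a single $\rewp$-step). Second, a consumed biposition --- $(a\cdot 1,k\cdot c)$ on the left, or $(a\cdot k',c')$ with $k'\geqslant 2$ on the right --- has no $\rewa$-descendant, hence lies at the bottom of its chain. A chain has at most one bottom, so there is at most one consumed occurrence of polarity $\oast$, and the partner $\theta'$ is then determined from that single occurrence (via $\phi_a$ or its inverse). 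This one-line argument subsumes both your steps~2 and~3; the bijectivity of $\phi_a$ is only needed to pass from the unique consumed \emph{edge} to the unique partner \emph{thread}.
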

\ignore{
Brother threads always face brother threads:

\begin{lemma}
\label{l:consum-brother-thr-same-pos}
Let $\theta_1$ and $\theta_2$ be two brother edges and $a\in \suppat{P}$. Then $\theta_1$ is consumed at position $a$ iff $\theta_2$ is. Moreover:
\begin{itemize}
\item If $\theta_1\rrewc{a} \theta^{\Rst}_1$, then, there is $\theta^{\Rst}_2$, brother with $\theta^{\Rst}_1$, such that $\theta_2\rrewc{a}\theta^{\Rst}_2$.
\item If $\theta^{\Lst}_1\rrewc{a} \theta_1$, then, there is $\theta^{\Lst}_2$, brother with $\theta^{\Lst}_1$, such that $\theta^{\Lst}_2\rrewc{a}\theta_2$.
\end{itemize}
\end{lemma}}

We define the \textbf{applicative depth} of a thread $\theta$ as the maximal applicative depth of a judgment in which it occurs \eg in Fig.\;\ref{fig:two-bro-threads-verbose}, $\ad{\theta_8}=\ad{\theta_9}=0$ and $\ad{\theta_7}=0$.
 Consumption causes an increase of applicative depth, provided the left thread is consumed positively.

\begin{lemma}[Monotonicity]
\label{l:consum-and-ad}
If $\thL\loplus\rrew \thR$, then $\ad{\thL}<\ad{\thR}$.
\end{lemma}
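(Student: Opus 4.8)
The plan is to unfold the relation $\thL\loplus\rrew\thR$ and then read off both applicative depths directly from the $\apph$-node that carries the consumption, bounding $\ad{\thL}$ from above by $\ad{a}$ and $\ad{\thR}$ from below by $\ad{a}+1$. Concretely, suppose $\thL\loplus\rrewc{a}\thR$ for some $a\in\suppat{P}$, and write $\ttT^P(a\cdot 1)=\sSk\rew T$, so that $\LstP(a)=\sSk$. By definition of left consumption, $\thL$ has an occurrence which is a root edge $\epsi\cdot k$ of $\LstP(a)$ — the edge pointing at the component $S_k$ of the arrow source $\sSk$ inside the type $\ttT^P(a\cdot 1)$ — and this occurrence is \emph{positive}; correspondingly $\thR$ has an occurrence which is the root edge $\epsi\cdot\rho_a(k)$ of $\RstP(a)$, and, by the taxonomy of mutable edges, such a root edge of $\RstP(a)$ is precisely the argument edge $a\cdot\rho_a(k)$ leading from $a$ into the argument subderivation $P_{\rho_a(k)}$.

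From this, $\ad{\thR}\geqs\ad{a}+1$ is immediate: the argument edge $a\cdot\rho_a(k)$ leads to the subderivation rooted at $a\cdot\rho_a(k)$, and $\ad{a\cdot\rho_a(k)}=\ad{a}+1$ since $\rho_a(k)\geqs 2$ is an argument track. For the other bound I claim $\ad{\thL}\leqs\ad{a}$. The consumed occurrence of $\thL$ sits inside $\ttT^P(a\cdot 1)$, hence at applicative depth $\ad{a\cdot 1}=\ad{a}$, and it is a source edge of an arrow type. Positivity says its ascending strand does not top out in an $\abs$-rule; consequently this arrow type is not handed down by an abstraction but traces back up the head spine of the subterm at $a\cdot 1$ to the $\ax$-rule typing its head variable $z$ (the head spine is finite since we work in $\Lamzzu$). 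That head spine uses only the tracks $0$ and $1$, so the $\ax$-rule — and every judgment between it and $a\cdot 1$ — has applicative depth $\ad{a}$. From this $\ax$-rule, the thread $\thL$ can only: descend along the context entry of $z$, which flows strictly downwards so that applicative depth can only decrease; perform one polar inversion at the binder of $z$, which lies at a position $\leqs a$, hence of depth $\leqs\ad{a}$; and again travel up and down arrow-source edges along further head spines (tracks $0,1$ only), the strand being simply closed off — via the consumption relation, not ascendance — as soon as it meets an arrow source consumed at another $\apph$-node. An induction on the structure of $\thL$, split according to these regimes, shows that every edge of $\thL$ occurs in a judgment of applicative depth $\leqs\ad{a}$, whence $\ad{\thL}\leqs\ad{a}$. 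Combining, $\ad{\thL}\leqs\ad{a}<\ad{a}+1\leqs\ad{\thR}$.

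The main obstacle is the upper bound $\ad{\thL}\leqs\ad{a}$: one must verify carefully, by induction on threads, that \emph{positivity} of the consumed occurrence really does confine all of $\thL$ to applicative depth $\leqs\ad{a}$. The delicate case is a thread that follows the context entry of a bound variable, since such a thread could a priori climb into a deeper argument subderivation; positivity is exactly what rules this out, as it forces the $\ax$-rule introducing the relevant arrow type to lie on a head spine — which never enters an argument — rather than arbitrarily deep inside the derivation. For contrast, the statement genuinely fails when $\thL$ is consumed \emph{negatively}: the strand then runs up to an $\abs$-rule, polar-inverts, and may reach $\ax$-rules at arbitrary applicative depth, so that no upper bound on $\ad{\thL}$ in terms of $\ad{a}$ is available — which is why the hypothesis $\loplus$ is essential.
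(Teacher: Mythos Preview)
Your overall strategy---bound $\ad{\thR}$ from below by $\ad{a}+1$ and $\ad{\thL}$ from above by $\ad{a}$---is the right one, but the execution has two genuine errors and is considerably more laborious than the paper's argument.

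First, the setup is wrong. Left consumption $\thL\rrewc{a}\thR$ means there exist edges $\eL\in\thL$, $\eR\in\thR$ with $\eL\rewc{a}\eR$, where $\eL=(a\cdot 1,k\cdot c)$ for some $k\geqslant 2$ and \emph{arbitrary} $c\in\bbN^*$, not only the root edge $c=\epsi$. Correspondingly $\eR=(a\cdot k',c')$ with $\phi_a(k\cdot c)=k'\cdot c'$, which is an argument edge only when $c=\epsi$. This does not harm your bound on $\thR$ (the outer position is $a\cdot k'$ in all cases, so $\ad{\thR}\geqslant\ad{a\cdot k'}=\ad{a}+1$), but the restriction to root edges is incorrect.

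Second, and more seriously, your claim that the binder of $z$ ``lies at a position $\leqslant a$'' is false. The axiom for $z$ is at some $a_0=a\cdot 1\cdot w$ with $w\in\{0,1\}^*$; the binder $\lambda z$ may perfectly well sit strictly between $a\cdot 1$ and $a_0$ on that same head spine, hence at a position $>a$. Your deduction ``hence of depth $\leqslant\ad{a}$'' happens to survive (the binder is still on a $\{0,1\}$-path from $a$), but not for the reason you give.

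The paper avoids this whole traversal. It uses $\ad{\theta}=\ad{\refe{\theta}}$ and the single observation that positivity of the occurrence $(a\cdot 1,k\cdot c)$ means the ascending chain stays entirely among right bipositions---and ascendance of right bipositions appends only tracks $0$ or $1$ to the outer position. Hence $\refe{\thL}=\Asc{(a\cdot 1,k\cdot c)}$ has outer applicative depth exactly $\ad{a\cdot 1}=\ad{a}$, giving $\ad{\thL}=\ad{a}$ directly. No analysis of the negative strand, binders, or further head spines is needed; what you flag as ``the main obstacle'' is in fact a one-line computation once one works with the referent rather than trying to bound every occurrence of the thread.
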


}

\techrep{

For now, let us properly define threads and consumption.

\subsection{Ascendance, Polar Inversion and Threads}
\label{s:asc-pol-inv-threads-hybrid}



For the discussion below, we need to recall  that, given a type $T$ and a sequence type $\sSk$, a position $c\in \supp{T}$ corresponds to the position $1\cdot c\in \supp{\sSk\rew T}$, since $T$ occurs in this arrow type right-hand side. And if $k\in K$, position $c$ in $S_k$ corresponds to position $k\cdot c$ in $\sSk$.\\

\noindent \textbullet~Assume that we find an $\abs$-rule at position $a$ in $P$ as in Fig\;\ref{fig:asc-hybrid}: the judgment $\ju{C;x:\sSk}{t:T}$ (pos. $a\ct 0$) yields $\ju{C}{\lx.t:\sSk\rew T}$ below (pos. $a$). The occurrence of $T$ in the conclusion of the rule is intuitively the same as that in its premise: we say the former is the \textbf{ascendant} of the latter, since it occurs above in the typing derivation. Moreover, the context $C$ of the judgment typing $\lx.t$ comes from $C$ in the judgment typing $t$ and the source $\sSk$ of the arrow type of $\lx.t$ comes from $x:\sSk$ in the context of the premise.
Likewise, in the $\app$-rule, the occurrence of $T$ in $\ju{C\uplus_{\kK} D_k}{t\,u:T}$ stems from that of premise $\ju{C}{t:\sSk\rew T}$: the first occurrence of $T$ is also the ascendant of $T$ in the conclusion of the rule. Moreover, any part of the context $C\uplus(\uplus_{\kK}D_k)$ of the judgment typing $t\,u$ comes from a part of $C$ or of some $D_k$ in one of the premise.

Recalling the correspondence between $c$ and $1\ct c$ above, the relation of ascendance is defined by:
\begin{itemize}
\item For all $a\in \supp{P}$ such that $t(a)=\arob$:
\begin{itemize}
\item For all $c\in \suppmut{\ttT(a)}$, $(a,c)\rewa (a\cdot 1,1\cdot c)$.
\item For all $x\in \TermV$ and $k\cdot c\in \suppmut{\ttC(a)(x)}$, $(a,x,k\cdot c)\rewa (a\cdot \ell,x,k\cdot c)$ where $\ell\geqs 1$ is the unique integer such that $k\cdot c\in \supp{\ttC(a\cdot \ell)(x)}$.
\end{itemize}
\item For all $a\in \supp{P}$ such that $t(a)=\lx$ for some $x$:
\begin{itemize}
\item For all $c\in \bbN^*$ such that $1\cdot c\in \suppmut{\ttT(a)}$, $(a,1\cdot c)\rewa (a\cdot 0,c)$.
\item For all $y\in \TermV,\,y\neq x$ and all $k\cdot c\in \suppmut{\ttC(a)(y)}$, $(a,y,k\cdot c)\rewa (a\cdot 0,k\cdot c)$.
\item For all $c\in \bbN^*$ and $k\geqslant 2$ such that $k\cdot c\in \suppmut{\ttT(a)}$, $(a,k\cdot c)\rewa (a\cdot 0,x,k\cdot a)$.
\end{itemize}
\end{itemize}
Relation $\p_1\rewa  \p_2$ means that $\p_2$ is the ascendant of $\p_1$ \ie $\p_1$ and $\p_2$ are corresponding pointers to the same edge in the conclusion and the (left) premise of the rule at some position $a$.
For instance, in $\Pex$, $(\epsi,\,\epsi)\rewa (0,\,1)\rewa (0\ct 1,\,\epsi)$ (those 3 bipositions point to type symbol $o'$, from the judgment concluding $\Pex$ to the axiom rule where it was created (position $0\ct 1$)). \\

\noindent \textbullet~ Let us have another look at the $\ax$-rule in Fig.\;\ref{fig:asc-hybrid}. We may identify the occurrence of $S_5$ in the context (placed on track 5 in the singleton sequence type $(5\cdot S_5)$) and that in the right part of the judgment. We then define the relation of \textbf{polar inversion} by:
\begin{itemize}
\item For all $a\in \supp{P}$ such that $t(a)=x$ for some $x$, for all $c\in \ttT(a)$, $(a,x,k\cdot c)\rewp (a,c)$, with $k=\trP{a}$.
\end{itemize}
Relation $\p_1\rewp \p_2$ means that $\p_1$ is the \textit{(negative) polar inverse} of $\p_2$. For instance, in $\Pex$, $a=0\cdot 3$ points to an $\ax$-rule typing $x$, and $(0\cdot 3,x,2)\rewp (0\cdot 3,\epsi)$ (those 2 bipositions point to type symbol $\tv'$).\\

As planned in Sec.\;\ref{s:edge-threads-techrep}, let us call an \textbf{mutable edge thread} or simply a \textbf{thread} an equivalence class of relation $\equiv$ on $\Edg{P}$. The quotient set $\Edg{P}/\equiv$ is denoted $\ThrE{P}$. If $\tte\in \Edg{P}$, we denote by $\ttthr^P_{\ttE}(\tte)$ or just by $\thr{\tte}$ the thread of $\tte$ \ie the unique $\theta\in \ThrE{P}$ such that $\tte \in \theta$. In that case, we say that $\theta$ \textbf{occurs} at $\tte$ and we write $\theta:\tte$ or $\tte:\theta$.

The lemma below states that two edges of the same thread are labelled with the same track, as expected from Sec.\;\ref{s:edge-threads-techrep}:

\begin{lemma}
Let $\tte_1,\tte_2\in \ThrE{P}$ such that $\tte_1\equiv \tte_2$. Then $\lab{\tte_1}=\lab{\tte_2}$.
\end{lemma}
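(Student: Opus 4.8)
The plan is to reduce everything to a single-step statement. By construction $\equiv$ is the least equivalence relation on $\Edg{P}$ generated by the ascendance relation $\rewa$ and the polar inversion relation $\rewp$, each restricted to pairs that are both mutable edges. So I would first observe that it is enough to prove that these two generating relations preserve the label: if $\p_1\rewa\p_2$ or $\p_1\rewp\p_2$ with $\p_1,\p_2\in\Edg{P}$, then $\lab{\p_1}=\lab{\p_2}$. Granting this, the lemma follows by an immediate induction on the length of a chain $\tte_1=\p_0\mathrel{R_0}\p_1\mathrel{R_1}\cdots\mathrel{R_{n-1}}\p_n=\tte_2$ linking $\tte_1$ and $\tte_2$, where each $R_i$ is $\rewa$, $\rewp$, or a converse of one of these, using only that equality on $\bbN$ is reflexive, symmetric and transitive.

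The bulk of the work is then a case analysis over the clauses defining $\rewa$ and $\rewp$. The guiding remark is that for a mutable right biposition $(a,c\ct k)$, an inner left biposition $(a,x,\ell\ct c\ct k)$, or an axiom edge $(a,x,\ell)$, the value of $\lab{\cdot}$ is simply the terminal track ($k$, $k$, or $\ell$ respectively) of the path that names the edge inside $\ttT^P(a)$ or inside the component of $\ttC^P(a)(x)$ rooted at $\ell$. Each clause of $\rewa$ only alters the \emph{prefix} of this path, or transports it verbatim from a rule's conclusion to its left premise: the $\app$-rule clause prepends the letter $1$; the first $\abs$-rule clause deletes a leading $1$; the clauses that relocate an edge of $\ttT^P(a)$ or of $\ttC^P(a)(y)$ inside the same rule leave it literally unchanged; and the $\abs$-rule clause $(a,k\ct c)\rewa(a\ct 0,x,k\ct c)$ identifies the edge sitting inside the source $\sSk$ of the arrow type $\sSk\rew T$ of $\lx.t$ with the corresponding edge of the context entry $x:\sSk$ of the premise, the terminal track being untouched even in the degenerate case $c=\epsi$, where $(a,k)$ and the axiom edge $(a\ct 0,x,k)$ are both labelled $k$. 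Symmetrically, the single clause $(a,x,k\ct c)\rewp(a,c)$ (for $t(a)=x$ and $k=\trP{a}$) relates the edge of $S_k$ read off the context side of the $\ax$-rule to the very same edge of $\ttT^P(a)=S_k$ read off the type side, so again the terminal track is the same. In each case one reads off $\lab{\p_1}=\lab{\p_2}$.

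I do not expect any genuine obstacle here: the single-step preservation is true by inspection, and the only places that need a moment's care are the clauses that cross the boundary between a type's occurrence on the right of a judgment and its occurrence inside a context (in the $\abs$-rule), between the root of a sequence type and the enclosing arrow type, and the handling of the degenerate $c=\epsi$ axiom edge. In all of these the terminal letter of the path --- which is exactly what $\lab{\cdot}$ returns --- is preserved, so the hard part is really only to make sure the case split is exhaustive, after which the statement is essentially immediate.
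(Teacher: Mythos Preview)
Your proposal is correct and follows the same approach as the paper. The paper does not give a separate proof of this lemma but immediately states the more general Lemma~\ref{l:relab-and-iden} (preservation of any relabelling along $\equiv$) and proves it with the one-line remark ``by induction, since the lemma is true when we replace $\equiv$ by $\rewa$ and $\rewp$''; your argument is exactly this reduction to the single-step cases, spelled out with the exhaustive case analysis that the paper leaves implicit.
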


\noindent And more generally:

\begin{lemma}
\label{l:relab-and-iden}
Let $\Relab$ be a relabelling of $P$. Then, for all $\tte_1,\tte_2\in \Edg{P}$ such that $\tte_1\equiv \tte_2$. Then $\Relab(\tte_1)=\Relab(\tte_2)$.
\end{lemma}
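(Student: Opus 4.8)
The plan is to factor this statement through the resetting $\Psi^{\Relab}$ of $P$ induced by the relabelling $\Relab$, and then to invoke the preceding lemma applied to the reset derivation. Recall from the construction above that a relabelling $\Relab$ of $P$ yields a resetting $\Psi^{\Relab}$ of $P$ (assembled from $\Relabarg$, the $\Relab_a$ and $\Relabtr$), so that $P^{\Relab} := \Psi^{\Relab}(P)$ is a hybrid derivation isomorphic to $P$ by Proposition~\ref{prop:deriv-reset}. First I would note that, unfolding the definitions of $\Relab_a$, $\Relab_a^{\Lst}$, $\Relab_a^{\Rst}$ and of $\Psisu^{\Relab}$, one gets for every mutable edge $\tte \in \Edg{P}$ that $\Psi^{\Relab}(\tte)$ is a mutable edge of $P^{\Relab}$ with $\lab{\Psi^{\Relab}(\tte)} = \Relab(\tte)$; that is, $\Relab(\tte) = \lab{\Psi^{\Relab}(\tte)}$ holds essentially by construction, once $\Relab$ is extended to all of $\Edg{P}$ via the induced resetting.

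The key step is then to show that a hybrid derivation isomorphism $\Psi$ (here $\Psi = \Psi^{\Relab}$) transports the congruence $\equiv$ on mutable edges to the congruence $\equiv$ on the target: if $\tte_1 \equiv \tte_2$ in $P$, then $\Psi(\tte_1) \equiv \Psi(\tte_2)$ in $\Psi(P)$. Since $\equiv$ is the equivalence generated by ascendance $\rewa$ and polar inversion $\rewp$, it suffices to prove $\p_1 \rewa \p_2 \Rightarrow \Psi(\p_1) \rewa \Psi(\p_2)$ and $\p_1 \rewp \p_2 \Rightarrow \Psi(\p_1) \rewp \Psi(\p_2)$. This is a routine case analysis following the definitions of $\rewa$ and $\rewp$: at an $\abs$-rule, an $\apph$-rule, or an $\ax$-rule, the relation between $\p_1$ and $\p_2$ is entirely determined by the shape of the rule and the correspondence between conclusion and premises, all of which $\Psi$ respects (it is a $01$-isomorphism on supports, it commutes with the term constructors, it preserves axiom tracks via $\Psitr$, and the type and context isomorphisms $\Psi_a$, $\Psi_{a,x}$ are mutually compatible along each rule). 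Hence $\Psi$ carries ascendant threads to ascendant threads and polar inverses to polar inverses, and therefore carries $\equiv$-classes to $\equiv$-classes.

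Putting the pieces together: given $\tte_1 \equiv \tte_2$ in $P$, the second step gives $\Psi^{\Relab}(\tte_1) \equiv \Psi^{\Relab}(\tte_2)$ in the hybrid derivation $P^{\Relab}$; the preceding lemma (applied to $P^{\Relab}$) then gives $\lab{\Psi^{\Relab}(\tte_1)} = \lab{\Psi^{\Relab}(\tte_2)}$; and by the identity $\Relab(\tte) = \lab{\Psi^{\Relab}(\tte)}$ noted in the first step this reads $\Relab(\tte_1) = \Relab(\tte_2)$, which is the claim. I expect the main obstacle to be the case analysis showing that hybrid derivation isomorphisms commute with $\rewa$ and $\rewp$: this is conceptually immediate — these relations are dictated by the typing rules, which isomorphisms preserve — but it requires unwinding several notational layers. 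An alternative that avoids $\Psi^{\Relab}$ is to prove directly, by induction along $\rewa$ and $\rewp$, that $\Relab$ takes equal values on congruent edges; but that amounts to re-running the proof of the preceding lemma inside $P^{\Relab}$, so routing through $\Psi^{\Relab}$ is the economical choice.
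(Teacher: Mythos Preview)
Your argument is correct, but it is considerably more elaborate than what the paper does. The paper's proof is one line: ``By induction, since the lemma is true when we replace $\equiv$ by $\rewa$ and $\rewp$.'' In other words, the paper takes exactly the direct route you mention at the end as an ``alternative'' and then dismiss. Checking that $\Relab$ is constant along $\rewa$ and along $\rewp$ is immediate from how $\Relab_a$, $\Relab^{\Lst}_a$, $\Relab^{\Rst}_a$ and $\Relabtr$ are defined in terms of $\Psi^{\Relab}$; once that is done, the result follows by induction on $\equiv$.

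Your detour through $\Psi^{\Relab}$ and the preceding lemma is sound, but notice that your ``key step'' (isomorphisms transport $\rewa$ and $\rewp$) is the same case analysis as the direct proof, only phrased in $P^{\Relab}$ rather than in $P$. So you have not saved any work; you have added the overhead of transporting $\equiv$ along $\Psi^{\Relab}$ and then invoking the label lemma on the target side, when the same two-case check on the source side already finishes the job. Your instinct that the direct proof ``re-runs the preceding lemma'' is right, but that re-run is trivial here --- it is precisely the economical choice.
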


\begin{proof}
By induction, since the lemma is true when we replace $\equiv$ by $\rewa$ and $\rewp$. 
\end{proof}

Let $\idena$ be the reflexive, transitive, symmetric closure of $\rewa$. Notice that $\rewa$ is functional: if $\p_1 \rewa \p_2$, we write $\p_2=\asc(\p_1)$. Notice also that $\asc$ is injective. Thus, $\p_1 \idena \p_2$ iff $\exists i\geqslant 0,~ \p_2=\asc^i(\p_1)$ or $\p_1=\asc^i(\p_2)$.

Given a right biposition $\p=(a,c)$ (resp. a left biposition $(a,x,k\cdot c)$) in $\bisupp{P}$, we call $a$ the \textbf{outer} and $c$ (resp. $k\cdot c$) the \textbf{inner position} of $\p$. We then write $\out{\p}=a$ and the \textbf{applicative depth} (extending the notion of Sec.\;\ref{s:tracks}) of $\p$ is defined by $\ad{\p}=\ad{\out{\p}}$ .


\subsection{Consumption}
\label{s:consum-interface}


The notion of \textbf{consumption} is associated with rule $\apph$. Assume $t(a)=\arob,\,\tra=u\,v$ with $u:\sSk \rew T$ for all $\kK$ and $v:S'_k$ for all $\kK'$ for all $\kK$ with $\phi_a:\sSk \trew \sSpkp$ as in Fig.\;\ref{fig:asc-hybrid} so that $u\,v$ can be typed with $T$. The types $S_k\ (\kK)$ and $S'_k\ (\kK')$ occur in the premises of the judgment typing $t\,u$, however, they are absent in this judgment. We say that they have been \textbf{consumed}. Intuitively, for all $\kK$, the sequence type isomorphism $\phi_a$ given by the interface of $P$ will identify every edge $\tte$ of $S_k$ with some edge $\tte'$ of some $S'_{k'}$.
Formally, we set:
\begin{itemize}
\item For all $a,c,c'\in \bbN^*,\ k,k'\in \Nmzo$ such that $(a\cdot 1,k\cdot c)\in \bisuppmut{P}$, $\phi_a(k\cdot c)=k'\cdot c'$
:
\begin{itemize}
\item If $c\neq \epsi$, $(a\cdot 1,k\cdot c)\rewc{a} (a\cdot k', c')$
\item $(a\cdot 1,k)\rewc{a} a\cdot k'$.
\end{itemize} 
\end{itemize}  
Indeed, the premise concluding with $u:\sSk \rew T$ is at position $a\ct 1$. The $c\in \supp{S_k}$ corresponds to position $k \ct c$ in  $\supp{\sSk\rew T}$. The interface $\phi_a$ will map $S_k$ onto the type $S'_{k'}$ (with $\rho_a=\Rt(\phi_a)$ and 
$k'=\rho_a(k)$, so that $S'_{k'}\equiv S_k$), that occurs in the judgment $\ju{D_{k'}}{v:S'_{k'}}$ at position $a\cdot k'$.
But the edge of $S_k$ that ends at some position $c$ will be mapped by $\phi_a$ on the edge of $S_k'$ that ends at $c':=\phi_a\rstr{k}(c)$. Moreover, biposition $(a\cdot 1,k)$, that stands for the \textit{inner} edge joining $(a\cdot 1,\epsi)$ with $(a\cdot 1,k)$, will be mapped on the \textit{argument} edge joining $a$ with $a\cdot k'$, denoted simply by $a\cdot k'$. For instance, in the trivial derivation $\Pex$, there is an $\app$-rule at position 0, $\phi_0$ is the identity and $(0\ct 1,8)\rewc{0} 0\ct 8,\; (0\ct 1,3)\rewc{0} 0\ct 3$.

We set $\rew\,=\cup \set{\rewc{a}\,|\, a\in \suppat{P}}$ and write $\wer$ for the symmetric relation. Implicitly, the relations $\rewc{a}$ and $\rew$ both depend on the interface $\phi$ of $P$.\\

We consider now the extension of relation $\rew$ modulo $\iden$. We write $\theta_1 \rrewc{a} \theta_2$ if $\exists \p_1,\p_2,~ \theta_1=\thr{\p_1},\ \theta_2=\thr{\p_2},\ \p_1\rewc{a} \p_2$. Thus, $\theta_1\rrewc{a} \theta_2$ iff
$\theta_1:\p_1\rewc{a} \p_2:\theta_2$ for some $\p_1,\p_2$. In that case, we say that $\theta_1$ (resp. $\theta_2$) has been \textbf{left-consumed} (resp. \textbf{right-consumed}) at biposition $\p_1$ (resp. $\p_2$). 

\subsection{Brother Chains}
\label{s:brother-chains}

Let $\Relab$ be a relabelling of $P$ and 
assume that $\phi(\p_\Lst)=\p_\Rst$. Thus, the edges $\p_{\Lst}$ and $\p_{\Rst}$ are identified by the interface.
We recall that $\lab{\p_{\Lst}}$ is the label of the edge $\p_{\Lst}$, $\lab{\p_{\Rst}}$ that of $\p_{\Rst}$ and that $\Relab(\p_{\Lst})$ and $\Relab(\p_{\Rst})$ (that is $\Relab(\phi_a(\p_{\Lst}))$) will be their new labels assigned by $\Relab$. According to the discussion of \ref{s:edge-threads-techrep}, if we want $\Relab$ to produce a trivial derivation $P^{\Relab}$ (we say then that $\Relab$ \textbf{trivializes} $P$), we need $\p_{\Lst}$ and $\p_{\Rst}$ to be assigned the same label. Indeed (we reuse the notation $\ttbisuppL$ introduced in Sec.\;\ref{s:encoding-red-choices}):

\begin{lemma}
\label{l:getting-a-trivial-derivation}
Let $P$ be an operable derivation $P$ and $\Relab$  a relabelling of $P$.
If, for all $\p \in\bisuppL{P}$, $\Relab(\theta)=\Relab(\phi(\theta))$, then $P^{\Relab}$ is a trivial derivation.
\end{lemma}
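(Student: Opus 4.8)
The plan is to reduce the triviality of $P^{\Relab}$ to a local compatibility statement at each $\apph$-rule, and then to read that statement off from the hypothesis. Recall that $P^{\Relab}=\Psi^{\Relab}(P)$ is, by Proposition\;\ref{prop:deriv-reset}, an operable derivation isomorphic to $P$ \via $\Psi^{\Relab}$: writing $a_0=\Psi^{\Relab}_{\ttsupp}(a)$ for $a\in \suppat{P}$, the interface of $P^{\Relab}$ is $\phi_{0,a_0}=\Psi^{\Rst}_{a}\circ \phi_a\circ (\Psi^{\Lst}_{a})^{-1}$, while $\Lst^{P^{\Relab}}(a_0)=\Psi^{\Lst}_{a}(\Lst^{P}(a))$ and $\Rst^{P^{\Relab}}(a_0)=\Psi^{\Rst}_{a}(\Rst^{P}(a))$. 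Since $P^{\Relab}$ is trivial exactly when, for every $a\in\suppat{P}$, one has $\Lst^{P^{\Relab}}(a_0)=\Rst^{P^{\Relab}}(a_0)$ as labelled forests and $\phi_{0,a_0}=\ttid$, the first step is to observe that both of these follow from the single identity $\Psi^{\Rst}_{a}\circ \phi_a=\Psi^{\Lst}_{a}$ (as functions $\supp{\Lst^{P}(a)}\to\bbN^{*}$): if it holds then $\phi_{0,a_0}=\ttid$ immediately, and since $\phi_a$, $\Psi^{\Lst}_{a}$, $\Psi^{\Rst}_{a}$ are 01-isomorphisms (hence preserve the type-symbol labels $\tv,\rew$), chasing the labels through the three isomorphisms yields $\Lst^{P^{\Relab}}(a_0)=\Rst^{P^{\Relab}}(a_0)$.

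Next I would prove $\Psi^{\Rst}_{a}\circ \phi_a=\Psi^{\Lst}_{a}$ for a fixed $a\in\suppat{P}$. Both sides are 01-isomorphisms with domain $\Lst^{P}(a)$, and a 01-isomorphism, being monotone, length-preserving and the identity on tracks $0$ and $1$, is entirely determined by the new track it assigns to each mutable edge of its domain (a straightforward induction on the depth of the edge, using that the new position of a node is the sequence of new labels along the path to it). So it suffices to show that the two maps assign the same new track to every edge $\tte$ of $\Lst^{P}(a)$. As $\Lst^{P}(a)=\Tl(\ttT^{P}(a\ct 1))$, such an $\tte$ is the edge of a unique $\p\in\bisuppL{P}$ lying above $a\ct 1$, and by the resetting construction (\S\;\ref{s:bisupp-reset}) the track that $\Psi^{\Lst}_{a}$ puts on $\tte$ is the value $\Relab$ attaches to $\p$; by Lemma\;\ref{l:relab-and-iden} this equals $\Relab(\thr{\p})$. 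On the other side, $\phi_a$ maps $\tte$ to the edge denoted $\phi(\p)$ — a right biposition above some $a\ct k'$ when $\p=(a\ct 1,k\ct c)$ with $c\neq\epsi$, and the argument edge $a\ct k'$ itself when $\p=(a\ct 1,k)$ — and the track $\Psi^{\Rst}_{a}$ puts on that edge is the value $\Relab$ attaches to $\phi(\p)$, i.e. $\Relab(\thr{\phi(\p)})$. Hence the two assignments agree on $\tte$ precisely when $\Relab(\thr{\p})=\Relab(\thr{\phi(\p)})$, which is exactly the hypothesis applied to $\p\in\bisuppL{P}$.

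The hard part will be the bookkeeping in the second paragraph: verifying that $\phi_a(\tte)$ really is the edge named $\phi(\p)$ in $\Rst^{P}(a)$ and that the resetting $\Psi^{\Rst}_{a}$ really puts on it the value $\Relab$ attaches to $\phi(\p)$. This requires unfolding the definitions of $\Psi^{\Rst}_{a}$, of the consumption relation, and of the notation $\phi(\p)$ (\S\;\ref{s:encoding-red-choices}, \S\;\ref{s:consum-interface}), and it must be carried out separately in the inner-edge case (where the relevant relabelling is a type relabelling of some $\ttT^{P}(a\ct k')$) and in the argument-edge case (where it is the argument relabelling $\Relabarg$ acting through $\Psi^{\Relab}_{\ttsupp}$). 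Once these identifications are in place the rest is the one-line induction above, so the genuine content of the lemma is that $\Relab$ being constant on threads, together with the stated compatibility with $\phi$, is exactly what collapses every transported interface to the identity.
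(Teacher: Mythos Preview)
Your proposal is correct and follows essentially the same route as the paper: both arguments fix an $\apph$-node, use the hypothesis to see that the transported interface $\phi^{\Relab}_{a_0}=\Psi^{\Rst}_a\circ\phi_a\circ(\Psi^{\Lst}_a)^{-1}$ preserves the track label of every mutable edge, and then run a short induction on edge depth to conclude it is the identity. Your formulation via the auxiliary identity $\Psi^{\Rst}_a\circ\phi_a=\Psi^{\Lst}_a$ and your explicit separation of the inner-edge and argument-edge cases are just a more unpacked version of the paper's terse two-line argument.
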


\begin{proof}
Let $\Psi$ the resetting induced by $\Relab$. Let $a_0\in \suppat{P^{\Relab}}$ and $a=\Psi^{-1}(a_0)$. From Sec.\;\ref{s:res-interface}, we recall that the interface  $\phi^\Relab_{a_0}$ of $P^{\Relab}$ is defined by
$\phi^{\Relab}_{a_0}=\Psi^{\Rst}_{a} \circ \phi_a \circ (\Psi^{\Lst}_a)^{-1}$.

Let $k_0\cdot c_0\in \suppmut{\Lst^{\Relab}(a_0\cdot )}$. We set $\p_0=(a_0,k_0\cdot c_0)$ and $\p=\Psi^{-1}(\p_0)$. Since $\Relab(\theta)=\Relab(\phi(\theta))$, we have $\lab{\phi_{a_0}(k_0\cdot c_0)}=\lab{k_0\cdot c_0}$. 

By induction on $|k_0\cdot c_0|$, we show then that, for all $k_0\cdot c_0 \in \supp{\Lst^{\Relab}(a_0)}$, $\phi_{a_0}(k_0\cdot c_0)=k_0\cdot c_0$.
\end{proof}

Given an operable derivation $P$ with the usual notations, we consider now the following first order theory $\calT_P$, whose sets of constants is $\ThrE{P}$, whose unique function symbol is $\ttVal$ and that holds the axioms $\Val{\theta_1}=\Val{\theta_2}$ for all $\theta_1,\theta_2 \in \ThrE{P}$ such $\exists a\in \suppat{P},\ \theta_1\rrewc{a} \theta_2$. Intuitively, $\Val{\theta_1}$ stands for $\Relab(\theta)$ (\ie $\Relab(\tte)$ for any $\tte \in \theta$) where $\Theta$ is a relabelling that trivializes $P$.\\

We say that $\tte_1,\tte_2\in \Edg{P}$ are \textbf{brother edges}  if they have a node in common \ie  either $\tte_1=a\cdot k_1,\; \tte_2=a\cdot k_2$ or $\tte_1=(a,c\cdot k_1),\; \tte_2=(a,c\cdot k_2)$ or $\tte_1=(a,x,\ell\cdot c\cdot  k_1),\; \tte_2=(a,x,\ell\cdot c \cdot k_2)$ for some $k_1\neq k_2$.

\begin{rmk}
\label{rmk:brother-e}
Actually, we strengthen this definition by declaring that $\tte_1$ and $\tte_2$ are also brother edges if $\tte_i=(a_i,t(a_i),\trP{a_i})$ ($i=1,2$) where $a_1$ and $a_2$ are two distinct axiom rules. Thus, ``axiom edges'' are regarded as brothers, even if they type two distinct variables. We do so to match the definition of resetting and relabelling given in Sec.\;\ref{s:bisupp-reset} and Sec.\;\ref{s:relab-deriv}.
\end{rmk}

Of course, two brother edges cannot be relabelled with the same track value (see Definition\;\ref{def:relabelling} and Sec.\;\ref{s:tracks}). Likewise, we say that $\theta_1,\theta_2\in \ThrE{P}$ are \textbf{brother threads} if $\theta_1:\tte_1$ and $\theta_2:\tte_2$ for some brother edges $\tte_1,\; \tte_2$.

Actually, the only thing that could go wrong is that the interface $\phi$ of $P$ may syntactically constrain two brother threads to be relabelled with the same edge in a prospective relabelling that would trivialize $P$. Indeed, if this does not happen, we can prove that such a relabelling exists:

\begin{proposition}
\label{prop:brother-chains-must-not-exists}
Let $P$ be an operable derivation. If $\calT_P$ cannot prove an equality of the form $\Val{\theta_1}=\Val{\theta_2}$ for some brother threads $\theta_1$ and $\theta_2$, then $P$ is isomorphic to a trivial derivation.
\end{proposition}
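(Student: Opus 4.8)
\textbf{Proof plan for Proposition~\ref{prop:brother-chains-must-not-exists}.}
The plan is to build the required relabelling $\Relab$ by exhibiting a consistent valuation $\ttVal$ of the threads of $P$ and then invoking Lemma~\ref{l:getting-a-trivial-derivation}. First I would work inside the first order theory $\calT_P$: since $\calT_P$ is a theory of equality whose only axioms are equalities $\Val{\theta_1}=\Val{\theta_2}$ (for $\theta_1 \rrewc{a} \theta_2$) together with the \emph{negative} constraints $\Val{\theta_1}\neq \Val{\theta_2}$ for brother threads, the quotient of $\ThrE{P}$ by the congruence generated by the positive axioms is exactly the set of provable equalities. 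The hypothesis ``$\calT_P$ cannot prove $\Val{\theta_1}=\Val{\theta_2}$ for brother threads'' is therefore precisely the statement that this congruence separates any two brother threads; equivalently, no \emph{brother chain} $\theta_0 \lrrewc{a_0}\theta_1 \lrrewc{a_1}\cdots\lrrewc{a_{n-1}}\theta_n$ with $\theta_0,\theta_n$ brothers exists. Let $\equiv_{\ttVal}$ be the equivalence relation on $\ThrE{P}$ generated by $\rrew=\cup_a\rrewc{a}$; I would define $\ttVal$ by choosing, in a globally injective way, a fresh value in $\Nmzo$ for each $\equiv_{\ttVal}$-class, taking care that the finitely-many (at each node) brother constraints can be honoured --- which is possible precisely because $\Nmzo$ is infinite and brother threads land in distinct classes.

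Next I would check that such a $\ttVal$ yields an honest relabelling $\Relab$ in the sense of Definition~\ref{def:relabelling}, i.e.\ $\Relab_{\arg}$, the $\Relab_a$ ($a\in\Ax$) and $\Relabtr$ are well defined and $\Relabtr$ is an injection on $\Ax$. The key observations here are (i) by Lemma~\ref{l:relab-and-iden} any relabelling is constant on $\idena$-classes, so it suffices to assign a value per thread, not per edge; and (ii) the ``no track conflict'' requirement is automatic once $\ttVal$ is injective across classes, because two mutable edges sharing a node (in a type, a sequence type, or the argument tracks of an $\apph$-rule) are brother edges, hence belong to brother threads, hence to distinct $\equiv_{\ttVal}$-classes, hence receive distinct values. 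The injectivity of $\Relabtr$ is guaranteed similarly, since distinct axiom edges are declared brothers by Remark~\ref{rmk:brother-e}.

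Then I would verify the hypothesis of Lemma~\ref{l:getting-a-trivial-derivation}: for every left biposition $\p\in\bisuppL{P}$, one needs $\Relab(\theta)=\Relab(\phi(\theta))$ where $\theta=\thr{\p}$ and $\phi$ is the interface of $P$. But $\p\rewc{a}\phi_a(\p)$ for the appropriate $a\in\suppat{P}$, so $\thr{\p}\rrewc{a}\thr{\phi_a(\p)}$, hence these two threads are $\equiv_{\ttVal}$-equivalent and $\ttVal$ --- being constant on classes --- assigns them the same value; by construction $\Relab$ reflects $\ttVal$ on threads, so $\Relab(\thr{\p})=\Relab(\thr{\phi_a(\p)})$. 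Applying Lemma~\ref{l:getting-a-trivial-derivation}, $P^{\Relab}$ is trivial, and by Proposition~\ref{prop:deriv-reset} (via the resetting $\Psi^{\Relab}$ induced by $\Relab$) $P^{\Relab}$ is isomorphic to $P$ as an operable derivation. Since $P^{\Relab}$ is trivial, this is the desired isomorphism.

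\textbf{The main obstacle} I anticipate is the bookkeeping needed to turn ``no brother chain'' into a valuation that simultaneously (a) respects all the positive axioms $\theta_1\rrewc{a}\theta_2$, (b) separates \emph{all} brother pairs, and (c) stays globally injective so that the resulting relabelling of derivations is well defined and conflict-free. The subtlety is that brotherhood is a local (per-node) constraint while consumption-equivalence is global, so one must argue that after collapsing to $\equiv_{\ttVal}$-classes the brother relation descends to a well-defined \emph{irreflexive} relation on classes --- this is exactly where the absence of brother chains is used --- and then colour the (possibly infinite) graph of classes with values from $\Nmzo$; finiteness of the brother-degree at each node makes this colouring possible. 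I would isolate this combinatorial step as the heart of the argument and treat the rest (well-definedness of $\Relab$, the two cited lemmas) as routine.
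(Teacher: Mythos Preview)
Your approach is essentially the paper's: quotient $\ThrE{P}$ by the congruence generated by $\rrew$, assign values to the classes, build $\Relab$, and invoke Lemma~\ref{l:getting-a-trivial-derivation} and Proposition~\ref{prop:deriv-reset}. All the cited lemmas and the use of Remark~\ref{rmk:brother-e} for the injectivity of $\Relabtr$ are exactly as in the paper.

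However, you overcomplicate what you flag as the ``main obstacle''. You frame it as a graph-colouring problem on the quotient, relying on finite brother-degree at each node. The paper's argument is simpler and stronger: since $\ThrE{P}$ is countable, so is the quotient $\ThrE{P}/{\sim}$, and one may pick \emph{any} injection $\tti:\ThrE{P}/{\sim}\to\Nmzo$. Global injectivity on classes automatically separates any two distinct classes, hence in particular any pair of brother threads (which lie in distinct classes by hypothesis), and simultaneously yields the injectivity of $\Relabtr$. There is no need to analyse the brother-degree or to run a colouring argument; the combinatorial step you anticipate as ``the heart of the argument'' is in fact a one-liner. Your earlier sentence ``choosing, in a globally injective way, a fresh value in $\Nmzo$ for each class'' already contains the right idea --- the subsequent worry about honouring brother constraints node-by-node is redundant once injectivity is global.
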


\begin{proof}
Under the assumption of the statement, we define an equivalence relation $\sim$ on $\ThrE{P}$ by
$\theta_1\sim \theta_2$ iff $\calT_P$ proves that $\Val{\theta_1}=\Val{\theta_2}$(\ie $\calT_P\vdash \Val{\theta_1}=\Val{\theta_2}$).

Since $\ThrE{P}$ is a countable set, let $\tti$ be an injection from $\ThrE{P}/\sim$ to $\Nmzo$. We define then a relabelling $\Relab$ of $P$ by:
\begin{itemize}
\item $\Relabarg(a)=\tti(\thr{a})$ for all $a\in \suppmut{P}$.
\item $\Relab_a(c)=\tti(\thr{a,c})$ for all $a\in \Ax$ and $c\in \suppmut{\ttT(a)}$.
\item $\Relabtr(a)=\tti(\thr{a,t(a),\trP{a}})$ for all $a\in \Ax$.
\end{itemize}
Since no brother edges are assigned the same value by hypothesis, this definition matches the clauses of definition\;\ref{def:relabelling} and $\Relab$ is indeed a relabelling of $P$. Notice that $\Relabtr$ is injective because of Remark\;\ref{rmk:brother-e}.
By Lemma\;\ref{l:getting-a-trivial-derivation}, $P^{\Relab}$ is a trivial derivation, that is isomorphic to $P$, thanks to Sec.\;\ref{s:relab-deriv}.
\end{proof}

According to Proposition\;\ref{prop:brother-chains-must-not-exists}, in order to prove Theorem\;\ref{th:rep}, we must now check that, for any operable derivation $P$, there is not proof in $\calT_P$ that two brother threads $\theta_1$ and $\theta_2$ should be assigned the same value. Such a proof would be called a \textbf{brother chain} and could be written $\theta_0\lrrewc{a_0} \theta_1\lrrewc{a_1}\theta_2\ldots \ldots \theta_{n-1} \lrrewc{a_{n-1}} \theta_n$ where $\theta_0$ and $\theta_n$ are two brother threads and $\lrrewc{a_i}$ is the symmetric closure of $\rrewc{a_i}$. We must then prove that brother chains \textit{do not} exist.


}


\techrep{\section{Representation Theorem }

We intend to prove here Theorem\;\ref{th:rep}:

\begin{theorem*}
Every operable derivation is isomorphic to a trivial derivation.
\end{theorem*}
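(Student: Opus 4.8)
The plan is to derive Theorem~\ref{th:rep} from Proposition~\ref{prop:if-no-bro-then-th-rep}: since that proposition produces a trivial derivation isomorphic to $P$ as soon as $P$ admits no brother chain, the whole theorem reduces to proving that \emph{no operable derivation $P$ admits a brother chain}. Recall a brother chain has the form $\theta_0\lrrewc{a_0}\theta_1\lrrewc{a_1}\cdots\lrrewc{a_{n-1}}\theta_n$ with $\theta_0$ and $\theta_n$ brother threads; since a thread is never a brother of itself (brother edges carry distinct tracks), any such chain has length $n\geq 1$, and we may always take it of minimal length among brother chains of $P$, so that it visits pairwise distinct threads.

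First I would repackage the consumption data as an undirected graph $G_P$ on the set of threads $\ThrE{P}$, putting an edge between $\theta$ and $\theta'$ whenever $\theta\rrewc{a}\theta'$ for some $a\in\suppat{P}$; a brother chain is then exactly a simple path of $G_P$ joining two brother threads, and the goal becomes: no connected component of $G_P$ contains two brothers. The crucial structural input is Lemma~\ref{l:uniqueness-consumption-hybrid}: the occurrences of a thread split into at most one positive ascendant part and at most one negative ascendant part, each of which is consumed at most once, and the uniqueness lemma (read once for polarity $\oplus$ and once for $\omin$) says that the consumption partner obtained through the positive occurrences and the one obtained through the negative occurrences are each unique. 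Hence every vertex of $G_P$ has degree at most two, and a degree-two vertex has exactly one incident ``positive'' edge and one incident ``negative'' edge; in particular every component of $G_P$ is a finite path or a cycle, and along a minimal brother chain each interior thread is consumed once positively and once negatively.

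With this picture I would run an induction on the length $n$ of a putative minimal brother chain, using applicative depth as a well-founded quantity. The engine is Lemma~\ref{l:consum-and-ad} (together with the symmetric monotonicity statements it comes with): from the polarities of the two ends of a consumption step it determines the \emph{direction} in which the applicative depth strictly moves along the corresponding edge of $G_P$, so that, since at each interior vertex the two incident edges carry opposite polarities, the applicative depths $\ad{\theta_0},\dots,\ad{\theta_n}$ must be strictly monotone along the chain; consequently all the $\theta_i$ have pairwise distinct applicative depths and $G_P$ contains no cycle. On the other hand brother threads have the \emph{same} applicative depth — their ascendant threads run in parallel through the very same judgments, which follows from the ``brothers face brothers'' behaviour of consumption that I would establish first. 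These two facts are incompatible for $\theta_0$ and $\theta_n$, so no brother chain exists; by Proposition~\ref{prop:if-no-bro-then-th-rep} this proves Theorem~\ref{th:rep}, and hence, via Lemma~\ref{l:rep-reduction-choices}, Theorem~\ref{th:rep-R-S-naive} and the encoding of every $\scrR$-reduction path in a trivial $\ttS$-derivation.

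I expect the main obstacle to be exactly the polarity bookkeeping in the step ``the two ends' polarities determine the sign of the applicative-depth shift'': one must relate the $\oplus/\omin$ polarity of a consumed occurrence — defined through whether its top ascendant lies in an $\abs$-rule — to the contravariant versus covariant position of that occurrence inside the two sequence types of the relevant $\apph$-rule, and only then to the sign of the depth change; this case analysis is also what must be cross-checked against ``brothers face brothers'' to certify that the two edges at an interior vertex genuinely point in consistent directions. The remaining ingredients — the degree bound, the equal-depth property of brothers, and the passage from ``no brother chains'' back to the theorem — are comparatively routine once $G_P$ and its monotone structure are in place.
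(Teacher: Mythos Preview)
Your high-level plan matches the paper's: reduce Theorem~\ref{th:rep} to Proposition~\ref{prop:if-no-bro-then-th-rep} and then show brother chains cannot exist. The graph $G_P$, the degree bound coming from Lemma~\ref{l:uniqueness-consumption-hybrid}, and the fact that brother threads share applicative depth are all correct and are used by the paper as well.

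The gap is the monotonicity step. Lemma~\ref{l:consum-and-ad} gives $\ad{\thL}<\ad{\thR}$ only when $\thL$ is left-consumed \emph{positively}; it says nothing when $\thL\lomin\rrew\thR$, and in that case applicative depth can strictly \emph{decrease}. Take $t=(\lambda y.\,a\,(b\,y))\,u$ with $u$ a variable, and pick an inner track of the type $S$ assigned to $y$. The corresponding thread $\theta$ has its referent at the $y$-axiom, at position $1\!\cdot\!0\!\cdot\!k_a\!\cdot\!k_b$, so $\ad{\theta}=2$. Its negative occurrence sits at $(1,k\!\cdot\!c)$ and is left-consumed at the root $\apph$-rule, facing a thread $\thR$ in the $u$-axiom with $\ad{\thR}=1$; its positive occurrence is right-consumed at the $\apph$-rule typing $b\,y$, facing a thread $\thL'$ in the $b$-axiom with $\ad{\thL'}=1$. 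Thus the simple path $\thL',\theta,\thR$ in $G_P$ has depths $1,2,1$: not monotone, even though the two edges incident to $\theta$ use opposite polarities of $\theta$. So your inference ``opposite polarities at each interior vertex $\Rightarrow$ strictly monotone depths'' is false; the ``polarity bookkeeping'' you flag is not a case analysis that closes.

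This is precisely the obstruction the paper isolates. It calls a chain \emph{normal} when no link is a negative left-consumption, and only for normal chains does the depth argument go through (this is Argument~1). The missing ingredient is dynamic: whenever $\thL\lomin\rrewc{a}\thR$, the abstraction responsible for the negative polarity sits above $a\!\cdot\!1$, and firing the finite \emph{redex tower} between that abstraction and $a$ (Fig.~\ref{fig:redex-tow-hybrid}) collapses $\thL$ and $\thR$ into a single residual thread. One then shows that residuation of threads preserves brotherhood and consumption, so the residual of a brother chain is again a brother chain, now shorter or with fewer negative-left links; iterating, any brother chain becomes normal, and normal brother chains are excluded by the depth argument. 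Without this reduction-based collapsing step the proof does not go through.
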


Since Lemma\;\ref{l:rep-reduction-choices} ensures that every way of performing
finite or not sequences of subject-reduction can be built-in inside an operable,
it establishes that the ``rigid'' framework $\ttS$ does not cause any loss of
expressivity compared to type system $\scrR$ resorting to multiset constructions.\\

}
\paper{
\subsection{Collapsing Brother Threads}
\label{ss:syntactic-polarity-and-collapsing-strat}}

\techrep{
By Proposition\;\ref{prop:brother-chains-must-not-exists} and the discussion that follows, it is sufficient to prove that brother chain do not exist.
We start by giving an outline of the final stages of the proof. 
}
We prove now the main theorem: according to Prop.\;\ref{prop:if-no-bro-then-th-rep}, in order to prove Theorem\;\ref{th:rep}, we must prove that brother chains \textit{do not} exist. For that, we assume \textit{ad absurdum} that there is a brother chain \paper{$\theta_0 \lrrewc{a_0} \theta_1\lrrewc{a_1}\ldots \lrrewc{a_{n-1}} \theta_n$}\techrep{$\theta_0\lrrewc{a_0} \theta_1\lrrewc{a_1}\theta_2\ldots \ldots \theta_{n-1} \lrrewc{a_{n-1}} \theta_n$} associated to an operable derivation $P$.\\

\noindent 
\textbf{Argument 1 (normal brother chains do not exist):} Using Lemma\;\ref{l:uniqueness-consumption-hybrid}, it is easy to prove that, if no thread is left-consumed negatively in the chain (in that case, we say that the chain is a  \textbf{normal brother chain}), then it is of the form $\theta_0\loplus \rrewc{a_1}\ldots \rrewc{a_{n-1}} \theta_n$ and that $a_1\;,a_2,\;\ldots$ are resp. nested in an argument of $a_0,\;a_1,\ldots$. By Lemma\;\ref{l:consum-and-ad}, this entails that $\ad{a_0}<\ad{a_1}<\ad{a_2}\ldots$ Since $\theta_0\rrewc{a_0}$ and  $\rrewc{a_{n-1}}\theta_n$ implies that the occurrences of the thread $\theta_0$ are above or below $a_0\cdot 1$ whereas some edges of $\theta_n$ are above $a_{n-1}\cdot k$ (for some $k\geqs 2$). From that and $\ad{a_0}<\ad{a_n\cdot k}$, we deduce that $\theta_0$ and $\theta_n$ cannot be brother (roughly because two brother threads have the same applicative depth). Contradiction.\\

\noindent 
\textbf{Argument 2 (if they existed, brother chains could be normalized):}
The idea is that if $\thL \lomin \rrewc{a} \thR$, then either $\tra$ is a redex and we have $\Res_b(\thL)=\Res_b(\thR)$ (\ie $\thL$ and $\thR$ are \textbf{collapsed} by the reduction step) or there is a redex ``inside'' $\thL$. When we reduce it, the ``height'' of $\thL$ will decrease. More precisely, the 2nd case is associated to the notion of \textbf{redex tower}\paper{, which is more or less a \textit{finite} nesting of redexes, that can---more importantly---be collapsed in a \textit{finite} number of steps. A case of negative left-consumption of a sequence type $\sSk$ (which is the domain of the abstraction $\lx.u$), coming along with a redex tower, is represented  in Fig.\;\ref{fig:redex-tow-hybrid} (by lack of space, we write $\lam 1,\lam 2,\ldots$ instead of $\lx_1,\lx_2,\ldots$ and $(*)$ for matterless sequence types).
The sequence type $\sSk$ of negative polarity is ``called'' by $\lx$ at the top corner of the first term and consumed at the bottom $\app$-rule. 

The initial redex tower is reduced\techrep{ in 3 steps, so that}\paper{,} its height decreases and finally, the (residuals $S'_k$ of the) types $S_k$, that were left-consumed negatively, are destroyed in the final term $u_3[v/x]$.
\begin{figure}
\begin{tikzpicture}


\ignore{  
  \draw (0,4.5) --++ (-0.3,0.5) --++ (0.6,0) --++ (-0.3,-0.5);
  \draw (0,4.8) node {\small $u$};

  \draw (0,4.34) -- (0,4.5);
  
  \draw (0,4.1) node {\small $\lx$};
  \draw (0,4.1) circle (0.22);
  \draw (-1.2,4.1) node {$\blue{\sSksh}\rew T$};

  \draw (0,3.74)-- (0,3.86);
  
  \draw (0,3.5) node {\small $\lambda 3$};
  \draw (0,3.5) circle (0.22);
  \draw (-1.5,3.5) node {$(*)\!\rewsh \blue{\sSksh}\! \rewsh T$};
  
  \draw (0.17,3.33) -- (0.33,3.17);
  \draw (0.83,3.33)--  (0.67,3.17);
  
  \draw (0.5,3) node {$\arob$};
  \draw (0.5,3) circle (0.22);
  \draw (-0.7,3) node {$\blue{\sSksh}\rew T$};
  
  \draw (0.5,2.62) -- (0.5,2.78);

  \draw (0.5,2.4) node {\small $\lambda 2$};
  \draw (0.5,2.4) circle (0.22);
  \draw (-1.05,2.4) node {$(*)\rewsh \blue{\sSksh} \rewsh T$};

  \draw (0.5,2.02) -- (0.5,2.18);

  \draw (0.5,1.8) node {\small $\lambda 1$};
  \draw (0.5,1.8) circle (0.22); 
  \draw (-1.4,1.8) node {$(*)\rewsh(*)\rewsh \blue{\sSksh} \rewsh T$};
  
  \draw (0.67,1.63) -- (0.87,1.43);
  \draw (1.33,1.63) -- (1.17,1.43);
  
  \draw (1,1.3) node {$\arob$};
  \draw (1,1.3) circle (0.22);
  \draw (-0.8,1.3) node {$(*)\!\rew \blue{\sSksh}\! \rew T$};
  
  \draw (1.17,1.13) -- (1.33,0.93);
  \draw (1.83,1.13) -- (1.67,0.93);
  
  \draw (1.5,0.8) node {$\arob$};
  \draw (1.5,0.8) circle (0.22);
  \draw (0.2,0.8) node {$\blue{\sSksh}\rew T$};
  
  \draw (1.67,0.63) -- (1.83,0.43);

  \draw (2,0.3) node {$\arob$};
  \draw (2,0.3) circle (0.22);
  \draw (1.4,0.3) node {$T$};
  \draw (-1,0.2) node [below] {\small The $\blue{S_k}$ are consumed there}; 
  \draw [>=stealth,->] (0.9,-0.1) -- (1.6,0.2);
  
  \draw (2.17,0.47) --++ (0.21,0.37);
  \draw (2.38,0.84) --++ (-0.3,0.5) --++ (0.6,0) --++ (-0.3,-0.5);
  \draw (2.38,1.14) node {$v$};
  \draw (3,0.8) node {$\sSksh$};
  
}

  \draw (3.2,3.4) --++ (-0.4,0.6) --++ (0.8,0) --++ (-0.4,-0.6);
  \draw (3.2,3.8) node {\small $u_1$};

  \draw (3.2,3.24) -- (3.2,3.4);

  \draw (3.2,3) node {$\lx$};
  \draw (3.2,3) circle (0.22);

  \draw (3.05,3) node [left] {$\blue{\sSk}\rewsh T$};
  
  \draw (3.2,2.62) -- (3.2,2.78);

  \draw (3.05,2.4) node [left] {$(*)\rewsh\blue{\sSk}\rewsh T$};  
  \draw (3.2,2.4) node {\small$\lambda 3$};
  \draw (3.2,2.4) circle (0.22);

  \draw (3.53,2.07) -- (3.37,2.23);
  \draw (3.87,2.07) -- (4.03,2.23);

  \draw (3.55,1.9) node [left] {$\blue{\sSk}\rewsh T$};
  \draw (3.7,1.9) node {$\arob$};
  \draw (3.7,1.9) circle (0.22);

  \draw (3.7,1.52)-- (3.7,1.68);

  \draw (3.55,1.3) node [left] {$(*)\rewsh\blue{\sSk}\rewsh T$};
  \draw (3.7,1.3) node {\small $\lambda 2$};
  \draw (3.7,1.3) circle (0.22); 

  \draw (4.03,0.97) -- (3.87,1.13);
  \draw (4.37,0.97) -- (4.53,1.13);

  \draw (4.05,0.8) node [left] {$T$};
  \draw (4.2,0.8) node {$\arob$};
  \draw (4.2,0.8) circle (0.22);

  \draw (4.37,0.63) -- (4.53,0.43);

  \draw (4.7,0.3) node {$\arob$};
  \draw (4.7,0.3) circle (0.22);
   
  \draw (4.87,0.47) --++ (0.21,0.37);
  \draw (5.08,0.84) --++ (-0.3,0.5) --++ (0.6,0) --++ (-0.3,-0.5);
  \draw (5.08,1.14) node {$v$};

  \draw (6.2,2.3) --++ (-0.4,0.6) --++ (0.8,0) --++ (-0.4,-0.6);
  \draw (6.2,2.7) node {\small $u_2$};

  \draw (6.2,2.14) -- (6.2,2.3);

  \draw (6.2,1.9) node {\small $\lx$};
  \draw (6.2,1.9) circle (0.22);

  \draw (6.2,1.52) -- (6.2,1.68);
  
  \draw (6.2,1.3) node {\small $\lambda 3$};
  \draw (6.2,1.3) circle (0.22);
  
  \draw (6.53,0.97) -- (6.37,1.13); 
  \draw (6.83,0.97) -- (7.03,1.13);
  
  \draw (6.7,0.8) node {$\arob$};
  \draw (6.7,0.8) circle (0.22); 
  
  \draw (6.87,0.63) -- (7.03,0.43);
  
  \draw (7.2,0.3) node {$\arob$};
  \draw (7.2,0.3) circle (0.22);

  \draw (7.37,0.47) --++ (0.21,0.37);
  \draw (7.58,0.84) --++ (-0.3,0.5) --++ (0.6,0) --++ (-0.3,-0.5);
  \draw (7.58,1.14) node {$v$};


  \draw (10.1,1.2) --++ (-0.4,0.6) --++ (0.8,0) --++ (-0.4,-0.6);
  \draw (10.1,1.6) node {\small $u_3$};

  \draw (10.1,1.04) -- (10.1,1.2);

  \draw (8.8,0.8) node {$\blue{\sSpkp}\rewsh T'$};
  \draw (10.1,0.8) node {\small $\lx$};
  \draw (10.1,0.8) circle (0.22);

  \draw (10.43,0.47) -- (10.27,0.63);
  
  \draw (10.6,0.3) node {$\arob$};
  \draw (10.6,0.3) circle (0.22);
  \draw (10.2,0.3) node {$T'$};

 \draw (10.77,0.47) --++ (0.21,0.37);
  \draw (10.98,0.84) --++ (-0.3,0.5) --++ (0.6,0) --++ (-0.3,-0.5);
  \draw (10.98,1.14) node {$v$};

  \draw (10.7,3)node {\parbox{7cm}{with $\begin{array}[t]{c}\sSpkp \equiv \sSk\\ T'\equiv T\end{array}$ by pseudo-s.r.}};
  \draw [dotted,->,>=stealth] (9.53,2.55) --++ (-0.4,-0.4) --++ (0,-1.2);
  

  \draw (13,0.3) --++ (-0.8,1.2) --++ (1.6,0) --++ (-0.8,-1.2);
  \draw (13,1.2) node {\small $u_3\![v/x]$};
  \draw (12.7,0.3) node {$T$};
  \end{tikzpicture}
\vspace{-0.3cm}
\caption{Collapsing a Redex Tower}
\label{fig:redex-tow-hybrid}
\vspace{-0.5cm}
\end{figure}

Argument 2 relies on definitions and techniques that we cannot develop here (see \S\;13.5 of \cite{VialPhd} for full details). Assume that $t\rewb{b} t'$ and that $P$ types $t$.
\begin{itemize}
  \item We prove that, if two edges $\tte_1$ and $\tte_2$ are in the same thread $\theta$, then their residuals $\Res_b(\tte_1)$ and $\Res_b(\tte_2)$ are in the same thread of the reduced derivation $P'$. This allows us to define the \textit{residual of the thread} $\theta$ as the unique thread of $P'$ containing $\Res_b(\tte_i)$.
  \item We prove that residuation for threads preserves brotherhood and consumption etc.
\end{itemize}
Anyway, the fact that  brother chains (if they existed) could be normalized, along with Argument 1, concludes the proof that brother chains do not exist. Thus, by Proposition\;\ref{prop:if-no-bro-then-th-rep}, Theorem\;\ref{th:rep} is proved, as well as Theorem\;\ref{th:rep-R-S-naive}.
}

\techrep{
\subsection{Syntactic Polarity}

We set, for all $\p\in \bisuppmut{P}$, $\Asc{\p}=\asc^i(\p)$, where $i$ is maximal (\ie $\asc^i(\p)$ is defined, but not $\asc^{i+1}(\p)$). Thus, $\Asc{\p}$ is the highest ascendant of $\p$ and is either located in the context of an $\ax$-rule or in its right-hand side, motivating the notion of (syntactic) polarity for bipositions:

\begin{definition}
\label{d:syntactic-polarity-hybrid}
  \begin{itemize}
\item Let $\tte\in \Edg{P}$.
\begin{itemize}
\item If $\tte=\p \in \bisuppmut{P}$, the \textbf{polarity}  of $\p$ is \textbf{positive} ($\Pol{\p}=\oplus$) if $\Asc{\p}=(a,c)$ for some $a\in \Ax$ and $c\in \mathtt{P}$ and it is negative ($\Pol{\p}=\omin$) if $\Asc{\p}=(a,x,k\cdot c)$ for some $a\in \Ax$ and $x\in \TermV$ such that $t(a)=x$.
\item If $\tte=a\in \suppmut{P}$, then $\Pol{\tte}=\oplus$.
\end{itemize} 
\item If $\thr{\p}=\theta$ and $\Pol{\p}=\oplus/\ominus$, we say that $\theta$ occurs positively/negatively at $\p$.
\item If $\theta$ is left/right-consumed at $\p$ and $\Pol{\p}=\oplus$ (resp. $\Pol{\p}=\ominus$), we say that $\theta$ is left/right-consumed positively (resp. negatively) at biposition $\p$.  
\end{itemize}
\end{definition}

Then, we write for instance $\theta_1 \loplus \rrewc{a} \romin \theta_2$
to mean that $\theta_1$  is left-consumed positively and $\theta_2$ is right-consumed negatively in the $\apph$-rule at position $a$.

Since $\rewp$ also defines an injective function 
 and $\p_1\rewp \p_2$ implies that $\p_1$  and $\p_2$  do not have ascendants:

\begin{lemma}
  \label{l:form-iden-hybrid}
For all $\p_1,\p_2\in \bisuppmut{P}$, we have $\p_1\iden \p_2,\, \Pol{a_1,c_1}=\oplus$ and $\Pol{a_2,c_2}=\omin$ iff $\Asc{a_2,c_2}\rewp \Asc{a_2,c_2}$.
\end{lemma}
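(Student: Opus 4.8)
The plan is to reduce the statement to a clean structural description of the thread congruence $\iden$ (on edges) generated by ascendance $\rewa$ and polar inversion $\rewp$. Reading the right-hand side as $\Asc{\p_2}\rewp\Asc{\p_1}$ (with $\p_i=(a_i,c_i)$; the negative top of $\p_2$ being the $\rewp$-predecessor of the positive top of $\p_1$), the lemma asserts that two mutable bipositions lie in the same thread with opposite polarities exactly when their highest ascendants are linked by a single polar inversion. The whole difficulty is that $\iden$ is a symmetric–transitive closure, so a priori an $\iden$-class could be an arbitrarily long zigzag alternating $\rewa$- and $\rewp$-steps; the first thing I want to establish is that it cannot.

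First I would record and combine the two facts stated just before the lemma. Since $\asc$ is an injective partial function and $\Asc{\p}=\asc^i(\p)$ for the maximal $i$, the recalled characterisation of $\idena$ gives $\p\idena\p'$ iff $\Asc{\p}=\Asc{\p'}$; in particular $\idena$ preserves polarity, as polarity is defined through $\Asc{\cdot}$. On the other side, $\rewp$ is an injective partial function whose domain consists of the \emph{negative} top bipositions $(a,x,k\ct c)$ (with $a\in\Ax$, $t(a)=x$, $k=\trP{a}$) and whose codomain consists of the \emph{positive} top bipositions $(a,c)$; these two sets are disjoint (left versus right bipositions) and, being highest ascendants, none of their elements has a proper ascendant.

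Next I would prove the key \textbf{Claim}: for $\p,\p'\in\bisuppmut{P}$, one has $\p\iden\p'$ iff $\Asc{\p}=\Asc{\p'}$, or $\Asc{\p}\rewp\Asc{\p'}$, or $\Asc{\p'}\rewp\Asc{\p}$. Call $R$ the relation on the right. It is visibly reflexive, symmetric, and contained in $\iden$ (as $\idena\subseteq\iden$ and $\rewp\subseteq\iden$), and it contains both $\rewa$ and $\rewp$; hence it suffices to check that $R$ is transitive, for then $R$ is an equivalence containing $\rewa\cup\rewp$, whence $\iden\subseteq R$ and thus $R=\iden$. Transitivity is a short case analysis on $\p\mathrel{R}\p'\mathrel{R}\p''$: if at least one of the two steps is an equality of highest ascendants, the conclusion is immediate by composition; if both steps use $\rewp$, one inspects the orientation of the two arrows at the middle point $\Asc{\p'}$. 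Either $\Asc{\p'}$ would have to lie simultaneously in the domain and the codomain of $\rewp$ (that is, be both a left and a right biposition), which the disjointness rules out; or it is hit (resp.\ emitted) by two $\rewp$-arrows, in which case the injectivity (resp.\ functionality) of $\rewp$ forces $\Asc{\p}=\Asc{\p''}$, so that $\p\mathrel{R}\p''$.

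Finally I would deduce the lemma. If $\p_1\iden\p_2$ with $\Pol{\p_1}=\oplus\neq\omin=\Pol{\p_2}$, then $\Asc{\p_1}\neq\Asc{\p_2}$ (otherwise the polarities would agree), so by the Claim $\Asc{\p_1}\rewp\Asc{\p_2}$ or $\Asc{\p_2}\rewp\Asc{\p_1}$; but $\Pol{\p_1}=\oplus$ means $\Asc{\p_1}$ is a right biposition, hence not in the domain of $\rewp$, which excludes the first alternative and leaves $\Asc{\p_2}\rewp\Asc{\p_1}$. Conversely, if $\Asc{\p_2}\rewp\Asc{\p_1}$, then $\Asc{\p_2}$ is a negative top and $\Asc{\p_1}$ a positive top, giving $\Pol{\p_2}=\omin$ and $\Pol{\p_1}=\oplus$, while all three alternatives of the Claim yield $\p_1\iden\p_2$. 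The main obstacle is precisely the transitivity step of the Claim, i.e.\ ruling out zigzags of polar inversions; once the disjointness of the domain and codomain of $\rewp$ and its injectivity are in hand, it collapses to the short case analysis above.
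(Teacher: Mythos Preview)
Your proof is correct and follows essentially the same line as the paper, which does not spell out an argument: it simply notes that $\rewp$ is an injective function and that $\p_1\rewp\p_2$ forces both $\p_1$ and $\p_2$ to have no proper ascendant, and then states the lemma. Your Claim (that each $\iden$-class is the union of at most two $\idena$-chains joined by a single $\rewp$-step) is exactly the structural content implicit in that remark, and your transitivity check via functionality/injectivity of $\rewp$ and disjointness of its domain and codomain is the right way to fill in the details.
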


Since a consumed biposition does not have a descendant, Lemma\;\ref{l:form-iden-hybrid} imply:

\begin{lemma}[Uniqueness of Consumption]
  \label{l:uniqueness-consumption-hybrid}
  Let $\oast \in \set{\oplus,\omin}$ and $\theta \in \ThrE{P}$. Then, there is a most one $\theta'$ such that ($\theta \loast  \rrew \theta'$ or $\theta \loast\werr \theta'$).
\end{lemma}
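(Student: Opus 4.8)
The plan is to push the statement down to the level of bipositions. By definition of consumption, $\theta \loast \rrew \theta'$ (resp. $\theta \loast \werr \theta'$) means that there is a position $a \in \suppat{P}$ and bipositions $\p : \theta$, $\p' : \theta'$ with $\p \rewc{a} \p'$ (resp. $\p' \rewc{a} \p$) such that the occurrence of $\theta$ at $\p$ has polarity $\oast$, i.e. $\Pol{\p} = \oast$. Unfolding the definition of $\rewc{a}$, in the first case $\p = (a\ct 1, k\ct c)$ sits in the left premise of the $\apph$-rule at $a$ while $\p'$ is the corresponding edge of the argument's type, namely $\phi_a(k\ct c)$ at outer position $a\ct k'$ (or the argument edge $a\ct k'$ when $c = \epsi$); in the second case it is $\p$ that has this argument-side shape and $\p'$ is the unique left-premise biposition with $\phi_a(\p') = \p$. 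So I would first reformulate the goal as: among the bipositions of $\theta$ of polarity $\oast$, at most one is consumed, and then its $\phi_a$-partner — hence the thread $\theta'$ carrying it — is uniquely determined.

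For the first half I would use the fact, already noted above, that a consumed biposition has no $\rewa$-descendant. Indeed, a short inspection of the clauses defining $\rewa$ shows that no clause outputs a right biposition whose outer position is of the form $a\ct k'$ with $k' \geqs 2$, nor a left biposition of the form $(a\ct 1, k\ct c)$ with $k \geqs 2$; but these are exactly the shapes of right- and left-consumed bipositions, so a consumed biposition is minimal in its ascendant thread. Since $\idena$-classes are linearly ordered by $\asc$, each ascendant thread contains at most one consumed biposition. Moreover, two bipositions of $\theta$ with the same polarity lie in the same ascendant thread: by Lemma \ref{l:form-iden-hybrid}, two $\equiv$-equivalent bipositions of opposite polarity must have their highest ascendants related by $\rewp$, and since a thread is the union of at most two ascendant threads — the second one, if present, obtained from the first by exactly one $\rewp$-step, which flips polarity — equal polarity forces the same ascendant thread. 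Hence all bipositions of $\theta$ of polarity $\oast$ lie in a single ascendant thread, which contains at most one consumed biposition $\p$.

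It then remains to conclude: if no such $\p$ exists there is nothing to prove; if $\p$ exists, it is either left-consumed at a unique $a$, with $\p'$ the edge $\phi_a(\p)$, or right-consumed at a unique $a$, with $\p'$ the unique left-premise biposition mapped to $\p$ by $\phi_a$ — and in both cases $\p'$ is determined, using that each $\phi_a$ is a bijection. Therefore $\theta' = \thr{\p'}$ is the unique candidate, which proves the claim. The only delicate point — the "main obstacle", modest as it is — is the polarity bookkeeping of the middle paragraph: one must make sure that fixing $\oast$ genuinely confines all the witnessing bipositions of $\theta$ to a single ascendant thread, and this is precisely what Lemma \ref{l:form-iden-hybrid} provides; everything else follows immediately from the shape of the $\rewa$- and $\rewc{a}$-clauses.
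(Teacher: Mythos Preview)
Your proof is correct and follows exactly the approach the paper sketches in one line (``Since a consumed biposition does not have a descendant, Lemma~\ref{l:form-iden-hybrid} implies\ldots''): you simply unfold that hint, showing that consumed edges are $\rewa$-minimal, that fixing the polarity $\oast$ confines the witnesses to a single ascendant thread via Lemma~\ref{l:form-iden-hybrid}, and that the partner $\p'$ is then determined by the bijection $\phi_a$. One small terminological slip: the biposition $(a\ct 1, k\ct c)$ that is \emph{left-consumed} is a \emph{right} biposition in the paper's sense (a pair, not a triple), so you should write ``left-consumed biposition'' rather than ``left biposition'' there.
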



\subsection{Referents and Residual Relabelling}
\label{s:ref-and-res-relab}

Let $\Relab$ be a relabelling of an operable derivation $P\tri \juCtt$ (coming along with the usual notations) and $\tte\in \Edg{P}$. By Lemma\;\ref{l:relab-and-iden}, the value of $\Relab(\tte)$ may be related to an instance of $\Relabout,\, \Relabtr$ or $\Relab_a$ for some $a\in \Ax$. We define the \textbf{referent} of $\tte$ as the $a\cdot k \in \suppmut{P}$ or the $a\in \Ax$ or the mutable right biposition $(a,c\cdot k)$ such that $\Relab(\tte)$ is respectively equal to $\Relabout(a\cdot k)$ or $\Relabtr(a)$ or $\Relab_a(c\cdot k)$. There are several cases:
\begin{itemize}
  \item If $\tte=a\cdot k\in \suppmut{P}$, then $\thr{\tte}=\set{\tte}$ and we set $\refe{\tte}=\tte=a\cdot k$, so that $\Relab(\tte)=\Relabout(\tte)$.
  \item If $\tte=\p\in \bisupp{P}$:
  \begin{itemize}
  \item If $\Asc{\p}=(a,c\cdot k)$ for some $a\in \Ax$ and $c\cdot k\in \suppmut{\ttT(a)}$, then we set $\refe{\p}=\Asc{\p}=(a,c\cdot k)$, so that, by Lemma\;\ref{l:relab-and-iden}, $\Relab(\p)=\Relab_a(c\cdot k)$.
  \item If $\Asc{p}=(a,x,k)$ for some $a\in \Ax,\; x=t(a),\; k=\trP{a}$, then we set $\refe{\p}=\Asc{\p}=(a,x,k)$. Notice that that $\Relab(\p)=\Relabtr(a)$.
  \item If $\Asc{p}=(a,x,k\cdot c\cdot \ell)$ for some $a\in \Ax,\; t(a)=x,\; k=\trP{a}$ and $c\cdot \ell \in \suppmut{\ttT(a)}$, then we set $\refe{\p}=(a,c)$, so that $\Asc{p}\rewp \refe{\p}$, $\p \equiv \refe{\p}$ and $\Relab(\p)=\Relab(\refe{\p})=\Relab_a(c)$.
  \end{itemize} 
\end{itemize}
We denote by $\refe{P}$ the \textbf{set of referent edges} of $P$ and we use the meta variable $\ttr$ to denote referent edges.\\

Notice that if $\tte_1\iden \tte_2$, then $\refe{\tte_1}=\refe{\tte_2}$ and $\tte\iden \refe{\tte}$. Thus, the referent of a mutable edge $\tte$ may be seen as the unique representative of $\thr{\tte}$ that allows to directly compute $\Relab(\tte)$ from $\Relab$ as given in Definition\;\ref{def:relabelling}. We extend the notation $\ttref$ and for all thread $\theta\in \ThrE{P}$, we set $\refe{\theta}=\refe{\tte}$ for any $\tte\in \ThrE{P}$. If $\refe{\theta}=(a,x,k)$ (resp. $\refe{\theta}=a\cdot k\in \suppmut{P}$, resp. $\refe{\theta}=(a,c\cdot k)$), we say that $\theta$ is an \textbf{axiom thread} (resp. \textbf{argument thread}, resp. \textbf{inner thread}). 

A bit abusively, if $\refe{\tte}=(a,x,k)$ (where $a\in \Ax,\;x=t(a)$ and $k=\trP{a}$), we identify $\refe{\tte}$ with $a$ so that we may write that $\Relab(\tte)=\Relab(\refe{\tte})$ for all $\tte\in \Edg{P}$. We also abusively identify the set $\tilAx=\set{(a,x,k)\,|\, a\in \Ax,\; x=t(a),\;  k=\trP{a}}$ with  the set $\Ax$.\\

Lemma\;\ref{l:negative-thread-a-one-k} states that axiom threads occur at edges of some particular forms:

\begin{lemma}
\label{l:negative-thread-a-one-k}
\begin{itemize}
\item  If $(a,1^i\cdot k)\in \Edg{P}$ and $\Pol{a\cdot 1,1^i\cdot k}=\ominus$, then $\thr{a\cdot 1,1^i \cdot k}$ is an axiom thread. 
\item If $(a\cdot 1,k\cdot c)\in \Edg{P}$ and 
$(a\cdot 1,k\cdot c)\rewc{a} \eR$ with $k\in \Nmzo$ and $\thr{a\cdot 1,k\cdot c}$ is an axiom thread \ie $c=\epsi$ and $\eR$ is an argument edge \ie $\eR=a\cdot k'$ with $k'=\rho_a(k)$.
\end{itemize} 
\end{lemma}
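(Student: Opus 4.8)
The plan is to establish both items by following the ascendance relation $\rewa$ of Section~\ref{s:asc-pol-inv-threads-hybrid} from the biposition in question up to its highest ascendant $\Asc{\cdot}$, and then reading off the nature of its thread from the referent classification of Section~\ref{s:ref-and-res-relab} (an axiom thread precisely when $\refe{\cdot}$ is an axiom edge $(a_0,z,k_0)$ with $k_0=\trP{a_0}$, an inner thread when $\refe{\cdot}$ is a proper type biposition). The bookkeeping device I would set up first is an invariant on the inner position along the ascendance path of a right biposition: as long as $(\al,c)$ stays in the right-hand (type) part, its inner position has the form $1^j\cdot w$, where the block $1^j$ is lengthened by one at each application step ($(\al,c)\rewa(\al\cdot 1,1\cdot c)$), shortened by one at each abstraction step that peels a leading $1$, and where the tail $w$ after the first non-$1$ letter is never touched; the path may only cross into a context at an abstraction, and only when $j=0$, via $(\al,k\cdot c')\rewa(\al\cdot 0,z,k\cdot c')$, after which the inner position is frozen until the axiom typing $z$ is reached.

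For the first item, I consider $\p=(a\cdot 1,1^i\cdot k)$ with $k\geqs 2$ and $\Pol{\p}=\omin$. Here the inner position is a block of $1$'s followed by a single track $k$ with empty tail, so by the invariant every inner position along the path of $\p$ keeps the shape $1^j\cdot k$ while it is in a type. Negative polarity means, by Definition~\ref{d:syntactic-polarity-hybrid}, that $\Asc{\p}$ lies in the context of an axiom typing some variable; hence the path does cross into a context, which can only happen once the leading $1$-block is exhausted ($j=0$), i.e. with inner position exactly $k$. From there the inner position stays equal to the single track $k$ through the context of the bound variable $z$ up to the axiom leaf $a_0$ ($t(a_0)=z$) where this occurrence was introduced, so that $\Asc{\p}=(a_0,z,k)$. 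Since $\ttC(a_0)(z)$ is the singleton sequence type $(k\cdot \ttT(a_0))$, the track $k$ is its unique root, i.e. $k=\trP{a_0}$. This is precisely the axiom-edge case, giving $\refe{\p}=(a_0,z,k)$, so $\thr{\p}$ is an axiom thread.

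For the second item, I take $\p=(a\cdot 1,k\cdot c)\in\Edg{P}$ with $k\geqs 2$, assume $\p\rewc{a}\eR$ and that $\thr{\p}$ is an axiom thread. Applying the same invariant, the tail after the head track of $\p$ is the fixed word $c$, so $\Asc{\p}$ takes exactly one of three shapes: a type biposition at an axiom leaf (positive polarity), which is an inner thread; a context biposition $(a_0,z,k\cdot c)$ with $c\neq\epsi$, which is an inner thread via the polar-inversion case of $\refe{\cdot}$; or the axiom edge $(a_0,z,k)$ with $c=\epsi$ and $k=\trP{a_0}$, which is the only axiom thread. The hypothesis therefore forces $c=\epsi$. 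With $c=\epsi$, the consumption rule of Section~\ref{s:consum-interface} reads $(a\cdot 1,k)\rewc{a}a\cdot k'$, where $\phi_a(k)=k'\cdot c'$; since $\phi_a$ is a $01$-isomorphism from $\Lst^P(a)$ to $\Rst^P(a)$ it preserves length and maps the length-one root $k$ of $\Lst^P(a)$ to a length-one root of $\Rst^P(a)$, whence $c'=\epsi$ and $k'=\Rt(\phi_a)(k)=\rho_a(k)$. Thus $\eR=a\cdot k'$ is an argument edge with $k'=\rho_a(k)$.

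The step I expect to be the real work is the first paragraph: rigorously proving the inner-position invariant and, in particular, that the ``single significant track with empty tail'' shape is genuinely preserved and that negative polarity forces the crossing to occur exactly when the leading $1$-block is exhausted. I would carry this out by induction along the ascendance path, possibly invoking Lemma~\ref{l:form-iden-hybrid} to control the unique admissible polar inversion. Once the invariant is in place, both items reduce to a dichotomy on whether the preserved tail $c$ is empty, and the remaining interface step is routine because $\phi_a$ preserves lengths and sends roots to roots.
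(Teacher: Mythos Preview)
Your proposal is correct and follows essentially the same strategy as the paper. The paper's proof is terser: for the first item it simply invokes a 001-induction on the outer position, case-splitting on whether $\asc(a,1^i\cdot k)$ lands in a type or in a context (and observing that the latter forces $i=0$); for the second item it phrases your invariant as a parallel comparison, showing by induction on $i$ that $\asc^i(a\cdot 1,k\cdot c)$ is defined iff $\asc^i(a\cdot 1,k)$ is, with the respective inner positions satisfying $c_i=c^0_i\cdot c$, so that the axiom-edge form $(a_*,x,k')$ is reached only when $c=\epsi$. Your explicit inner-position invariant (the $1^j$-block growing and shrinking, the tail frozen) is a clean reformulation of exactly this bookkeeping, and your length-preservation argument for $\phi_a$ to conclude $\eR=a\cdot\rho_a(k)$ is the right final step.
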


\begin{proof}
\begin{itemize}
 \item By 001-induction on $a$. We split the cases according to whether $\asc{a,1^i\cdot k}$ is a right biposition or a left one (in the second case, $i=0$).
 \item By induction on $i$, we prove that $\p_i\asc^i(a\cdot 1,k\cdot c)$ is defined iff $\p^0_i=\asc^i(a\cdot 1,k)$ and that moreover, the respective inner positions $c^0_i$ of $\p^0_i$ and $c_i$ of $\p_i$ satisfy $c_i=c^0_i\cdot c$. Thus,
$\thL$ is an axiom thread \ie $\Asc{a\cdot 1,k\cdot c}$ is of the form $(a_*,x,k')$ only if $c=\epsi$ (then $k'=k'$).
\end{itemize}
\end{proof}

Lemma\;\ref{l:consumption-ax-and-arg-threads}  below means that axiom threads may only occur  at the root of sources of arrows and that consumption, when the left-hand side is negative, identifies some axiom threads with argument threads:

\begin{lemma}
\label{l:consumption-ax-and-arg-threads}
  \begin{itemize}
  \item If $\thL \rrew \thR$ and $\thL$ is an axiom thread, then $\thR$ is an argument thread.
  \item If $\thL \lomin \rrew \thR$ and $\thR$ is an argument thread, then $\thL$ is an axiom thread.
  \end{itemize}
\end{lemma}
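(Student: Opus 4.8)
The plan is to reduce both items to Lemma~\ref{l:negative-thread-a-one-k} after recording one preliminary observation: an argument thread is exactly a singleton $\set{a\cdot k'}$ consisting of a single argument edge $a\cdot k'\in\suppmut{P}$. This holds because the congruence $\equiv$ on $\Edg{P}$ is generated by ascendance $\rewa$ and polar inversion $\rewp$, both of which only relate bipositions of $\bisuppmut{P}$; hence every argument edge is $\equiv$-isolated (as already noted, $\thr{a\cdot k'}=\set{a\cdot k'}$), and, by inspection of the definition of $\refe{-}$, only argument edges admit an argument edge as referent. So a thread has an argument-edge referent, i.e. is an argument thread, iff it is such a singleton.

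For the first item, I would fix a witness $a\in\suppat{P}$ and bipositions $\p_\Lst\rewc{a}\p_\Rst$ with $\thL=\thr{\p_\Lst}$ and $\thR=\thr{\p_\Rst}$. By the definition of the consumption relation, $\p_\Lst$ is necessarily of the form $(a\cdot 1,k\cdot c)$ with $k\in\Nmzo$. Since $\thL=\thr{a\cdot 1,k\cdot c}$ is assumed to be an axiom thread, the second item of Lemma~\ref{l:negative-thread-a-one-k} applies and yields $c=\epsi$ together with $\p_\Rst=a\cdot k'$, an argument edge ($k'=\rho_a(k)$). Thus $\thR=\set{a\cdot k'}$ is an argument thread. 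Note that this direction requires no hypothesis on the polarity of the consumption.

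For the second item, I would again fix a witness $\p_\Lst\rewc{a}\p_\Rst$, now with $\Pol{\p_\Lst}=\ominus$ (the left-consumption being negative), and assume $\thR$ is an argument thread. By the preliminary observation, $\thR=\set{a''\cdot k''}$, so $\p_\Rst\in\thR$ must itself be an argument edge. Inspecting the two clauses defining $\rewc{a}$, the right-hand side is an argument edge only for the clause $(a\cdot 1,k)\rewc{a} a\cdot k'$; hence $\p_\Lst=(a\cdot 1,k)=(a\cdot 1,1^0\cdot k)$. Since $\Pol{a\cdot 1,k}=\ominus$, the first item of Lemma~\ref{l:negative-thread-a-one-k} (taken with $i=0$) gives that $\thL=\thr{a\cdot 1,k}$ is an axiom thread, as required.

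The whole argument is therefore a short case split feeding into Lemma~\ref{l:negative-thread-a-one-k}; the only points that require care are bookkeeping: checking that $(a\cdot 1,k)$ is a genuine mutable right biposition (it points to the root edge $\epsi\to k$ of the source $\sSk$ in the arrow type carried by the function of the application at $a$, with $k\geqs 2$) and that the consumption clause producing an argument edge on the right is precisely the one with a bare root $k$ on the left. I expect no genuine obstacle beyond this matching of cases, since the inductive content—that a negatively polarised source root is an axiom thread and, conversely, an axiom thread left-consumed negatively faces an argument edge—has already been isolated in Lemma~\ref{l:negative-thread-a-one-k}.
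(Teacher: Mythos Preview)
Your proposal is correct and follows essentially the same approach as the paper: both items are reduced to Lemma~\ref{l:negative-thread-a-one-k} via a short inspection of the two clauses defining $\rewc{a}$. Your version is in fact slightly more careful, making explicit the preliminary observation that an argument thread is a singleton $\set{a\cdot k'}$ (which the paper leaves implicit when it writes ``This implies that $\eR=a\cdot k'$'') and specifying which item of Lemma~\ref{l:negative-thread-a-one-k} is invoked at each step.
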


\begin{proof}
\begin{itemize}
\item Assume that $\thL:\eL \rew{a} \eR:\thR$ and $\thL$ is an axiom thread.
Then, by Lemma\;\ref{l:negative-thread-a-one-k}, $\eL=(a\cdot 1,k)$ for some $k\in \Nmzo$ and $\eR=a\cdot k'$ with $k'=\rho_a(k)$. Thus, $\thR$ is the argument thread $\set{\eR}$.
\item Assume that $\thL:\eL \lomin \rew{a} \eR:\thR$ and $\thR$ is an argument thread. This implies that $\eR=a\cdot k'$ for some $k'\in \Nmzo$ and then, that
$\eL=(a\cdot 1,k)$ with $k=\rho^{-1}_a(k')$. Then, by Lemma\;\ref{l:negative-thread-a-one-k}, $\thL$ is an axiom thread.
\end{itemize}
\end{proof}

Let $\theta\in \ThrE{P}$ that is not an axiom thread, the \textbf{applicative depth} $\ad{\theta}$ of $\theta$ is defined by $\ad{\theta}=\ad{\refe{\theta}}$ (see Sec.\;\ref{s:asc-pol-inv-threads-hybrid}).

\begin{lemma}
\label{l:brother-ref}
Let $\tte_1,\;\tte_2 \in \Edg{E}$ be two brother edges and $\ttr_i=\refe{\tte_i}\; (i=1,2)$ their referents. Then $\ttr_1$ and $\ttr_2$ are brother threads.\\
\end{lemma}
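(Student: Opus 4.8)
The plan is to reduce the statement to an analysis of how the two elementary operations generating threads---ascendance $\rewa$ and polar inversion $\rewp$---act on a pair of brother edges. Indeed, by the definition of $\ttref$ in \S\ref{s:ref-and-res-relab}, the referent $\refe{\tte}$ is obtained by following $\rewa$ up to the maximal ascendant $\Asc{\tte}$, and then, when $\Asc{\tte}$ is a context edge lying strictly below the axiom track, applying a single step of $\rewp$. Since $\tte \iden \refe{\tte}$ and $\asc$ is functional and injective, I may freely replace $\tte_i$ by any of its ascendants without changing its referent; hence it suffices to track a pair of brothers all the way up to the axiom rules at which their threads top out, and to check brotherhood of the resulting pair of referents.

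First I would record the lockstep behaviour of $\rewa$ on a pair of brothers. If $\tte_1,\tte_2$ share the outer position $a$ and a common inner prefix, differing only in a final track $\geqs 2$, then the ascendance clause that applies is determined by the constructor $t(a)$ and by that common prefix, and is therefore the same for both edges; consequently $\asc(\tte_1)$ and $\asc(\tte_2)$ again share an outer position and a common inner prefix. This shows that $\rewa$ preserves brotherhood in every configuration but one: when the two edges are the root-level source bipositions $(a,k_1),(a,k_2)$ of an arrow type carried by an abstraction node $t(a)=\lx$. There the abstraction clause turns them into the co-rooted context edges $(a\cdot 0,x,k_1),(a\cdot 0,x,k_2)$ of the bound variable $x$, carrying the distinct root tracks $k_1\neq k_2$, and these may subsequently be routed to \emph{distinct} premises at an application node and so cease to share a node.

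The crux of the proof---and the main obstacle---is precisely this exceptional configuration, which I would resolve using the strengthened notion of brotherhood of Remark~\ref{rmk:brother-e}. The two co-rooted context edges of $x$ each preserve their own root track as they ascend, until they reach the two axiom rules $a_1,a_2$ typing $x$ with axiom tracks $\trP{a_1}=k_1$ and $\trP{a_2}=k_2$; that these edges top out at such axiom leaves, and that their threads are axiom threads, is exactly the content of Lemma~\ref{l:negative-thread-a-one-k} (these source bipositions have negative polarity). Their referents are then the axiom edges $(a_1,x,k_1)$ and $(a_2,x,k_2)$ at the two distinct axiom rules $a_1\neq a_2$, which are brother edges by the strengthened clause. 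In every non-exceptional case the pair remains a genuine pair of brothers up to the common axiom rule $a_0$ at which it tops out, and the referents are: the shared argument edges $a\cdot k_i$ (when $\tte_i=a\cdot k_i$, where $\refe{\tte_i}=\tte_i$); or the right bipositions obtained after one step of $\rewp$, which merely removes the common leading axiom track $\trP{a_0}$ and preserves both the common inner prefix and the differing final track; or again a pair of axiom edges. In each of these the two referents manifestly share a node, or are axiom edges at distinct axiom rules.

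I would organise the write-up as a case analysis on the four shapes of brother edges (argument edges, right inner bipositions, left inner bipositions, and axiom edges), invoking the lockstep fact and the single polar-inversion step to carry each pair to its referent, and isolating the exceptional split above as the only delicate point; all remaining verifications are routine checks that the selected ascendance clause, and the final polar inversion when present, act identically on the two brothers. Since $\thr{\ttr_i}=\thr{\tte_i}$, the referent edges $\ttr_1,\ttr_2$ being brother edges yields that $\ttr_1$ and $\ttr_2$ are brother threads, as required.
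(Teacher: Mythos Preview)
Your proposal is correct and follows the same approach as the paper, which records only ``Straightforward by induction on $\idena$.'' You have essentially spelled out what that induction amounts to: ascendance acts in lockstep on a pair of brothers except at the root of an abstraction's source, and that exceptional split is absorbed by the strengthened brotherhood clause of Remark~\ref{rmk:brother-e}. One minor remark: your appeal to Lemma~\ref{l:negative-thread-a-one-k} is slightly loose, since that lemma is stated for bipositions of the shape $(a\cdot 1,1^i\cdot k)$ rather than $(a,k)$ at an abstraction node; the fact you actually need---that the co-rooted context edges $(a\cdot 0,x,k_i)$ ascend to distinct axiom leaves typing $x$ with axiom tracks $k_i$---follows directly from the definition of $\rewa$ on left bipositions together with quantitativity, and does not require that lemma.
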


\begin{proof}
Straightforward by induction on $\idena$.
\end{proof}

As a consequence, since two \textit{brother} referent edges that are not axiom edges have the same applicative depth, two brother threads that are not axiom threads have the same applicative depth and two threads are brother iff their referent are brother edges.

Moreover:

\begin{lemma}
\label{l:consum-brother-thr-same-pos}
Let $\theta_1$ and $\theta_2$ be two brother edges and $a\in \suppat{P}$. Then $\theta_1$ is consumed at position $a$ iff $\theta_2$ is. Moreover:
\begin{itemize}
\item If $\theta_1\rrewc{a} \theta^{\Rst}_1$, then, there is $\theta^{\Rst}_2$, brother with $\theta^{\Rst}_1$, such that $\theta_2\rrewc{a}\theta^{\Rst}_2$.
\item If $\theta^{\Lst}_1\rrewc{a} \theta_1$, then, there is $\theta^{\Lst}_2$, brother with $\theta^{\Lst}_1$, such that $\theta^{\Lst}_2\rrewc{a}\theta_2$.
\end{itemize}
\end{lemma}

\begin{lemma}
\label{l:consum-and-ad}
If $\thL\loplus\rrew \thR$, then $\ad{\thL}<\ad{\thR}$.
\end{lemma}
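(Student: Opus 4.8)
The plan is to compare the applicative depths of $\thL$ and $\thR$ at the position $a\in\suppat P$ where the consumption takes place, proving $\ad{\thL}=\ad a$ and $\ad{\thR}\geqs\ad a+1$.

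First I would unfold $\thL\loplus\rrewc{a}\thR$: there are bipositions $\p_\Lst=(a\cdot 1,k_0\cdot c_0)$ with $k_0\cdot c_0\in\suppmut{\ttT^P(a\cdot 1)}$ — so $\ttT^P(a\cdot 1)=\sSk\rew T$ and $k_0\in K$ is a root track of the source — and $\p_\Rst$ with $\p_\Lst\rewc{a}\p_\Rst$, whose outer position is $a\cdot\rho_a(k_0)$ with $\rho_a(k_0)\in\Nmzo$ an argument track; moreover $\thL=\thr{\p_\Lst}$, $\thR=\thr{\p_\Rst}$ and $\Pol{\p_\Lst}=\oplus$. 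Note that positivity already forces $\thL$ to be an inner thread, so $\ad{\thL}$ is well defined.

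The crux — and the main obstacle — is the identity $\ad{\thL}=\ad a$. By the characterisation of polarity, $\Pol{\p_\Lst}=\oplus$ means $\Asc{\p_\Lst}$ is a \emph{right} biposition lying in an $\ax$-rule, say at position $a_*$. I would then examine the ascendance chain $\p_\Lst=\p_0\rewa\dots\rewa\p_m=\Asc{\p_\Lst}$: inspecting the clauses of $\rewa$, the only way a right biposition turns into a left one is the clause $(a',k'\cdot c')\rewa(a'\cdot 0,x,k'\cdot c')$ at an $\abs$-rule, and no clause of $\rewa$ turns a left biposition back into a right one; hence, were some $\p_i$ a left biposition, $\p_m=\Asc{\p_\Lst}$ would be a left biposition in an $\ax$-rule, contradicting positivity. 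So all $\p_i$ are right bipositions, each step uses only $(a',c')\rewa(a'\cdot 1,1\cdot c')$ or $(a',1\cdot c')\rewa(a'\cdot 0,c')$, and the outer positions grow only by $0$- and $1$-edges, which are not argument tracks; therefore $\ad{\out(\p_i)}$ is constant along the chain. Since $\out(\p_0)=a\cdot 1$, the referent $\refe{\thL}=\Asc{\p_\Lst}$ sits at the $\ax$-rule $a_*$ with $\ad{a_*}=\ad{a\cdot 1}=\ad a$, so $\ad{\thL}=\ad a$. (Finiteness of the chain is guaranteed by positivity.)

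For the lower bound, $\p_\Rst$ has outer position $a\cdot\rho_a(k_0)$ with $\rho_a(k_0)\geqs 2$, so $\ad{\out(\p_\Rst)}=\ad a+1$; since every clause of $\rewa$ passes from the outer position to one of its children and children have applicative depth at least that of their parent, ascendance never decreases applicative depth, whence $\ad{\refe{\thR}}\geqs\ad{\out(\p_\Rst)}=\ad a+1$, i.e.\ $\ad{\thR}\geqs\ad a+1$. Combining the two bounds yields $\ad{\thL}=\ad a<\ad a+1\leqs\ad{\thR}$, which is the claim. Apart from the polarity/ascendance bookkeeping of the previous paragraph — which is essentially dual to the reasoning already used for Lemma\;\ref{l:negative-thread-a-one-k} and Lemma\;\ref{l:consumption-ax-and-arg-threads} — everything is routine.
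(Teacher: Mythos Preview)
Your proof is correct and follows essentially the same approach as the paper's: both compute $\ad{\thL}=\ad{a}$ by observing that a positive occurrence means $\refe{\thL}=\Asc{\p_\Lst}$ is reached from $(a\cdot 1,k_0\cdot c_0)$ along an ascendance path using only tracks $0$ and $1$, and both bound $\ad{\thR}\geqs\ad{a\cdot k'}=\ad{a}+1$ from the fact that the right edge lives above an argument child of $a$. Your version makes explicit the key step the paper leaves implicit, namely that once an ascendance step turns a right biposition into a left one there is no clause of $\rewa$ going back, so positivity of $\Asc{\p_\Lst}$ forces every intermediate $\p_i$ to be a right biposition and hence every outer step to be a $0$- or $1$-edge.
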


\begin{proof}
Assume that $\thL:\eL \loplus \rewc{a} \eR:\thR$. Excluding the case where $\thR$ is an argument thread (which is easy), we have
$\eL=(a\cdot 1,k\cdot c)$ and $\eR=(a\cdot k',c')$ for some $k,\,k'\in \Nmzo,\ c,\c'\in \bbN^*$ such that $\phi_a(k\cdot c)=k'\cdot c'$.

Since $\thR:(a\cdot k',c')\leqslant \refe{\thR}$, we have $\ad{\thR}\geqs \ad{a\cdot k'}=\ad{a}+1>\ad{a}$.

Since $\thL$ occurs positively, $\refe{\thR}=\Asc{a\cdot 1,k\cdot c}$ and $\ad{\refe{\Asc{a\cdot 1,k\cdot 1}}}=\ad{a\cdot 1}=\ad{a}$, so that $\ad{\thL}=\ad{a}<\ad{\thR}$.
\end{proof}

Let us assume that $P\tri \juCtt$ is an operable derivation (with the usual notations), $\trb=\lxrs$, $t\rewb{b} t'$ (so that $\tprb=\rsx$), $P\rewb{b} P'$ (so that $P'$ is a residual operable derivation of $P$) and that $\Relab$ is a relabelling of $\Relab$.

Let us observe now that $\Relab$ induces a relabelling $\Relab'$ of $P'$ \ie we will define a relabelling $\Relab'$ such that the following diagram commutes:
$$
\begin{array}{ccc}
P\phantom{\!\scriptstyle b} & \stackrel{\Psi^{\Relab} }{\longrightarrow} & P^{\Relab}\phantom{\!\scriptstyle b} \\
\downarrow {\!\scriptstyle b} & & \downarrow  {\!\scriptstyle b}\\
P'\phantom{\!\scriptstyle b} & \stackrel{\Psi^{\Relab'} }{\longrightarrow} & (P')^{\Relab'}\phantom{\!\scriptstyle b} 
\end{array}
$$

\begin{itemize}
\item If $\tte'=\al'\cdot k\in \suppmut{P'}$, then there is a unique $\al \in \suppmut{P}$ such that $\Res_b(\al\cdot k)=\al'\cdot k$ and we set $\Relabpout(\al')=\Relabout(\al)$.
\item If $\tte'=\al'\in \Ax'$, then there is a unique $\al \in \Ax$ such that $\Res_b(\al)=\al'$ and we set $\Relabptr(\al')=\Relabtr(\al)$.
\item If $\tte'=(\al',c\cdot k) \in \bisuppmut{P'}$ with $\al'\in \Ax'$, then there is a unique such that $\Res_b(\al)=\al'$. Moreover, by Lemma\;\ref{lem-def:QRes-bip-hybrid}, position $\al$  satisfies $\Res_b(\al,c\cdot k)=(\al',c\cdot k)$ and we set $\Relabpref_{\al'}(c)=\Relabref_{\al}(c)$.
\end{itemize}
Using again Lemma\;\ref{lem-def:QRes-bip-hybrid}, it is easy to check that $\Relab'$ matches Definition\;\ref{def:relabelling} and is a licit relabelling of $P'$. The commutation of the diagram above is easy to verify: we check that both paths define the same derivation support $A'$ and that for all leaf $\al'$ of $A'$, that the types are equal. We conclude then by using an easy 001-induction.

\subsection{Edges and Residuation}
\label{s:res-edges}


Let us observe now that a distinction should be made between residuation for nodes and residuations for edges. Consider an operable derivation $P\tri \juCtt$ (coming along with the usual notations) and assume that $\trb=\lxrs$, $t\rewb{b} t'$ (so that $\tprb=\rsx$), $P\rewb{b} P'$ so that $P'$ is a residual operable derivation of $P$. We use the same metavariable conventions as in Sec.\;\ref{s:res-deriv}. 
If $a\in \bbN^*$ and $k'\in \Nmzo$ are such that $\ovla=b$ and $a\cdot k'\in \supp{P}$, then $P(a)$ is a judgment of the form $\ju{D_{k'}}{s:S_{k'}}$, more precisely, $a\cdot k$ points to a  \textit{node} of $P$ labelled with a judgment typing the argument $s$ of the reduced redex. After reduction, this judgment will be located at position $a\cdot a_k$ where $k=\rho_a^{-1}(k')$. Thus, it was natural to define $\Res_b(a\cdot k')$ as $a\cdot a_k$ in Sec.\;\ref{s:res-deriv} (paradigm $\clubs$). This point of view that used throughout Sec.\;\ref{s:reduction} to define residuation for derivation.

But, as seen in Sec.\;\ref{s:relab-deriv}, we also use $a \cdot k$ to denote the \textit{edge} of $P$ from $a$ to $a\cdot k$. This is an edge that joins the $\app$-rule of the redex to an argument derivation of the redex. When we reduce the redex at position $b$, this edge is destroyed and should \textit{not} have a residual. This leads us to make a distinction between residuation for nodes (as in Sec.\;\ref{s:reduction}) and residuation for edges.
\ignore{We then define a partial function $\ResE$ from $\Edg{P}$ to $\Edg{P'}$ as follows: \pierre{TROP CHIANT}
\begin{itemize}
 \item Assume that $\al \cdot k\in \suppmut{P}$.
   \begin{itemize}
    \item If $\ovlal\neq b$, then $\al'=\Res_b(\al)$ is defined and we set $\ResE_b(\al\cdot k)=\al'\cdot k$.
    \item If $\al=a$ with $\ovla=b$, then $\ResE_b(a\cdot k)$ is left undefined.
    \end{itemize}
 \item Assume that $t(\al)=\ly$:
  \begin{itemize}
    \item If $\ovlal\neq b\cdot 1$, then 
    \item
  \end{itemize}
 \item
\end{itemize}
} Residuation for mutable edges should be then defined properly. However, 
relabelling matters only for \textit{referent edges}. Thus, it is enough for our purpose to define a residuation $\Resref$ that associates, to each referent edge of $P$, a referent edge of $P'$.


The definition of the residual $\Resref(\ttr)$ of a referent edge $\ttr$ is suggested by residual relabelling from Sec.\;\ref{s:ref-and-res-relab}. This residual is defined except when $\ttr$ is of the form $a\cdot a_k\in \Ax$ or of the form $a\cdot k\in  \suppmut{P}$ with $\ovla=b$ \ie $\ttr$ is not defined when it is an axiom edge associated to the variable of the redex or an argument edge associated to the application of the redex.
\begin{itemize}
\item Assume that $\ttr=\al \cdot k\in \suppmut{P}$.
\begin{itemize}
\item If $\ovlal\neq b$, then $\al':=\Res_b(a\cdot k)$ is defined and we set $\Resref_b(\ttr)=\Res_b(\al\cdot k)=\al'\cdot k\in \suppmut{P'}$.
\item If $\ovlal=b$ \ie $\al$ may be written $a$: $\Resref_b(\ttr)$ is left undefined.
\end{itemize}
\item Assume that $\ttr=(\al,y,k)$ with $\al \in \Ax,\; y=t(a),\; k=\trP{a}$ \ie abusively, $\ttr=\al\in \Ax$.
\begin{itemize}
 \item If $t(\al)\neq x$ (assuming Barendregt convention\footnote{If Barendregt convention does not hold, we must write $\ovl{\bind^{P}(\al)}\neq b\cdot 1$ instead of $t(\al)\neq x$ (see Sec.\;\ref{s:bisupp} for notation $\bind^{P}(\al)$).}), then $\al'=\Res_b(\al)$ is defined and we set $\Resref_b(\ttr)=(\al',y,k)$.
 \item If $t(\al)=x$ \ie $\al$ is an axiom rule typing the variable of the redex, then $\al$ is of the form $a\cdot 10 \cdot a_k$ for some $a\in \suppat{P},\; \ovla=b$ and $k\in \Trl{a}$ and $\Resref_b(\ttr)$ is left undefined.
\end{itemize}
\item Assume that $\ttr=(\al,c\cdot k)$ with $\al \in \Ax$ and $c\cdot k\in \suppmut{\ttT(\al)}$.
  \begin{itemize}
    \item If $t(\al)\neq x$, then $\al'=\Res_b(\al)$ is defined and we set $\Resref_b(\ttr):=\Res_b(\ttr)=(\al',c\cdot k)$, which is 
    \item If $t(\al) = x$, then $\Res_b(\al)$ is not defined but $\al':=\QRes_b(\al)$ is. However, $\QRes_b(\al)$ is not necessarily in $\Ax'$, so $\QRes_b(\ttr)$ is not necessarily an inner referent and we set $\Resref_b(\ttr)=\refp{\QRes_b(\ttr)}=\refp{\QRes_b(\al,c\cdot k)}$.
  \end{itemize}
\end{itemize}


\begin{lemma}
\label{l:pos-and-res-hybrid}
Assume that  $\al\in \supp{P}$ and $\ovlal \neq b\cdot 1$ and that $k\in \Ax_{\al}(y)$ for some $y\in \TermV$ with $y \neq x$ (\ie $\ovl{\bind^P(\al)}\neq b\cdot 1)$.\\
Then $\Res_b(\pos{\al,\,y,\,k})=\posp{\QRes_b(\al),\, y,\, k}$.
\end{lemma}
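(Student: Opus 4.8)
Write $\al_* := \pos{\al, y, k}$ for the axiom rule of $P$ lying above $\al$ that types $y$ with axiom track $k$, so that $\al \leqs \al_*$ and $\al_* \in \Ax_\al(y)$. The plan is to reduce the stated equality to a single inclusion of positions in the residual derivation, and then to prove that inclusion by a case analysis on the location of $\al$ relative to the fired redex $\lxrs$.

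The routine points come first. The hypothesis $\ovlal \neq b\cdot 1$ is exactly what makes $\QRes_b(\al)$ defined (\S\;\ref{s:res-deriv}), so the right-hand side is meaningful. Since $\al_*$ is an axiom rule, it does not sit at the position of the $\arob$ or of the $\lx$ of the redex, and since it types $y$, which differs from the substituted variable $x$ (this is what the binder condition $\ovl{\bind^P(\al)} \neq b\cdot 1$ secures when Barendregt convention is dropped), it is not an axiom of $x$ either; hence $\Res_b(\al_*)$ is defined and is an axiom rule of $P'$. Moreover $\Res_b^{-1}(\Res_b(\al_*)) = \al_*$ still types $y$, and $\trp{\Res_b(\al_*)} = \trP{\al_*} = k$ by the definition of the residual axiom-track function in \S\;\ref{s:res-types-and-contexts}. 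Because $P$, hence $P'$, is quantitative (\S\;\ref{s:quant-deriv}), $\posp{\QRes_b(\al), y, k}$ is by definition the \emph{unique} axiom rule $\al_0'$ of $P'$ such that $\al_0' \geqs \QRes_b(\al)$, $\Res_b^{-1}(\al_0')$ types $y$, and $\trp{\al_0'} = k$. In view of the facts just recorded, it therefore suffices to establish the single inclusion $\Res_b(\al_*) \geqs \QRes_b(\al)$, from which $\Res_b(\al_*) = \posp{\QRes_b(\al), y, k}$ follows by uniqueness.

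It remains to prove $\Res_b(\al_*) \geqs \QRes_b(\al)$, which I would do by a case analysis on $\ovlal$, using $\al \leqs \al_*$ together with the explicit formulae for $\Res_b$ and $\QRes_b$ (\S\;\ref{s:res-deriv}: paradigms $\hearts$, $\clubs$, and the clause outside the redex). If $b \nleqs \ovlal$, then either $b \nleqs \ovl{\al_*}$, whence $\QRes_b(\al) = \al \leqs \al_* = \Res_b(\al_*)$, or the path from $\al$ to $\al_*$ enters the redex at some $a$ with $\ovla = b$, so $\al_* = a\cdot 1\cdot 0\cdot\al_0$ and $\Res_b(\al_*) = a\cdot\al_0 \geqs a \geqs \al = \QRes_b(\al)$. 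If $\ovlal = b$, say $\al = a$, then $\QRes_b(a) = a$ and $\al_* \geqs a$ lies either in $r$, with $\al_* = a\cdot 1\cdot 0\cdot\al_0$ and $\Res_b(\al_*) = a\cdot\al_0 \geqs a$, or in the argument subderivation on some track $\ell \geqs 2$, with $\al_* = a\cdot\ell\cdot\al_0$ and $\Res_b(\al_*) = a\cdot a_{\rho_a^{-1}(\ell)}\cdot\al_0 \geqs a$. If $\ovlal \geqs b\cdot 1\cdot 0$, then $\al$ and $\al_*$ both lie in the subderivation of $r$, on which paradigm $\hearts$ restricts to a prefix-order embedding; if $\ovlal \geqs b\cdot 2$, they lie in a common argument subderivation of the redex, on which paradigm $\clubs$ restricts to a prefix-order embedding; in both situations $\al \leqs \al_*$ gives $\QRes_b(\al) \leqs \Res_b(\al_*)$. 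The case $\ovlal = b\cdot 1$ is excluded by hypothesis --- exactly the case the statement must rule out, since there $\QRes_b(\al)$ is undefined. The only genuinely delicate point of the whole argument is this last case analysis: checking that crossing the collapsed $\abs$/$\apph$ pair of the redex leaves the prefix order between $\al$ and $\al_*$ intact, which is a tedious but entirely elementary piece of bookkeeping over the two paradigms and over whether $\al_*$ falls inside $r$ or inside $s$ (and, without Barendregt convention, over the $\lambda y$'s met along the path, \cf the footnote of \S\;\ref{s:res-edges}).
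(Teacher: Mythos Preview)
Your proof is correct and follows the same approach as the paper's one-line argument, which simply invokes the monotonicity of $\QRes_b$ with respect to the prefix order together with the uniqueness of the axiom position having a given track; your case analysis is an explicit verification of that monotonicity. One small omission: in your first case ($b \nleqs \ovlal$ with the path from $\al$ to $\al_*$ entering the redex at some $a$), you only treat the sub-case $\al_* = a\cdot 1\cdot 0\cdot\al_0$, but $\al_*$ may equally well lie in an argument copy of $s$, i.e.\ $\al_* = a\cdot\ell\cdot\al_0$ with $\ell\geqs 2$ --- this is handled exactly as in your $\ovlal = b$ case.
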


\begin{proof}
This stems immediately from the monotonicity of $\QRes_b$ \wrt the prefix order and the fact that, for any $\al\in A$ and $k\geqslant 2$, there is at most one $\al_0  \in \Ax_{\al}(y)$ s.t. $\tr{\al_0}=k$.
\end{proof}

\begin{lemma}
\label{l:ref-and-res}
Let $(\al,\, c)\in \bisuppmut{P}$  such that $\ovlal\neq
b\cdot 1$.\\
Then $\refp{\QRes_b (\al,\,c)}=\Resref_b(\refe{\al,\, c})$.
\end{lemma}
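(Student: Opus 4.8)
The statement to prove is Lemma~\ref{l:ref-and-res}: for $(\al,c)\in\bisuppmut{P}$ with $\ovlal\neq b\cdot 1$, one has $\refp{\QRes_b(\al,c)}=\Resref_b(\refe{\al,c})$. The plan is to unfold both sides according to the case analysis that defines $\ttref$ (in Sec.\;\ref{s:ref-and-res-relab}) and $\Resref$ (in Sec.\;\ref{s:res-edges}), using the compatibility of (quasi-)residuation with ascendance $\rewa$ and polar inversion $\rewp$. Recall that $\refe{\al,c}$ is computed by climbing to the highest ascendant $\Asc{\al,c}$ and then possibly applying one polar inversion $\rewp$. So the key will be to show that residuation commutes with this climbing-and-inverting procedure: $\Asc{\QRes_b(\al,c)}$ is the quasi-residual (in the appropriate sense) of $\Asc{\al,c}$, and if $\Asc{\al,c}\rewp \refe{\al,c}$ then correspondingly $\Asc{\QRes_b(\al,c)}\rewp \refp{\QRes_b(\al,c)}$.

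\textbf{Key steps, in order.} First I would establish the auxiliary commutation for $\rewa$: if $\p_1\rewa\p_2$ in $P$ and $\QRes_b$ is defined on the outer positions involved, then $\QRes_b(\p_1)$ and $\QRes_b(\p_2)$ are related by $\rewa$ in $P'$ (or are equal, in the degenerate cases around the contracted redex), and moreover the maximality of ascendants is preserved — i.e. $\Asc{\QRes_b(\al,c)}=\QRes_b(\Asc{\al,c})$ whenever the right-hand side makes sense. This is a $001$-induction on the outer position, split according to Remark~\ref{rmk:induction-split} into the three paradigms $\clubs$, $\hearts$ and ``outside the redex''; inside the redex the $\abs$- and $\app$-rules that carried the ascendance steps are destroyed, but then $\QRes_b(a\cdot 1\cdot 0\cdot a_k)=a\cdot a_k$ absorbs exactly the two destroyed steps, so the top ascendant is preserved. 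Second, I would do the same for $\rewp$: a polar inversion step $(\al,x,k\cdot c)\rewp(\al,c)$ with $x=t(\al)$ survives residuation because axiom rules are either preserved by $\QRes_b$ or, when $\al$ types the redex variable, are replaced by an argument derivation whose root judgment is of the same $\ax$-shape up to $\equiv$; in all cases the polar-inversion pair is mapped to a polar-inversion pair via $\QRes_b$. Third, I would combine the two: by definition $\refe{\al,c}$ is either $\Asc{\al,c}$ itself (if its highest ascendant lies in an $\ax$-rule right-hand side or is an inner edge of such a type) or $\refp{\Asc{\al,c}}$ after one $\rewp$-step (if the highest ascendant is in the context of an $\ax$-rule typing $x=t(a)$, which lands it on an inner right biposition $(a,c)$). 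Applying steps one and two, $\QRes_b$ carries this recipe for $P$ to the same recipe for $P'$, which is exactly the definition of $\refp{\QRes_b(\al,c)}$; matching this with the three clauses defining $\Resref_b(\refe{\al,c})$ — the $\suppmut{P}$ clause, the ``$t(\al)\neq x$ axiom'' clause, and the ``$t(\al)=x$'' clause where $\Resref_b$ is explicitly defined as $\refp{\QRes_b(\ttr)}$ — closes the argument.

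\textbf{Main obstacle.} The delicate point is the bookkeeping around the contracted redex, specifically the ``$t(\al)=x$'' case, where $\al=a\cdot 1\cdot 0\cdot a_k$ is an $\ax$-rule typing the redex variable: here $\Res_b(\al)$ is undefined, $\QRes_b(\al)=a\cdot a_k$ lands on (the root of) an argument derivation $P_{k'}$, and the type there is $S'_{k'}\equiv S_k$ rather than syntactically $S_k$ — so one must track how $\Asc{\al,c}$ (which was a left biposition $(a\cdot1\cdot0\cdot a_k,\,x,\,\trP{a\cdot1\cdot0\cdot a_k}\cdot c')$ or a right one inside $S_k$) is transported through the isomorphism $\phi_a$ and the quasi-residual type isomorphism $\QRes_{b|\al}$ of Lemma~\ref{lem-def:QRes-bip-hybrid}. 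I expect most of the proof to be a careful but routine unwinding of definitions once the $\rewa$/$\rewp$ commutation lemmas are in place; the real care is needed only to check that, in this redex-variable case, $\Resref_b$ has been \emph{defined} (in Sec.\;\ref{s:res-edges}) precisely as $\refp{\QRes_b(\ttr)}$, so that the identity holds essentially by construction — the content of the lemma is that this ad hoc definition is the ``right'' one, i.e. consistent with the outer-case clauses, which is what the commutation lemmas guarantee.
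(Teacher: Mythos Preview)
Your plan is close to the paper's, but the decomposition into separate commutation lemmas for $\rewa$ and $\rewp$ does not go through as stated, and this is not just a matter of bookkeeping.

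The concrete error is in step~2. You claim that ``in all cases the polar-inversion pair is mapped to a polar-inversion pair via $\QRes_b$'', justifying the redex-variable case by saying the argument derivation's ``root judgment is of the same $\ax$-shape up to $\equiv$''. This is false: when $\al_0=a\cdot 1\cdot 0\cdot a_k$ is an axiom typing the redex variable, its quasi-residual $\QRes_b(\al_0)=a\cdot a_k$ is the root of the argument subderivation $P_{k'}$ (with $k'=\rho_a(k)$), which is in general \emph{not} an $\ax$-rule --- if $s=\lambda y.r'$ it is an $\abs$-rule, if $s=u\,v$ an $\app$-rule. So there is no $\rewp$-step at $\QRes_b(\al_0)$ in $P'$; computing $\refp{\QRes_b(\al_0,c)}$ requires continuing the ascendance chain \emph{inside} $P_{k'}$ to reach a genuine axiom of $P'$. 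This is exactly why the definition of $\Resref_b$ in the $t(\al)=x$ clause has to re-invoke $\ttref'$, and it means your step~2 lemma is simply false in that case rather than ``holding by construction''. Relatedly, your step-1 equation $\Asc{\QRes_b(\al,c)}=\QRes_b(\Asc{\al,c})$ is not even well-formed when $\Asc{\al,c}$ is a left biposition in the redex variable $x$, since $\QRes_b$ is not defined there; and the ascendance chain can pass through outer position $a\cdot 1$ (with $\ovla=b$), where $\QRes_b$ is again undefined. Once you spell out all these ``degenerate cases'' your factored lemmas dissolve back into a direct case analysis.

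The paper avoids this by doing a direct $001$-induction on $\al$ rather than on the ascendance chain. The base case $\al\in\Ax$ holds by definition of $\Resref_b$ (including the $t(\al)=x$ subcase, where the statement is literally the defining clause). The inductive cases $t(\al)=\ly$ and $t(\al)=\arob$ each unwind one descent step in the referent computation on both sides, invoking Lemma~\ref{l:pos-and-res-hybrid} in the $\lambda y$ case with $c=k\cdot c_0$ to track $\pos{\al\cdot 0,y,k}$ through residuation. The only genuinely delicate subcase is $t(\al)=\arob$ with $\ovlal=b$, where the induction hypothesis is applied to $\al\cdot 1\cdot 0$, jumping over the destroyed $\app$- and $\abs$-nodes. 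This organization confines the ``continue ascending inside the argument derivation'' phenomenon entirely to the base case, where it is true by fiat, instead of letting it contaminate a global commutation lemma.
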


\begin{proof}
By $001$-induction on $\al$.\\

\begin{itemize}
\item When $\al\in \Ax$, it stems from the definition of 
$\Resref_b$, whether $t(\al)=x$ (\ie $\ovl{\bind^P(\al)}=b\cdot 1$)or not.\\

\item Assume $t(\al)=\ly$. Since 
$\ovlal\neq b\cdot 1$, then $y\neq x$, $\al'=\Res_b(\al)$ is defined, $y\neq x$ and $t'(\ovl{\al'})=\ly$. Of course, $\ovlal \cdot 0=b\cdot 1$ is impossible, so the induction hypothesis may be applied to $\al\cdot 0$.
\begin{itemize}
\item Subcase $c=1\cdot c_0$: we write $\Res_b(\al,\, 1\cdot c)=(\al',\, 1\cdot c_0')$ where $c'_0$ is the unique position such that. 
$1\cdot c_0'=\Res_{b|\al}(c)$, so that we have $\QRes_b(\al\cdot 0,\, c_0)=(\al'\cdot 0,\, c'_0)$. \\
By definition,
$\refe{\al,\, 1\cdot c_0}=\refe{\al\cdot 0,\, c_0}$ and $\refp{\al',\,1\cdot c'_0}=\refp{\al'\cdot 0,\,c'_0}$, 
By IH, $\refp{\QRes_b(\al\cdot 0,\,c_0)}=\Resref_b(\refe{\al\cdot 0, \,c_0})$. So $\refp{\al'\cdot 0,\, c_0'}=\Resref_b(\refp{\al\cdot 0,\, c_0})$. Thus, $\refp{\al',\,1\cdot c'_0}=\Resref_b(\refe{\al,\, 1\cdot c_0})$, as expected.
\item Subcase $c=k\cdot c_0$ with $k\in \Ax_{\al\cdot 0}(y)$: we must prove that
$\refp{\Res_b (\al,\,k\cdot c_0)}=\Resref_b(\refe{\al,\, k\cdot c_0})$ and 
we set  $\al_0=\pos{\al\cdot 0,\, y,\, k}$. Thus, $\refe{\al,k}=(\al_0,y,k)$
and  $\refe{\al,\, k\cdot c_0}= (\al_0,\, c_0)$ for any $c_0 \in \suppmut{\ttT(\al)}$.\\
Since $y\neq x$, we have $\Resref_b(\al_0)=\al'_0$ (\ie $\Resref_b(\al_0,y,k)=(\al'_0,y,k)$) and 
 $\Resref_b(\al_0,\,c_0)=(\al'_0,\,c_0)$ for all $c_0\in \suppmut{\ttT(\al_0)}$, where $\al'_0=\Res_b(\al_0)$.\\
On the left-hand side, $\Res_b(\al,\,k\cdot c_0)=(\al',\,k\cdot c_0)$ (by
definition of $\Res_{b|\al}$) and $\refp{\al',\,k\cdot \gam}=
\posp{\al'\cdot 0,\,y,\,k\cdot \gam}=(\eta',\,\gam)$, according to the
Lemma\;\ref{l:pos-and-res-hybrid}. The desired equality is proven.\\
\end{itemize}

\item Assume $t(\al)=\arob$. We set $(\al',\,c')=\QRes_b(\al,\, c)$.
\begin{itemize}
\item Subcase $\ovl{\al}\neq b$: then $\ovlal\cdot 1\neq  b=1$ and the induction hypothesis can be applied to $\al\cdot 1$.
Since $\ovl{\al}\neq b$,  we actually have $\Res_b(\al)=\al'$,
$t'(\al')=\arob$ and $\QRes_b(\al\cdot 1,\, 1\cdot c)=(\al'\cdot 1,\, 1\cdot c')$ by definition of $\QRes_b$.\\
By definition of referents, $\refe{\al,\,c}=\refe{\al\cdot 1,\,1\cdot c}$
and $\refp{\al',\,c'}=\refp{\al'\cdot 1,\,1\cdot c'}$.\\
By IH, $\refp{\QRes_b(\al\cdot 1,\, 1\cdot c)}=\Resref_b(\refe{\al\cdot 1, \,1\cdot c})$. So $\refp{\QRes_b(\al,\,c)}=\Resref_b (\refe{\al,\,c})$ as expected.
\item Subcase $\ovl{\al}=b$: by IH,  $\refp{\QRes_b(\al\cdot 10,\,c)}=\Resref_b(\refe{\al\cdot 10,\,c})$. Since $\refe{\al,\,c}= \refe{\al\cdot 10,\,c}$ and $\QRes_b(\al\cdot 10,\,c)=(\al,\,c')= \QRes_b(\al,\, c)$, the desired equality is true for $(\al,c)$.
\end{itemize}
\end{itemize}
\end{proof}

We may define \textbf{residuation for threads}: for all $\theta\in \ThrE{P}$, we set $\Res_b(\theta)=\Resref_b(\refe{\theta})$ ($\Res_b(\theta)$ is not defined when $\Resref_b(\refe{\theta})$ is not). By Lemma\;\ref{l:ref-and-res}, if $(\al,c)\in \theta),\ \ovlal \neq b\cdot 1$ and $\Res_b(\theta)$ is defined, then $\QRes_b(\al,c)\in \Res_b(\theta)$ and moreover:

\begin{corollary}
\label{corol:thr-and-res}
Let $(\al,\, c)\in \bisuppmut{P}$  such that $\ovlal\neq b\cdot 1$.\\
Then $\Res_b(\thr{\al,\, c})=\thrp{\QRes_b (\al,\,c)}$.
\end{corollary}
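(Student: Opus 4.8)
The plan is to obtain Corollary~\ref{corol:thr-and-res} directly from Lemma~\ref{l:ref-and-res}, the whole content of the proof being to unwind the definition of residuation for threads and to re-read the resulting referent edge of $P'$ as the $\ThrE{P'}$-class it determines. Set $\theta=\thr{\al,c}$. Since the hypothesis $\ovlal\neq b\cdot 1$ is exactly the one under which $\QRes_b(\al,c)$ and the two sides of Lemma~\ref{l:ref-and-res} are defined, I would not need to argue separately about domains: invoking that lemma supplies all the definedness that is required.

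First I would recall that residuation for threads was set as $\Res_b(\theta)=\Resref_b(\refe{\theta})$, and that the referent of a thread is the referent of any of its members, so that $\refe{\theta}=\refe{\al,c}$ because $(\al,c)\in\theta$ (this uses $\refe{\theta}=\refe{\tte}$ for any $\tte\in\theta$, which is legitimate since $\tte_1\equiv\tte_2$ forces $\refe{\tte_1}=\refe{\tte_2}$). Substituting gives $\Res_b(\thr{\al,c})=\Resref_b(\refe{\al,c})$, and Lemma~\ref{l:ref-and-res} then rewrites the right-hand side as $\refp{\QRes_b(\al,c)}$, so that $\Res_b(\thr{\al,c})=\refp{\QRes_b(\al,c)}$.

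It then remains to pass from the referent edge $\refp{\QRes_b(\al,c)}$ to the thread $\thrp{\QRes_b(\al,c)}$. This is the one point deserving care: by its very definition $\Res_b(\theta)$ is produced as a referent edge of $P'$, whereas the statement is phrased in terms of $\ThrE{P'}$-classes. Reconciling the two levels is exactly the relation $\tte\equiv\refe{\tte}$ read in $P'$, applied to $\tte=\QRes_b(\al,c)$: it gives $\QRes_b(\al,c)\equiv\refp{\QRes_b(\al,c)}$, so these two bipositions lie in the same thread and $\thrp{\refp{\QRes_b(\al,c)}}=\thrp{\QRes_b(\al,c)}$. Reading $\Res_b(\theta)$ as the thread of $P'$ containing its referent edge therefore yields precisely $\Res_b(\thr{\al,c})=\thrp{\QRes_b(\al,c)}$. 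The ``main obstacle'' here is hardly one: it is simply to be explicit about this identification, which also matches the remark stated just before the corollary, namely that $\QRes_b(\al,c)\in\Res_b(\theta)$.
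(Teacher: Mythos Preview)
Your proof is correct and follows exactly the approach the paper intends: the corollary is stated immediately after the definition $\Res_b(\theta)=\Resref_b(\refe{\theta})$ and the observation that Lemma~\ref{l:ref-and-res} yields $\QRes_b(\al,c)\in\Res_b(\theta)$, so the paper is simply recording this consequence without a separate proof. Your only addition is to make explicit the identification of the referent edge $\Res_b(\theta)$ with the thread it represents, which the paper leaves implicit.
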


\subsection{Collapsing Threads}

\begin{lemma}
\label{l:collapsing-brothers}
Let $\theta_1$ and $\theta_2$ two brother threads.
If $\theta'_i=\Res_b(\theta_i)\; (i=1,2)$ are defined, then $\theta'_1$ and $\theta'_2$ are also brother threads in $P'$.
\end{lemma}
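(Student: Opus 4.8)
The statement asserts that residuation of threads preserves brotherhood: if $\theta_1$ and $\theta_2$ are brother threads in $P$ and their residuals $\theta'_i = \Res_b(\theta_i)$ exist in $P'$, then $\theta'_1$ and $\theta'_2$ are brothers in $P'$. The plan is to reduce the statement about threads to a statement about their referent edges, using $\Res_b(\theta) = \Resref_b(\refe{\theta})$, and then to perform a case analysis on the nature of the referent edges, exploiting the fact that being brother edges is an extremely local condition (sharing a node in a tree or forest).

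First I would recall, from Lemma\;\ref{l:brother-ref}, that $\theta_1$ and $\theta_2$ are brother threads iff their referents $\ttr_1 = \refe{\theta_1}$ and $\ttr_2 = \refe{\theta_2}$ are brother edges. So it suffices to show that if $\ttr_1, \ttr_2$ are brother edges whose residuals $\ttr'_i := \Resref_b(\ttr_i)$ are both defined, then $\ttr'_1$ and $\ttr'_2$ are brother edges. Since brother edges share a common node $a$ (or are two axiom edges, by Remark\;\ref{rmk:brother-e}), there are only three shapes to check, matching the three clauses in the definition of $\Resref_b$: (i) $\ttr_i = \al\cdot k_i \in \suppmut{P}$ for a common $\al$ with $\ovl{\al}\neq b$; (ii) $\ttr_i = (\al_i, t(\al_i), \trP{\al_i})$ for two axiom rules, neither of which types the redex variable $x$; (iii) $\ttr_i = (\al, c\cdot k_i)$ for a common axiom position $\al$ with a common prefix $c$. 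In case (i), $\Res_b$ is a 01-isomorphism on the relevant part of the support (it maps $\al\cdot k_i$ to $\Res_b(\al)\cdot k_i$ by the ``$\hearts$'' / outside-the-redex paradigms), so $\ttr'_1, \ttr'_2$ share the node $\Res_b(\al)$ and differ on the last track, hence are brothers. In case (ii), whether $t(\al_i)\neq x$, we get $\ttr'_i = (\Res_b(\al_i), t(\al_i), \trP{\al_i})$, which are again two (distinct) axiom edges of $P'$, hence brothers by Remark\;\ref{rmk:brother-e}.

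Case (iii) is where the only genuine subtlety lies, because $\Resref_b$ on an inner referent at an axiom position $\al$ behaves differently according to whether $t(\al) = x$. If $t(\al)\neq x$, then $\al' = \Res_b(\al)$ is defined and $\Resref_b(\al, c\cdot k_i) = (\al', c\cdot k_i)$ by the second sub-clause of clause (iii), so the two residuals share the node $(\al', c)$ and are brothers immediately. If $t(\al) = x$ — so $\al$ is an $\ax$-rule typing the redex variable — then by clause (iii) we instead have $\ttr'_i = \refp{\QRes_b(\al, c\cdot k_i)}$. Here I would invoke the structure of $\QRes_b$ on bipositions at such an $\al$: by definition $\QRes_b(\al) = \QRes_b(a\cdot 1\cdot 0 \cdot a_{\kL}) = a\cdot a_{\kL}$ where $\al = a\cdot 1\cdot 0\cdot a_{\kL}$ with $\ovl{a} = b$, and the associated type isomorphism $\QRes_{b\,|\,\al}$ from Lemma\;\ref{lem-def:QRes-bip-hybrid} is $\phi_{a\,|\,\kL}$ (the restriction of the interface isomorphism). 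Since $\phi_{a\,|\,\kL}$ is a genuine type isomorphism, it is monotone for the prefix order and maps $c\cdot k_1$ and $c\cdot k_2$ (which share the prefix $c$) to two positions sharing the common prefix $\phi_{a\,|\,\kL}(c)$ with differing last tracks; transporting to the referent via $\refp{\cdot}$, the two residual referents $\ttr'_1, \ttr'_2$ again share a node and differ only in the final track, so they are brother edges in $P'$. Collecting the three cases and applying Lemma\;\ref{l:brother-ref} in the reverse direction to $\ttr'_1, \ttr'_2$ yields that $\theta'_1 = \thrp{\ttr'_1}$ and $\theta'_2 = \thrp{\ttr'_2}$ are brother threads, as required.

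The main obstacle I anticipate is bookkeeping in case (iii) with $t(\al) = x$: one must be careful that $\refp{\QRes_b(\al, c\cdot k_i)}$ really is well-defined for both $i$ (it is, since $\Res_b(\theta_i)$ is assumed defined, so $\theta_i$ is not an axiom thread of the redex variable that gets destroyed) and that the two referents land at the \emph{same} outer position of $P'$ — this is where monotonicity and injectivity of the interface isomorphism $\phi_a$, already guaranteed by the 01-isomorphism structure, do the work. Everything else is a routine unfolding of the definitions of $\Resref_b$, $\refe{\cdot}$, and $\QRes_b$, plus Lemma\;\ref{l:ref-and-res} / Corollary\;\ref{corol:thr-and-res} to connect $\Resref_b(\refe{\theta})$ with $\thrp{\QRes_b(\cdot)}$.
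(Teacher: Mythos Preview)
Your proposal is correct and follows exactly the approach the paper intends: the paper's own proof is the one-liner ``Consequence of Lemma\;\ref{l:brother-ref} and of the definition of $\Resref$ in Sec.\;\ref{s:res-edges}'', and your argument is precisely the unfolding of that sentence---reduce brother threads to brother referents via Lemma\;\ref{l:brother-ref}, then check case by case on the clauses defining $\Resref_b$ that brother referent edges are sent to brother referent edges. Your handling of the only nontrivial subcase (inner referent at an axiom leaf typing the redex variable $x$), using that $\QRes_{b\mid\al}=\phi_{a\mid \kL}$ is a 01-isomorphism and hence sends the pair $c\cdot k_1,\,c\cdot k_2$ to positions sharing the prefix $\phi_{a\mid\kL}(c)$ with distinct last tracks, is the right computation and is exactly what the paper leaves implicit.
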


\begin{proof}
Consequence of Lemma\;\ref{l:brother-ref} and of the definition of $\Resref$ in Sec.\;\ref{s:res-edges}.
\end{proof}

\begin{lemma}
\label{l:collapsing-edge-threads}
Assume $\thL \rrewc{\al} \thR$.
\begin{itemize}
\item If $\ovlal\neq b$, then $\al':=\Res_b(\al)$ is defined, as well as $\Res_b(\thL)$ and $\Res_b(\thR)$, and $\Res_b(\thL) \rrewc{\al'} \Res_b(\thR)$  (the relation $\rrewc{\al'}$ is meant \wrt $P'$ instead of $P$).
\item If $\ovlal=b$, then $\Res_b(\thL)$ is defined iff  $\Res_b(\thR)$ is.\\
In that case, $\Resref_b(\thL)=\Resref_b(\rR)$.
\end{itemize}
\end{lemma}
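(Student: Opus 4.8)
The plan is to unfold the consumption relation into bipositions and then push it through the residuation machinery of \S\ref{s:res-deriv}--\S\ref{s:res-edges}. By definition of $\rrewc{\al}$ there are $\p_1\in\thL$ and $\p_2\in\thR$ with $\p_1\rewc{\al}\p_2$, so $t(\al)=\arob$, $\p_1=(\al\cdot 1,\kL\cdot c)$ with $\kL,\kR\geqs 2$ and $\phi_\al(\kL\cdot c)=\kR\cdot c'$, and either $c\neq\epsi$ and $\p_2=(\al\cdot\kR,c')$ with $c'\neq\epsi$, or $c=\epsi$ and $\p_2=\al\cdot\kR$ (the argument edge). I would then split on whether $\ovlal=b$, i.e.\ whether $\al$ is the redex being fired.

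\emph{Case $\ovlal\neq b$.} First I would observe from the definition of $\Res_b$ that $\al':=\Res_b(\al)$ and $\QRes_b(\al\cdot 1)=\al'\cdot 1$ are defined, as is $\QRes_b(\al\cdot\kR)=\al'\cdot\kR$ (the only subtlety being when $\al\cdot\kR$ is itself a redex-variable axiom, where one still gets $\QRes_b(\al\cdot\kR)=a\cdot a_{k_0}=\al'\cdot\kR$). By \S\ref{s:res-interface} the residual interface at $\al'$ is $\phi'_{\al'}=\ResI_{b|\al}(\phi_\al)=\ResR_{b|\al}\circ\phi_\al\circ\ResL_{b|\al}^{-1}$, with $\ResL_{b|\al}=\Tl(\Res_{b|\al\cdot 1})$ and $\ResR_{b|\al}(k\cdot\gam)=k\cdot\Res_{b|\al\cdot k}(\gam)$. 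Unfolding the definition of $\QRes_b$ on bipositions shows that $\QRes_b(\p_1)$ is exactly the point $\ResL_{b|\al}(\kL\cdot c)$ of $\Lst'(\al')$ and $\QRes_b(\p_2)$ the point $\ResR_{b|\al}(\kR\cdot c')$ of $\Rst'(\al')$ (and in the argument-edge subcase, the inner edge $(\al'\cdot 1,\ResL_{b|\al}(\kL))$ and the argument edge $\al'\cdot\kR$). The commuting square then gives $\phi'_{\al'}(\QRes_b(\p_1))=\QRes_b(\p_2)$, i.e.\ $\QRes_b(\p_1)\rewc{\al'}\QRes_b(\p_2)$ in $P'$. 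Finally, since the outer positions of $\p_1$ and $\p_2$ collapse to words different from $b\cdot 1$, Corollary\;\ref{corol:thr-and-res} yields $\thrp{\QRes_b(\p_i)}=\Res_b(\thr{\p_i})$ (and, in the argument-edge subcase, the corresponding identity holds directly from the definition of $\Resref_b$), whence $\Res_b(\thL)\rrewc{\al'}\Res_b(\thR)$, as claimed.

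\emph{Case $\ovlal=b$.} Here $\al\cdot 1$ is an $\abs$-rule typing $\lx.r$ with source $\Lst(\al)=\sSk$, so $\p_1$ ascends through that $\abs$-rule to a left biposition at a redex-variable axiom and $\thL$ is consumed \emph{negatively} at $\p_1$. By Lemma\;\ref{l:consumption-ax-and-arg-threads} (both directions, which apply because of this negativity), $\thL$ is an axiom thread iff $\thR$ is an argument thread, which happens exactly when $c=\epsi$; in that case $\refe{\thL}$ is a redex-variable axiom typing $x$ while $\refe{\thR}=\al\cdot\kR$ with $\ovl{\al}=b$, so by the definition of $\Resref_b$ in \S\ref{s:res-edges} both $\Res_b(\thL)$ and $\Res_b(\thR)$ are undefined -- settling the ``defined iff''. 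For $c\neq\epsi$, I would trace: $\Asc{\p_1}=(a_\sharp,x,\kL\cdot c)$ with $a_\sharp=\al\cdot 1\cdot 0\cdot a_\kL$ of axiom track $\kL$, hence $\refe{\thL}=(a_\sharp,c)$; since $t(a_\sharp)=x$, the definition of $\Resref_b$ gives $\Res_b(\thL)=\refp{\QRes_b(a_\sharp,c)}$, and by the relevant clause of Lemma\;\ref{lem-def:QRes-bip-hybrid} ($\QRes_b(a_\sharp)=\al\cdot a_\kL$, $\QRes_{b|a_\sharp}=\phi_{\al|\kL}$) together with $\kR\cdot\phi_{\al|\kL}(c)=\phi_\al(\kL\cdot c)=\kR\cdot c'$, this equals $\refp{(\al\cdot a_\kL,c')}$. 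On the other hand $\ovl{\al\cdot\kR}=b\cdot 2\neq b\cdot 1$, so Corollary\;\ref{corol:thr-and-res} applies to $\p_2$, and since the $\clubs$-paradigm gives $\Res_b(\al\cdot\kR)=\al\cdot a_\kL$ with $\Res_{b|\al\cdot\kR}=\ttid$, we obtain $\Res_b(\thR)=\thrp{(\al\cdot a_\kL,c')}$, whose referent is again $\refp{(\al\cdot a_\kL,c')}$. Hence $\Resref_b(\thL)=\Resref_b(\thR)$.

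The step I expect to be the main obstacle is the bookkeeping in the case $\ovlal=b$ (and in the redex-variable-axiom subcase of $\ovlal\neq b$): one must compute the \emph{referent-level} residual $\Resref_b$ rather than an ordinary node residual, and verify that two a priori unrelated threads -- an inner thread whose referent sits at a destroyed redex-variable axiom, and a thread whose referent lives inside the surviving argument derivation -- are collapsed by $\Resref_b$ onto one and the same referent edge. The decisive ingredient is the clause of Lemma\;\ref{lem-def:QRes-bip-hybrid} identifying $\QRes_{b|\al}$ with the interface restriction $\phi_{a|\kL}$ at redex-variable axioms; granting that, the argument is pure diagram-chasing and case analysis on the position of $\al$ relative to $b$, with no further idea needed.
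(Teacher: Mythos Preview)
Your proof is correct and follows essentially the same route as the paper's. In the case $\ovlal\neq b$ you spell out in detail what the paper compresses into ``consequence of the definition of the residual interface and of Corollary~\ref{corol:thr-and-res}'': the commuting square for $\phi'_{\al'}=\ResR_{b|\al}\circ\phi_\al\circ\ResL_{b|\al}^{-1}$ is exactly the content of \S\ref{s:res-interface}, and your invocation of Corollary~\ref{corol:thr-and-res} matches the paper's.

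In the case $\ovlal=b$, the one stylistic difference is how you establish the ``defined iff defined'' part. The paper argues directly that $\rewc{a}$ restricts to a bijection between the root inner edges $(a\cdot 1,k)$ and the argument edges $a\cdot k'$, and that these are precisely the edges whose referents lose their residual. You instead first observe that $\thL$ is necessarily consumed \emph{negatively} here (because $t(\al\cdot 1)=\lx$ forces $\Asc{\p_1}$ into the $x$-context), and then invoke both directions of Lemma~\ref{l:consumption-ax-and-arg-threads} to get the equivalence ``$\thL$ axiom thread $\Leftrightarrow$ $\thR$ argument thread $\Leftrightarrow$ $c=\epsi$''. This is the same argument at a slightly higher level of packaging; the negativity observation is implicit in the paper's characterisation of when $\Resref_b(\thL)$ is undefined. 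Your computation of the common residual via $\QRes_{b|a_\sharp}=\phi_{\al|\kL}$ and the $\clubs$-identity $\Res_{b|\al\cdot\kR}=\ttid$ reproduces the paper's chain of equalities $\thrp{\QRes_b(a\cdot 10\cdot a_{\kL},\cL)}=\thrp{a\cdot a_{\kL},\cR}=\thrp{\Res_b(a\cdot\kR,\cR)}$ verbatim.
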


\begin{proof}
The hypothesis means that there are there are $\kL\cdot \cL,\; 
\kR\cdot \cR$ such that $\thL:(\al\cdot 1,\kL\cdot \cL)$, $\thR:(\al\cdot \kR,\cR)$ and $\phi(\al\cdot 1,\kL \cdot \cL)=(\al \cdot \kR,\cR)$ \ie $\phi_{\al}(\kL\cdot \cL)=\kR\cdot \cR$ and in particular, $\kR=\rho_a(\kL)$ and $\cR=\phi_a\rstr{\kL}(\cL)$.


\begin{itemize}
\item Case $\ovlal\neq b$:  consequence of the definition of the residual interface (Sec.\;\ref{s:res-interface}) and of Corollary\;\ref{corol:thr-and-res}.
\item Case $\ovlal=b$: we write $a$ instead of $\al$ and $\Resref_b(\thL)$ is not defined iff $\refe{\thL}=(\al_0,\,x,k)$ with $\al_0 \in \Axl{a}$ and $k=\tr{\al_0}$ (see Sec.\ref{s:res-deriv} for notation $\Axl{a}$). Moreover, $\Res_b(\thR)$ is not defined iff $\thR$  is the thread of an argument edge $a\cdot k$ of the redex.  Moreover, $\rewc{a}$ induces a bijection from $\set{(a\cdot 1,k)\in \bisuppmut{P}\,|\,k\in \Nmzo}$ to $\set{a\cdot k'\in \suppmut{P}\,|\,k'\in \Nmzo}$(see Sec.\;\ref{s:consum-interface}) \ie the set of left edges whose referent is a $(a_0,x,k)$ for some $a_0\in \Axl{a}$ to the set of argument edges of $a$, which entails the first part of the lemma. 

We assume now that $\Res_b(\thL)$ and $\Res_b(\thR)$ are both defined. We must then have $\refe{\thL}=(a\cdot 10 \cdot a_{\kL},\cL)$.
Thus, $\Res_b(\thL)=\thrp{\Resref_b(a\cdot 10 \cdot a_{\kL},\, \cL)}=\thrp{\QRes_b(a\cdot 10\cdot a_{\kL},\, \cL)}=\thrp{a\cdot a_{\kL},\, \cR}= \thrp{\Res_b(a\cdot \kR,\, \cR)}=\Res_b(\thr{a\cdot\kR,\, \cR})=\Res_b(\thR)$. The second equality is by definition of $\Res_b$, the second-to-last one by Corollary\;\ref{corol:thr-and-res}.
\end{itemize}
\end{proof}

Assume that we have
$\thL \lomin \rrewc{a} \thR$ for some $\thL,\,\thR \in \ThrE{P}$. We set $\rL=\refe{\thL}$ and $\rR=\refe{\thR}$.
Thus, $\rL=(\al_0,y,k)$ or $\rL=(\al_0,c)$ for some $\al_0\in \Ax$ ,$y=t(\al_0),\; k=\tr{\al_0}$ or  $\rL=(\al_0,\,\, c)$ for some $\al_0 \in \Ax$ and 
$c \in \suppmut{\ttT(\al_0)}$  (in this second case, we set $y=t(\al_0)$).
There is a maximal $\al< \al_0$ such that $t(\al)=\ly$.
We define the \textbf{collapsing strategy \wrt $\thL^{\ominus}$} by induction on $h:=|\al| - |a|$:
\begin{itemize}
\item Case $h=1$: then $\al=a\cdot 1$ and there is a redex in $t$ at position $\ovla$. We fire it and the strategy is completed. By Lemma;\ref{l:collapsing-edge-threads}, $\Res_b(\thL)=\Res_b(\thR)$.
\item Case $h>1$: since $\thL \lomin \rrewc{\al} \thR$, there
is a maximal $a_0 <\al_0 $ such that $t(a_0)=\arob$. Since $a_0$ is maximal, then  $a_0\cdot 0 \in \supp{P}$ \ie
$t\rstr{a_0}$ is a redex. We
fire it,  so that, by Lemma\;\ref{l:collapsing-edge-threads}, $\Res_b(\thL)\rrewc{a}\Res_b(\thR)$, but the height $h$ has decreased by 2, and we go on with the strategy.
\end{itemize}
Let $\rs$ be the sequence of reductions representing
the collapsing strategy.
Lemma\;\ref{l:collapsing-edge-threads} entails that $\Resref_{\rs}(\rL) =\Resref_{\rs}(\rR)$. Thus:

\begin{lemma}
\label{l:collapsing-strategy-hybrid}
If $\thL\lomin \rrew \thR$, then, there is a reduction sequence $\rs$ such that $\Res_{\rs}(\thL)=\Res_{\rs}(\thR)$.
\end{lemma}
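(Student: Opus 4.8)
The plan is to carry out in full the \emph{collapsing strategy} sketched just above the statement, checking that it is well defined, that it terminates, and that it yields the required reduction sequence. Suppose $\thL \lomin \rrewc{a} \thR$ for some $a \in \suppat{P}$, and set $\rL = \refe{\thL}$, $\rR = \refe{\thR}$. Since $\thL$ is left-consumed \emph{negatively} at $a$, inspecting the definition of $\rewc{a}$ together with the notion of syntactic polarity (Definition\;\ref{d:syntactic-polarity-hybrid}) and Lemma\;\ref{l:negative-thread-a-one-k} shows that $\rL$ is anchored above an axiom rule at some position $\al_0$ typing an occurrence of a bound variable $y$, whose binder occurs at $\al$, the maximal position with $\al < \al_0$ and $t(\al) = \lambda y$; moreover $a \cdot 1 \leqslant \al$. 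First I would take $h := |\al| - |a|$ as the termination measure and distinguish the two cases of the strategy.

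If $h = 1$, then $\al = a \cdot 1$, so, writing $b := \ovla$, the subterm $t\rstr{b}$ is a redex $(\lambda y.\,r)\,s$. Contracting it yields a one-step sequence $\rs$, and Lemma\;\ref{l:collapsing-edge-threads}, applied at $b$ (which is exactly the consuming position), gives at once $\Resref_b(\rL) = \Resref_b(\rR)$, \ie $\Res_b(\thL) = \Res_b(\thR)$, and we are done. If $h > 1$, I would let $a_0$ be maximal with $a_0 < \al_0$ and $t(a_0) = \arob$. The key point here is a structural claim, to be read off the arrow type carried at position $a \cdot 1$ and the typing rules of $\ttSh$: maximality of $a_0$ forces $t\rstr{a_0}$ to be a redex --- the topmost redex of a \emph{redex tower} as in Fig.\;\ref{fig:redex-tow-hybrid}. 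Contracting $t\rstr{a_0}$ and applying Lemma\;\ref{l:collapsing-edge-threads} (now at a reduction position strictly above, hence distinct from, the consuming one) preserves a relation of the form $\thL \lomin \rrewc{a} \thR$ between the residual threads, while $h$ drops by $2$, one abstraction and one application of the tower being destroyed. Iterating, one reaches $h = 1$ after finitely many steps and concludes as in the base case.

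It then remains only to glue the steps together. Let $\rs$ be the concatenation of all the $\beta$-steps fired by the strategy; an induction on the length of $\rs$, using Lemma\;\ref{l:collapsing-edge-threads} at each step for the consumption relation and Lemma\;\ref{l:brother-ref} together with the definition of $\Resref$ for the bookkeeping of referents, yields $\Resref_{\rs}(\rL) = \Resref_{\rs}(\rR)$, hence $\Res_{\rs}(\thL) = \Res_{\rs}(\thR)$, which is the claim.

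The step I expect to be the genuine obstacle is the structural lemma invoked in the case $h > 1$: that the maximal applicative node strictly below $\al_0$ is always the root of a redex, and that contracting it decreases $h$ by exactly $2$. This is the redex-tower phenomenon, resting on the fact that the type at $a \cdot 1$ is an arrow whose target is ultimately ``consumed'' by the abstraction $\lambda y$ lying deeper down; one argues that every applicative node on the path from $a \cdot 1$ to $\al$ must be applied to a $\lambda$-abstraction, for otherwise the head of $t\rstr{a \cdot 1}$ could not be given an arrow type in $\ttSh$. Making this rigorous --- and, in parallel, controlling the residual positions of $a$, $\al$ and $\al_0$ along the reduction so that the measure really does decrease --- is the delicate bookkeeping; once it is in place, the remaining arguments are routine (full details in \S\;13.5 of \cite{VialPhd}).
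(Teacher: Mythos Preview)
Your proposal is correct and follows essentially the same approach as the paper's proof: same measure $h = |\al| - |a|$, same case split on $h = 1$ versus $h > 1$, same appeal to Lemma~\ref{l:collapsing-edge-threads} to collapse the threads (in the base case) or to preserve the consumption relation while $h$ drops by~$2$ (in the inductive step), and the same deferral of the redex-tower structural claim to \cite{VialPhd}. The paper's argument is terser---it simply asserts that the maximal $\arob$-node below the referent yields a redex and that firing it decreases $h$ by~$2$---but your write-up is a faithful elaboration of that sketch, including your correct identification of the structural lemma as the genuine obstacle.
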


Lemma\;\ref{l:collapsing-strategy-hybrid} is extremely important: it entails that we can discard negative left-consumption by a finite sequence of reduction.


\ignore{ 
\subsection{Threads and Residuation}

\begin{lemma}
Let $P,P'$ be operable derivations such that $P \rewb{b} P'$. Let $\tte_1,\;\tte_2\in \Edg{P}$ such that $\tte_1\equiv \tte_2$ \pierre{and $\QRes_b(\tte_1)$ and $\QRes_b(\tte_2)$ are both defined.}
And in that case, $\QRes_b(\tte_1) \equiv \QRes_b(\tte_2)$, where $\equiv$ is meant here \wrt $P'$.
\end{lemma}

\begin{proof}
\pierre{on a besoin d'une notation pour les occurrences de la variable du redex\\
d'autre part, les quasi-residus de bip droites st tjrs definis: le faire remarquer. Pas ici}
We first recall that, for all $\al\in \supp{P}$
\begin{itemize}
 \item If $t(\al)=\arob$ and $\ovlal\neq b$, then $\Res_b(\al)$ is defined.
 \item If $t(\al)=\ly$ and $\ovlal\neq b\cdot 1$, then $\Res_b(\al)$ is defined.
 \item If $t(\al)=y$ and $\al$ may not be written $a\cdot 10\cdot a_k$ with $\ovla=b$ and $k\in \Trl{a}$, then $\Res_b(\al)$ is defined.
\end{itemize}
We also notice that quasi-residuals of \textit{mutable right} bipositions are always defined.

To prove the statement, we reason by induction on $\equiv$ and we just have to handle the cases $\tte_1\rewa \tte_2$ and $\tte_1\rewp\tte_2$.
\begin{itemize}
\item Assume $\al \in \suppat{P},\; c\in \suppmut{\ttT(\al)},\; \p_1=(\al,c)$ and $\p_2=(\al\cdot 1,1\cdot c)$, so that $\pi_1\rewa \p_2$.
\begin{itemize}
   \item Subcase $\ovlal\neq b$: we may have $\ovl{\al\cdot 1}=b$ or $a\cdot 1\in \blut$ or neither of those. By case analysis, we check that $\Res_b(\p_1)\rewa \QRes_b(\p_2)$.
   \item Subcase $\ovlal=b$: then $\ovlal{\al\cdot 1}=b\cdot 1$. Then $\QRes_b(\p_1)=\QRes_b(\p_2)=\p_1$.
\end{itemize}
\item Assume $\al \in \suppat{P},\; t(\al)=\ly,\; 1\cdot c\in \suppmut{\ttT(\al)},\; \p_1=(\al,1\cdot c)$ and $\p_2=(\al\cdot 0,c)$, so that $\p_1\rewa \p_2$.
\begin{itemize}
  \item Assume that $\ovlal\neq b\cdot 1$: we may have $\ovl{\al\cdot 0}=b$ or $\al \cdot 0 \in \blut$ or neither or those. By case analysis, we check that $\QRes_b(\p_1)\rewa \QRes_b(\p_2)$.
  \item Assume that $\ovlal=b\cdot 1$ \ie $\al=a\cdot 1$ with $\ovla=b$: then $\QRes_b(\p_1)=\QRes_b(\p_2)=(a,c)$.
\end{itemize}
\item Assume $\al \in \suppat{P},\; t(\al)=\ly,\; k\cdot c\in \suppmut{\ttT(\al)}$ with $k\geqslant 2$, $\p_1=(\al,k\cdot c)$ and $\p_2=(\al\cdot 0,x,k\cdot c)$, so that $\p_1\rewa \p_2$.
\begin{itemize}
  \item Assume that $\ovlal\neq b\cdot 1$: we may have $\ovl{\al\cdot 0}=b$ or $\al \cdot 0 \in \blut$ or neither or those. By case analysis, we check that $\QRes_b(\p_1)\rewa \QRes_b(\p_2)$.
  \item Assume that $\ovlal=b\cdot 1$ \ie $\al=a\cdot 1$ with $\ovla=b$: \pierre{then $\QRes_b(\p_1)=\QRes_b(\p_2)=(a\cdot a_k,\phi_a\rstr{k}(c))$.}
\end{itemize}
\item
\item Assume that $t(\al)=y$ for some $y\in \TermV$, $k=\tr{a}$, $c\in \supp{\ttT(\al)},\ \p_1=(\al,k\cdot c)$ and $\p_2=(\al,c)$, so that $\p_1\rewp \p_2$.
\end{itemize}
\end{proof}

}

\subsection{Conclusion of the Proof}
\label{s:proof-end-rep-th}

Thanks to Lemmas\;\ref{l:brother-ref},\; \ref{l:collapsing-edge-threads} and the collapsing strategy, if a brother chain $\calC$ existed, then we could produce a brother chain $\calC'$ written
$\theta_0 \lrrewc{a_0} \theta_1\lrrewc{a_1}\ldots \lrab{a_{n-1}} \theta_n$ (where $\theta_0$ and $\theta_n$ are brother threads) such that no thread is left-consumed negatively. We recall from that the chain $\calC'$ is said to be \textbf{normal}. Thus, if we prove Proposition\;\ref{prop:brother-chains-do-no-exist} below, Theorem\;\ref{th:rep} will be established:

\begin{proposition}
\label{prop:brother-chains-do-no-exist}
There is no \textit{normal} brother chain.
\end{proposition}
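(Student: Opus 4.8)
The plan is to derive a contradiction from the existence of a normal brother chain by tracking applicative depth. Recall that a normal brother chain is a chain $\theta_0 \lrrewc{a_0} \theta_1 \lrrewc{a_1} \ldots \lrrewc{a_{n-1}} \theta_n$ in which no thread is ever left-consumed negatively, with $\theta_0$ and $\theta_n$ brothers. The first step is to use Lemma\;\ref{l:uniqueness-consumption-hybrid} (Uniqueness of Consumption) to straighten the chain: since in a normal chain no left-negative consumption occurs, and since a thread can be consumed on each side in at most one way, I would argue that the chain cannot ``turn back'' on itself. More precisely, each $\lrrewc{a_i}$ must in fact be a forward step $\rrewc{a_i}$ with the left thread consumed \emph{positively} --- were some step a backward step $\werrc{a_i}$, uniqueness of consumption together with the fact that a consumed biposition has no descendant would force the previous step to be a left-negative consumption, contradicting normality. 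So the chain is of the form
\[
\theta_0 \loplus \rrewc{a_0} \theta_1 \loplus \rrewc{a_1} \ldots \loplus \rrewc{a_{n-1}} \theta_n .
\]

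Next I would apply Lemma\;\ref{l:consum-and-ad} (Monotonicity) at each step: since each consumption is left-positive, $\ad{\theta_i} < \ad{\theta_{i+1}}$ for every $i$, so $\ad{\theta_0} < \ad{\theta_1} < \ldots < \ad{\theta_n}$, and in particular $\ad{\theta_0} < \ad{\theta_n}$ (using $n \geqs 1$, the degenerate case $n=0$ being impossible since a thread is not its own brother). It remains to contradict this with the hypothesis that $\theta_0$ and $\theta_n$ are brother threads. Here I would invoke the structural fact (Lemma\;\ref{l:brother-ref} together with the discussion of referents) that two brother threads which are not \emph{axiom} threads have equal applicative depth, because their referent edges share a common node and hence lie in judgments at the same applicative depth. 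So the only escape is that $\theta_0$ or $\theta_n$ is an axiom thread.

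To close this gap I would examine the endpoints more carefully. By Lemma\;\ref{l:consumption-ax-and-arg-threads}, if $\theta_0$ is an axiom thread and $\theta_0 \rrew \theta_1$, then $\theta_1$ is an argument thread; and since the chain only involves left-positive consumption thereafter, none of $\theta_1,\ldots,\theta_n$ can be an axiom thread (an axiom thread, by Lemma\;\ref{l:negative-thread-a-one-k}, occurs only at the root of a source of an arrow and is created by polar inversion, which is incompatible with being hit by $\rrew$ on the left of a subsequent positive consumption). A symmetric remark handles $\theta_n$: since $\theta_n$ is right-consumed in the last step from a positively-consumed left thread, $\theta_n$ sits above some $a_{n-1}\cdot k$ with $k \geqs 2$, so $\ad{\theta_n} \geqs \ad{a_{n-1}} + 1$, whereas $\theta_0$ lies above or below $a_0 \cdot 1$ so $\ad{\theta_0} = \ad{a_0} \leqs \ad{a_0\cdot 1}$; combined with $\ad{a_0} < \ad{a_{n-1}\cdot k}$ this again gives $\ad{\theta_0} < \ad{\theta_n}$. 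In every case we obtain $\ad{\theta_0} \neq \ad{\theta_n}$ while brotherhood (through Lemma\;\ref{l:brother-ref}) forces their referent edges to share a node and hence, since neither referent is an axiom edge on the ``wrong'' side, to have equal applicative depth --- a contradiction.

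The main obstacle I anticipate is the bookkeeping around axiom threads and the precise interaction between polar inversion and the positivity of consumption: one must be sure that a normal chain cannot ``hide'' a depth drop by passing through an axiom/argument thread at one of its ends, and that the brotherhood relation genuinely pins down applicative depth in all the relevant subcases. This is where Lemmas\;\ref{l:negative-thread-a-one-k}, \ref{l:consumption-ax-and-arg-threads} and \ref{l:brother-ref} must be combined delicately; everything else (the straightening via uniqueness of consumption, the strictly increasing depth via monotonicity) is routine. Once Proposition\;\ref{prop:brother-chains-do-no-exist} is established, Argument 2 reduces the general case to the normal one, brother chains do not exist, and Proposition\;\ref{prop:if-no-bro-then-th-rep} yields Theorem\;\ref{th:rep}, hence Theorem\;\ref{th:rep-R-S-naive}. \qedsymbol
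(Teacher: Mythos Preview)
Your straightening step is where the argument breaks. You assert that in a normal chain every $\lrrewc{a_i}$ must be a forward step $\rrewc{a_i}$, arguing that a backward step would ``force the previous step to be a left-negative consumption''. This does not follow. Consider a forward-then-backward crossing $\theta_{i-1}\loplus\rrewc{a_{i-1}}\theta_i$ followed by $\theta_i\werrc{a_i}\theta_{i+1}$ (i.e.\ $\theta_{i+1}\loplus\rrewc{a_i}\theta_i$). Here $\theta_i$ is \emph{right}-consumed in both steps; if these two right-consumptions occur with opposite polarities ($\oplus$ in one, $\ominus$ in the other), Lemma~\ref{l:uniqueness-consumption-hybrid} imposes nothing, and no left-negative consumption appears anywhere. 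So after removing the trivially collapsible crossings, a normal chain may still have the shape $\theta_0\rrew\cdots\rrew\theta_m\werr\cdots\werr\theta_n$ with $0<m<n$, and your monotonicity argument (which requires all steps forward) does not apply.

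The paper's proof handles precisely this residual V-shaped case by a peeling induction you are missing. In that shape, both endpoints are left-consumed positively: $\theta_0\loplus\rrewc{a_0}\theta_1$ and $\theta_n\loplus\rrewc{a_{n-1}}\theta_{n-1}$. Since $\theta_0$ and $\theta_n$ are brothers, Lemma~\ref{l:consum-brother-thr-same-pos} forces $\theta_n$ to be consumed at $a_0$ as well, and uniqueness then gives $a_0=a_{n-1}$ together with $\theta_1$ and $\theta_{n-1}$ being brothers. One thus obtains a strictly shorter normal brother chain $\theta_1\lrrewc{a_1}\cdots\lrrewc{a_{n-2}}\theta_{n-1}$; iterating, $n$ is even and the middle thread is brother with itself, which is impossible. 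Your applicative-depth argument is exactly what the paper uses in the all-forward (or all-backward) subcase, but it is only half the proof; the inductive peeling is the missing idea.
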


In order to prove the proposition above, we proceed \textit{ad absurdum} and consider a normal brother chain $\theta_0 \lrab{\phi_{a_0}} \theta_1\lrab{\phi_{a_1}}\ldots \lrab{\phi_{a_{n-1}}} \theta_n$ where $\theta_0$ and $\theta_n$ are brother threads.\\

Assume that there is brother chain $\calC$ that may be written
$\theta_0 \lrrewc{\phi_{a_0}} \theta_1\lrrewc{\phi_{a_1}}\ldots \lrab{\phi_{a_{n-1}}} \theta_n$ where $\theta_0$ and $\theta_n$ are brother threads.

We describe now an algorithm to produce from $\calC$ a normal brother chain $\calC'$ that is minimal.\\

By Lemma\;\ref{l:uniqueness-consumption-hybrid}. There may not be ``crossing'' of the form $\theta_{i-1}  \rrewc{a_{i-1}} \roplus \theta_i \loplus \rrewc{a_i}  \theta_{i+1}$. Moreover, if  there is a crossing
$\theta_{i-1}  \werrc{a_{i-1}} \roplus \theta_i \loplus \rrewc{a_i} \roplus \theta_{i+1}$, then (still by Lemma\;\ref{l:uniqueness-consumption-hybrid}), $a_{i-1}=a_i$ and $\theta_{i-1}=\theta_{i+1}$, so that we may assume that this kind of crossing never not occurs (if a some point these is one, we immediately remove it).

Likewise, we may remove every crossing of the form  $\theta_{i-1}  \werrc{a_{i-1}} \romin \theta_i \lomin \rrewc{a_i}  \theta_{i+1}$, since this implies that $\theta_{i-1}=\theta_{i+1}$.

Since argument threads only occur positively and axiom threads only negatively, Lemmas\;\ref{l:uniqueness-consumption-hybrid}
and \ref{l:consumption-ax-and-arg-threads}
imply that no $\theta_i$ may be an argument thread and that no $\theta_i$ may be an axiom thread that is left-consumed negatively: if not, we would have $n=1$ and $\calC$ would be $\theta_0\rrewc{a_0} \theta_1$ with $\theta_0$ axiom thread and $\theta_1$ (or $\theta_0\werrc{a_0} \theta_1$\ldots), which is not a brother chain !

Resorting to the collapsing strategy (Lemma\,\ref{l:collapsing-strategy-hybrid}), we can reduce $t$ so that we obtain a normal brother chain $\calC'$ associated to a reduct $t'$ of $t$. Notice that, by Lemma\;\ref{l:collapsing-edge-threads}, then length of $\calC'$ will be smaller than that of $\calC$.
The chain $\calC'$ may be written
$\theta'_0 \lrrewc{a'_0} \theta_1\lrrewc{a'_1}\ldots \lrab{a'_{n-1}} \theta'_n$, where $n'\leqs n'$, the $\theta'_i$ are mutable edge threads of $P'$, the residual operable derivation of $P'$ whose subject is $t'$. To lighten the notations, we drop the primes and just write $t'$ for $t$, $\calC$ for $\calC'$, $\theta_i$ for $\theta_i'$, $n$ for $n'$, and $a_i$ for $a'_i$.\\

Due to the first steps of the algorithm, we may only have four kinds of crossings in $\calC$.
$\theta_{i-1} \loplus \rsa{\al_{i-1}} \romin \theta_i \loplus \rsa{\al_i} \theta_{i+1}$ (right-right) or
$\theta_{i-1} \lsa{\al_{i-1}}  \roplus \theta_i\lomin \lsa{\al_i}\roplus \theta_{i+1}$ (left-left) or
$\theta_{i-1} \loplus \rsa{\al_{i-1}} \romin \theta_i \loplus \lsa{\al_i}\roplus  \theta_{i+1}$ (minus-plus) or $\theta_{i-1} \loplus \rsa{\al_{i-1}} \roplus \theta_i \lomin \lsa{\al_i}\roplus \theta_{i+1}$ (plus-minus)\\

We cannot have only crossings of kind right-right. If it were, by Lemma\;\ref{l:consum-and-ad}, we would have $\ad{\theta_0}<\ad{\theta_n}$. But, as it has been observed above, $\theta_0$ and $\theta_n$ cannot be axiom threads. By Lemma\;\ref{l:brother-ref}, we should have $\ad{\theta_0}=\ad{\theta_n}$, which is a contradiction. For the same reason, we cannot have only crossings of kind left-left.\\

Thus, there must be at least a crossing of kind minus-plus or of kind plus-minus. This is easy to see that there can only be one (there cannot be more that one ``change of direction'' in $\calC'$: if not, we would have a crossing on the form $\theta_{i-1} \werr \theta_i \rrew \theta_{i+1}$ and this case is not possible for $\calC'$). In both cases, the chain starts with $\theta_0\loplus \rrewc{a_0} \theta_1$ and ends with $\theta_{n-1} \werrc{a_{n-1}} \theta_n$. Lemmas\;\ref{l:consum-brother-thr-same-pos} and \ref{l:uniqueness-consumption-hybrid} entail that $a_0=a_{n-1}$ and $\theta_1$ and $\theta_{n-1}$ are brother threads. Thus, $\theta_1\lrrewc{a_1} \ldots \lrrewc{a_{n_2}} \theta_{n-1}$ is also a brother chain.
By induction on $i\leqslant n/2$, we show that $\theta_i$ and $\theta_{n-i}$ are brother threads, so that $n$ is even and $i_0:=n/2$ is a natural number and $\theta_{i_0}$ is brother with itself, which is impossible (by Lemma\;\ref{l:brother-ref}).
Thus, $\calC'$ cannot exist and so neither can $\calC$. This concludes the proof of Proposition\;\ref{prop:brother-chains-do-no-exist} and of Theorem\;\ref{th:rep}.


\ignore{

\begin{lemma} \label{lemConsum}
\begin{itemize}
\item If $r\in \Ax\times \{\epsi\} \cup \OAP$, there is at most one
$\al \in A\atmk$ such that $\exists r'\in \Thr{P},~ 
r\lrsa{\al} r'$.\\
In that case, either $\exists !r'\in \Thr{P},~ r \rsa{\al} r'$,
either $\exists ! r' \in \Thr{P},~ r \lsa{\al} r'$.
\item If $r\in \rbp (P)$ and $\circledast\in \{ \oplus,\, \ominus \}$, 
there is at most one $\al \in A\atmk$ such that 
$\exists r'\in \Thr{P},~ r^{\circledast}\lrsa{\al} r'$.\\
In that case, either $\exists !r'\in \Thr{P},~ r{}^\circledast \rsa{\al} r'$,
either $\exists ! r' \in \Thr{P},~ r {}^\circledast \lsa{\al} r'$.
\end{itemize} 
We say then that $r$ (resp. $r^\circledast$) is \textbf{consumed} at pos.
$\al$.
\end{lemma}

\noindent Moreover:

\begin{lemma} \label{lemBrotherRefs}
If $\circledast\in \{\oplus,\,\ominus\}$ and $r_1$ and $r_2$ are brother referents (that are not in $\OAP$), then $r_1^\circledast$ is consumed at pos. $\al$ iff $r_2$ is.\\
In the case where $r_i \in \dom \widetilde{\phi_{\al}}$, 
$\widetilde{\phi_{\al}}(r_1)$ and $\widetilde{\phi_{\al}}(r_2)$ are
brother referents.
\end{lemma}
}

\ignore{ 
\begin{proof}
By downward-induction on $|\alpha|$. We set $\alpha'=\QRes_b(\alpha)$.

\textbullet~ Case $\alpha \in \Ax(y)$: since $y\neq x$, 
$t'(\ovl{\alpha'})=y$ and $\posp{\alpha',\, y,\, k}$.\\
Since $\pos{\alpha,\,y,\,k}=\alpha$, $\Res_b(\pos{\alpha,\, y,\,k})=\alpha'$.
The equality is proven.\\

\textbullet~ Case $\alpha\cdot 0 \in A$: since $\ovl{\alpha}\neq b
\cdot 1$, $t(\ovl{\alpha})=\lambda z$ with $z\neq x,\,y,~ \alpha'
=\Res_b(\alpha),~ t'(\ovl{\alpha'})=\lambda z$ and 
$\QRes_b(\alpha\cdot 0)=\alpha'\cdot 0$.\\
We have $\pos{\alpha,\, y,\, k}=\pos{\alpha\cdot 0,\,y,\, k}$ and 
$\posp{\alpha',\,y,\,k}=\posp{\alpha'\cdot 0,\,y,\,k}$. By IH on
$\alpha'\cdot 0$, $\Res_b(\pos{\alpha\cdot 0,\,y,\, k})
=\posp{\QRes_b (\alpha\cdot 0,\,y,\,k)}$
\ie $\Res_b(\pos{\alpha,\,y,\, k})=\posp{\alpha'\cdot 0,\, y,\, k}$,
so $\Res_b(\pos{\alpha,\,y,\, k})=\posp{\alpha',\, y,\, k}$.\\

\textbullet~ Case $\alpha\cdot 1\in A$ and $\ovl{\alpha}\neq b$:
there is a unique $\ell \geqslant 1$ s.t. $k\in \AxTr(\alpha\cdot \ell,\,y)$.
Moreover, $\alpha'=\Res_b(\alpha),~ t'(\ovl{\alpha'})=\symbol{64}$ and 
$\QRes_{\alpha\ell}=\alpha\cdot \ell$. \\
We have $\pos{\alpha,\,y,\, k}=\pos{\alpha \cdot \ell,\,y,\, k}$
and also, by construction of $P'$, 
 $\posp{\alpha',\, y,\, k}=\posp{\alpha' \cdot \ell,\,y,\,k}$. By IH, 
$\Res_b(\pos{\alpha \cdot \ell,\, y,\, k})=\pos{\alpha'\cdot \ell,\,y,\, k}$ 
since here $\QRes_b(\alpha \cdot \ell)=\alpha' \cdot \ell$. So the
equality is true for $\alpha$.\\

\textbullet~ Case $\alpha\cdot 1 \in A$ and $\alpha=a$ where $a \in \Rep(b)$:
here, $\QRes_b(a)=a$ and there is also a unique $\ell \geqslant 1$ s.t. 
$k\in \AxTr(\alpha\cdot \ell,\, y)$.
\begin{itemize}
\item Subcase $\ell =1$: here, $\pos{\alpha,\,y ,\, k}=\pos(\alpha\cdot 10,\,
y,\, k)$ and by IH, $\Res_b(\pos{a\cdot 10,\, y,\, k})=
\pos{\QRes_b(a\cdot 10),\,y,\, k}$, whence $\Res_b(\pos{\alpha,\,y,\,
k})=\Res_b(\QRes_b(a),\, y,\, k)$ since $\QRes_b (a\cdot 10)=a=\QRes_b(a)$.
\item Subcase $\ell\geqslant 2$: we set $\ell_{\Rst}=\ell$ and $\ell_{\Lst}=
\phi_a^{-1}(\ell_{\Lst})$, so that $\QRes_b(\alpha \cdot \ell_{\Rst})=a \cdot a_{\ell_{\Lst}}$.\\
By IH,  $\Res_b(\pos{a \cdot \ell_{\Rst},\, y,\, k})=\posp{ \Res_b(a \cdot \ell_{\Rst}),\,y,\,k}$, \ie $\Res_b(\pos{a \cdot \ell,\,y,\, k})=\posp
{a\cdot a_{\ell_{\Lst}},\, y,\, k}$. But since $a\leqslant a\cdot a_{\ell_{\Lst}}$, $\posp{a,\,y,\, k}=\posp{a\cdot a_{\ell_{\Lst}},\,y,\, k}$, so $\Res_b(\pos{a,\,y,\, k})=\posp{\QRes(a,\,y,\,k)}$.\\
\end{itemize}
\end{proof}
}

}

\section{Conclusion}

The main results of this article (Theorems \ref{th:rep-R-S-naive}, \ref{th:rep} along with Lemma\;\ref{l:rep-reduction-choices}) state that every (possibly infinitary) multiset-based derivation is the collapse of a trivial derivation (which means sequential and permutation-free) and that, moreover, one can directly represent in the trivial setting  every reduction path in the multiset or operable setting, the latter meaning rigid with isomorphisms allowing us to encode reduction choices. This proves that the \textit{trivial} rigid framework is as expressive as the non-trivial one and advocates in favor of replacing multiset intersection with sequential intersection, since it is no more complicated to use sequences (system $\ttS$ does not use a permutation rule or type isomorphisms) than multisets, while having extra-features (parsing, pointing).

The proof technique of this article has been  actually used \textit{twice} in the companion paper \cite{compUnsoundArxiv}, by considering another first order theory, involving a variant of the notions of threads. This suggests that the method presented here is general and can be adapted again.

One can conjecture that this results applies to \cite{TsukadaAO17} (which resorts to lists of types, instead of sequences), in that it would allow us to ignore the use of isomorphisms in the study of the system (since they can be all assumed to be identities) and thus simplify \textit{proofs} in this latter article and in the framework that it introduces (up to the replacement of lists by sequences).

There are analogies between the sequential intersection framework and indexed linear logic, Girard's Geometry of Interaction or Ludics, or with game semantics (\eg between sequences and copy indices). They should be investigated in a close future.

\bibliographystyle{abbrv}
\bibliography{biblioThese.bib}


\renewcommand{\em}{\it}



\end{document}